\DeclareMathAlphabet{\mathpzc}{OT1}{pzc}{m}{it}
\newcommand{\bigexists}{%
\mathop{\lower.9ex\hbox{%
   \scalebox{1.9}{\ensuremath{\exists}}}}\limits}
\DeclareFontFamily{U}{mathx}{\hyphenchar\font45}
\DeclareFontShape{U}{mathx}{m}{n}{
      <5> <6> <7> <8> <9> <10>
      <10.95> <12> <14.4> <17.28> <20.74> <24.88>
      mathx10
      }{}
\DeclareSymbolFont{mathx}{U}{mathx}{m}{n}
\DeclareMathAccent{\widecheck}{0}{mathx}{"71}
\newif\ifpgfshaperectangleroundnortheast
\newif\ifpgfshaperectangleroundnorthwest
\newif\ifpgfshaperectangleroundsoutheast
\newif\ifpgfshaperectangleroundsouthwest
\def\pgf@sh@bg@rectangle{%
  \pgfkeysgetvalue{/pgf/outer xsep}{\outerxsep}%
  \pgfkeysgetvalue{/pgf/outer ysep}{\outerysep}%
  \pgfpathmoveto{\pgfpointadd{\southwest}{\pgfpoint{\outerxsep}{\outerysep}}}%
  {\ifpgfshaperectangleroundnorthwest\else\pgfsetcornersarced{\pgfpointorigin}\fi%
    \pgfpathlineto{\pgfpointadd{\southwest\pgf@xa=\pgf@x\northeast\pgf@x=\pgf@xa}{\pgfpoint{\outerxsep}{-\outerysep}}}}%
  {\ifpgfshaperectangleroundnortheast\else\pgfsetcornersarced{\pgfpointorigin}\fi%
    \pgfpathlineto{\pgfpointadd{\northeast}{\pgfpoint{-\outerxsep}{-\outerysep}}}}%
  {\ifpgfshaperectangleroundsoutheast\else\pgfsetcornersarced{\pgfpointorigin}\fi%
    \pgfpathlineto{\pgfpointadd{\southwest\pgf@ya=\pgf@y\northeast\pgf@y=\pgf@ya}{\pgfpoint{-\outerxsep}{\outerysep}}}}%
  {\ifpgfshaperectangleroundsouthwest\else\pgfsetcornersarced{\pgfpointorigin}\fi%
    \pgfpathclose}}
\tikzset{
	oriented WD/.style={
		every to/.style={out=0,in=180,draw},
    label/.style={
    	font=\everymath\expandafter{\the\everymath\scriptstyle},
      inner sep=0pt,
      node distance=2pt and -2pt},
    semithick,
    node distance=1 and 1,
    decoration={markings, mark=at position \stringdecpos with \stringdec},
    ar/.style={postaction={decorate}},
    execute at begin picture={\tikzset{
    	x=\bbx, y=\bby,
      every fit/.style={inner xsep=\bbx, inner ysep=\bby}}}
    },
    string decoration/.store in=\stringdec,
    string decoration={\arrow{stealth};},
    string decoration pos/.store in=\stringdecpos,
    string decoration pos=.7,
    bbx/.store in=\bbx,
    bbx = 1.5cm,
    bby/.store in=\bby,
    bby = 1.5ex,
    bb port sep/.store in=\bbportsep,
    bb port sep=1.5,
    bb port length/.store in=\bbportlen,
    bb port length=4pt,
    bb penetrate/.store in=\bbpenetrate,
    bb penetrate=0,
    bb min width/.store in=\bbminwidth,
    bb min width=1cm,
    bb rounded corners/.store in=\bbcorners,
    bb rounded corners=2pt,
    bb spider/.style={
    	bb port sep=1, bb port length=10pt, bbx=.4cm, bb min width=.4cm, bby=.8ex},
    bb small/.style={
    	bb port sep=1, bb port length=2.5pt, bbx=.4cm, bb min width=.4cm, bby=.7ex},
		bb medium/.style={
			bb port sep=1, bb port length=2.5pt, bbx=.4cm, bb min width=.4cm, bby=.9ex},
    bb/.code 2 args={
    	\pgfmathsetlengthmacro{\bbheight}{\bbportsep * (max(#1,#2)+1) * \bby}
      \pgfkeysalso{draw,minimum height=\bbheight,minimum
       width=\bbminwidth,outer sep=0pt,
         rounded corners=\bbcorners,thick,
         prefix after command={\pgfextra{\let\fixname\tikzlastnode}},
         append after command={\pgfextra{\draw
            \ifnum #1=0{} \else foreach \i in {1,...,#1} {
            	($(\fixname.north west)!{\i/(#1+1)}!(\fixname.south west)$) +(-\bbportlen,0) coordinate (\fixname_in\i) -- +(\bbpenetrate,0) coordinate (\fixname_in\i')}\fi 
            \ifnum #2=0{} \else foreach \i in {1,...,#2} {
            	($(\fixname.north east)!{\i/(#2+1)}!(\fixname.south east)$) +(-
\bbpenetrate,0) coordinate (\fixname_out\i') -- +(\bbportlen,0) coordinate (\fixname_out\i)}\fi;
           }}}
		},
			bb name/.style={
     	append after command={
				\pgfextra{\node[anchor=north] at (\fixname.north) {#1};}
			}
		}
  }
\tikzset{
	unoriented WD/.style={
  	every to/.style={draw},
  	shorten <=-\penetration, shorten >=-\penetration,
  	label distance=-2pt,
  	thick,
  	node distance=\spacing,
  	execute at begin picture={\tikzset{
  		x=\spacing, y=\spacing, circuit logic US, tiny circuit symbols}
		}
  },
  pack size/.store in=\psize,
  pack size = 12pt,
	penetration/.store in=\penetration,
	penetration = 0pt,
  spacing/.store in=\spacing,
  spacing = 8pt,
  link size/.store in=\lsize,
  link size = 2pt,
  pack color/.store in=\pcolor,
  pack color = blue,
 	pack inside color/.store in=\picolor,
  pack inside color=blue!20,
 	pack outside color/.store in=\pocolor,
  pack outside color=blue!50!black,
 	surround sep/.store in=\ssep,
  surround sep=8pt,
 	link/.style={
  	circle, 
  	draw=black, 
  	fill=black,
  	inner sep=0pt, 
 		minimum size=\lsize
 	},
  pack/.style={
 		circle, 
 		draw = \pocolor, 
  		fill = \picolor,
  		inner sep = .25*\psize,
 		minimum size = \psize
  },
  func/.style={
  	pack,
		rectangle,
		rounded corners=.5*\psize,
		inner ysep=.125*\psize,
		minimum width=1.125*\psize,
		inner xsep=.25*\psize,
  },
  funcr/.style={
    func,
    rectangle round north west=false, 
		rectangle round south west=false,
  },
  funcl/.style={
    func,
		rectangle round north east=false, 
		rectangle round south east=false,
  },
  funcu/.style={
    func,
		rectangle round south east=false, 
		rectangle round south west=false,
  },
  funcd/.style={
    func,
		rectangle round north east=false, 
		rectangle round north west=false,
  },
  outer pack/.style={
 		ellipse, 
 		draw,
  	inner sep=\ssep,
  	color=gray,
 	},
  intermediate pack/.style={
 		ellipse,
 		dashed, 
  	draw,
  	inner sep=\ssep,
 		color=\pocolor,
 	},
}
\tikzset{
	spider diagram/.style={
		every to/.style={out=0, in=180, draw, thick},
		thick,
		execute at begin picture={\tikzset{
    	x=\leglen, y=\leglen/3}}
	},
	dot size/.store in=\dotsize,
	dot size = 5pt,
	dot fill/.store in=\dotfill,
	dot fill = black,
	leg length/.store in=\leglen,
	leg length = 15pt,
	baby/.style={dot size = 2pt, leg length = 6pt},
	young/.style={dot size = 3pt, leg length = 10pt},
	special spider/.code n args={4}{
		\pgfkeysalso{circle, draw, thick, inner sep=0, fill=\dotfill, minimum width=\dotsize,
  		prefix after command={\pgfextra{\let\fixname\tikzlastnode}},
  		append after command={\pgfextra{
  			\ifnum #1=0{} \else {\foreach \i in {1,...,#1} {
					\tikzmath{\anglei={-90*(#1+1-2*\i)/#1};}
  				\draw [thick]
						(\fixname) .. controls 
						($(\fixname.center)-(\anglei:#3/3)$) and ($(\fixname.center)-(\anglei:#3*2/3)$) .. 
						({$(\fixname.center)-(\anglei:#3*2/3)$}-|{$(\fixname.center)-(#3,0)$}) coordinate (\fixname_in\i);
  			}}\fi
  			\ifnum #2=0{} \else {\foreach \i in {1,...,#2} {
					\tikzmath{\anglei={90*(#2+1-2*\i)/#2};}
  				\draw [thick]
						(\fixname.center) .. controls 
						($(\fixname.center)+(\anglei:#4/3)$) and ($(\fixname.center)+(\anglei:#4*2/3)$) .. 
						({$(\fixname.center)+(\anglei:#4*2/3)$}-|{$(\fixname.center)+(#4,0)$}) coordinate (\fixname_out\i);
  			}}\fi
  		}}
		}
	},
	spider/.code 2 args={
		\pgfkeysalso{special spider={#1}{#2}{\leglen}{\leglen}}
	}
}
\tikzset{
	inner WD/.style={
		unoriented WD, 
		surround sep=2pt, 
		font=\tiny, 
		anchor=center
	}
}
\tikzset{
  function/.style={->, thin, shorten <=4pt, shorten >=4pt}
}
\tikzset{
  tick/.style={
  	postaction={
    	decorate,
      decoration={
      	markings, mark=at position 0.5 with {
					\draw[-] (0,.4ex) -- (0,-.4ex);
				}
			}
		}
	}
} 
\newcommand{\tickar}{\begin{tikzcd}[baseline=-0.5ex,cramped,sep=small,ampersand 
replacement=\&]{}\ar[r,tick]\&{}\end{tikzcd}}
\newcommand{\simpletheta}{
\begin{tikzpicture}[unoriented WD, surround sep=2pt, pack size=6pt, font=\tiny, baseline=(theta.base)]
	\node[pack] (theta) {$\theta$};
	\draw (theta.west) -- +(-2pt, 0);
	\draw (theta.east) -- +(2pt, 0);
\end{tikzpicture}
}
\newcommand{\dectheta}{\begin{tikzpicture}[unoriented WD, surround sep=2pt, font=\tiny, pack size=6pt, baseline=(theta.base)]
	\node[pack] (theta) {$\theta$};
	\draw (theta.west) to[pos=1] node[left=-3pt] {$\Gamma_1$} +(-2pt, 0);
	\draw (theta.east) to[pos=1] node[right=-3pt] {$\Gamma_2$} +(2pt, 0);
\end{tikzpicture}}
  \setlist{noitemsep, nolistsep}
	\setlist[description]{leftmargin=0em, itemindent=2em}
\newtheorem{theorem}{Theorem} 
\newtheorem{proposition}[theorem]{Proposition}
\newtheorem{corollary}[theorem]{Corollary}
\newtheorem{lemma}[theorem]{Lemma}
\theoremstyle{definition}
\newtheorem{definition}[theorem]{Definition}
\newtheorem{notation}[theorem]{Notation}
\newtheorem*{axiom*}{Axiom}
\theoremstyle{remark}
\newtheorem{example}[theorem]{Example}
\newtheorem{remark}[theorem]{Remark}
\newtheorem{warning}[theorem]{Warning}
\newcommand{\Set}[1]{\mathrm{#1}}
\newcommand{\ord}[1]{\underline{#1}}
\newcommand{\const}[1]{\mathtt{#1}}
\newcommand{\cat}[1]{\mathcal{#1}}
\newcommand{\ccat}[1]{\mathscr{#1}}
\newcommand{\Cat}[1]{{\mathsf{#1}}}
\newcommand{\CCat}[1]{\mathbb{\StrLeft{#1}{1}}\Cat{\StrGobbleLeft{#1}{1}}}
\newcommand{\funn}[1]{\mathrm{#1}}
\newcommand{\funr}[1]{\mathcal{#1}}
\newcommand{\ffunr}[1]{\mathbf{#1}}
\DeclareMathOperator{\ob}{\Set{Ob}}
\DeclareMathOperator{\im}{im}
\DeclareMathOperator{\inc}{inc}
\DeclarePairedDelimiter{\pair}{\langle}{\rangle}
\DeclarePairedDelimiter{\unary}{{\langle\,}}{{\,\rangle}}
\DeclarePairedDelimiter{\copair}{[}{]}
\DeclarePairedDelimiter{\classify}{{\raisebox{1pt}{$\ulcorner$}}}{{\raisebox{1pt}{$\urcorner$}}}
\DeclarePairedDelimiter{\abs}{\lvert}{\rvert}
\DeclarePairedDelimiter{\church}{\llbracket}{\rrbracket}
  \definecolor{darkblue}{rgb}{0,0,0.7} 
  \newcommand{\define}[1]{\emph{#1}}
\newcommand{\rrel}[1]{\CCat{Rel}_{#1}}
\newcommand{\rgcat}{\Cat{RgCat}}
\newcommand{\rgcalc}{\Cat{RgCalc}}
\newcommand{\rrgcat}[1][]{\CCat{RgCat}}
\newcommand{\rgpocat}{\Cat{RgPocat}}
\newcommand{\sub}{\Cat{Sub}}
\newcommand{\syn}{\ffunr{syn}}
\newcommand{\prd}{\ffunr{prd}}
\newcommand{\lsh}[2][T]{{#2}_!}
\newcommand{\ust}[2][T]{#2^*}
\newcommand{\tn}[1]{\textnormal{#1}}
\newcommand{\id}{\funn{id}}
\newcommand{\finset}{\Cat{FinSet}}
\newcommand{\smset}{\Cat{Set}}
\newcommand{\cospan}{\Cat{Cospan}}
\newcommand{\pposet}{\CCat{Poset}}
\newcommand{\ladj}{\Cat{LAdj}}
\newcommand{\radj}{\Cat{RAdj}}
\newcommand{\smcat}{\Cat{Cat}}
\newcommand{\ppocat}{\CCat{Pocat}}
\newcommand{\rela}[1]{\CCat{IntRel}_{#1}}
\newcommand{\func}[1]{\cat{R}_{#1}}
\newcommand{\pr}{P}
\newcommand{\true}{\const{true}}
\newcommand{\ol}[1]{\overline{#1}}
\newcommand{\comp}{\mathtt{comp}}
\newcommand{\typeset}{\mathrm{T}}
\newcommand{\frc}[1][\typeset]{
  \ifthenelse{\equal{#1}{blank}}{\Cat{FRg}}{\Cat{FRg}(#1)}
}
\newcommand{\frb}[1][\typeset]{\CCat{FRg}(#1)}
\newcommand{\out}{\mathrm{out}}
\newcommand{\cocolon}{:\!}
\newcommand{\too}{\longrightarrow}
\newcommand{\tto}{\rightrightarrows}
\newcommand{\To}[1]{\xrightarrow{#1}}
\newcommand{\from}{\leftarrow}
\newcommand{\From}[1]{\xleftarrow{#1}}
\newcommand{\tofrom}{\leftrightarrows}
\newcommand{\surj}{\twoheadrightarrow}
\newcommand{\inj}{\rightarrowtail}
\renewcommand{\ss}{\subseteq}
\newcommand{\imp}{\Rightarrow}
\renewcommand{\iff}{\Leftrightarrow}
\newcommand{\op}{^\mathrm{op}}
\newcommand{\inv}{^{-1}}
\newcommand{\tp}{^\dagger}
\newcommand{\slice}[1]{_{/#1}}
\newcommand{\terminal}{\star}
\newcommand{\ppf}{\mathbb{P}_f}
\newcommand{\cp}{\mathbin{\fatsemi}}
\newcommand{\tens}{\oplus}
\newcommand{\unit}{0}
\newcommand{\supp}{\Set{Supp}}
\newcommand{\vars}{\Set{Vars}}
\newcommand{\type}{\Set{Tp}}
\newcommand{\Prod}[1][-]{\classify{#1}}
\newcommand{\nn}{\mathbb{N}}
\newcommand{\aadjmon}{\CCat{AdjMon}}
\newcommand{\qqand}{\qquad\text{and}\qquad}
\newcommand{\qand}{\quad\text{and}\quad}
\newcommand{\exclude}[1]{}
\newcommand{\commentout}[1]{}
\newcommand{\adj}[5][30pt]{
\begin{tikzcd}[ampersand replacement=\&, column sep=#1]
  #2\ar[r, shift left=5pt, "{#3}"]\ar[r, phantom, "\Rightarrow" yshift=-.6pt]\&
  #5\ar[l, shift left=5pt, "{#4}"]
\end{tikzcd}
}
\newcommand{\adjr}[5][30pt]{
\begin{tikzcd}[ampersand replacement=\&, column sep=#1]
  #2\ar[r, shift left=5pt, "{#3}"]\ar[r, phantom, "\Leftarrow" yshift=-.6pt]\&
  #5\ar[l, shift left=5pt, "{#4}"]
\end{tikzcd}
}
\begin{document}   

  \title{String Diagrams for Regular Logic (Extended Abstract)}
  \author{Brendan Fong 
  \institute{MIT}
  \and David I.\ Spivak
  \institute{MIT}
  \thanks{Spivak and Fong acknowledge support from AFOSR grants FA9550-14-1-0031 and FA9550-17-1-0058.}
  }

\def\titlerunning{String Diagrams for Regular Logic}
\def\authorrunning{B.\ Fong \& D.\ Spivak}
\maketitle

\begin{abstract}
  Regular logic can be regarded as the \emph{internal language} of regular
  categories, but the logic itself is generally not given a categorical
  treatment. In this paper, we understand the syntax and proof rules of regular
  logic in terms of the free regular category $\frc$ on a set $\typeset$. From this point of view,
  regular theories are certain monoidal 2-functors from a suitable 2-category of
  contexts---the 2-category of relations in $\frc$---to that of posets. Such functors assign to each context the
  set of formulas in that context, ordered by entailment. We refer to such a
  2-functor as a \emph{regular calculus} because it naturally gives rise to a
  graphical string diagram calculus in the spirit of Joyal and Street. We shall show that every natural category has an associated regular calculus, and conversely from every regular calculus one can construct a regular category. 
\end{abstract}

\section{Introduction}\label{chap.intro}

Regular logic is the fragment of first order logic generated by equality ($=$),
true ($\true$), conjunction ($\wedge$), and existential quantification ($\exists$). A
defining feature of this fragment is that it is expressive enough to define
\emph{functions} and \emph{composition} of functions, or more generally of
relations: given relations $R \subseteq X \times Y$ and $S \subseteq Y \times
Z$, their composite is given by the formula
\[
  R \cp S = \{(x,z) \mid \exists y.R(x,y)\wedge S(y,z)\}.
\]
Indeed, regular logic is the internal language of regular categories, which may
in turn be understood as a categorical characterization of the minimal structure
needed to have a well-behaved notion of relation.

While regular categories put emphasis on the notion of \emph{binary} relation, the
existence of finite products allows them to handle $n$-ary relations---that is,
subobjects of $n$-fold products---and their composition. To organize more complicated multi-way composites of relations, many fields have
developed some notion of wiring diagram. A good amount of recent work, including but not
limited to control theory
\cite{bonchi2014categorical,baez2015categories,fong2016categorical}, database
theory and knowledge representation
\cite{bonchi2018graphical,patterson2017knowledge}, electrical engineering
\cite{baez2018compositional}, and chemistry \cite{baez2017compositional},
all serve to demonstrate the link between these languages and categories for which the morphisms
are relations. 

A first goal of this paper is to clarify the link between regular logic and
these various graphical languages. In doing so, we provide a new diagrammatic syntax for regular
logic, which we refer to as \emph{graphical regular logic}. Rather than pursue a direct
translation with the classical syntax for first order logic, we demonstrate a tight connection between graphical regular logic and
the notion of \emph{regular category}. A second goal, then, is to repackage the structure of a
regular category into terms that cleanly reflect its underlying logical theory. We call
the resulting categorical structure a \emph{regular calculus}. Regular calculi are based on free regular categories, so let's begin there. 

We will show that the \emph{free regular category} $\frc[blank]$ on a singleton set can
be obtained by freely adding a fresh terminal object to $\finset\op$. Here is a
depiction of a few objects in $\frc[blank]$:
\begin{equation}\label{eqn.free_reg}
\begin{tikzcd}
	0&
	s\ar[l, >->]&
	1\ar[l, ->>]\ar[r]&
	2\ar[l, shift left=5pt]\ar[l, shift right=5pt]&
	\cdots
\end{tikzcd}
\end{equation}
The object $s$ is the coequalizer of the two distinct maps $2\tto 1$, so in a
sense it prevents the unique map $1\to 0$ from being a regular epimorphism. Thus
one may think of $s$ as representing the \emph{support} of an abstract object in
a regular category. In $\smset$, the support of any object is either empty or singleton, but in general the concept is more refined. For example, the topos of sheaves on a space $X$ is regular, and the support of a sheaf $r$ is the union $U\ss X$ of all open sets on which $r(U)$ is nonempty.

For any small set $\typeset$ of \emph{types} (also known as \emph{sorts}), the free regular category on $\typeset$ is then the $\typeset$-fold coproduct of regular categories $\frc\coloneqq\bigsqcup_\typeset\frc[blank]$. That is, we have an adjunction
\begin{equation}\label{eqn.fr_ob_adj}
\adj[40pt]{\smset}{\frc[blank]}{\ob}{\rgcat}
\end{equation}
which we will construct explicitly in \cref{thm.fr_is_free}. For any
regular category $\cat{R}$, the counit provides a canonical regular functor, which
we denote $\Prod\colon\frc[\ob\cat{R}]\to\cat{R}$. Note also that this extends to a
2-functor $\Prod\colon \rrel{\frc[\ob\cat{R}]}\to\rrel{\cat{R}}$ between the
associated relation bicategories.

Write $\frb\coloneqq \rrel{\frc}$ for this bicategory of relations. Just as $\frc[blank]$ is closely related to the opposite
of the category of finite sets (see \eqref{eqn.free_reg}), the objects in $\frb$ are, at a first
approximation, much like finite sets $\ord{n}$ equipped with a function $\ord{n}
\to \typeset$, and morphisms are much like corelations: equivalence relations
on some coproduct $\ord{n+n'}$. We draw objects and morphisms as on the left and
right below:
\[
  \begin{aligned}
  \begin{tikzpicture}[inner WD]
    \node[pack, minimum size = 3ex] (rho) {};
    \draw (rho.180) to[pos=1] node[left] (w) {$y$} +(180:2pt);
    \draw (rho.75) to[pos=1] node[above] (n) {$z$} +(75:2pt);
    \draw (rho.-20) to[pos=1] node[right] (e) {$y$} +(-20:2pt); 
    \node[link,fill=white,thin] at (rho.270) {};
    \node[below=-.2 of rho.270] (s) {$w,x$};
  \end{tikzpicture}
  \end{aligned}
  \hspace{3cm}
  \begin{aligned}
  \begin{tikzpicture}[inner WD]
    \node[pack] (a) {};
    \node[outer pack, inner sep=10pt, fit=(a)] (outer) {};
    \node[link] (link1) at ($(a.west)!.6!(outer.west)$) {};
    \node[link] (link2) at ($(a.45)!.5!(outer.45)$) {};
    \node[link] (link3) at ($(a.-20)!.5!(outer.-20)$) {};
    \node[link,fill=white,thin] at ($(a.100) + (110:7pt)$) (dot) {};
    \node[above=-.3 of link1] {$x$};
    \node[above=-.3 of link2] {$y$};
    \node[above=-.2 of link3] {$y$};
    \node[above=-.3 of dot] {$w$};
    \draw (outer.west) -- (link1);
    \draw (a.40) -- (link2);
    \draw (link2) -- (outer.45);
    \draw (a.-20) -- (link3);
    \draw (link3) -- (outer.0);
    \draw (link3) -- (outer.-45);
  \end{tikzpicture}	
\end{aligned}
\]
The left-hand circle, equipped with its labeled ports and white dot, represents an object in $\frc$; we call this picture a \emph{shell}. Here each port represents
an element of the associated finite set $\ord{3}$, the white dot captures aspects related to the
support object $s$ of $\frc[blank]$, and the labels $x,y$ etc.\ are
elements of $\typeset$. In the right-hand morphism, the inner shell
represents the domain, outer shell represents the codomain, and the things between them---the connected components of
the wires and the white dots---represent the equivalence classes of the aforementioned equivalence relation. 

A regular calculus lets us think of each object $\Gamma\in\frb$---each
shell---as a context for formulas in some regular theory, and of each morphism,
i.e.\ each wiring diagram $\Gamma\tickar\Gamma'$, as a method for converting
$\Gamma$-formulas to $\Gamma'$-formulas, using $=$, $\true$, $\wedge$, and $\exists$. We next want to think about how regular categories fit into this picture.

If $\cat{R}$ is a regular category, formulas in the associated regular theory are given by relations $x\ss r_1\times\cdots\times r_n$, where $x$ and the $r_i$ are objects in $\cat{R}$, i.e.\ $r_\bullet\colon\ord{n}\to\cat{R}$. Thus we could consider $\Gamma\coloneqq r_\bullet$ as a context, and this brings us back to the free regular category $\frb[\ob\cat{R}]$. The counit functor $\Prod\colon\frc[\ob\cat{R}]\to\cat{R}$ sends $\Gamma$ to $\Prod[\Gamma]\coloneqq r_1\times\cdots\times r_n$. A key feature of regular categories is that the subobjects $\sub_{\cat{R}}(r_1\times\cdots\times r_n)$ form a meet-semilattice, elements of which we call \emph{predicates} in context $\Gamma$. As we shall see, the collection of all
these semilattices, when related by the structure of $\frb[\ob\cat{R}]$,
includes enough data to recover the regular category $\cat{R}$ itself.

Indeed, consider the commutative diagram
\[
\begin{tikzcd}
	\frc[\ob\cat{R}]\ar[r, "\Prod"]\ar[d]&
	\cat{R}\ar[d]\\
	\frb[\ob\ccat{R}]\ar[r, "\Prod"']&
	\ccat{R}\ar[r, "{\ccat{R}(I,-)}"']&[20pt]
	\pposet
\end{tikzcd}
\]
where the vertical maps represent inclusions of a regular 1-category into its bicategory of relations, and the hom-2-functor $\ccat{R}(I,-)$ sends each object $r\in\ob\cat{R}=\ob\ccat{R}$ to the subobject lattice $\sub_{\cat{R}}(r)=\ccat{R}(I,r)$. We can denote the composite of the bottom maps as
\begin{equation}\label{eqn.construct_rels_on_obs}
	\sub_{\cat{R}}\Prod\colon\frb[\ob\cat{R}]\too\pposet.
\end{equation}
The domain $\frb[\ob\cat{R}]$ is a category of contexts and the functor $\sub_{\cat{R}}\Prod[\Gamma]$ assigns the poset of predicates to each context $\Gamma$.

As mentioned, the regular category $\cat{R}$ may be reconstructed---up to
equivalence---from the contexts $\Gamma\in\frc[\ob\cat{R}]$ and their predicate posets $\sub_\cat{R}\Prod[\Gamma]$ as in
\cref{eqn.construct_rels_on_obs}, once we give the abstract structure by which they hang
together. The question is, given any set $\typeset$, what extra structure do we need on a functor 
\[\pr\colon\frb\too\pposet\] 
in order to construct a regular category from it?

Whatever the required structure on $\pr$ is, of course $\sub_{\cat{R}}\Prod$ needs to have that structure. First of all, $\sub_{\cat{R}}\Prod$ is a 2-functor, and it happens to be the composite of $\rrel{\ulcorner-\urcorner}$ and $\sub_\cat{R}$. It is not hard to check that the 2-functor $\Prod$ is strong monoidal, whereas the 2-functor $\ccat{R}(I,-)$ is only lax monoidal: given objects $r_1,r_2\in\cat{R}$ the induced monotone map $\times\colon\sub_{\cat{R}}(r_1)\times\sub_{\cat{R}}(r_2)\to\sub_{\cat{R}}(r_1\times r_2)$ is not an isomorphism. However, $\sub_{\cat{R}}\Prod$ has a bit more structure than merely being a lax functor: each laxator has a left adjoint
\[
\adjr{1}{\true}{!}{\sub_{\cat{R}}(1)}
\qquad
\adjr[50pt]{\sub_{\cat{R}}(r_1)\times\sub_{\cat{R}}(r_2)}{\times}{\pair{\im_1,\  \im_2}}{\sub_{\cat{R}}(r_1\times r_2).}
\]

Abstractly, if $\ccat{R}$ and $\ccat{P}$ are monoidal 2-categories, we say that
a lax monoidal functor $\ccat{R} \to \ccat{P}$ is \emph{ajax} (``adjoint-lax'')
if its laxators $\rho$ and $\rho_{v,v'}$ are right adjoints in $\ccat{P}$.
Thus we have seen that the monoidal functor $\sub_{\cat{R}}\Prod\colon \frb[\ob\cat{R}]\too\pposet$ is ajax. This is precisely the structure required to reconstruct a regular
category. 

Ajax functors have the important property that they preserve adjoint monoids. An \emph{adjoint monoid} is an object with both monoid and
comonoid structures, such that the monoid maps are right adjoint to their
corresponding comonoid maps. In particular, we will see that each object in
$\frb$ has a canonical adjoint monoid structure, and that adjoint monoids in
$\pposet$ are exactly meet-semilattices. This guarantees that ajax functors $\frb
\to \pposet$ send objects in $\frb$---contexts---to meet-semilattices.

We now come to our main definition.

\begin{definition}\label{def.reg_sketch}
A \emph{regular calculus} is a pair $(\typeset, \pr)$ where $\typeset$ is a set and  $\pr\colon\frb\to\pposet$ is an ajax 2-functor. 
\end{definition}

Regular calculi have a natural notion of morphism, comprising a function between the sets of types and natural transformation between the ajax functors. We denote the category of regular calculi by $\rgcalc$. The goal of this paper is to demonstrate there are functors 
\[	
\begin{tikzcd}
  \rgcalc \ar[r, shift left=3pt, "\syn"] &
  \rgcat.\ar[l, shift left=3pt, "\prd"]
\end{tikzcd}
\]
and, moreover, that for each regular category $\cat{R}$ we have an equivalence $\syn(\prd(\cat{R}))\simeq\cat{R}$.

\paragraph{Related work}
Regular categories were first defined by Barr \cite{barr:1971a}, as a way to
elucidate the structure present in abelian categories. Shortly thereafter, Freyd
and Scedrov were the first to make the connection to regular logic. Similarly to
the present work, they focused on the structure of the bicategory
of relations, seeking an axiomatization through the notion of an allegory \cite{freyd1990categories}. Carboni and Walters also sought to axiomatize these objects, defining functionally complete cartesian bicategories of relations
\cite{carboni1987cartesian}; see also \cite{fong2019regular} for our reinterpretation of this work. Both allegories and bicategories of relations take the structure of a regular
category, and decompress it into a (locally posetal) 2-categorical expression.
While regular calculi have similar features to both allegories and cartesian
bicategories, such as emphasizing that the hom-posets are meet-semilattices or
that there are adjoint monoid structures on each object, they represent this
data in terms of a \emph{functor} rather than a category. 

In the world of databases, regular formulas correspond to conjunctive queries, and entailment corresponds to query containment. A well-known theorem of Chandra
and Merlin states that (conjunctive) query containment is decidable; their
proof translates logical expressions into graphical representations
\cite{chandra1977optimal}. In more recent work, Bonchi, Seeber, and Soboci\'nski
show that the Chandra--Merlin approach permits an elegant
formalization in terms of the Carboni--Walters axioms for bicategories of
relations \cite{bonchi2018graphical}. Patterson has also considered bicategories of
relations, and their Joyal-Street string calculus \cite{joyal1991geometry}, as a graphical way of
capturing the regular logical aspects knowledge representation
\cite{patterson2017knowledge}.

Presenting regular categories using monoidal maps $\frb\to\pposet$ fits into an emerging pattern. In \cite{Spivak.Schultz.Rupel:2016a} it was shown that lax monoidal functors $1\text{--}\Cat{Cob}_\typeset\to\smset$ present traced monoidal categories, and in \cite{fong2019hypergraph} it was shown that lax monoidal functors $\cospan_\typeset\to\smset$ present hypergraph categories. But now in all three cases, the domain of the functor represents a particular language of string diagrams, and the codomain represents a choice of enriching category. The present paper can be seen as an extension of that work, showing that regular categories are something like poset-enriched hypergraph categories.

\paragraph{Outline}
\Cref{chap.regular_categories} briefly reviews the notion of a regular
category $\cat{R}$, emphasizing in particular the construction of the
symmetric monoidal po-category $\rrel{\cat{R}}$ of relations in $\cat{R}$. In
\cref{chap.adjoint_monoids} we introduce the notions of adjoint monoid and
ajax monoidal functor, showing in particular that the subobjects functor of a
regular po-category is ajax. In \cref{chap.free_reg} we give our main
definition: a regular calculus is an ajax functor from a free regular
po-category to that of posets. We then give a fully faithful functor
$\prd\colon\rgcat\to\rgcalc$, from regular categories to regular calculi. In
\cref{chap.graphical_reglog}, we define the \emph{graphical terms} of a
regular calculus, and give rules for composing and reasoning with these. In
\cref{chap.relations}, we conclude by outlining how to construct a regular
category from a regular calculus, defining the functor $\syn\colon \rgcalc
\to \rgcat$.

\subsubsection{Notation}\label{page.notation}

Let us fix some notation. Most is standard, but we highlight in particular
our use of $\cp$ for composition, of the term \emph{po-category} for locally
posetal 2-category, and of an arrow $\Rightarrow$ pointing in the direction of
the left adjoint to signify an adjunction.
\smallskip

\begin{itemize}

 \item We typically denote composition in diagrammatic order, so the composite of $f\colon A\to B$ and $g\colon B\to C$ is $f\cp g\colon A\to C$. We often denote the identity morphism $\id_c\colon c\to c$ on an object $c\in\cat{C}$ simply by the name of the object, $c$. Thus if $f\colon c\to d$, we have $(c\cp f)=f=(f\cp d)$.

  \item We may denote the terminal object of any category by $\terminal$, and
    the associated map from an object $c$ as $!\colon c \to \terminal$, but we denote the top element of any poset $P$ by $\true\in P$.

  \item We denote the universal map into a product by $\pair{f,g}$ and the universal map out of a coproduct by $\copair{f,g}$.

  \item Given a natural number $n\in\nn$, define $\ord{n}\coloneqq\{1,2,\ldots,n\}\in\finset$; in particular $\ord{0}=\varnothing$.

	\item Given a lax monoidal functor $F\colon\cat{C}\to\cat{D}$, we denote the laxators by $\rho\colon I\to F(I)$ and $\rho_{c,c'}\colon F(c)\otimes F(c')\to F(c\otimes c')$ for any $c,c'\in\cat{C}$. We use the same notation for longer lists, e.g.\ we write $\rho_{c,c',c''}$ for the canonical map $F(c)\otimes F(c')\otimes F(c'')\to F(c\otimes c'\otimes c'')$.

	\item  We write
\[\adj{c}{L}{R}{d}\] 
to denote an adjunction, where the $\Rightarrow$ points in the direction of
the left adjoint. We sometimes write $L\dashv R$ inline, but are careful to
avoid the $\vdash$ symbol in this context; \emph{the symbol $\vdash$ always
means entailment}. We denote the category with the same objects of $\ccat{C}$ and with left
adjoints as morphisms as $\ladj(\ccat{C})$.

\end{itemize}

\paragraph{Symmetric monoidal po-categories.}
We use the term \define{po-category} to mean locally posetal 2-category, i.e.\ a
category enriched in partially ordered sets (posets). \define{Po-functors} are,
of course, poset-enriched functors (functors that preserve the local
order). The set of po-functors $\ccat{C}\to\ccat{D}$ itself has a natural order, where
$F\leq G$ iff $F(c)\leq G(c)$ for all $c\in\ccat{C}$. We define $\ppocat$ to be
the po-category of po-categories and po-functors. 

We use $\CCat{Xyz}$---with first character made blackboard bold---to denote
named po-categories and $\Cat{Xyz}$ for named 1-categories. We rely fairly
heavily on this; for example our notations for the free regular category and the
free regular po-category on a set $\typeset$ differ only in this way: $\frc$ vs.\
$\frb$.

A po-category is, in particular, a (strict) 2-category, and po-functors are
(strict) 2-functors. As such there is a forgetful functor $\ppocat\to\smcat$
sending each po-category and po-functor to its underlying 1-category and
1-functor. A \define{symmetric monoidal po-category} is a po-category $\ccat{C}$ 
together with po-functors $\otimes\colon \ccat{C} \times \ccat{C} \to \ccat{C}$ 
and $I\colon \terminal \to \ccat{C}$ whose underlying 1-structures form a
symmetric monoidal category.

The symmetric monoidal po-category $\pposet$ has posets $P$ as objects, monotone
maps $f\colon P \to Q$ as morphisms, and order given by $f \leq g$ iff $f(p)
\leq g(p)$ for all $p$.  Its monoidal structure is given by cartesian product
$P\times Q$, with the terminal poset $1$ the monoidal unit.

\goodbreak
\section{Background on regular categories} \label{chap.regular_categories}

Regular categories are, roughly speaking, categories that have a good notion of relation. They were first defined by Barr \cite{barr:1971a} to isolate important aspects of abelian categories. The reader who is unacquainted with regular categories and/or regular logic may see \cite{butz1998regular} for more details and proofs.

\subsection{Definition of regular categories and functors}

\begin{definition}[Barr]
  A \define{regular category} is a category $\cat{R}$ with the following properties:
  \begin{enumerate}
	\item it has all finite limits;
	\item the kernel pair of any morphism $f\colon r\to s$ admits a coequalizer $r\times_s r\tto r\to\Set{coeq}(f)$, which we denote $\im(f)\coloneqq\Set{coeq}(f)$ and call the \emph{image} of $f$; and 
	\item the pullback---along any map---of a regular epimorphism (a coequalizer of any parallel pair) is again a regular epimorphism.
\end{enumerate}

  A \define{regular functor} is a functor
  between regular categories that preserves finite limits and regular epis. We write $\rgcat$ for the category of regular categories.
\end{definition}

\begin{lemma}\label{lemma.OFS}
For any $f\colon r\to r'$, the universal map $\im(f)\to r'$ is monic. Thus every map $f$ can be factored into a regular epimorphism followed by a monomorphism: $r\surj \im(f)\inj r'$, and this constitutes an orthogonal factorization system. In particular, image  factorization is unique up to isomorphism.
\end{lemma}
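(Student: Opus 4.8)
The plan is to treat the four assertions in sequence, with essentially all of the work concentrated in showing that the universal comparison map $m\colon\im(f)\to r'$ is monic; the factorization statement, the orthogonality, and the uniqueness will then follow formally. First I would fix notation: write $(k_1,k_2)\colon K\tto r$ for the kernel pair of $f$, so that by definition $e\colon r\surj\im(f)=\Set{coeq}(k_1,k_2)$ is the coequalizer map, and let $m\colon\im(f)\to r'$ be the unique map with $e\cp m=f$. This $m$ exists precisely because $f$ coequalizes $k_1,k_2$: the defining pullback square of $K=r\times_{r'}r$ gives $k_1\cp f=k_2\cp f$, so $f$ factors through the coequalizer.

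The crux—and the step I expect to be the main obstacle—is proving that $m$ is monic. The key observation is that $K$ plays a double role: it is the kernel pair of $f$, and, being the pullback of $f$ against itself, it is also the domain of a regular epimorphism onto the kernel pair $L\coloneqq\im(f)\times_{r'}\im(f)$ of $m$. Concretely, I would exhibit the comparison map $q\colon K\to L$ as a composite of two base changes of $e$, namely $r\times_{r'}r\to\im(f)\times_{r'}r\to\im(f)\times_{r'}\im(f)$; one checks that each factor is a pullback of $e$ along a projection, hence a regular epi by pullback-stability (axiom 3), so $q$ is a composite of regular epis and therefore again a regular epi, in particular an epimorphism. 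Writing $l_1,l_2\colon L\tto\im(f)$ for the projections, the bookkeeping gives $q\cp l_1=k_1\cp e$ and $q\cp l_2=k_2\cp e$; but $e$ coequalizes $k_1,k_2$, so $q\cp l_1=q\cp l_2$, and cancelling the epimorphism $q$ yields $l_1=l_2$. A kernel pair with equal projections forces $m$ to be monic. The delicate points here are verifying that the composite of regular epis is again regular epi (which uses regularity) and matching $q\cp l_i$ with $k_i\cp e$.

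Once $m$ is monic, the factorization $r\surj\im(f)\inj r'$ is immediate. For the orthogonal factorization system I would verify the lifting property directly: given a commuting square with a regular epi $e'=\Set{coeq}(a,b)$ on the left and a mono $m'$ on the right, uniqueness of the diagonal is forced by $m'$ being monic, while existence comes from the universal property of the coequalizer—outer commutativity together with monicity of $m'$ gives $a\cp u=b\cp u$, so the relevant map factors uniquely through $e'$, and the resulting arrow is checked to satisfy both triangle identities using that $e'$ is epic. Finally, uniqueness of image factorization up to isomorphism is a formal consequence: two $(\text{regular epi},\text{mono})$ factorizations of $f$ assemble into such a square in both directions, and the two induced diagonals compose to identities by the uniqueness half of orthogonality, hence are mutually inverse isomorphisms.
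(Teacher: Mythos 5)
The paper does not actually prove this lemma: it is stated as standard background, with the reader referred to the literature (Butz's notes) for proofs of facts about regular categories. So there is no in-paper argument to compare against, and your proposal must stand on its own. It does: this is the standard textbook proof, and every step checks out. The comparison $q\colon r\times_{r'}r\to\im(f)\times_{r'}\im(f)$ does factor as two base changes of $e$ (each verified by the pullback pasting lemma), $q\cp l_i=k_i\cp e$ holds by construction, and equal kernel-pair projections do force monicity. The orthogonality and uniqueness arguments are the usual formal ones and are fine.

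One remark on the point you flag as delicate: you do not need the (genuinely nontrivial, and potentially circular-looking) fact that a composite of regular epimorphisms is a regular epimorphism. The cancellation $q\cp l_1=q\cp l_2\Rightarrow l_1=l_2$ only requires $q$ to be an epimorphism, and a composite of two epimorphisms is an epimorphism for trivial reasons; pullback-stability gives you that each factor is a regular epi, hence epi, and you are done. This matters because the usual proof that regular epis are closed under composition in a regular category itself goes through the image factorization you are in the middle of establishing, so it is best not to lean on it here.
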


\begin{definition} \label{def.support}
  The \define{support} of an object $r$ in a regular category is the image $r
  \surj \supp(r) \inj\terminal$ of its unique map to the terminal object.
\end{definition}

\begin{definition}
  A \define{subobject} of an object $r$ in a category is an isomorphism class of
  monomorphisms $r' \inj r$, where morphisms between monomorphisms are as in the
  slice category over $r$. This defines a partially ordered set $\sub(r)$. We write $r'\ss r$ to denote the equivalence class represented by $r'\inj r$.
\end{definition}

\begin{proposition}\label{prop.sub_ladj}
Any morphism $f\colon r\to s$ in a regular category $\cat{R}$ induces an adjunction
\begin{equation}\label{eqn.subobject_adj}
\adj{\sub(r)}{\lsh{f}}{\ust{f}}{\sub(s).}
\end{equation}
This extends to a functor $\sub\colon\cat{R}\to\ladj(\pposet)$.
\end{proposition}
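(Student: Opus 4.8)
The plan is to exhibit both adjoints explicitly, reduce the adjunction to a single factorization condition, and then deduce functoriality from the right adjoints via uniqueness of adjoints in posets. For the right adjoint I would take $\ust{f}\colon\sub(s)\to\sub(r)$ to be pullback along $f$, sending a subobject $(s'\inj s)$ to its pullback $(r\times_s s'\inj r)$; this is well defined since $\cat{R}$ has finite limits and monos are stable under pullback, and it is monotone because pullback preserves the order on subobjects. For the left adjoint I would take $\lsh{f}\colon\sub(r)\to\sub(s)$ to be ``image of the composite'', sending $(r'\inj r)$ to $\lsh{f}(r')\coloneqq\im\bigl(r'\inj r\To{f}s\bigr)$, which is a subobject of $s$ by \cref{lemma.OFS}. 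Monotonicity of $\lsh{f}$ is immediate: if $r_1'\leq r_2'$ in $\sub(r)$ then $r_1'\to s$ factors through $\im(r_2'\to s)$, so $\im(r_1'\to s)\leq\im(r_2'\to s)$.

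The heart of the proof is the adjunction $\lsh{f}\dashv\ust{f}$. Writing $a=(r'\inj r)$ and $b=(s'\inj s)$, and letting $g$ denote the composite $r'\inj r\To{f}s$, I must establish
\[
\lsh{f}(a)\leq b \iff a\leq\ust{f}(b),
\]
which I would do by showing that both sides are equivalent to the single condition that $g$ factors through $s'\inj s$. The right-hand side is exactly the universal property of the pullback $\ust{f}(b)=r\times_s s'$: the inclusion $r'\inj r$ lifts through $r\times_s s'\inj r$ precisely when $g$ factors through $s'\inj s$. For the left-hand side I would invoke the key consequence of \cref{lemma.OFS}---the diagonal fill-in property---which says that a map factors through a mono $m$ if and only if its image is contained in $m$; equivalently, the image is the smallest subobject through which the map factors. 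Applied to $m=(s'\inj s)$ this gives $\lsh{f}(a)=\im(g)\leq b$ iff $g$ factors through $s'\inj s$. The two conditions coincide, proving the adjunction.

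For functoriality I would argue on the right adjoints and transport the conclusion across the adjunction. Pullback along an identity is the identity, and by the pasting lemma for pullbacks one has $\ust{(f\cp g)}=\ust{g}\cp\ust{f}$ for composable $f,g$ (the order reverses). Since left adjoints compose---$\lsh{f}\cp\lsh{g}$ is left adjoint to $\ust{g}\cp\ust{f}$---and since in a poset a monotone map determines its left adjoint uniquely, comparing the left adjoints of $\ust{(f\cp g)}=\ust{g}\cp\ust{f}$ forces $\lsh{(f\cp g)}=\lsh{f}\cp\lsh{g}$; likewise $\lsh{(\id_r)}=\id_{\sub(r)}$. Hence $r\mapsto\sub(r)$, $f\mapsto\lsh{f}$ is a functor $\sub\colon\cat{R}\to\ladj(\pposet)$.

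I expect the adjunction---in particular the left-hand equivalence---to be the main obstacle, since it genuinely uses the \emph{orthogonality} of the factorization system in \cref{lemma.OFS} (the diagonal fill-in), rather than merely the existence of an epi--mono factorization. With that lemma in hand, well-definedness, monotonicity, and the functoriality argument are all routine.
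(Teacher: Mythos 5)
Your proof is correct. The paper offers no proof of this proposition --- it is stated as background, with the reader referred to the literature --- so there is nothing to compare against; your argument is the standard one. The only step needing real care, the equivalence $\lsh{f}(a)=\im(g)\leq b \iff g$ factors through $s'\inj s$, is exactly where the orthogonality clause of \cref{lemma.OFS} is required, and you invoke it correctly; the reduction of both sides of the adjunction to a single factorization condition, and the derivation of functoriality of $\lsh{(-)}$ from the pasting lemma for $\ust{(-)}$ together with uniqueness of adjoints between posets, are both sound.
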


\subsection{The relations po-category construction}

A regular category $\cat{R}$ has exactly the structure and properties necessary
to construct a po-category of relations, or \define{relations po-category}.
 

\begin{definition}
  Let $\cat{R}$ be a regular category; its \emph{relations po-category}
  $\rrel{\cat{R}}$ is the po-category with the same objects as
  $\cat{R}$ but whose morphisms, written $x\colon r\tickar s$, are relations
  $x\ss r\times s$ in $\cat{R}$ equipped with the subobject
  ordering $x\leq x'$ iff $x\ss x'$. The composite $x\cp y$ with a relation
$y\colon s\tickar t$ is obtained by pulling back over $s$ and image factorizing the
map to $r\times t$:
\begin{equation}\label{eqn.rel_composition}
\begin{tikzcd}[column sep=small, row sep=15pt]
  &[10pt]&
  x\times_{s}y\ar[dl]\ar[dr]\ar[d, ->>]&&[10pt]~\\&
  x\ar[d, >->]&
  x\cp y\ar[d, >->]&
  y\ar[d, >->]\\&
	r\times s\ar[dl, bend right=8pt]\ar[dr, bend left=8pt]&
	r\times t\ar[dll, bend left=12pt, crossing over]&
	s\times t\ar[dl, bend right=8pt]\ar[dr, bend left=8pt]\\[-5pt]
	r&&
	s&&
	t\ar[from=ull, bend right=12pt, crossing over]
\end{tikzcd}
\end{equation}
$\rrel{R}$ also inherits a symmetric monoidal structure $I\coloneqq 1$ and $r_1\otimes
  r_2\coloneqq r_1\times r_2$ from the cartesian monoidal structure on $\cat{R}$.
  
  Given a regular functor $\funr{F}\colon\cat{R}\to\cat{R'}$, mapping a relation $x \ss
  r \times s$ to its factorization $\funr{F}(x) \surj \rrel{\funr{F}}(x) \inj \funr{F}(r\times s)
  \cong \funr{F}(r) \times \funr{F}(s)$ induces a (strong) symmetric monoidal po-functor
  $\rrel{\funr{F}}\colon\rrel{\cat{R}} \to \rrel{\cat{R'}}$. We refer to this po-functor
  as the \emph{relations po-functor} of $\funr{F}$.
\end{definition}

It is straightforward to check that the composition rule
\cref{eqn.rel_composition} is unital and associative using the pullback
stability of factorizations, and to check that $\rrel{\funr{F}}$ is indeed a symmetric
monoidal po-functor using the fact that a regular functor $\funr{F}\colon \cat{R} \to
\cat{R'}$ preserves pullbacks and image factorizations. Direct proofs in the
literature of these two facts seem difficult to find, but see for
example \cite[Theorem~2.3]{Jayewardene2000} and \cite[Proposition
4.1]{fong2017decorated} respectively.

The relations po-category is just a repackaging of the data of the regular
category: any regular category can be recovered, at least up to isomorphism, by
looking at the adjunctions in its relations po-category. (This result is standard, but we provide a proof in \cref{app.fundamental_lemma}.) 

\begin{lemma}[Fundamental lemma of regular categories]\label{lemma.fundamental}
Let $\cat{R}$ be a regular category. Then there is an identity-on-objects isomorphism
\[\cat{R}\to\ladj(\rrel{\cat{R}}).\]
In particular, a relation $x\colon r\tickar s$ is a left adjoint iff it is the graph $x=\pair{\id_r,f}$ of a morphism $f\colon r\to s$ in $\cat{R}$.
\end{lemma}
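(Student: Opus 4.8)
The plan is to study the identity-on-objects \emph{graph functor} $\Gamma\colon\cat{R}\to\rrel{\cat{R}}$ sending a morphism $f\colon r\to s$ to the relation $\pair{\id_r,f}\colon r\tickar s$, i.e.\ the subobject $\pair{\id_r,f}\colon r\inj r\times s$ (monic since its first projection is $\id_r$). It is routine that $\Gamma$ is faithful ($f$ is recovered as the second leg of its graph), identity-on-objects, and functorial: $\Gamma(\id_r)$ is the diagonal $\Delta_r$, which is the identity relation, and a direct evaluation of the pullback--image composition \eqref{eqn.rel_composition} gives $\Gamma(f)\cp\Gamma(g)=\Gamma(f\cp g)$, because the relevant comparison map $\pair{\id_r,f\cp g}$ is already monic, so equals its own image. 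Granting this, the isomorphism $\cat{R}\iso\ladj(\rrel{\cat{R}})$ follows once we prove the displayed bi-implication that the left adjoints in $\rrel{\cat{R}}$ are exactly the graphs: then $\Gamma$ corestricts to an identity-on-objects functor $\cat{R}\to\ladj(\rrel{\cat{R}})$ which is injective on hom-sets by faithfulness and surjective by the bi-implication, hence an isomorphism.

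For the easy half I would show each graph is a left adjoint, with right adjoint its transpose $\pair{\id_r,f}\tp\colon s\tickar r$. Both defining inequalities reduce to short computations with \eqref{eqn.rel_composition}: the unit $\id_r\leq\pair{\id_r,f}\cp\pair{\id_r,f}\tp$ holds because the right-hand composite is the kernel-pair relation $\{(a,a')\mid f(a)=f(a')\}\ss r\times r$, which contains $\Delta_r$; and the counit $\pair{\id_r,f}\tp\cp\pair{\id_r,f}\leq\id_s$ holds because the left-hand composite is the image of $\pair{f,f}\colon r\to s\times s$, which lands inside $\Delta_s$.

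The substance is the converse. Let $x\colon r\tickar s$ be a left adjoint with right adjoint $y\colon s\tickar r$; write $p\colon x\to r$, $q\colon x\to s$ for the legs of $x\inj r\times s$, and $a\colon y\to s$, $b\colon y\to r$ for the legs of $y$, with unit $\id_r\leq x\cp y$ and counit $y\cp x\leq\id_s$. I claim $x$ is the graph of $f\coloneqq q\circ p\inv$, for which it suffices to show $p$ is an isomorphism; by \cref{lemma.OFS} this follows from $p$ being simultaneously a regular epimorphism and a monomorphism. That $p$ is a regular epi comes from the unit alone: composing the image-cover $x\times_s y\surj x\cp y$ with the first leg $x\cp y\to r$ gives exactly $p\circ\mathrm{pr}_x$, and this map is a regular epimorphism because the unit exhibits $\Delta_r$ as a section of that first leg (a split, hence regular, epi), and regular epis compose; since a left factor of a regular epi is again a regular epi, $p$ is a regular epi.

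The monomorphicity of $p$---equivalently, the single-valuedness of $x$---is the main obstacle, because the clean set-level argument (given $(a,b_1),(a,b_2)\in x$, the unit produces a witness $b$ with $(a,b)\in x$ and $(b,a)\in y$, and the counit forces $b=b_1=b_2$) must be run in $\cat{R}$, where covers do not split. The device is to replace elements by covers using the pullback-stability of regular epimorphisms. From the unit I would extract a regular epimorphism $e\colon\widetilde{r}\surj r$ together with $\sigma\colon\widetilde{r}\to x$ and $\tau\colon\widetilde{r}\to y$ satisfying $p\sigma=e=b\tau$ and $q\sigma=a\tau$, obtained by pulling the image-cover of $x\times_s y$ back along the section of $\Delta_r$. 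From the counit I would extract the single equation $a\circ\mathrm{pr}_y=q\circ\mathrm{pr}_x$ on $y\times_r x$, expressing that the composite lands in the diagonal. Pulling $e$ back along the common composite $p\pi_1=p\pi_2$ of the kernel pair $x\times_r x$ yields a regular epi onto $x\times_r x$ on which $\tau$ supplies a common witness $v$ with $b(v)=p\pi_1=p\pi_2$; applying the counit equation to $(v,\pi_1)$ and $(v,\pi_2)$ gives $q\pi_1=q\pi_2$. As $\pair{p,q}$ is jointly monic and $p\pi_1=p\pi_2$ by definition, the two kernel-pair projections coincide, so $p$ is monic. Combining the two halves yields the stated bi-implication, and hence the identity-on-objects isomorphism.
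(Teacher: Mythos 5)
Your proof is correct, and it shares the paper's overall strategy (see \cref{app.fundamental_lemma}): both arguments come down to showing that a left adjoint relation $x\ss r\times s$ has invertible first leg, so that $x$ is the graph of the induced morphism, with the easy direction and the functoriality of the graph assignment handled the same way. The difference is in how invertibility of the first leg is established. The paper treats the unit as a map $i\colon r\to x\times_s x'$ into the pullback itself and exhibits $i\cp\pi_s$ as an explicit two-sided inverse to the first leg $g$ by a diagram chase; strictly speaking the unit only provides a monomorphism $r\inj x\cp x'$ into the \emph{image} of $x\times_s x'\to r\times r$, so the paper's $i$ presupposes a lift through the cover $x\times_s x'\surj x\cp x'$, a point left implicit there. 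You instead split invertibility into two halves---the first leg is a regular epimorphism (extracted from the unit, using that covers compose and right-cancel) and a monomorphism (extracted from the counit, by testing against the kernel pair)---and conclude via the orthogonal factorization system of \cref{lemma.OFS}. Your cover-and-generalized-element device (pulling the image cover back along $\Delta_r$ to obtain $e\colon\widetilde{r}\surj r$ with witnesses $\sigma,\tau$, then pulling $e$ back again over the kernel pair before invoking the counit identity $a\circ\mathrm{pr}_y=q\circ\mathrm{pr}_x$) is exactly what is needed to make the naive ``choose a witness'' step legitimate in a regular category where covers need not split, so on this point your writeup is, if anything, more careful than the paper's; the cost is a somewhat longer argument than the paper's direct construction of the inverse.
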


%

\begin{remark}\label{rem.rels_adjoint_composites}
It follows from the proof of \cref{lemma.fundamental} that $x\colon r\tickar s$ is a right adjoint iff it is the co-graph $\pair{f,\id_s}$ of a morphism $f\colon s\to r$. Furthermore, since any morphism $x=\pair{g,f}\colon r \tickar s$ in $\ccat{R}$ can be written as $x=\pair{g,\id_x} \cp \pair{\id_x,f}$, it follows that every morphism in $\ccat{R}$ can be written as the composite of a right adjoint followed by a left adjoint.
\end{remark}
%

Similarly, any regular functor can be recovered as the action of its
relations po-functor on left adjoints. Although we do not assume it below, it
is a result of Carboni and Walters that every strong symmetric monoidal
functor $\rrel{\cat{R}}\to\rrel{\cat{R}'}$ is the relations po-functor
associated to a regular functor $\funr{F}\colon\cat{R}\to\cat{R}'$
\cite{carboni1987cartesian}. Indeed, this foreshadows the rephrasing of
regular structure in terms of monoidal structure, which runs through this
paper. In any case, this motivates the following definition.

\begin{definition}\label{def.regular_pocat}
  A po-category is called a \define{regular po-category} if it is isomorphic to the relations po-category $\rrel{\cat{R}}$ of some regular category $\cat{R}$.
  
  A strong symmetric monoidal po-functor between regular po-categories is called
  a \emph{regular po-functor} if it is isomorphic to the relations po-functor
  $\rrel{\funr{F}}$ associated to a regular functor $\funr{F}$. We write $\rgpocat$ for the
  category of regular po-categories.
\end{definition}

By the fundamental lemma (\cref{lemma.fundamental}), we now have an equivalence of categories:
\begin{equation}\label{eqn.equiv_regpocat}
  \begin{tikzcd}
    \rgcat \ar[r, shift left=5pt,"\rrel{-}"] \ar[r, phantom, "\cong"] & 
    \rgpocat. \ar[l, shift left=5pt,"\ladj"]
  \end{tikzcd}
\end{equation}

\section{Adjoint monoids and adjoint-lax functors} \label{chap.adjoint_monoids}

The poset of subobjects of an object in a regular category is always a
meet-semilattice. We characterize these as precisely the adjoint monoids in $\pposet$. Every regular po-category $\ccat{R}$ is isomorphic to its own po-category of adjoint monoids $\ccat{R}\cong\aadjmon(\ccat{R})$. 

\subsection{Definition of ajax functor and adjoint monoid}

\begin{definition}
  Let $\ccat{C}$ and $\ccat{D}$ be monoidal po-categories. An \define{adjoint-lax} or \define{ajax} po-functor $F\colon \ccat{C} \to \ccat{D}$ is a lax symmetric monoidal po-functor for which the laxators are right adjoints.
  
  We denote the laxators by $\rho$ and their left adjoints by $\lambda$:
  \[
  \adjr{I}{\rho}{\lambda}{F(I)}
  \qqand
  \adjr{F(c)\otimes F(c')}{\rho_{c,c'}}{\lambda_{c,c'}}{F(c\otimes c')}.
  \]
 \end{definition}

\begin{warning}
The notion of ajax functor has a dual notion of op-ajax functor: an oplax functor $\ccat{C}\to\ccat{D}$ for which the op-laxators are left adjoints. These two notions \emph{do not coincide}! The laxator naturality squares are asked to strictly commute in an ajax functor, and this property only implies that their mate squares, the corresponding oplaxator naturality squares weakly commute.
\end{warning}

Here is a obvious, but useful, consequence of the definition.

\begin{lemma}\label{lemma.ajax}
Every strong monoidal functor between monoidal po-categories is ajax. The composite of ajax po-functors is ajax.
\end{lemma}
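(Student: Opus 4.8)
The plan is to treat the two sentences separately, reducing each to standard facts about adjunctions in a $2$-category, specialized to the locally posetal setting where a $2$-cell is simply an inequality.

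For the first claim, recall that a strong monoidal functor $F$ is one whose laxators $\rho$ and $\rho_{c,c'}$ are isomorphisms. I would first observe that in any po-category an isomorphism $f\colon c\to d$ is automatically a right adjoint: writing $f^{-1}$ for its inverse, the equations $f^{-1}\cp f=\id$ and $f\cp f^{-1}=\id$ supply (as equalities, hence in particular as the required inequalities) the unit and counit exhibiting $f^{-1}\dashv f$. Applying this to each laxator of $F$ shows that every laxator of a strong monoidal functor is a right adjoint, with its inverse as left adjoint. Since a strong monoidal functor is in particular lax symmetric monoidal, this means $F$ is ajax.

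For the second claim, let $F\colon\ccat{C}\to\ccat{D}$ and $G\colon\ccat{D}\to\ccat{E}$ be ajax. First I would record the standard fact that $F\cp G$ is lax symmetric monoidal, with laxators obtained by pasting: at the unit it is $\rho^G\cp G(\rho^F)\colon I\to GF(I)$, and at a pair $c,c'$ it is
\[
GF(c)\otimes GF(c')\xrightarrow{\;\rho^G_{Fc,Fc'}\;}G(Fc\otimes Fc')\xrightarrow{\;G(\rho^F_{c,c'})\;}GF(c\otimes c').
\]
Thus each laxator of the composite is itself a composite of two morphisms, and I claim each factor is a right adjoint. Indeed, $\rho^G$ and $\rho^G_{Fc,Fc'}$ are right adjoints because $G$ is ajax; and since $\rho^F$ and $\rho^F_{c,c'}$ are right adjoints (as $F$ is ajax) and a po-functor, being a strict $2$-functor, preserves adjunctions, their images $G(\rho^F)$ and $G(\rho^F_{c,c'})$ are again right adjoints. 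Finally, a composite of right adjoints is a right adjoint (the left adjoint of $a\cp b$ being the composite $b^{L}\cp a^{L}$ of the left adjoints in reverse order). Hence every laxator of $F\cp G$ is a right adjoint, so $F\cp G$ is ajax.

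There is no genuine obstacle here; the statement amounts to the observation that ``being a right adjoint'' is stable under the two operations out of which these laxators are built—applying a po-functor and composing. The only point requiring care is to keep the adjunction conventions aligned with the diagrammatic composition order $\cp$ used throughout, in particular that the left adjoint of a composite $a\cp b$ is $b^{L}\cp a^{L}$ rather than $a^{L}\cp b^{L}$; once this bookkeeping is fixed, both verifications are routine.
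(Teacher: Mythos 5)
Your proposal is correct, and it supplies precisely the routine argument the paper leaves implicit: the lemma is stated there without proof as an ``obvious consequence of the definition.'' Both halves are handled the standard way---isomorphisms are right adjoints, po-functors (being 2-functors) preserve adjunctions, and composites of right adjoints are right adjoints---with the composition-order bookkeeping for $\cp$ handled correctly.
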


Recall that $1$ is the terminal monoidal po-category.

\begin{proposition}\label{prop.adjoint_monoids}
Let $(\ccat{C},I,\otimes)$ be a monoidal po-category. There is a bijection between:
\begin{enumerate}
	\item The set of ajax functors $1\to\ccat{C}$,
	\item The set of commutative monoid objects $(c,\mu,\eta)$ such that $\mu$ and $\eta$ are right adjoints, and
	\item The set of cocommutative comonoid objects $(c,\delta,\epsilon)$ such that $\delta$ and $\epsilon$ are left adjoints.
\end{enumerate}
\end{proposition}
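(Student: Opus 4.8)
The plan is to read the bijection $(1)\iff(2)$ directly off the definitions and to obtain $(2)\iff(3)$ by passing to adjoints. For the first, I would unwind what an ajax functor $F\colon 1\to\ccat{C}$ is. Since the terminal po-category $1$ has a single object $\terminal$, a single morphism, and a singleton hom-poset, a po-functor $F$ is merely a choice of object $c\coloneqq F(\terminal)\in\ccat{C}$. Its lax symmetric monoidal structure consists of exactly two nontrivial laxators, $\eta\coloneqq\rho\colon I\to c$ and $\mu\coloneqq\rho_{\terminal,\terminal}\colon c\otimes c\to c$ (naturality being vacuous as $1$ is terminal), and the associativity, unit, and symmetry coherence axioms for a lax symmetric monoidal functor become---after substituting $\terminal\otimes\terminal=\terminal$ throughout---precisely the associativity, unit, and commutativity axioms of a commutative monoid $(c,\mu,\eta)$. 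The ajax condition says exactly that $\mu$ and $\eta$ are right adjoints. Hence the data of $(1)$ and $(2)$ coincide on the nose, and this half is pure bookkeeping.

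For $(2)\iff(3)$ I would send a commutative monoid $(c,\mu,\eta)$ with $\mu,\eta$ right adjoints to $(c,\delta,\epsilon)$, where $\delta\dashv\mu$ and $\epsilon\dashv\eta$ are the left adjoints, and conversely send a cocommutative comonoid with left-adjoint structure maps to the monoid formed by the right adjoints. Because adjoints in a po-category are unique, these assignments are mutually inverse on underlying structure maps; the only real content is that they preserve the respective axioms.

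That preservation is where the work lies, and I expect it to be the main obstacle. The tool is a small mate calculus in $\ccat{C}$: since $\otimes$ is a po-functor, hence a $2$-functor, it preserves adjunctions (so e.g.\ $\delta\otimes\id_c\dashv\mu\otimes\id_c$); the left adjoint of a composite is the composite of left adjoints in reverse order; and the coherence isomorphisms---associators, unitors, and the symmetry (which is self-inverse)---are adjoint to their inverses. Granting this, each monoid axiom is an equation between two composites of right adjoints, so both sides have a unique left adjoint, and equating these left adjoints yields exactly the dual axiom. For instance, taking left adjoints of the associativity equation $(\mu\otimes\id_c)\cp\mu=\alpha\cp(\id_c\otimes\mu)\cp\mu$ and reversing composites gives $\delta\cp(\delta\otimes\id_c)=\delta\cp(\id_c\otimes\delta)\cp\alpha\inv$, i.e.\ coassociativity of $\delta$; likewise the two unit axioms become the two counit axioms, and commutativity $\sigma\cp\mu=\mu$ becomes cocommutativity $\delta\cp\sigma=\delta$. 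The bulk of the proof is therefore the careful check that the mates of the three monoid diagrams are exactly the three comonoid diagrams, with attention to the orientation of the structural isomorphisms; uniqueness of adjoints then makes the round trip the identity, completing all the bijections.
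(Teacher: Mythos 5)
Your proof is correct, and it follows exactly the route the paper intends: the paper states this proposition without proof, but the remark immediately following it identifies $\eta=\rho$, $\mu=\rho_{1,1}$, $\epsilon=\lambda$, $\delta=\lambda_{1,1}$, which is precisely your correspondence. Your mate-calculus argument for $(2)\Leftrightarrow(3)$ (2-functoriality of $\otimes$, reversal of composites under taking left adjoints, invertible coherence maps being adjoint to their inverses, and uniqueness of adjoints in a po-category) supplies the details the paper leaves implicit, and is sound.
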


In particular, if $(c,\rho,\lambda)\colon 1\to\ccat{C}$ is an ajax functor then the corresponding monoid and comonoid structures on $c$ are given by $\eta=\rho$, $\mu=\rho_{1,1}$ and $\epsilon=\lambda$, $\delta=\lambda_{1,1}$. This motivates the following definition.

\begin{definition}
Let $(\ccat{C},I,\otimes)$ be a monoidal po-category. An \emph{adjoint commutative monoid} (or simply \emph{adjoint monoid}) in $\ccat{C}$ is a commutative monoid object $(c,\mu,\eta)$ in $\ccat{C}$ such that $\mu$ and $\eta$ are right adjoints.
\end{definition}

Adjoint monoids are a slight weakening of the \emph{internal meet-semilattice} notion from theoretical computer science; see \cite[Chapter 5]{schalk1994algebras} and references therein.

Since the composite of ajax functors is ajax, we have:
\begin{proposition}\label{prop.ajax_pres_adjmon}
Ajax functors send adjoint monoids to adjoint monoids.
\end{proposition}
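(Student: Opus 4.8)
The plan is to deduce the statement from the two results just established: the classification of adjoint monoids as ajax functors out of $1$ in \cref{prop.adjoint_monoids}, together with the closure of ajax functors under composition in \cref{lemma.ajax}. In effect, ``adjoint monoid'' is itself a representable notion, so preservation is nothing more than functoriality of composition.

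Concretely, let $F\colon\ccat{C}\to\ccat{D}$ be an ajax po-functor and let $(c,\mu,\eta)$ be an adjoint monoid in $\ccat{C}$. By the correspondence $(1)\Leftrightarrow(2)$ of \cref{prop.adjoint_monoids}, this adjoint monoid is precisely the data of an ajax functor $a\colon 1\to\ccat{C}$ whose value on the unique object is $c$ and whose laxators are $\rho^a=\eta$ and $\rho^a_{1,1}=\mu$. Since $F$ is ajax by hypothesis, \cref{lemma.ajax} shows that the composite $a\cp F\colon 1\to\ccat{D}$ is again ajax. Reading the correspondence $(1)\Leftrightarrow(2)$ of \cref{prop.adjoint_monoids} backwards, now for the codomain $\ccat{D}$, this composite names an adjoint monoid in $\ccat{D}$ whose underlying object is $F(c)$ (the value of $a\cp F$ on the unique object of $1$). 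Its mere existence is the assertion of the proposition.

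The one point requiring a short check---and the step I expect to be the only real (if routine) obstacle---is to identify the structure produced by this abstract route with the \emph{expected} pushforward of $(\mu,\eta)$ along the lax structure of $F$. Unwinding the laxators of the composite lax functor $a\cp F$ as the pasted composites of those of $a$ and of $F$, one reads off the unit $\eta'=\rho\cp F(\eta)\colon I\to F(c)$ and the multiplication $\mu'=\rho_{c,c}\cp F(\mu)\colon F(c)\otimes F(c)\to F(c)$; these are exactly the formulas of \cref{prop.adjoint_monoids} applied to $a\cp F$, so the two descriptions agree.

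Once this identification is in hand, adjointness is in fact transparent even without invoking \cref{lemma.ajax}: $\rho$ and $\rho_{c,c}$ are right adjoints because $F$ is ajax, while $F(\eta)$ and $F(\mu)$ are right adjoints because a po-functor is a strict $2$-functor and therefore preserves adjunctions, and $\eta,\mu$ are right adjoints by hypothesis. As a composite of right adjoints is again a right adjoint, both $\eta'$ and $\mu'$ are right adjoints, so $(F(c),\mu',\eta')$ is an adjoint monoid, completing the argument.
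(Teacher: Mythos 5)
Your argument is exactly the paper's: the paper derives this proposition immediately from \cref{lemma.ajax} (composites of ajax functors are ajax) together with the identification of adjoint monoids with ajax functors out of $1$ in \cref{prop.adjoint_monoids}. Your additional unwinding of the composite laxators as $\eta'=\rho\cp F(\eta)$ and $\mu'=\rho_{c,c}\cp F(\mu)$ is a correct and harmless elaboration of the same proof.
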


Recall that in a cartesian monoidal category every object has a unique comonoid structure, in fact a commutative comonoid, given by the universal properties of terminal objects and products. This yields the following examples.

\begin{example}\label{prop.adjmon_msl}
A poset $P\in\pposet$ is an adjoint monoid iff it is a meet-semilattice, in which case $\eta=\true$ and $\mu=\wedge$.
\end{example}

\begin{example}\label{prop.adjmon_reg}
Let $\cat{R}$ be a regular category. Every object $r\in\ccat{R}$ in its relations po-category has a unique adjoint monoid structure. Indeed, since $\cat{R}$ is cartesian monoidal, there is a unique cocommutative comonoid
structure on every object. The fundamental
lemma (\cref{lemma.fundamental}) then gives an isomorphism
$\cat{R}\cong\ladj(\ccat{R})$, and the statement follows by \cref{prop.adjoint_monoids}
$(2)\Leftrightarrow(3)$.
\end{example}

\subsection{The subobjects-functor is ajax}

Let $\cat{R}$ be a regular category and recall the subobjects functor $\sub\colon\cat{R}\to\ladj(\pposet)$ from \cref{prop.sub_ladj}. It extends to a po-functor $\sub\colon\ccat{R}\to\pposet$, where $\ccat{R}=\rrel{\cat{R}}$ is the relations po-category. To be explicit, write a relation $A\ss r\times r'$ as a span $r\From{f} A\To{f'} r'$. Then the map $\sub(A)\colon\sub(r)\to\sub(r')$ applied to a subobject $\varphi\ss r$ is given pulling back and then taking the image:
\[
\begin{tikzcd}
	\varphi\ar[d, >->]&
	\cdot\ar[d, >->]\ar[l]\ar[r, ->>]&
	\cdot\ar[d, >->]\\
	r\ar[ur, phantom, very near end, "\llcorner"]&
	A\ar[l]\ar[r]&
	r'
\end{tikzcd}
\]
That is, $\sub(A)=\lsh{f}\cp\ust{g}$. This po-functor is representable: $\sub(-)=\ccat{R}(I,-)$, where $I$ is the terminal object in $\cat{R}$. It is straightforward to show, moreover, that its monoidal structure maps have left adjoints, and so this functor is ajax (see \cref{app.sub_is_ajax}).

\begin{theorem}\label{thm.sub_is_ajax}
The po-functor 
$
  \sub_{\cat{R}}\colon \ccat{R} \longrightarrow \pposet
$
is ajax for any regular po-category $\ccat{R}$.
\end{theorem}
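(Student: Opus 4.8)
The plan is to leverage the representability established just above, namely $\sub_{\cat{R}} = \ccat{R}(I,-)$. Since $\ccat{R}$ is a regular po-category I may assume $\ccat{R}\iso\rrel{\cat{R}}$ for a regular category $\cat{R}$, so that \cref{prop.sub_ladj} supplies the adjunctions $\lsh{f}\dashv\ust{f}$ on subobject posets. The first step is purely formal: for any symmetric monoidal po-category $(\ccat{C},I,\otimes)$, the representable po-functor $\ccat{C}(I,-)\colon\ccat{C}\to\pposet$ carries a canonical lax symmetric monoidal structure, with unit laxator $\rho\colon 1\to\ccat{C}(I,I)$ selecting $\id_I$ and binary laxator $\rho_{c,c'}\colon\ccat{C}(I,c)\times\ccat{C}(I,c')\to\ccat{C}(I,c\otimes c')$ sending $(x,y)$ to the composite of the unitor $I\iso I\otimes I$ with $x\otimes y$. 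Coherence and naturality (with respect to \emph{all} morphisms of $\ccat{C}$, not just left adjoints) follow from the coherence of $\ccat{C}$, and monotonicity of the laxators follows from monotonicity of $\otimes$ and composition. So I would first record that $\sub_{\cat{R}}$ is lax symmetric monoidal, then unwind the laxators in the relational setting: since $I=1$ and $I\otimes I=1\times 1=I$, the unit laxator $\rho\colon 1\to\sub_{\cat{R}}(1)$ names the maximal subobject $\true=[\id_1]$, and the binary laxator $\rho_{r_1,r_2}$ is the external product $(\varphi,\psi)\mapsto\varphi\times\psi$.

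It then remains to exhibit left adjoints to both laxators, which is the genuine content of being ajax. The unit laxator is immediate: $\rho\colon 1\to\sub_{\cat{R}}(1)$ is right adjoint to the unique monotone map $\sub_{\cat{R}}(1)\to 1$, precisely because $\rho$ picks out the top element $\true$ of the poset $\sub_{\cat{R}}(1)$.

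For the binary laxator I would take the candidate left adjoint to be $\lambda_{r_1,r_2}=\pair{(\pi_1)_!,(\pi_2)_!}$, sending a subobject $\chi\ss r_1\times r_2$ to the pair of its images along the two projections $\pi_i\colon r_1\times r_2\to r_i$, using the adjunctions $(\pi_i)_!\dashv(\pi_i)^*$ of \cref{prop.sub_ladj}. To verify $\lambda_{r_1,r_2}\dashv\rho_{r_1,r_2}$ it suffices to establish the single order-equivalence
\[
(\pi_1)_!(\chi)\leq\varphi \ \text{ and }\ (\pi_2)_!(\chi)\leq\psi
\quad\iff\quad
\chi\leq\varphi\times\psi,
\]
for all $\chi\in\sub_{\cat{R}}(r_1\times r_2)$ and $(\varphi,\psi)\in\sub_{\cat{R}}(r_1)\times\sub_{\cat{R}}(r_2)$. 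Transposing each conjunct on the left across its adjunction rewrites the left-hand side as $\chi\leq(\pi_1)^*(\varphi)$ together with $\chi\leq(\pi_2)^*(\psi)$; identifying $(\pi_1)^*(\varphi)=\varphi\times r_2$ and $(\pi_2)^*(\psi)=r_1\times\psi$, their meet in $\sub_{\cat{R}}(r_1\times r_2)$ is exactly $\varphi\times\psi$, which yields the right-hand side.

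The conceptual half—lax symmetric monoidality—comes for free from the representable-functor structure, so the only real obstacle is the binary laxator's adjunction. I expect the one point needing care to be confirming the explicit cylinder descriptions $(\pi_1)^*(\varphi)=\varphi\times r_2$ and $(\pi_2)^*(\psi)=r_1\times\psi$, i.e.\ that pullback of a subobject along a projection is the evident product; once these are in hand the equivalence above is a two-line meet computation. With both laxators shown to be right adjoints, $\sub_{\cat{R}}$ is ajax by definition, establishing the theorem for every regular po-category $\ccat{R}\iso\rrel{\cat{R}}$.
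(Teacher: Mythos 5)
Your proposal is correct, and its skeleton coincides with the paper's: both get the lax symmetric monoidal structure on $\sub_{\cat{R}}=\ccat{R}(I,-)$ for free from representability and $I\otimes I\iso I$, both dispatch the unit laxator by observing that $\id_I$ is the top element of $\ccat{R}(I,I)$, and both take $\pair{\lsh{\pi_1},\lsh{\pi_2}}$ (equivalently, postcomposition with the relations $\pi_1,\pi_2$) as the left adjoint of the binary laxator. Where you genuinely diverge is in \emph{verifying} that adjunction. The paper stays inside the po-category $\ccat{R}$ and checks the unit and counit inequalities by hand, using the comonoid maps $\delta$, $\epsilon$ and the lax interchange $g\cp\delta_{r_1\otimes r_2}\leq\delta_I\cp(g\otimes g)$; this keeps the argument expressible purely in terms of the adjoint-monoid structure of a regular po-category. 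You instead descend to the subobject fibration of $\cat{R}$, transpose both conjuncts across the adjunctions $\lsh{\pi_i}\dashv\ust{\pi_i}$ of \cref{prop.sub_ladj}, and reduce everything to the lattice identity $\ust{\pi_1}(\varphi)\wedge\ust{\pi_2}(\psi)=\varphi\times\psi$ in $\sub(r_1\times r_2)$. That identity is the one point you flag as needing care, and it does hold: $\ust{\pi_1}(\varphi)=\varphi\times r_2$ because pullback along a projection is the evident cylinder, and the meet of two subobjects is their pullback over the ambient object, which for $\varphi\times r_2$ and $r_1\times\psi$ is $\varphi\times\psi$. Your route is shorter and more elementary, at the cost of leaning on the $1$-categorical description of $\cat{R}$; the paper's route is slightly longer but exhibits the adjunction as a consequence of the po-categorical structure alone, which is more in the spirit of the reconstruction results later in the paper. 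Both are complete proofs of \cref{thm.sub_is_ajax}.
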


Since ajax functors send adjoint monoids to adjoint monoids, we also have:
\begin{corollary}\label{cor.meetsl}
The po-functor $\sub_{\cat{R}}\colon\ccat{R}\to\pposet$ sends each object $r\in\ccat{R}$ to a meet-semilattice.
\end{corollary}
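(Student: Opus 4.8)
The plan is to derive the corollary as an immediate consequence of the three facts already established: that $\sub_{\cat{R}}$ is ajax (\cref{thm.sub_is_ajax}), that ajax functors preserve adjoint monoids (\cref{prop.ajax_pres_adjmon}), and that adjoint monoids in $\pposet$ are precisely meet-semilattices (\cref{prop.adjmon_msl}). In other words, the entire burden of proof has been discharged by earlier results, and what remains is to chain them together correctly.

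First I would fix an object $r\in\ccat{R}$ and recall from \cref{prop.adjmon_reg} that $r$ carries a canonical adjoint monoid structure $(r,\mu,\eta)$. This structure comes from the unique cocommutative comonoid structure that $r$ possesses in the cartesian monoidal $1$-category $\cat{R}$, transported across the fundamental lemma; the point is that $r$ is genuinely an adjoint monoid in $\ccat{R}$ in the sense of \cref{prop.adjoint_monoids}, i.e.\ equivalently an ajax functor $1\to\ccat{R}$.

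Next I would apply \cref{prop.ajax_pres_adjmon}. Since $\sub_{\cat{R}}\colon\ccat{R}\to\pposet$ is ajax and the composite of ajax functors is ajax (\cref{lemma.ajax}), the adjoint monoid $r$ is carried to an adjoint monoid $\bigl(\sub_{\cat{R}}(r),\sub_{\cat{R}}(\mu),\sub_{\cat{R}}(\eta)\bigr)$ in $\pposet$. Finally I would invoke \cref{prop.adjmon_msl}, which identifies adjoint monoids in $\pposet$ with meet-semilattices, whose unit and multiplication are necessarily the top element $\true$ and the binary meet $\wedge$. Hence $\sub_{\cat{R}}(r)$ is a meet-semilattice, as claimed.

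Because all the substance lives in the cited statements, I expect no real obstacle here; the argument is a three-step composition of facts. The one point I would flag for the reader is a sanity check rather than a difficulty: one should confirm that the abstract operations $\sub_{\cat{R}}(\eta)$ and $\sub_{\cat{R}}(\mu)$ delivered by the ajax machinery agree with the concrete semilattice operations on $\sub_{\cat{R}}(r)$, namely the top subobject $r\ss r$ and intersection of subobjects. This identification is forced by the uniqueness in \cref{prop.adjmon_msl} but is worth a sentence, since it is what makes the statement useful downstream.
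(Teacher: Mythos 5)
Your proposal is correct and follows exactly the paper's own route: the corollary is stated as an immediate consequence of \cref{thm.sub_is_ajax}, \cref{prop.ajax_pres_adjmon}, \cref{prop.adjmon_reg}, and \cref{prop.adjmon_msl}, which is precisely the three-step chain you give. The closing sanity check that $\sub_{\cat{R}}(\eta)$ and $\sub_{\cat{R}}(\mu)$ are the top subobject and intersection is a reasonable addition but not needed for the statement itself.
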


\section{Free regular categories and regular calculi} \label{chap.free_reg}

We now construct the free regular category $\frc$---as well as the
free regular po-category $\frb$---on a set $\typeset$. This allows us to define a \emph{regular calculus} to be an ajax po-functor
$\frb\to\pposet$.

\subsection{The free regular category and po-category on a set}\label{sec.frc}

Write $\ppf(\typeset)$ for the poset of finite subsets of $\typeset$; this, or
equally its opposite category $\ppf(\typeset)\op$, is a free
$\wedge$-semilattice on $\typeset$. Write also $\finset$ for the
category of finite sets and functions. Note that $\finset\op$ is the free
category with finite limits on one object. The free regular category on
$\typeset$ arises when these two structures interact.

Note that for any $\typeset$ there is an inclusion of categories
$\inc\colon\ppf(\typeset)\to\finset$.

\begin{definition}\label{def.free_reg}
Define $\frc\coloneqq(\ppf(\typeset)\op\downarrow\finset\op)$ to be the comma category
\[
\begin{tikzcd}[column sep=30pt, row sep=5pt]
	&\frc\ar[dr, "\vars"]\ar[dl, "\supp"']\\
	\ppf(\typeset)\op\ar[dr, "\inc"']\ar[rr, phantom, "\xRightarrow{\ \type\ }"]&&
	\finset\op\ar[dl, "\id"]\\
	&\finset\op
\end{tikzcd}
\]
for any set $\typeset$. We refer to objects $\Gamma\in\frc$ as \emph{contexts}.
\end{definition}
\goodbreak

We can unpack a context $\Gamma$ into a quasi-traditional form, e.g.\ as
\[\Gamma\qquad=\qquad x_1:\tau_1,\ldots,x_n:\tau_n\mid\tau_1',\ldots \tau_m'\]
which has a finite set of \emph{variables}, $\vars(\Gamma)=\{x_1,\ldots,x_n\}$, \emph{support} set $\supp(\Gamma)=\{\tau_1,\ldots,\tau_n,\tau_1'\ldots,\tau_m'\}$, and which has the \emph{typing} function $\type(x_i)=\tau_i$. The notion of support does not typically have a place in traditional logical contexts, but we include it because $\supp(\Gamma)$ has a definite place in objects of the free regular category.

Working in the skeleton of $\frc$, we can assume that each cardinality has a unique set of variables, e.g.\ $\ord{n}=\{1,\ldots,n\}$. Here is an equivalent but more concrete description of the free regular category on $\typeset$:
\begin{equation}\label{eqn.fr_Lambda_explicit}
\begin{aligned}
	\ob\frc\coloneqq&
	\big\{(n, S, \tau)\mid n\in\nn, S\ss\typeset \tn{ finite}, \tau\colon\ord{n}\to S\big\}\\
	\frc\big(\Gamma,\Gamma'\big)\coloneqq&
	\left\{f\colon \ord{n}' \to \ord{n} \, \middle|
	\begin{tikzcd}[row sep=0]
		\ord{n}\ar[r, "\tau"]&
		S\ar[dr, phantom, sloped, "\ss"]\\
		&&[-13pt]
		\typeset\\
		\ord{n}'\ar[r, "\tau'"']\ar[uu, "f"]&
		S'\ar[uu, phantom, sloped, "\ss"]\ar[ur, phantom, sloped, "\ss"]
	\end{tikzcd}
	\right\}
\end{aligned}
\end{equation}
It is straightforward to check the following.
\begin{proposition}\label{cor.descriptions}
$\frc$ is a regular category, with the following explicit descriptions. 
\begin{description}
	\item[terminal:] $0\To{!}\varnothing\ss\typeset$ is terminal. We denote it $0$.
	\item[product:] The product of $\Gamma=(n, S, \tau)$ and $\Gamma'=(n',
	  S', \tau')$ is $(n+n', S\cup S',\copair{\tau,\tau'})$. We denote it $\Gamma\oplus\Gamma'$.
	\item[pullback:] The pullback of a diagram $(n_1, S_1, \tau_1)\to (n, S, \tau)\from (n_2, S_2, \tau_2)$ is obtained as a pushout (and union) in $\finset$:
\[
\begin{tikzcd}[column sep=small, row sep=1ex]
	\ord{n}\ar[rr]\ar[rd]\ar[dd]&&
	\ord{n}_2\ar[rd]\ar[dd]\\&
	S\ar[rr, crossing over]&&
	S_2\ar[dd]\\
	\ord{n}_1\ar[rr]\ar[dr]&&
	\ord{n}_1\sqcup_{\ord{n}}\ord{n}_2\ar[dr]\\&
	S_1\ar[from=uu, crossing over]\ar[rr]&&
	S_1\cup S_2\ar[r, phantom, "\ss"]&[-10pt]\typeset
\end{tikzcd}
\]
	\item[monos:] A map $f\colon(n_1, S_1, \tau_1)\to (n_2, S_2, \tau_2)$ is
	  monic iff the function $\ord{f}\colon\ord{n}_2\to\ord{n}_1$ is
	  surjective.
	\item[regular epis:] A map $f\colon (n_1, S_1, \tau_1)\to (n_2, S_2,
	  \tau_2)$ is regular epic iff both: the corresponding function
	  $\ord{f}\colon \ord{n}_2\to \ord{n}_1$ is injective and $S_2=S_1$.
\end{description}
\end{proposition}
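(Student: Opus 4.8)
The plan is to combine the comma-category presentation of $\frc$ with the explicit model \eqref{eqn.fr_Lambda_explicit}, verifying the regular-category axioms one at a time and reading off the stated formulas as we go. The basic observation I would record first is that a morphism of $\frc$ is completely determined by its underlying function $\ord f\colon\ord{n}'\to\ord{n}$ on variable sets: the $\ppf(\typeset)\op$-component is a morphism in a poset, hence unique when it exists, so the projection $\vars\colon\frc\to\finset\op$ is faithful. For finite limits I would invoke that $\frc=(\inc\op\downarrow\id_{\finset\op})$ is a comma category whose right leg is the identity. Since $\id_{\finset\op}$ preserves all limits and both $\ppf(\typeset)\op$ and $\finset\op$ are finitely complete, the standard theory of comma categories gives that $\frc$ is finitely complete with limits computed componentwise: one takes the limit in $\ppf(\typeset)\op$ (a union in $\ppf(\typeset)$), the limit in $\finset\op$ (a colimit in $\finset$), and the canonically induced structure map. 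Unwinding this produces exactly the terminal object $0$, the product $(n+n',S\cup S',\copair{\tau,\tau'})$, and the pullback drawn as a pushout-in-$\finset$ cube; each of these can equally be checked directly against \eqref{eqn.fr_Lambda_explicit}.

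Next I would treat the monomorphisms. Recall that $f$ is monic iff the two projections from its kernel pair to $\Gamma_1$ coincide. Computing the kernel pair componentwise, its $\ppf(\typeset)\op$-part is the pushout $S_1\cup_{S_2}S_1=S_1$, whose two legs are automatically equal, while its $\finset\op$-part is the kernel pair of $\ord f$. Hence the two projections of $f$ coincide iff those of $\ord f$ do, i.e.\ $f$ is monic in $\frc$ iff $\ord f$ is monic in $\finset\op$, iff $\ord f$ is surjective in $\finset$. This is the claimed description of monos.

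For images and regular epis, given $f$ with $\ord f\colon\ord{n}_2\to\ord{n}_1$, I would form the image $\ord m=\im(\ord f)\ss\ord{n}_1$ in $\finset$ and take $\Gamma_{\mathrm{mid}}$ to be the object with variable set $\ord m$, support $S_1$, and typing $\tau_1|_{\ord m}$, keeping the \emph{domain's} support. The induced map $c\colon\Gamma_1\to\Gamma_{\mathrm{mid}}$ underlying the inclusion $\ord m\hookrightarrow\ord{n}_1$ is then the coequalizer of the kernel pair of $f$: a map $q$ coequalizes the two projections precisely when $\ord q$ factors through $\ord m$, because in the pushout $\ord{n}_1\sqcup_{\ord{n}_2}\ord{n}_1$ the two coprojections agree exactly on $\im(\ord f)$, and this is exactly the condition needed to factor $q$ uniquely through $c$. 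This shows coequalizers of kernel pairs exist, yields the factorization $r\surj\im(f)\inj r'$ of \cref{lemma.OFS}, and identifies the regular epimorphisms as precisely those maps with $\ord f$ injective and $S_1=S_2$. Pullback-stability then follows since pullbacks are componentwise: pulling such a map $e$ (with $\ord e$ injective and $S_0=S_1$) back along any $g\colon\Gamma_2\to\Gamma_0$ produces on variables the pushout $\ord{n}_1\sqcup_{\ord{n}_0}\ord{n}_2$ in $\finset$, and because pushouts of monomorphisms are monomorphisms in $\finset$, the coprojection realizing the pulled-back map is injective; on supports one computes $S_1\cup S_2=S_2$ using $S_1=S_0\ss S_2$, so support is preserved, and the pullback is again regular epic. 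Together with finite limits and coequalizers of kernel pairs, this establishes that $\frc$ is regular.

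I expect the coequalizer/image step to be the main obstacle. Colimits in a comma category are not computed componentwise, so—unlike the limits—this part cannot simply be read off from the general theory and must be checked by hand via the universal property, and it is exactly here that one must be most careful with the direction-reversals built into the morphisms of $\frc$ (variables map backwards while support inclusions reverse), to be sure the pushout identifications and the support bookkeeping line up.
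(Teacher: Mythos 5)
Your proposal is correct, and it is worth comparing routes, because the paper handles the two halves of the statement quite differently from you. For finite limits you and the paper agree: both invoke componentwise limits in the comma category (the paper via \cref{lemma.comma_limits}). For the regularity axioms, however, the paper does not compute anything: in \cref{prop.comma_regular} it deduces existence of coequalizers of kernel pairs and pullback-stability abstractly, from the observation that the opposite of a comma category is again a comma category, and then declares the five explicit descriptions ``straightforward to check'' against \cref{eqn.fr_Lambda_explicit}. You instead build the image object $(\im(\ord{f}),S_1,\tau_1|_{\im(\ord f)})$ by hand and verify the coequalizer universal property directly (your identification of the locus where the two coprojections into $\ord{n}_1\sqcup_{\ord{n}_2}\ord{n}_1$ agree, namely $\im(\ord f)$, is exactly the right computation), then read off the characterizations of monos and regular epis and check stability using injectivity of pushouts of injections in $\finset$ and the support bookkeeping $S_1\cup S_2=S_2$. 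Your route is longer but buys more: it simultaneously proves regularity \emph{and} the explicit descriptions that the proposition actually asserts, whereas the paper's abstract argument still leaves the mono/regular-epi characterizations to a separate unrecorded check. Two small remarks. First, your version of the comma-limit fact --- that only the identity leg needs to preserve limits --- is the sharper and, here, the necessary one: the paper's \cref{lemma.comma_limits} as stated asks both legs to preserve limits, and $\inc\op\colon\ppf(\typeset)\op\to\finset\op$ does \emph{not} preserve binary products (unions are not disjoint unions), so your formulation is the one that actually applies. Second, your closing caution about colimits in comma categories is well taken but slightly pessimistic: for the specific parallel pairs arising as kernel pairs, the $\ppf(\typeset)\op$-components of the two projections coincide, so the coequalizer does end up being computed componentwise and agrees with your explicit construction; your direct verification via the universal property is nonetheless the safer argument and is entirely correct.
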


\begin{theorem}\label{thm.fr_is_free}
The category $\frc$ is the free regular category on $\typeset$, i.e.\ there is an adjunction
  \[
    \adj[40pt]{\smset}{\frc[-]}{\ob}{\rgcat.}
  \]
\end{theorem}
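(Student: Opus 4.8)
The plan is to exhibit the function $\eta_\typeset\colon\typeset\to\ob\frc$, $\tau\mapsto(1,\{\tau\},\tau)$, as a universal arrow from $\typeset$ to the functor $\ob\colon\rgcat\to\smset$; since universal arrows at every set assemble into a left adjoint, this yields the stated adjunction. Writing $\ul\tau\coloneqq(1,\{\tau\},\tau)$ for the ``generic variable of type $\tau$'', the content to prove is: for every regular category $\cat R$ and every function $g\colon\typeset\to\ob\cat R$ there is a regular functor $\ol g\colon\frc\to\cat R$ with $\ol g(\ul\tau)\cong g(\tau)$, and $\ol g$ is unique up to (unique) natural isomorphism.

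\emph{Existence.} First I would record the canonical decomposition of objects supplied by \cref{cor.descriptions}: writing $s_\sigma\coloneqq(0,\{\sigma\},!)$, which by the mono/regular-epi descriptions is exactly $\supp(\ul\sigma)$, every object factors as a finite product
\[
  (n,S,\tau)\;\cong\;\Big(\bigoplus_{i=1}^n\ul{\tau(i)}\Big)\oplus\Big(\bigoplus_{\sigma\in S\setminus\im\tau}s_\sigma\Big).
\]
Since a regular functor must preserve finite products and image factorizations (hence supports), this forces the value
\[
  \ol g(n,S,\tau)\;\coloneqq\;\Big(\prod_{i=1}^n g(\tau(i))\Big)\times\Big(\prod_{\sigma\in S\setminus\im\tau}\supp_{\cat R} g(\sigma)\Big),
\]
and I would define $\ol g$ on a morphism from the explicit hom-set in \cref{eqn.fr_Lambda_explicit}: the reindexing function $\ord f\colon\ord{n}'\to\ord{n}$ induces the evident projection/diagonal map on the $\prod g(\tau(i))$-factors (well defined because type-compatibility gives $g(\tau'(j))=g(\tau(\ord f(j)))$), while the inclusion $S'\ss S$ dictates which support factors survive, everything assembled via the universal properties of product and image in $\cat R$. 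A posteriori this $\ol g$ is just the counit $\Prod\colon\frc[\ob\cat R]\to\cat R$ precomposed with the relabelling functor $\frc[g]$.

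\emph{Regularity of $\ol g$.} Here I would check preservation of finite limits and of regular epis directly against \cref{cor.descriptions}. Finite products, and the terminal object ($0\mapsto$ the empty product $=1$), are immediate; for pullbacks I would match the pushout-and-union recipe in $\finset$ and $\ppf(\typeset)$ to the corresponding pullback in $\cat R$. The crucial and initially counterintuitive point is regular epis: a regular epi of $\frc$ has $\ord f$ injective and $S$ unchanged, so under $\ol g$ it becomes a product of (i) projections that forget variables but are split by a diagonal, hence split epis, and (ii) support projections $r\surj\supp_{\cat R}(r)$, tensored with identities. Split epis and support projections are regular epis, and regular epis are closed under product and composition in a regular category, so $\ol g$ preserves them. (This is precisely why holding $S$ fixed is built into the regular-epi criterion: dropping a type from $S$ would be the \emph{non}-epic map $\ul\tau\to 0$.)

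\emph{Uniqueness and the main obstacle.} For uniqueness I would use that $\finset\op$ is the free finite-limit category on one object: any regular $H$ with $H(\ul\tau)\cong g(\tau)$ is pinned down up to unique isomorphism on the $\finset$-generated part by this freeness, on objects by the product/support decomposition above, and on the remaining support data because $H$ must send $s_\sigma=\supp(\ul\sigma)$ to $\supp_{\cat R}(g(\sigma))$; hence $H\cong\ol g$. The main obstacle is twofold. The substantive difficulty is bookkeeping the $\ppf$-component (supports, unions, intersections) so that pullbacks and regular epis are genuinely preserved---the support subtlety is exactly where a naive ``products go to products'' argument would break, since it forces regular epis of $\frc$ to land on split epis and support projections rather than on arbitrary non-epic projections. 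The second, more formal, point is strictness: products and images in $\cat R$ are only determined up to isomorphism, so $\ol g$ is unique only up to natural isomorphism. To read \cref{thm.fr_is_free} as a literal adjunction of $1$-categories one should therefore either fix chosen finite limits in each regular category (making $\ol g$ literally unique and $\eta_\typeset$ a strict universal arrow) or interpret the bijection $\rgcat(\frc,\cat R)\cong\smset(\typeset,\ob\cat R)$ up to natural isomorphism.
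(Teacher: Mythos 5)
Your proposal is correct in substance and travels essentially the same road as the paper's proof in \cref{app.fr_is_free}: the same value on objects (the paper's counit sends $\Gamma=(n,S,\tau)$ to $\prod_{i\in\ord{n}}\tau(i)\times\prod_{s\in S}\supp(s)$, which agrees with your formula via $r\times\supp(r)\iso r$ and the idempotence \cref{eqn.idem_support}), the same decomposition of objects into unary and unary support contexts (\cref{cor.factor_unary_support}), the same appeal to $\finset\op$ and its slices being free finite-limit completions for the pullback check, and the same reduction of regular epis to support maps. The packaging differs: the paper fixes an explicit unit and counit and verifies the triangle identities, whereas you verify the universal property of $\eta_\typeset$, which obliges you to supply the uniqueness argument; your uniqueness argument (freeness of $\finset\op$ on the variable part, plus the fact that a regular functor must send $\supp(\unary{\sigma})$ to $\supp(g(\sigma))$ and that maps into subobjects of the terminal object are unique) is sound. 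Your closing remark about strictness is a genuine point that the paper passes over in silence: the counit involves unspecified choices of products and images, so the displayed adjunction should be read either with chosen regular structure or only up to natural isomorphism.

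The one step you should repair is the claim that the variable-forgetting part of $\Prod[f]$ consists of projections ``split by a diagonal.'' Consider the regular epi $\unary{x}\to\supp(x)$ in $\frc$ (forget the unique variable, keep the support): its image under $\Prod$ is the support quotient $r\surj\supp(r)$, which is a regular epi but in general not split --- for instance a nontrivial $G$-torsor in $G$-sets has support $1$ but no global element. A diagonal splitting exists only when the forgotten variable's type also occurs among the retained variables, so case (i) as stated fails exactly where something beyond splitness is needed. The uniform argument, and the one the paper gives, is that because $S$ is unchanged and $S\supseteq\im\tau_1$, every forgotten factor $g(\tau_1(i))$ sits beside a surviving factor $\supp(g(\tau_1(i)))$; using $\supp(s)\times\supp(s)=\supp(s)$ and $r\times\supp(r)\iso r$, one rewrites $\Prod[f]$ as a finite product of the regular epis $g(\tau_1(i))\surj\supp(g(\tau_1(i)))$ with identities, and regular epis are closed under finite products in a regular category. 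Your case (ii) already contains this; simply delete the splitness claim and let the support projections carry the whole argument.
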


For a proof, see \cref{app.fr_is_free}.

\begin{remark}
The free finite limit category on a single generator is $\finset\op$, and there the unique map $\ord{n}\to \varnothing$ is a regular epimorphism for every object $\ord{n}$. Consequently, $\finset\op$ has another universal property: it is the free regular category \emph{in which every object is inhabited}. Of course the same holds for any set $\typeset$: the free finite limit category is also the free ``fully inhabited'' regular category. It is equivalent to the result of inverting the map $(\varnothing,S,!)\to(\varnothing,\varnothing,!)$ in $\frc$ for every $S\in\ppf(\typeset)$.

Because $(\finset\slice{T})\op$ is very similar to---but far more familiar than---$\frc$, it can be useful for intuition to replace $\frc$ with $\finset\op$ throughout this story; the only cost is the assumption of inhabitedness, which is a common assumption in classical logic.
\end{remark}

Since $\frc$ is a regular category, we may construct its po-category of
relations. It should be no surprise that these are the free regular po-categories. 

\begin{corollary}\label{thm.free_reg_bicat}
  The po-category $\frb \coloneqq \rrel{\frc}$ is the free regular po-category on the set
  $\typeset$. That is, there is an adjunction
  \[
    \adj[40pt]{\smset}{\frb[-]}{\ob}{\rgpocat.}
  \]
\end{corollary}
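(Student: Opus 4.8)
The plan is to deduce this adjunction purely formally, by composing the free--forgetful adjunction of \cref{thm.fr_is_free} with the equivalence of po-categories recorded in \eqref{eqn.equiv_regpocat}. The observation driving everything is that, by definition, $\frb[-]=\rrel{\frc[-]}$, so as a functor $\smset\to\rgpocat$ it is literally the composite $\rrel{-}\circ\frc[-]$.

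First I would assemble the two left adjoints. By \cref{thm.fr_is_free}, the functor $\frc[-]\colon\smset\to\rgcat$ is left adjoint to $\ob$; and by \eqref{eqn.equiv_regpocat}, the relations functor $\rrel{-}\colon\rgcat\to\rgpocat$ is an equivalence, with weak inverse $\ladj$. Since an equivalence is in particular a left adjoint, and left adjoints compose, the composite $\frb[-]=\rrel{-}\circ\frc[-]$ is a left adjoint, with right adjoint the composite $\ob\circ\ladj\colon\rgpocat\to\smset$.

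The one substantive point is to identify this composite right adjoint with the object functor $\ob\colon\rgpocat\to\smset$ on the nose, so that the adjunction reads exactly $\frb[-]\dashv\ob$ rather than merely up to natural isomorphism. Here I would invoke that both $\rrel{-}$ and $\ladj$ are identity-on-objects: for $\ladj$ this is precisely the content of the fundamental lemma (\cref{lemma.fundamental}), whose isomorphism $\cat{R}\iso\ladj(\rrel{\cat{R}})$ is identity-on-objects. Consequently, for any regular po-category $\ccat{R}$ the set of objects of $\ladj(\ccat{R})$ agrees with that of $\ccat{R}$, and the object functors satisfy $\ob\circ\ladj=\ob$. The required hom-set bijection is then the chain
\[
\rgpocat(\frb[X],\ccat{R})\;\iso\;\rgcat(\frc[X],\ladj(\ccat{R}))\;\iso\;\smset(X,\ob(\ladj(\ccat{R})))\;=\;\smset(X,\ob(\ccat{R})),
\]
natural in $X\in\smset$ and $\ccat{R}\in\rgpocat$, where the first isomorphism is the equivalence \eqref{eqn.equiv_regpocat} and the second is \cref{thm.fr_is_free}.

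I do not expect a genuine obstacle here: once the two cited results are in hand, the argument is entirely formal, and the only place the fundamental lemma is actually needed is the identity-on-objects bookkeeping that pins the right adjoint down to $\ob$ itself. Everything else is the standard fact that composing an adjunction with an equivalence yields an adjunction.
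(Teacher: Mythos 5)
Your argument is correct and is exactly the route the paper intends: the corollary is stated without explicit proof precisely because it follows formally by composing the adjunction of \cref{thm.fr_is_free} with the equivalence \eqref{eqn.equiv_regpocat}, using that $\ladj$ is identity-on-objects (via \cref{lemma.fundamental}) to identify the composite right adjoint with $\ob$ on the nose. Your chain of hom-set bijections makes the bookkeeping explicit, which is a fine way to record it.
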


Free regular po-categories will form the foundation of our graphical calculus
for regular logic; we give an explicit description in
\cref{chap.graphical_reglog}.

For any regular category $\cat{R}$, the counit map of the adjunction in \cref{thm.fr_is_free} gives a regular functor $\Prod\colon\frc[\ob\cat{R}]\to\cat{R}$ that sends a context $\Gamma=(n,S,\tau)$ to the product
\begin{equation}\label{eqn.Prod}
  \Prod[\Gamma]\coloneqq\prod_{i\in\ord{n}}\unary{\tau(i)}\times\prod_{s\in
  S}\supp(s).
\end{equation}
For any regular po-category $\ccat{R}$, the counit map of the adjunction in \cref{thm.free_reg_bicat} gives a morphism of regular po-categories that we again denote $\Prod\colon\frb[\ob\ccat{R}]\too\ccat{R}$. This is a strong monoidal functor because the counit of \cref{thm.fr_is_free} preserves finite products.

\subsection{Regular calculi}\label{sec.reg_calc}

In this section we introduce regular calculi. This is a new category-theoretic way to look at the kinds of logical moves---and the relationships between them---found in regular logic. 

\begin{definition}
A \emph{regular calculus} is a pair $(\typeset, \pr)$ where $\typeset$ is a set
and  $\pr\colon\frb\to\pposet$ is an ajax po-functor. For any object
$\Gamma\in\frb$, we denote the order in the poset $\pr(\Gamma)$ using the
$\vdash_\Gamma$ or $\vdash$ symbol (rather than $\leq$). 

A \emph{morphism} $(\typeset, \pr)\to(\typeset',\pr')$ of regular calculi is a pair $(F,F^\sharp)$ where $F\colon\typeset\to\typeset'$ is a function and $F^\sharp$ is a monoidal natural transformation
\[
\begin{tikzcd}[row sep=0pt]
	\typeset\ar[dd, "F"']&
	\frb
		\ar[dd, "{\frb[F]}"']\ar[dr, bend left=15pt, "\pr", ""' name=T]\\
	&&\pposet\\
	\typeset'&
	\frb[\typeset']\ar[ur, bend right=15pt, "\pr'"', "" name=T']
	\ar[from = T, to = T'-|T, twocell, "F^\sharp"']
\end{tikzcd}
\]
that is strict in every respect: all the required coherence diagrams of posets commute on the nose. We denote the category of regular calculi by $\rgcalc$.
\end{definition}

\begin{notation}[Adjoint notation ($\lsh{f}$ and $\ust{f}$) in regular calculi]
It will be convenient to define notation mimicking that in \cref{eqn.subobject_adj} for $\pr$'s action on adjoints in
$\rrel{\cat{R}}$. Given an ajax po-functor $\pr\colon\rrel{\cat{R}}\to\pposet$, we
can take adjoints and use the fundamental lemma (\cref{lemma.fundamental}) to
obtain the diagram below:
\[
\begin{tikzcd}
  \cat{R}\ar[r, "\cong"']\ar[d, equal]\ar[rr, bend left=15pt, "f\mapsto \lsh{f}"]&
  \ladj(\rrel{\cat{R}})\ar[r, "\ladj(\pr)"']\ar[d, "\cong"]&[15pt]
  \ladj(\pposet)\ar[d, "\cong"]\\
  \cat{R}\ar[r, "\cong"]\ar[rr, bend right=15pt, "f\mapsto\ust{f}"']&
  \radj(\rrel{\cat{R}})\op\ar[r, "\radj(\pr)\op"]&
  \radj(\pposet)\op
\end{tikzcd}
\]
That is, for any $f\colon r\to r'$ in $\cat{R}$ we have an adjunction $\lsh{f}\dashv \ust{f}$ between posets $\pr(r)$ and $\pr(r')$. In particular, since $\frc$ has finite products (denoted using $0$ and $\oplus$), we will speak of projection maps $\pi_i\colon (\Gamma_1\oplus \Gamma_2)\to \Gamma_i$, for $i=1,2$, diagonal maps $\delta_r\colon r \to r \oplus r$, and the unique map $\epsilon_r\colon r \to 0$. Each determines an adjunction as above.
\end{notation}

\begin{remark}[Regular calculi send objects to meet-semilattices]
If $\pr\colon\frb\to\pposet$ is a regular calculus, i.e.\ an ajax po-functor, then by \cref{cor.meetsl} the poset $\pr(\Gamma)$ is a meet-semilattice for each object $\Gamma\in\ccat{R}$. Explicitly, its top element and meet are given by the composites of right adjoints shown here:
\begin{equation}\label{eq.def_true_meet}
\begin{tikzcd}[column sep=25pt]
	1\ar[r, shift left=5pt, "\rho"]\ar[r, phantom, "\Rightarrow" yshift=-.6pt]&
	\pr(0)\ar[l, shift left=5pt, "!"]\ar[r, shift left=5pt, "\ust{\epsilon_\Gamma}"]\ar[r, phantom, "\Rightarrow" yshift=-.6pt]&
	\pr(\Gamma)\ar[l, shift left=5pt, "\lsh{\epsilon_\Gamma}"]
\end{tikzcd}
\:\quad \mbox{and}\quad\:
\begin{tikzcd}[column sep=25pt]
	\pr(\Gamma)\times \pr(\Gamma)\ar[r, shift left=5pt, "\rho_{\Gamma,\Gamma}"]\ar[r, phantom, "\Rightarrow" yshift=-.6pt]&
	\pr(\Gamma\oplus\Gamma)\ar[l, shift left=5pt, "\lambda_{\Gamma,\Gamma}"]\ar[r, shift left=5pt, "\ust{\delta_\Gamma}"]\ar[r, phantom, "\Rightarrow" yshift=-.6pt]&
	\pr(\Gamma).\ar[l, shift left=5pt, "\lsh{\delta_\Gamma}"]
\end{tikzcd}
\end{equation}
\end{remark}

\subsection{The predicates functor $\prd\colon\rgcat \to \rgcalc$}

Let $\cat{R}$ be a regular category and let $\ccat{R}\coloneqq\rrel{\cat{R}}$ denote its relations po-category; note that $\ob\cat{R}=\ob\ccat{R}$. We have a counit map $\Prod\colon\frb[\ob\ccat{R}]\to\ccat{R}$ from \cref{thm.free_reg_bicat}, and it is a strong monoidal functor. We can compose it with the ``subobjects'' functor $\sub_\cat{R}\coloneqq\ccat{R}(I,-)\colon\ccat{R}\to\pposet$. The result is a po-functor
\begin{equation}\label{eqn.rels_on_objects}
  \sub_{\cat{R}}\Prod\colon\frb[\ob\ccat{R}]\to\pposet
\end{equation}
which assigns to each context $\Gamma$ the poset of \define{predicates} in $\Gamma$. By \cref{lemma.ajax,thm.sub_is_ajax}, the po-functor $\sub_{\cat{R}}\Prod$, is ajax, so $(\ob\cat{R},\sub_{\cat{R}}\Prod)$ is a regular calculus.

\begin{proposition}\label{prop.rels}
The mapping from \cref{eqn.rels_on_objects} extends to a faithful 
functor
\[\prd\colon\rgcat\to\rgcalc.\]
\end{proposition}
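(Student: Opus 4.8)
The plan is to define $\prd$ on morphisms, check that the result is a morphism of regular calculi, verify functoriality, and finally obtain faithfulness by recovering a regular functor from the data of its image.

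First I would define the action on a regular functor $\funr F\colon\cat R\to\cat R'$. Its type component is simply $\ob\funr F$. For the second component I would first build a po-natural transformation $\alpha^{\funr F}\colon\sub_{\cat R}\Rightarrow\rrel{\funr F}\cp\sub_{\cat R'}$ of functors $\ccat R\to\pposet$, whose component at an object $r$ sends a subobject $\varphi\inj r$ to its image $\funr F(\varphi)\inj\funr F(r)$; this is well defined and monotone because a regular functor preserves finite limits (hence monomorphisms and the subobject order). I would then produce $F^\sharp$ by whiskering $\alpha^{\funr F}$ on the left by the strong monoidal counit $\Prod\colon\frb[\ob\ccat R]\to\ccat R$ of \cref{thm.free_reg_bicat}, using the counit-naturality equation $\frb[\ob\funr F]\cp\Prod=\Prod\cp\rrel{\funr F}$ to identify the target $\frb[\ob\funr F]\cp(\sub_{\cat R'}\Prod)$ with $\Prod\cp\rrel{\funr F}\cp\sub_{\cat R'}$. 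We then set $\prd(\funr F)\coloneqq(\ob\funr F,F^\sharp)$.

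The bulk of the work is checking that $(\ob\funr F,F^\sharp)$ is a genuine morphism of regular calculi, i.e.\ that $F^\sharp$ is a strict monoidal po-natural transformation. Strict naturality of $\alpha^{\funr F}$ over $\ccat R$ unwinds, via the pull-back-then-image description of the action $\sub_{\cat R}(A)$ on a relation $A$, to the single fact that $\funr F$ preserves pullbacks and image factorizations; since $\sub$ is valued in iso-classes of monomorphisms, these squares commute on the nose rather than merely up to $\le$. Monoidality of $F^\sharp$ against the laxators of the two ajax functors reduces to $\funr F$ preserving the terminal object and binary products. I expect this monoidal compatibility to be the main obstacle: one must verify the unit- and binary-laxator squares commute strictly while correctly threading the coherence isomorphisms of the strong monoidal functor $\rrel{\funr F}$ and the lax structure maps of $\sub_{\cat R}$ and $\sub_{\cat R'}$ (\cref{thm.sub_is_ajax}) through the whiskering by $\Prod$.

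Functoriality is then routine: $\ob$ is functorial, identities go to identities, and on subobjects $(\funr G\circ\funr F)(\varphi)=\funr G(\funr F(\varphi))$, so $\alpha^{\funr F\cp\funr G}$ is the evident pasting of $\alpha^{\funr F}$ and $\alpha^{\funr G}$, a property preserved by whiskering. For faithfulness, suppose $\prd(\funr F)=\prd(\funr G)$, so that $\ob\funr F=\ob\funr G$ and $F^\sharp=G^\sharp$. Since the one-variable context $\Gamma_r$ of type $r$ satisfies $\Prod[\Gamma_r]=r$, the whiskered transformation determines $\alpha^{\funr F}$ on objects; evaluating at the two-variable context $\Gamma_{r,s}$, where $\Prod[\Gamma_{r,s}]=r\times s$, the map $F^\sharp_{\Gamma_{r,s}}$ sends the graph $\pair{\id_r,h}\in\sub_{\cat R}(r\times s)$ of any morphism $h\colon r\to s$ to $\funr F\pair{\id_r,h}=\pair{\id_{\funr F r},\funr F h}$, the graph of $\funr F(h)$ (using that $\funr F$ preserves products and the fundamental lemma's identification of graphs, \cref{lemma.fundamental}). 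Hence $F^\sharp=G^\sharp$ forces $\funr F(h)=\funr G(h)$ for every morphism $h$, and since the two functors already agree on objects, $\funr F=\funr G$.
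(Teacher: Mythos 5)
Your proposal is correct and follows essentially the same route as the paper: the paper also defines $F^\sharp$ by pasting the counit-naturality square for $\Prod$ with the whiskering of $\rrel{\funr{F}}(I,-)$ (which is exactly your ``send a subobject to its image under $\funr{F}$'' transformation), and proves faithfulness by evaluating at a binary context on the graph $\pair{\id_r,h}$ and invoking the fundamental lemma. The only difference is cosmetic: you flag the strict-monoidality check more explicitly than the paper does.
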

\begin{proof}
Given an object $\cat{R}$ of $\rgcat$---that is, given a regular category---we
define its image to be $\prd(\cat{R})\coloneqq(\ob\cat{R},\sub_{\cat{R}}\Prod)$. As mentioned
above, $\sub_{\cat{R}}\Prod$ is ajax, so $\prd(\cat{R})$ is a regular calculus.
We need to say how $\prd$ behaves on morphisms. 

A regular functor $\funr{F}\colon\cat{R}\to\cat{R}'$ induces a function $\ob \funr{F}\colon\ob\cat{R}\to\ob\cat{R}'$ and hence a morphism $\ol{\funr{F}}\coloneqq\frb[\ob \funr{F}]\colon\frb[\ob\cat{R}]\to\frb[\ob\cat{R}']$. We need to construct a (strict) monoidal natural transformation $\funr{F}^\sharp\colon\sub_{\cat{R}}\Prod\too(\ol{\funr{F}}\cp\sub_{\cat{R}'}\Prod)$.

Let $\Gamma\in\frb[\ob\cat{R}]$ be a context. The left-hand square in the
following diagram commutes by the naturality of the counit $\Prod[-]$, and we
have a map $\rrel{\funr{F}}(I,-)\colon\rrel{\cat{R}}(I,-)\too\rrel{\cat{R}'}(I,-)$
because $\funr{F}(I)=I$. We define $\funr{F}^\sharp$ to be the composite 2-cell, which we
denote $\sub_\funr{F}\Prod[-]$:
\[
\begin{tikzcd}[row sep=5pt]
	\ob\cat{R}\ar[dd, "\ob \funr{F}"']&
	\frb[\ob\cat{R}]
		\ar[dd, "{\ol{\funr{F}}}"']\ar[r, "\Prod"]&
	\rrel{\cat{R}}\ar[dd, "\rrel{\funr{F}}"']\ar[dr, bend left=15pt, pos=.25, "{\rrel{\cat{R}}(I,-)}", ""' name=T]\\
	&&&[35pt]\pposet\\
	\ob\cat{R}'&
	\frb[\ob\cat{R}']\ar[r, "\Prod"']&
	\rrel{\cat{R}'}\ar[ur, bend right=15pt, pos=.25, "{\rrel{\cat{R}'}(I,-)}"', "" name=T']
	\ar[from=T, to=T'-|T, twocell, "{\rrel{\funr{F}}(I,-)}"]
\end{tikzcd}
\]
Thus we define $\prd$ on morphisms by $\prd(\funr{F})=(\ob \funr{F},\sub_\funr{F}\Prod[-])$; it is
easy to check that $\prd$ preserves identities and compositions. It remains to check
that it is faithful, so let $\funr{F},\funr{G}\colon\cat{R}\to\cat{R}'$ be regular
functors and suppose $\prd(\funr{F})=\prd(\funr{G})$. There is agreement on objects $\ob\funr{F}=\ob
\funr{G}$, so let $f\colon r_1\to r_2$ be a morphism in $\cat{R}$ and consider the its
graph $\hat{f}\coloneqq \pair{\id_r,f}\ss r_1\times r_2$. Write $(r_1,r_2) \coloneqq
(2,\{r_1,r_2\},\cong) \in \frb$. From the fact that $\sub_\funr{F}\Prod[r_1, r_2](\hat{f})=\sub_{\funr{G}}\Prod[r_1,r_2](\hat{f})$ it follows that $\funr{F}(f)=\funr{G}(f)$, completing the proof.
\end{proof}

The goal for the rest of this paper is to construct a functor $\prd$ in the opposite direction. Our construction will rely on the fact that regular calculi can be incarnated as a sort of \emph{graphical calculus} for regular logic reasoning.

\goodbreak

\section{Graphical regular logic} \label{chap.graphical_reglog}

A key advantage of the regular calculus perspective on regular categories and
regular logic is that it suggests a graphical notation for relations in regular
categories, as well as how they behave under base-change and co-base-change. This is the promised graphical regular
logic. In this section we develop this graphical formalism, first by giving a
graphical description of the free regular po-category on a set, and then by
defining the notion of graphical term, showing how these represent elements of posets, and explaining how to reason with them. 

\subsection{Depicting free regular po-categories $\frb$}
Since the po-categories $\frb$ form the foundation of our diagrammatic language for regular
logic, we begin our exploration of graphical regular logic by giving an explicit
description of the objects, morphisms, 2-cells, and composition in $\frb$ in
terms of wiring diagrams.

\begin{notation}
By definition, an object of $\frb$ is simply a context $\Gamma=(\ord{n}
\To{\tau} S \subseteq \typeset)$ of $\frc$. We represent a context graphically by a circle with $n$ ports around the exterior, with $i$th port
annotated by the value $\tau(i)$, and with a white dot at the base annotated by
the remaining elements of the support $S \setminus \im \tau$.%
\footnote{By the idempotence of support contexts \cref{eqn.idem_support}, one may equivalently include the whole support, $S$.}
\begin{equation}\label{eqn.shell_pic}
  \begin{tikzpicture}[inner WD]
    \node[pack,minimum size = 3ex] (rho) {};
    \draw (rho.180) to[pos=1] node[left] (w) {$\tau(1)$} +(180:2pt);
    \draw (rho.135) to[pos=1] node[above] (n) {$\tau(2)$} +(135:2pt);
    \node at ($(rho.45)+(45:6pt)$) {$\ddots$}; 
    \draw (rho.-30) to[pos=1] node[right] (e) {$\tau(n)$} +(-30:2pt); 
    \node[link,fill=white,thin] at (rho.270) {};
    \node[below=-.2 of rho.270] (s) {$S\setminus \im \tau$};
    \pgfresetboundingbox
		\useasboundingbox (s-|w) rectangle (n-|e);
  \end{tikzpicture}
\end{equation}
Our convention will be for the ports to be numbered clockwise from the left of
the circle, unless otherwise indicated, and to omit the white dot if $S = \im \tau$. We refer to such an annotated circle as a \emph{shell}.

As a syntactic shorthand for the shell in \eqref{eqn.shell_pic}, we may combine all the ports and the white dot into a single wire labeled with the context $\Gamma\in\frb$:\quad 
$
  \begin{tikzpicture}[inner WD, pack size=8pt, baseline=(rho.-90)]
    \node[pack] (rho) {};
    \draw (rho.180) to[pos=1] node[left] {$\Gamma$} +(180:2pt);
  \end{tikzpicture}
$.
\end{notation}

\begin{example}
Let $\Gamma=(n,S,\tau)$ be the context with arity $n=3$, support $S = \{w,x,y,z\}
\subseteq \typeset$, and typing $\tau\colon \ord{3} \to S$ given by $\tau(1) = \tau(3)=y$, $\tau(2)=z$. It can be depicted by
the shell
\[
  \begin{tikzpicture}[inner WD]
    \node[pack, minimum size = 3ex] (rho) {};
    \draw (rho.180) to[pos=1] node[left] (w) {$y$} +(180:2pt);
    \draw (rho.75) to[pos=1] node[above] (n) {$z$} +(75:2pt);
    \draw (rho.-20) to[pos=1] node[right] (e) {$y$} +(-20:2pt); 
    \node[link,fill=white,thin] at (rho.270) {};
    \node[below=-.2 of rho.270] (s) {$w,x$};
    \pgfresetboundingbox
		\useasboundingbox (s-|w) rectangle (n-|e);
  \end{tikzpicture}
\]
\end{example}

The hom-posets of $\frb=\rrel{\frc}$ are the subobject posets $
  \frb(\Gamma,\Gamma') = \sub(\Gamma \oplus \Gamma').
$
Explicitly, a morphism $\omega\colon \Gamma_1 \tickar \Gamma_\out$ is represented by a monomorphism $
  \Gamma_\omega=(n_\omega \xrightarrow{\tau_\omega} S_\omega \subseteq \typeset)
  \inj \Gamma_1 \oplus \Gamma_\out
$
in $\frc$, and hence specified by a surjection $\omega$ (see \cref{cor.descriptions}) such that
\[
  \begin{tikzcd}[column sep=large]
    \ord{n}_\omega \ar[r,"\tau_\omega"] 
    & S_\omega \ar[d,phantom, sloped, "\supseteq"] \\
    \ord{n}_1+ \ord{n}_\out \ar[r,"{\tau_1+\tau_\out}"'] \ar[u,two heads, "\omega"] 
    & S_1 \cup S_\out
  \end{tikzcd}
\]
commutes. We depict $\omega$ using a \define{wiring diagram}.  More generally, wiring
diagrams will give graphical representations of morphisms $\omega\colon
\Gamma_1\oplus \dots \oplus \Gamma_k \tickar \Gamma_\out$.  

\begin{notation} \label{notation.wiring_diagrams}
  Suppose we have a morphism $\omega\colon \Gamma_1\oplus \dots \oplus \Gamma_k
  \tickar \Gamma_\out$ in $\frb$. We depict $\omega$ as follows.
  \begin{enumerate}[nolistsep, noitemsep]
    \item Draw the shell for $\Gamma_\out$.
    \item Draw the object $\Gamma_i$, for $i=1,\dots,k$, as non-overlapping
      shells inside the $\Gamma_\out$ shell.
    \item For each $i \in \ord{n}_\omega$, draw a black dot anywhere in the region interior to the $\Gamma_\out$ shell but exterior to all the $\Gamma_i$ shells, and annotate it by the value $\tau_\omega(i)$.
    \item Draw a white dot in the same region, annotated by all elements of
      $S_\omega$ not already present in the diagram.
    \item For each element $(i,j) \in \sum_{i=1,\dots,k, \out} \ord{n}_i$,
      draw a wire connecting the $j$th port on the object $\Gamma_i$ to the black dot $\omega(i,j)$.
  \end{enumerate}
  Just as for objects, we may neglect to draw a white dot when $\im\tau=S$. 
  
  For a more compact notation, we may also neglect to explicitly draw the object
  $\Gamma_\out$, leaving it implicit as comprising the wires left dangling on
  the boundary of the diagram.
\end{notation}

\begin{example} \label{ex.wiring_diagram}
  Here is the set-theoretic data of a morphism $\omega\colon \Gamma_1\oplus \Gamma_2
  \oplus \Gamma_3 \tickar \Gamma_\out$, together with its wiring diagram depiction: 
\begin{align*}
 \parbox{4.15in}{\raggedright
	$\Gamma_1 = (3,\{x,y\},\tau_1)$ where $\tau_1(1)=x,\tau_1(2)=\tau_1(2)=y$;\\
	$\Gamma_2 = (3,\{w,x,y\},\tau_3)$ where $\tau_3(1)=\tau_3(2)=\tau_3(3)=x$;\\
	$\Gamma_3 = (4,\{x,y\},\tau_2)$ where $\tau_2(1)=\tau_2(2)=y, \tau_2(3)=\tau_2(4)=x$;\\
	$\Gamma_\out = (6,\{w,x,y,z\},\tau_\out)$ where
	$\tau_\out(1)=y$,\\\qquad$\tau_\out(2)=\tau_\out(3)=\tau_\out(6)=z$,
	$\tau_\out(4)=\tau_\out(5)=x$;\\
	$\Gamma_\omega = (7,\{v,w,x,y,z\},\tau_r)$ where $\tau_\omega(1)=\tau_\omega(2)=y$,\\
	\qquad$\tau_\omega(3)=\tau_\omega(7)=z$, $\tau_\omega(4)=\tau_\omega(5)=\tau_\omega(6)=x$;}
&\qquad
   \begin{tikzpicture}[penetration=0, inner WD,  pack size=20pt, link size=2pt, font=\tiny, scale=2, baseline=(out)]
      \node[pack] at (-1.5,-1) (f) {$3$};
      \node[pack] at (0,1.9) (g) {$1$};
      \node[pack] at (1.5,-1) (h) {$2$};
      \node[outer pack, inner sep=34pt] at (0,.2) (out) {};
      \node[link,label=90:$x$] at ($(f)!.5!(h)$) (link1) {};
      \node[link,label=0:$z$] at (-2.4,-.25) (link2) {};
      \node[link,label distance=-6pt,label=200:$y$] at ($(f.75)!.5!(g.-135)$) (link3) {};
      \node[link,fill=white,thin] at (h.270) {};
      \node[below=-.2 of h.270] {$w,y$};
      \node[link,fill=white,thin] at (out.270) {};
      \node[above=-.2 of out.270] {$w$};
      \node[link,fill=white,thin] at ($(out.160)!.5!(g)$) (dot) {};
      \node[above=-.2 of dot.90] {$v$};
      \begin{scope}[label distance=-6pt]
	\draw (out.280) to (link1);
	\draw (out.190)	to (link2);
	\draw (out.155) to (link3);
	\draw (out.-35)	to node[pos=.5,link,label=10:$x$]{} (h.-30);
	\draw (out.15)	to[out=-165,in=-110] node[pos=.5,link,label
	distance=3pt,label=200:$z$]{} (out.70);
	\draw (f.15) to[out=0,in=165] (link1);
	\draw (f.-15) to[out=0,in=-165] (link1);
	\draw (h.180) to (link1);
	\draw (g.-60) to node[pos=.5,link,label=10:$x$]{} (h.120);
	\draw (f.45) to node[pos=.5,link,label=-10:$y$]{} (g.-105);
	\draw (f.75) to (link3);
	\draw (g.-135) to (link3);
      \end{scope}
    \end{tikzpicture}
\\
\parbox{4.15in}{\raggedright
 $f(1,1)=4$, $f(1,2)=2$, $f(1,3)=1$, $f(2,1)=6$, $f(2,2)=4$,\\
  \qquad $f(2,3)=5$, $f(3,1)=1$, $f(3,2)=2$, $f(3,3)=f(3,4)=6$,\\
  \qquad $f(\out,1)=1$, $f(\out,2)=3$, $f(\out,3)=3$, $f(\out,4)=5$\\
  \qquad $f(\out,5)=6$, $f(\out,6)=7$.}
\end{align*}
\end{example}

\begin{example}\label{ex.no_inner}
Note that we may have $k=0$, in which case there are no inner shells.
For example, the following has $\Gamma_\omega=(2,\{x,y,z,w\},1\mapsto x, 2\mapsto y)$.
  \[
    \begin{tikzpicture}[inner WD]
      \node[link] (dot270) {};
      \node[right=-.2 of dot270] {$x$};
      \node[outer pack, fit=(dot270), inner sep=12pt] (outer) {};
      \node[link] at ($(outer.0) - (0:5pt)$) (dot0) {};
      \node[above=-.2 of dot0] {$y$};
      \node[link,fill=white,thin] at (outer.270) {};
      \node[below=-.2 of outer.270] (w) {$w$};
      \node[link,fill=white,thin] at ($(outer.160) - (160:8pt)$) (dot) {};
      \node[above=-.2 of dot.90] {$z$};
      \draw (outer.0) -- (dot0);
      \draw (outer.70) -- (dot270);
      \draw (outer.180) -- (dot270);
      \draw (outer.300) -- (dot270);
      \pgfresetboundingbox
    	\useasboundingbox (w.north) rectangle (outer.north);
    \end{tikzpicture}
  \]
\end{example}

\begin{remark}\label{rem.multiple}
When multiple wires meet at a point, our convention will be to draw a dot iff
the number of wires is different from two.
\[
\begin{tikzpicture}[unoriented WD, font=\small]
	\node["1 wire"] (P1) {
  \begin{tikzpicture}[inner WD, surround sep=4pt]
    \node[link] (dot) {};
    \node[outer pack, fit=(dot)] (outer) {};
    \draw (dot) -- (outer.west);
  \end{tikzpicture}	
	};
	\node[right=4 of P1, "2 wires"] (P2) {
  \begin{tikzpicture}[inner WD, surround sep=4pt]
    \node[link, white] (dot) {};
    \node[outer pack, fit=(dot)] (outer) {};
    \draw (outer.west) -- (outer.east);
  \end{tikzpicture}	
	};
	\node[right=4 of P2, "3 wires"] (P3) {
  \begin{tikzpicture}[inner WD, surround sep=4pt]
    \node[link] (dot) {};
    \node[outer pack, fit=(dot)] (outer) {};
    \draw (dot) -- (outer.west);
    \draw (dot) -- (outer.east);
    \draw (dot) -- (outer.south);
  \end{tikzpicture}	
	};
	\node[right=4 of P3, "4 wires"] (P4) {
  \begin{tikzpicture}[unoriented WD, surround sep=4pt, font=\tiny]
    \node[link] (dot) {};
    \node[outer pack, fit=(dot)] (outer) {};
    \draw (dot) -- (outer.west);
    \draw (dot) -- (outer.east);
    \draw (dot) -- (outer.south);
    \draw (dot) -- (outer.north);
  \end{tikzpicture}	
	};
	\node[right=5 of P4, "$\cdots$\quad etc."] (etc){};
  \pgfresetboundingbox
	\useasboundingbox ($(P1.north west)+(0,10pt)$) rectangle ($(etc)+(0,-5pt)$);
\end{tikzpicture}  
\]
When wires intersect and we do not draw a black dot, the intended interpretation is that the wires are \emph{not connected}:\quad $
\begin{tikzpicture}[unoriented WD, font=\small, baseline=(P1.-10)]
	\node (P1) {
  \begin{tikzpicture}[unoriented WD, link size=3pt, surround sep=2pt, font=\tiny]
    \node[link] (dot) {};
    \node[outer pack, fit=(dot)] (outer) {};
    \draw (dot) -- (outer.west);
    \draw (dot) -- (outer.east);
    \draw (dot) -- (outer.south);
    \draw (dot) -- (outer.north);
  \end{tikzpicture}
  };
  \node[right=1 of P1] (P2)	{
  \begin{tikzpicture}[unoriented WD, link size=3pt, surround sep=2pt, font=\tiny]
    \node[link, white] (dot) {};
    \node[outer pack, fit=(dot)] (outer) {};
    \draw (outer.east) -- (outer.west);
    \draw (outer.south) -- (outer.north);
  \end{tikzpicture}
  };
  \node at ($(P1)!.5!(P2)$) {$\neq$};
\end{tikzpicture}
$. Of course this is bound to happen when the graph is non-planar.
\end{remark}

The following examples give a flavor of how composition, monoidal product, and
2-cells are represented using this graphical notation.

\begin{example}[Composition as substitution]\label{ex.comp_as_subst}
  Composition of morphisms is described by \define{nesting} of wiring diagrams.
  Let $\omega'\colon \Gamma'\tickar \Gamma_1$ and $\omega\colon \Gamma_1 \tickar \Gamma_\out$ 
  be morphisms in $\frb$. Then the composite relation $\omega'\cp \omega\colon \Gamma'
  \tickar \Gamma_\out$ is given by
  \begin{enumerate}[nolistsep, noitemsep]
    \item drawing the wiring diagram for $\omega'$ inside the inner circle of the diagram for $\omega$, 
    \item erasing the object $\Gamma_1$, 
    \item amalgamating any connected black dots into a single black dot, and
    \item removing all components not connected to the objects $\Gamma'$ or
      $\Gamma_\out$, and adding a single white dot annotated by the set
      containing all elements of $\typeset$ present in these components, but not
      present elsewhere in the diagram.
  \end{enumerate}
  Note that step 3 corresponds to taking pullbacks in $\frc$ (pushouts in
  $\finset$), while step 4 corresponds to epi-mono factorization. 
  
  As a shorthand for composition, we simply draw one wiring diagram directly
  substituted into another, as per step 1. For example, we have
  \[
    \begin{tikzpicture}[unoriented WD, font=\small]
      \node["$\omega'$"] (P1a) {
	\begin{tikzpicture}[inner WD]
	  \node[pack] (a) {};
	  \node[outer pack, inner sep=10pt, fit=(a)] (outer) {};
	  \node[link] (link1) at ($(a.west)!.6!(outer.west)$) {};
	  \node[link] (link2) at ($(a.45)!.5!(outer.45)$) {};
	  \node[link] (link3) at ($(a.-20)!.5!(outer.-20)$) {};
	  \node[link,fill=white,thin] at ($(a.100) + (110:7pt)$) (dot) {};
          \node[above=-.3 of link1] {$x$};
          \node[above=-.3 of link2] {$y$};
          \node[above=-.2 of link3] {$y$};
          \node[above=-.3 of dot] {$w$};
	  \draw (outer.west) -- (link1);
	  \draw (a.40) -- (link2);
	  \draw (link2) -- (outer.45);
	  \draw (a.-20) -- (link3);
	  \draw (link3) -- (outer.0);
	  \draw (link3) -- (outer.-45);
	\end{tikzpicture}	
      };
      \node[right=1 of P1a, "$\omega$"] (P1b) {
	\begin{tikzpicture}[inner WD]
	  \node[pack] (c) {};
	  \node[outer pack, inner sep=10pt, fit=(c)] (outer2) {};
	  \node[link] (link4) at ($(c.west)!.4!(outer2.west)$) {};
	  \node[link] (link5) at ($(c.20)!.5!(outer2.20)$) {};
	  \node[link] (link6) at ($(c.-45)!.5!(outer2.-45)$) {};
          \node[link,fill=white,thin] at (c.270) (dotc) {};
	  \node[link,fill=white,thin] at ($(c.90) + (90:7pt)$) (dot) {};
          \node[above=-.2 of link4] {$x$};
          \node[above=-.2 of link5] {$y$};
          \node[below=-.2 of link6] {$y$};
          \node[below=-.3 of dotc] {$t$};
          \node[above=-.3 of dot] {$z$};
	  \draw (c.west) -- (link4);
	  \draw (c.45) -- (link5);
	  \draw (c.0) -- (link5);
	  \draw (link5) -- (outer2.20);
	  \draw (c.-45) -- (link6);
	  \draw (link6) -- (outer2.-45);
	\end{tikzpicture}	
      };
      \node[right=3 of P1b] (P2) {
	\begin{tikzpicture}[inner WD]
	  \node[pack] (a) {};
	  \node[outer pack, inner sep=7pt, fit=(a)] (c) {};
	  \node[outer pack, inner sep=5pt, fit=(c)] (outer2) {};
	  \node[link] (link1) at ($(a.west)!.6!(c.west)$) {};
	  \node[link] (link2) at ($(a.45)!.5!(c.45)$) {};
	  \node[link] (link3) at ($(a.-20)!.5!(c.-20)$) {};
	  \node[link,fill=white,thin] at ($(a.100) + (110:5pt)$) (dot) {};
          \node[above=-.3 of link1] {$x$};
          \node[above=-.3 of link2] {$y$};
          \node[above=-.2 of link3] {$y$};
          \node[above=-.3 of dot] {$w$};
	  \draw (c.west) -- (link1);
	  \draw (a.40) -- (link2);
	  \draw (link2) -- (c.45);
	  \draw (a.-20) -- (link3);
	  \draw (link3) -- (c.0);
	  \draw (link3) -- (c.-45);
	  \node[link] (link4) at ($(c.west)!.4!(outer2.west)$) {};
	  \node[link] (link5) at ($(c.20)!.5!(outer2.20)$) {};
	  \node[link] (link6) at ($(c.-45)!.5!(outer2.-45)$) {};
          \node[link,fill=white,thin] at (c.270) (dotc) {};
	  \node[link,fill=white,thin] at ($(c.90) + (90:7pt)$) (dot) {};
          \node[above=-.2 of link4] {$x$};
          \node[above=-.2 of link5] {$y$};
          \node[below=-.2 of link6] {$y$};
          \node[below=-.3 of dotc] {$t$};
          \node[above=-.3 of dot] {$z$};
	  \draw (c.west) -- (link4);
	  \draw (c.45) -- (link5);
	  \draw (c.0) -- (link5);
	  \draw (link5) -- (outer2.20);
	  \draw (c.-45) -- (link6);
	  \draw (link6) -- (outer2.-45);
	\end{tikzpicture}	
      };
      \node[right=3 of P2, "$\omega'\cp\omega$"] (P3) {
	\begin{tikzpicture}[inner WD]
	  \node[pack] (c) {};
	  \node[outer pack, inner sep=10pt, fit=(c)] (outer2) {};
	  \node[link] (link) at ($(c.0)!.5!(outer2.0)$) {};
	  \node[link,fill=white,thin] at ($(c.90) + (90:12pt)$) (dot) {};
          \node[below=-.2 of link] {$y$};
          \node[below=-.3 of dot] {$t,w,x,z$};
	  \draw (c.25) -- (link);
	  \draw (c.-25) -- (link);
	  \draw (link) -- (outer2.15);
	  \draw (link) -- (outer2.-15);
	\end{tikzpicture}	
      };
      \node (P1) at ($(P1a.east)!.5!(P1b.west)$) {$\cp$};
      \node at ($(P1b.east)!.5!(P2.west)$) {$=$};
      \node at ($(P2.east)!.5!(P3.west)$) {$=$};
    \end{tikzpicture}
  \]

  For the more general $k$-ary or operadic case, we may obtain the composite
  \[
    (\Gamma_1 \oplus \dots \oplus \Gamma_{i-1} \oplus \omega' \oplus
    \Gamma_{i+1} \oplus \dots \oplus \Gamma_k) \cp \omega
  \] 
  of any two morphisms $\omega'\colon \Gamma'_1\oplus \dots \oplus \Gamma'_k \tickar
  \Gamma_i$ and $\omega\colon \Gamma_1\oplus \dots \oplus \Gamma_k \tickar
  \Gamma_\out$ by substituting the wiring diagram for $\omega'$ into the $i$th inner
  circle of the diagram for $\omega$, and following a procedure similar to that in \cref{ex.comp_as_subst}.
\end{example}

\begin{example}[Monoidal product as juxtaposition]
  The monoidal product of two morphisms in $\frb$ is simply their juxtaposition,
  merging the labels on the floating white dots as appropriate. For
  example, leaving off labels, we might have:
  \[
    \begin{tikzpicture}[unoriented WD, font=\small]
      \node (P1a) {
	\begin{tikzpicture}[inner WD]
	  \node[pack] (a) {};
	  \node[pack, below right=.1 and 2 of a] (b) {};
          \node[link,fill=white,thin] at (b.270) {};
	  \node[outer pack, fit=(a) (b)] (outer) {};
          \node[link,fill=white,thin] at ($(outer.250) - (250:6pt)$) (dot) {};
	  \draw (a.180) -- (a.180-|outer.west);
	  \draw (a.20) to[out=0, in=120] (b.140);
	  \draw (a.-30) to[out=-40, in=180] (b.180);
	  \draw (b.east) -- (b.east-|outer.east);
	\end{tikzpicture}
      };
      \node[below=1 of P1a] (P1b) {
	\begin{tikzpicture}[inner WD]
	  \node[pack, below=2 of $(a)!.5!(b)$] (c) {};
	  \node[link, right=.8 of c] (link1) {};
	  \node[outer pack, fit=(link1) (c)] (outer) {};
          \node[link,fill=white,thin] at ($(outer.70) - (70:4pt)$) (dot) {};
	  \draw (c.20) -- (link1);
	  \draw (c.-20) -- (link1);
	  \draw (link1) -- (link1-|outer.east);
	\end{tikzpicture}	
      };
      \node (P1) at ($(P1a.south)!.5!(P1b.north)$) {$\oplus$};
      \node[right=5 of $(P1a.north)!.5!(P1b.south)$] (e) {$=$};
      \node[right=1 of e] (P2) {
	\begin{tikzpicture}[inner WD]
	  \node[pack] (a) {};
	  \node[pack, below right=.1 and 2 of a] (b) {};
          \node[link,fill=white,thin] at (b.270) {};
	  \node[pack, below=2 of $(a)!.5!(b)$] (c) {};
	  \node[link, right=.8 of c] (link1) {};
	  \node[outer pack, fit=(a) (b) (c)] (outer) {};
          \node[link,fill=white,thin] at ($(outer.200) - (200:20pt)$) (dot) {};
	  \draw (a.180) -- (a.180-|outer.west);
	  \draw (a.20) to[out=0, in=120] (b.140);
	  \draw (a.-30) to[out=-40, in=180] (b.180);
	  \draw (b.east) -- (b.east-|outer.east);
	  \draw (c.20) -- (link1);
	  \draw (c.-20) -- (link1);
	  \draw (link1) -- (link1-|outer.east);
	\end{tikzpicture}	
      };
  \pgfresetboundingbox
	\useasboundingbox (P1b.190-|P1a.west) rectangle (P2.east|-P1a.150);
    \end{tikzpicture}
  \]
\end{example}

\begin{example}[2-cells as breaking wires and removing white dots]\label{lem.breaking}
  Let $\omega,\omega'\colon \Gamma_1\oplus \dots \oplus \Gamma_k \tickar \Gamma_\out$ be morphisms in
  $\frb=\rrel{\frc}$. By definition, there exists a 2-cell $\omega\leq\omega'$ if there is a
  monomorphism $m\colon \Gamma_\omega \inj \Gamma_{\omega'}$ in $\frc$ such that
  $m \cp \omega' = \omega$ holds in $\frb$.
  By \cref{cor.descriptions}, this data consists of a surjection of finite sets
  $m\colon \ord{n}_\omega' \to \ord{n}_\omega$ and an inclusion $S_{\omega'} \subseteq S_{\omega}$. In
  diagrams, the former means 2-cells may break wires, and the latter means they
  may remove annotations from the inner white dot (or remove it completely). For example, we have 2-cells: \qquad
  $
    \begin{aligned}
      \begin{tikzpicture}[unoriented WD, font=\small]
	\node (P3) {
	  \begin{tikzpicture}[inner WD]
	    \node[link] (dot) {};
	    \node[outer pack, surround sep=3pt, fit=(dot)] (outer) {};
	    \draw (dot) -- (outer.0);
	    \draw (dot) -- (outer.120);
	    \draw (dot) -- (outer.240);
	  \end{tikzpicture}	
	};
	\node[right=2 of P3] (P4) {
	  \begin{tikzpicture}[inner WD]
	    \node[link, white] (fake dot) {};
	    \node[outer pack, surround sep=3pt, fit=(fake dot)] (outer) {};
	    \node[link] (dot) at ($(outer.0)+(0:-5pt)$) {};
	    \draw (dot) -- (outer.0);
	    \draw (outer.120) to[out=300, in=60] (outer.240);
	  \end{tikzpicture}	
	};	
	\node at ($(P3.east)!.5!(P4.west)$) {$\leq$};
      \end{tikzpicture}
    \end{aligned}
\quad    \mbox{and}\quad
    \begin{aligned}
      \begin{tikzpicture}[unoriented WD, font=\small]
	\node (P1) {
	  \begin{tikzpicture}[inner WD]
	    \node[link, fill=white] (dot) {};
	    \node[outer pack, surround sep=3pt, fit=(dot)] (outer) {};
	  \end{tikzpicture}
	};
	\node[right=2 of P1] (P2) {
	  \begin{tikzpicture}[inner WD]
	    \node[link, white] (dot) {};
	    \node[outer pack, surround sep=3pt, fit=(dot)] (outer) {};
	  \end{tikzpicture}
	};
	\node at ($(P1.east)!.5!(P2.west)$) {$\leq$};
      \end{tikzpicture}
    \end{aligned}
$.
\end{example}

\subsection{Graphical terms}

Given a regular calculus $\pr\colon\frb\to\pposet$, we give a graphical representations of its predicates, i.e.\ the elements in $\pr(\Gamma)$ for various contexts $\Gamma\in\frb$. Here's how it works.

\begin{definition}
  A \define{$\pr$-graphical term} $(\theta_1,\dots,\theta_k;\omega)$ in an ajax
  po-functor $\pr\colon \frb \to \pposet$ is a morphism $\omega\colon \Gamma_1\oplus
  \dots \oplus \Gamma_k \tickar \Gamma_\out$ in $\frb$ together with, for each
  $i = 1,\dots,k$, an element $\theta_i \in \pr(\Gamma_i)$.

  We say that the graphical term $t = (\theta_1,\dots,\theta_k; \omega)$
  \define{represents} the poset element 
  \[
    \church{ t } \coloneqq (\pr(\omega)\cp \rho)(\theta_1,\dots,\theta_k)
    \in \pr(\Gamma_\out)
  \]
where $\rho$ is the $k$-ary laxator. If $t$ and $t'$ are graphical terms, we write $t \vdash t'$ when $\church{ t
} \vdash \church{ t'}$, and $t =t'$ when $\church{t}=\church{t'}$.
\end{definition}

\begin{notation}
We draw a graphical term $(\theta_1,\dots,\theta_k; \omega)$ by annotating the $i$th inner shell with its corresponding poset element $\theta_i$. In the case
that $k=1$ and $\omega$ is the identity morphism, we may simply draw the object
$\Gamma_1$ annotated by $\theta_1$:
\[
  \begin{tikzpicture}[inner WD, baseline=(wdot)]
    \node[pack,minimum size = 3ex] (rho) {$\theta_1$};
    \draw (rho.180) to[pos=1] node[left] {$\tau(1)$} +(180:2pt);
    \draw (rho.135) to[pos=1] node[above] {$\tau(2)$} +(135:2pt);
    \node at ($(rho.45)+(45:6pt)$) {$\ddots$}; 
    \draw (rho.-30) to[pos=1] node[right] {$\tau(n)$} +(-30:2pt); 
    \node[link,fill=white,thin] (wdot) at (rho.270) {};
    \node[below=-.2 of rho.270] {$S\setminus \im \tau$};
  \end{tikzpicture}
\]
\end{notation}

\begin{example}
  Recall that we have a diagonal map $\delta\colon \Gamma \to \Gamma\oplus \Gamma$ in
  $\frc\ss\frb$. Given $\theta \in \pr(\Gamma)$, the element $\lsh{(\delta)}(\varphi)
  \in \pr(\Gamma\oplus \Gamma)$ is represented by the graphical term
\[
\begin{tikzpicture}[inner WD, baseline=(dot)]
	\node[pack] (phi) {$\theta$};
	\node[link, below=3pt of phi] (dot) {};
  \node[outer pack, fit=(phi) (dot)] (outer) {};
	\draw (phi) -- (dot);
	\draw (dot) to[pos=1] node[left] {$\Gamma$} (dot-|outer.west);
	\draw (dot) to[pos=1] node[right] {$\Gamma$} (dot-|outer.east);
\end{tikzpicture}
\]
\end{example}

\begin{example}
  When $\typeset=\varnothing$ is empty, $\frc[\varnothing]$ is the
  terminal category. By \cref{prop.adjoint_monoids}, an ajax po-functor
  $\pr\colon \frb \to \pposet$ then simply chooses a $\wedge$-semilattice $\pr(0)$. The
  po-category $\rela{\pr}$ is that $\wedge$-semilattice considered as a one
  object po-category: it has a unique object whose poset of endomorphisms is $P(0)$. The diagrammatic
  language has no wires, since there is only the monoidal unit in
  $\frc[\varnothing]$. The semantics of an arbitrary graphical term $(\theta_1,\ldots,\theta_k;\id)$ is simply the meet $\theta_1\wedge\cdots\wedge\theta_k$.
\end{example}

\begin{remark}
  Graphical terms are an alternate syntax for regular logic. While we will not
  dwell on the translation, a graphical term $(\theta_1,\dots,\theta_k; \omega)$
  represents the regular formula
  \[
    \bigexists_{\substack{i \in \ord{n}_j \\ j \in \{1,\dots,k,\omega\}}} x_{ij} .\bigwedge_{j\in\{
    1,\dots k\}} \theta_k(x_{ij}) \quad\wedge \bigwedge_{\substack{i \in \ord{n}_j \\
  j\in\{1,\dots,k,\out\}}} \big(x_{ij} = x_{\omega(i)j}\big).
  \]
  This formula creates a variable of each element of $\ord{n}_j$, where
  $j\in\{1,\dots,k,\out,\omega\}$, equates any two variables with the same image under
  $\omega$, takes the conjunction with all the formulas $\theta_j$, and the
  existentially quantifies over all variables except those in $\Gamma_\out$.
  In particular, if we were to take $\omega\colon\Gamma_1\oplus\Gamma_2\oplus\Gamma_3\tickar\Gamma_\out$ as in \cref{ex.wiring_diagram}, the
  resulting graphical term would simplify to the formula
  \[
    \psi(y,z,z',x,x',z'') = \exists \tilde{x},\tilde{y}.\theta_1(\tilde{x},\tilde{y},y) \wedge
    \theta_2(\tilde{x},x,x') \wedge \theta_3(y,\tilde{y},x',x') \wedge (z=z') \wedge
    (z''=z'').
  \]
\end{remark}

\begin{remark}
  Note that $\pposet$ is a subcategory of $\CCat{Cat}$. This allows us to take the
  monoidal Grothendieck construction $\int \pr$ of $\pr\colon\frb\to\pposet$, \cite{moeller2018monoidal}. A
  $\pr$-graphical term is an object in the comma category $\int\pr
  \mathord{\downarrow} \frb$. This perspective lends structure to the various
  operations on diagrams belows; we, however, pursue it no further here.
\end{remark}

\subsection{Reasoning with graphical terms}
The following rules for reasoning with diagrams express the (2-)functoriality
and monoidality of $\pr$.

\begin{proposition} \label{prop.diagrams_basic}
  Let $(\theta_1,\dots,\theta_k;\omega)$ be a graphical term, where $\theta_i \in
  \pr(\Gamma_i)$.
  \begin{enumerate}[label=(\roman*)]
    \item (Monotonicity) Suppose $\theta_i \vdash \theta_i'$ for
      some $i$. Then 
      \[
	\church{(\theta_1,\dots,\theta_i,\dots,\theta_k; \omega)} \vdash \church{(\theta_1,\dots,\theta_i',\dots,\theta_k; \omega)}.
      \]
    \item (Breaking) Suppose $\omega \leq \omega'$ in $\frb$. Then
      \[
	\church{(\theta_1,\dots,\theta_k; \omega)} \vdash
	\church{(\theta_1,\dots,\theta_k; \omega')}.
      \]
    \item (Nesting) Suppose $\theta_i = \church{(\theta'_1,\dots,\theta'_\ell; \omega')}$ for some $i$. Then 
\begin{multline*}
	\church{(\theta_1,\dots,\theta_k; \omega)}
	=
	\church{(\theta_1,\dots,\theta_{i-1},\theta'_1, \dots,
	\theta'_\ell,\theta_{i+1},\dots,\theta_k;\\(\Gamma_1\oplus\dots\oplus
	\Gamma_{i-1}\oplus \omega' \oplus \Gamma_{i+1} \oplus \dots \oplus
	\Gamma_k)\cp \omega)}.
\end{multline*}
  \end{enumerate}
\end{proposition}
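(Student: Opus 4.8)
The plan is to read each representation $\church{(\theta_1,\dots,\theta_k;\omega)}$ as the image of the tuple $(\theta_1,\dots,\theta_k)$ under the composite monotone map $\pr(\omega)\circ\rho$, where $\rho$ is the $k$-ary laxator into $\pr(\Gamma_1\oplus\dots\oplus\Gamma_k)$ and $\pr(\omega)\colon\pr(\Gamma_1\oplus\dots\oplus\Gamma_k)\to\pr(\Gamma_\out)$. Both factors are morphisms of $\pposet$, hence monotone, and $\pr$ is a po-functor, so it preserves the local order on 2-cells. Parts (i) and (ii) then fall out immediately, whereas part (iii) is where the lax monoidal structure does the real work.

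For (i), the hypothesis $\theta_i\vdash\theta_i'$ with all other entries fixed says exactly that $(\theta_1,\dots,\theta_i,\dots,\theta_k)\leq(\theta_1,\dots,\theta_i',\dots,\theta_k)$ in the product poset $\prod_j\pr(\Gamma_j)$; applying the monotone map $\pr(\omega)\circ\rho$ preserves this inequality, giving the claim. For (ii), the argument to $\rho$ is unchanged, so writing $a\coloneqq\rho(\theta_1,\dots,\theta_k)$, the hypothesis $\omega\leq\omega'$ in $\frb$ together with po-functoriality of $\pr$ gives $\pr(\omega)\leq\pr(\omega')$ in the hom-poset, i.e.\ $\pr(\omega)(a)\vdash\pr(\omega')(a)$, as required.

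For (iii), I would first unfold both sides. Let $\omega'\colon\Gamma'_1\oplus\dots\oplus\Gamma'_\ell\tickar\Gamma_i$ and set $\Psi\coloneqq\Gamma_1\oplus\dots\oplus\Gamma_{i-1}\oplus\omega'\oplus\Gamma_{i+1}\oplus\dots\oplus\Gamma_k$ (identities in every slot but the $i$th). Write $\rho''$ for the laxator on the block $(\Gamma'_1,\dots,\Gamma'_\ell)$, $\rho$ for that on $(\Gamma_1,\dots,\Gamma_k)$, and $\rho'$ for that on the fully concatenated list. By 2-functoriality $\pr(\Psi\cp\omega)=\pr(\Psi)\cp\pr(\omega)$, so the right-hand representation equals $\pr(\omega)\bigl(\pr(\Psi)(\rho'(\dots))\bigr)$, while the left-hand representation is $\pr(\omega)\bigl(\rho(\theta_1,\dots,\theta_k)\bigr)$ with $\theta_i=\pr(\omega')(\rho''(\theta'_1,\dots,\theta'_\ell))$ by hypothesis. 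It thus suffices to establish the laxator identity
\[
\pr(\Psi)\bigl(\rho'(\theta_1,\dots,\theta_{i-1},\theta'_1,\dots,\theta'_\ell,\theta_{i+1},\dots,\theta_k)\bigr)=\rho(\theta_1,\dots,\theta_{i-1},\theta_i,\theta_{i+1},\dots,\theta_k).
\]

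To prove this identity I would combine two standard facts about lax monoidal functors. First, coherence (associativity) of the laxators lets me factor $\rho'$ by applying $\rho''$ to the $\Gamma'$-block and then applying the $k$-ary laxator $\rho_\Delta$ on the list $(\Gamma_1,\dots,\Gamma_{i-1},\Gamma'_1\oplus\dots\oplus\Gamma'_\ell,\Gamma_{i+1},\dots,\Gamma_k)$; thus the left-hand side rewrites as $\pr(\Psi)\bigl(\rho_\Delta(\theta_1,\dots,\rho''(\theta'_\bullet),\dots,\theta_k)\bigr)$. Second, naturality of the $k$-ary laxator in its $i$th argument with respect to $\omega'$ (the naturality square for the morphism $\Psi=\id\oplus\dots\oplus\omega'\oplus\dots\oplus\id$) gives $\pr(\Psi)\circ\rho_\Delta=\rho\circ(\id\times\dots\times\pr(\omega')\times\dots\times\id)$; evaluating on the tuple with $\rho''(\theta'_\bullet)$ in slot $i$ turns that slot into $\pr(\omega')(\rho''(\theta'_\bullet))=\theta_i$, yielding exactly the right-hand side. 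The main obstacle is not conceptual but bookkeeping: pinning down the laxator-coherence step precisely, i.e.\ verifying that plugging $\rho''$ into the $i$th slot of $\rho_\Delta$ really reproduces $\rho'$ on the concatenated list, since this is the one place where the specific grouping/ungrouping coherence of a lax monoidal functor must be invoked.
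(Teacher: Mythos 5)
Your proposal is correct and follows essentially the same route as the paper: (i) and (ii) are handled identically, and your three ingredients for (iii) — laxator coherence to factor the concatenated laxator through the block laxator, naturality of the laxator with respect to $\omega'$, and functoriality of $\pr$ applied to $\Psi\cp\omega$ — are exactly the upper triangle, square, and right-hand triangle of the commutative diagram the paper uses. The only presentational difference is that the paper first invokes the braiding to assume $i=k$, which disposes of the grouping bookkeeping you flag at the end.
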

\begin{proof}
    Statement (i) is the monotonicity of the map $\pr(\omega)\cp \rho$, while (ii) is the 2-functoriality of $\pr$. Statement (iii) follows from the monoidality and 1-functoriality of $\pr$. In
      particular, it is the commutativity of the following diagram. Using the braiding we can assume without loss of generality that $i=k$.
      \[
	\begin{tikzcd}
	  \prod_{j=1}^{k-1}\pr(\Gamma_j) \times \prod_{j=1}^\ell \pr(\Gamma'_j) 
	  \ar[d,"\id \times \rho"'] \ar[dr,"\rho"] 
	  \\
	  \prod_{j=1}^{k-1}\pr(\Gamma_j) \times \pr\left(\bigoplus_{j=1}^\ell\Gamma'_j\right) \ar[r, "\rho"] \ar[d,"\prod_\pr(\Gamma_j)\times \pr(\omega)"']
	  &
	  \pr\left(\bigoplus_{j=1}^{k-1} \Gamma_j \oplus \bigoplus_{j=1}^\ell \Gamma'_j \right) \ar[d, "\pr(\bigoplus_{j=1}^{k-1}\Gamma_j+\omega')"'] \ar[dr, "\pr((\bigoplus_{j=1}^{k-1}\Gamma_j+\omega')\cp \omega)"] 
	  \\
	  \prod_{j=1}^k\pr(\Gamma_j)\ar[r, "\rho"'] 
	  & 
	  \pr\left(\bigoplus_{j=1}^k\Gamma_j\right) \ar[r, "\pr(\omega)"'] 
	  &[5ex]
	  \pr(\Gamma_\out)
	\end{tikzcd}
      \]
      The upper triangle commutes by coherence laws for $\rho$, the square
      commutes by naturality of $\rho$, and the right hand triangle commutes by
      functoriality of $\pr$.
\end{proof}

\begin{example}
  \cref{prop.diagrams_basic} is perhaps more quickly grasped through a graphical example
  of these facts in action. Suppose we have the entailment
  \[
    \begin{tikzpicture}[unoriented WD, font=\small, pack size=7pt, baseline=(P1.south)]
      \def\angle{-65};
      \node (P1) {
	\begin{tikzpicture}[inner WD]
	  \node[pack] (theta) {$\theta_1$};
	  \draw (theta.180) -- +(180:2pt);
	  \draw (theta.0) -- +(0:2pt);
	  \draw (theta.\angle) -- +(\angle:2pt);
	\end{tikzpicture}	
      };
      \node[right=3 of P1] (P2) {
	\begin{tikzpicture}[inner WD]
	  \node[pack] (xi1) {$\xi_1$};
	  \node[pack, right=1 of xi1] (xi2) {$\xi_2$};
	  \node[link] at ($(xi1.east)!.5!(xi2.west)$) (dot) {};
	  \node[outer pack, fit=(xi1) (xi2)] (outer) {};
	  \draw (outer) -- (xi1.west);
	  \draw (xi1.east) -- (dot);
	  \draw (dot) -- (xi2);
	  \draw (xi2) -- (outer);
	  \draw (dot) -- (outer.\angle);
	\end{tikzpicture}		
      };
      \node at ($(P1.east)!.5!(P2.west)$) {$\vdash$};
    \end{tikzpicture}
    \]
    Then using monotonicity, nesting, and then breaking we can deduce the
    entailment
    \[
      \begin{tikzpicture}[unoriented WD, font=\small, pack size=5pt, baseline=(P1.195)]
	\node (P1) {
	  \begin{tikzpicture}[inner WD]
	    \node[pack] (theta1) {$\theta_1$};
	    \node[pack, right=1.5 of theta1] (theta2) {$\theta_2$};
	    \node[pack] at ($(theta1)!.5!(theta2)+(0,-2)$) (theta3) {$\theta_3$};
	    \node[outer pack, inner xsep=3pt, inner ysep=1pt, fit=(theta1) (theta2) (theta3.-30)] (outer) {};
	    \node[link] at ($(theta2.30)!.5!(outer.30)$) (dot1) {};
	    \node[link,  left=.1 of theta3.west] (dot2) {};
	    \draw (theta2) -- (dot1);
	    \draw[shorten >= -2pt] (dot1) to[bend right] (outer.20);
	    \draw[shorten >= -2pt] (dot1) to[bend left] (outer.45);
	    \draw (dot2) -- (theta3);
	    \draw[shorten >= -2pt] (theta1) -- (outer);
	    \draw[shorten >= -2pt] (theta3) -- (outer);
	    \draw (theta1) -- (theta3);
	    \draw (theta1) -- (theta2);
	    \draw (theta2) -- (theta3);
	  \end{tikzpicture}
	};
	\node[right=2.5 of P1] (P2) {
	  \begin{tikzpicture}[inner WD]
	    \node[pack] (xi1) {$\xi_1$};
	    \node[pack, right=1 of xi1] (xi2) {$\xi_2$};
	    \node[link] at ($(xi1.east)!.5!(xi2.west)$) (dot) {};
	    \draw (xi1.east) -- (dot);
	    \draw (dot) -- (xi2);
	  \node[outer pack, inner xsep=0, inner ysep=0, fit=(xi1) (xi2)] (outerxi) {};
	    \node[pack, right=1.5 of xi2] (theta2) {$\theta_2$};
	    \node[pack, below=.6 of xi2] (theta3) {$\theta_3$};
	    \node[outer pack, inner xsep=3pt, inner ysep=2pt, fit=(xi1.west) (theta2) (theta3.-10)] (outer) {};
	    \node[link] at ($(theta2.30)!.5!(outer.30)$) (dot1) {};
	    \node[link,  left=.1 of theta3.west] (dot2) {};
	    \draw (theta2) -- (xi2.east);
	    \draw (dot) -- (theta3);
	    \draw[shorten >= -2pt] (dot1) to[bend right] (outer.20);
	    \draw[shorten >= -2pt] (dot1) to[bend left] (outer.35);
	    \draw (dot2) -- (theta3);
	    \draw[shorten >= -2pt] (xi1) -- (outer);
	    \draw[shorten >= -2pt] (theta3.south) -- (theta3|-outer.south);
	    \draw (theta2) -- (theta3);
	    \draw (theta2) -- (dot1);
	  \end{tikzpicture}
	};
	\node[right=2.5 of P2] (P3) {
	  \begin{tikzpicture}[inner WD]
	    \node[pack] (xi1) {$\xi_1$};
	    \node[pack, right=1 of xi1] (xi2) {$\xi_2$};
	    \node[link] at ($(xi1.east)!.5!(xi2.west)$) (dot) {};
	    \draw (xi1.east) -- (dot);
	    \draw (dot) -- (xi2);
	    \node[pack, right=1 of xi2] (theta2) {$\theta_2$};
	    \node[pack, below=.6 of xi2] (theta3) {$\theta_3$};
	    \node[outer pack, inner xsep=3pt, inner ysep=1pt, fit=(theta1) (theta2) (theta3.-20)] (outer) {};
	    \node[link] at ($(theta2.30)!.5!(outer.30)$) (dot1) {};
	    \node[link,  left=.1 of theta3.west] (dot2) {};
	    \draw (theta2) -- (xi2.east);
	    \draw (dot) -- (theta3);
	    \draw[shorten >= -2pt] (dot1) to[bend right] (outer.20);
	    \draw[shorten >= -2pt] (dot1) to[bend left] (outer.35);
	    \draw (dot2) -- (theta3);
	    \draw[shorten >= -2pt] (xi1) -- (outer);
	    \draw[shorten >= -2pt] (theta3) -- (outer);
	    \draw (theta2) -- (theta3);
	    \draw (theta2) -- (dot1);
	  \end{tikzpicture}
	};
	\node[right=2.5 of P3] (P4) {
	  \begin{tikzpicture}[inner WD]
	    \node[pack] (xi1) {$\xi_1$};
	    \node[pack, right=1 of xi1] (xi2) {$\xi_2$};
	    \node[link] at ($(xi1.east)!.5!(xi2.west)+(-.3,0)$) (dota) {};
	    \node[link] at ($(xi1.east)!.5!(xi2.west)+(.3,0)$) (dotb) {};
	    \node[link] at ($(xi1.east)!.5!(xi2.west)+(.5,-1)$) (dotc) {};
	    \draw (xi1.east) -- (dota);
	    \draw (dotb) -- (xi2);
	    \node[pack, right=1 of xi2] (theta2) {$\theta_2$};
	    \node[pack, below=.6 of xi2] (theta3) {$\theta_3$};
	    \node[outer pack, inner xsep=3pt, inner ysep=1pt, fit=(theta1) (theta2) (theta3.-20)] (outer) {};
	    \node[link] at ($(theta2.30)!.5!(outer.30)$) (dot1) {};
	    \node[link,  left=.1 of theta3.west] (dot2) {};
	    \draw (theta2) -- (xi2.east);
	    \draw (dotc) -- (theta3);
	    \draw[shorten >= -2pt] (dot1) to[bend right] (outer.20);
	    \draw[shorten >= -2pt] (dot1) to[bend left] (outer.35);
	    \draw (dot2) -- (theta3);
	    \draw[shorten >= -2pt] (xi1) -- (outer);
	    \draw[shorten >= -2pt] (theta3) -- (outer);
	    \draw (theta2) -- (theta3);
	    \draw (theta2) -- (dot1);
	  \end{tikzpicture}
	};
	\node (imp1) at ($(P1.east)!.5!(P2.west)$) {$\vdash$};
	\node[above=-.5 of imp1] {(i)};
	\node (imp2) at ($(P2.east)!.5!(P3.west)$) {$=$};
	\node[above=-.5 of imp2] {(iii)};
	\node (imp3) at ($(P3.east)!.5!(P4.west)$) {$\vdash$};
	\node[above=-.5 of imp3] {(ii)};
      \end{tikzpicture}
    \]
    We'll see many further examples of such reasoning in the subsequent sections
    of this paper, as we prove that we can construct a regular category from a
    regular calculus.
  \end{example}

\begin{example}\label{lem.combining}
  The nesting rule in \cref{prop.diagrams_basic} has two particularly important cases. The first occurs when we
  consider wiring diagrams themselves as poset elements. More precisely, if
  $f\colon \Gamma_1 \to \Gamma_\out$ is a morphism in $\frc$, and
  $\hat{f}\coloneqq \pair{\id_{\Gamma_1},f}$ is its graph, then taking $i=k=1$,
  $\ell=0$, $\theta = \church{(;\hat{f})}$, $\omega =
  \Gamma_\out$ (the identity) and
  $\omega'=\hat{f}$ in (iii) gives $ \church{(\theta;\Gamma_\out)} =
  \church{(;\hat{f})}$. Note that this
  equates a graphical term with inner object $\Gamma_\out$ and annotation
  $\theta$ with a term that has no inner object at all; see e.g.\ \cref{ex.no_inner}.

  The second important case is that of `exterior AND'. If we take $i=k=1$,
  $\ell=2$, and $\omega = \omega'= \Gamma_1\tens \Gamma_2$, then 
  $
    \church{(\theta'_1,\theta'_2;\Gamma_1\tens \Gamma_2)}
    =
    \church{(\rho(\theta'_1,\theta'_2);\Gamma_1 \tens \Gamma_2)}$.
In pictures, this means we can take any two circles, say $\theta_1\in
\pr(\Gamma_1)$ and $\theta_2\in \pr(\Gamma_2)$, and merge them, labelling
the merged circle with $\rho_{\Gamma_1,\Gamma_2}(\theta_1,\theta_2)$:
\[
\begin{tikzpicture}[unoriented WD, font=\small]
	\node (P1) {
	\begin{tikzpicture}[inner WD, pack size=6pt]
		\node[pack] (theta1) {$\theta_1$};
		\node[pack, below=.4 of theta1] (theta2) {$\theta_2$};
		\node[outer pack, inner ysep=0pt, fit=(theta1) (theta2)] (outer) {};
		\draw (theta1.0) -- (theta1.0-|outer.east);
		\draw (theta1.90) -- (outer.north);
		\draw (theta1.180) -- (theta1.180-|outer.west);
		\draw (theta2.270) -- (outer.south);
	\end{tikzpicture}	
	};
	\node[right=3 of P1] (P2) {
	\begin{tikzpicture}[inner WD, pack size=6pt]
		\node[pack] (rho) {$\rho(\theta_1,\theta_2)$};
		\draw (rho.0) -- +(0:2pt);
		\draw (rho.90) -- +(90:2pt);
		\draw (rho.180) -- +(180:2pt);
		\draw (rho.270) -- +(270:2pt);
	\end{tikzpicture}
	};
	\node at ($(P1.east)!.5!(P2.west)$) {$=$};
  \pgfresetboundingbox
	\useasboundingbox (P1.130) rectangle (P2.-40);
\end{tikzpicture}
\]
\end{example}

The meet-semilattice structure permits an intuitive graphical
interpretation. 
Indeed, the definitions of $\true$ and meet (see \cref{eq.def_true_meet}) immediately yield the following proposition.
In the following proposition, the graphical terms on right are
illustrative examples of the equalities stated on the left.
\begin{proposition} \label{prop.diagrams_meet} \label{lem.true_removes_circles} \label{lem.meets_merge}
  For all contexts $\Gamma$ in $\frb$ and $\theta,\theta'\in \pr(\Gamma)$, we have
  \begin{enumerate}[label=(\roman*)]
    \item (True is removable) $\church{(\true_\Gamma;\Gamma)} = \church{ (;\epsilon_\Gamma)
    }$ 
    \hfill
    $
    \begin{tikzpicture}[unoriented WD, font=\small,baseline=(true)]
	\node (P1) {
	\begin{tikzpicture}[inner WD]
		\node[pack] (true) {$\true$};
		\draw (true.0) -- +(0:2pt);
		\draw (true.120) -- +(120:2pt);
		\draw (true.240) -- +(240:2pt);
  \end{tikzpicture}	
	};
	\node[right=3 of P1] (P2) {
	\begin{tikzpicture}[inner WD, shorten >=-2pt]
		\coordinate (helper);
		\node[link] (dot0) at ($(helper)+(0:5pt)$) {};
		\node[link] (dot120) at ($(helper)+(120:5pt)$) {};
		\node[link] (dot240) at ($(helper)+(240:5pt)$) {};
		\node[outer pack, surround sep=8pt, fit=(helper)] (outer) {};
		\draw (dot0) -- (outer.0);
		\draw (dot120) -- (outer.120);
		\draw (dot240) -- (outer.240);
  \end{tikzpicture}			
	};
	\node at ($(P1.east)!.5!(P2.west)$) {$=$};
      \end{tikzpicture}
    $

    \item (Meets are merges)
      $
    \church{ (\theta_1\wedge\theta_2;\Gamma)} = \church{
    (\theta_1,\theta_2;\delta_\Gamma) }.
    $
      \hfill
      $\begin{tikzpicture}[unoriented WD,baseline=(theta)]
	\node (P1) {
	\begin{tikzpicture}[inner WD]
		\node[pack] (theta1) {$\theta_1$};
		\node[pack, below=.3 of theta1] (theta2) {$\theta_2$};
		\coordinate (helper) at ($(theta1)!.5!(theta2)$);
		\node[link, left=2 of helper] (dot L) {};
		\node[link, right=2 of helper] (dot R) {};
		\draw (theta1.west) to[out=180, in=60] (dot L);
		\draw (theta2.west) to[out=180, in=-60] (dot L);
		\draw (theta1.east) to[out=0, in=120] (dot R);
		\draw (theta2.east) to[out=0, in=-120] (dot R);
		\draw (dot L) -- +(-5pt, 0);
		\draw (dot R) -- +(5pt, 0);
  \end{tikzpicture}
	};
	\node[right=3 of P1] (P2) {
		\begin{tikzpicture}[inner WD]
		\node[pack] (theta) {$\theta_1\wedge\theta_2$};
		\draw (theta.180) -- +(180:2pt);
		\draw (theta.0) -- +(0:2pt);
  \end{tikzpicture}	
	};
	\node at ($(P1.east)!.5!(P2.west)$) {$=$};
\end{tikzpicture}
$
  \end{enumerate}
\end{proposition}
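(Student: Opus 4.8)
The plan is to unwind the definition of $\church{\cdot}$ on graphical terms in each of the relevant low-arity cases and to match the result against the defining composites for $\true_\Gamma$ and $\theta_1\wedge\theta_2$ recorded in \cref{eq.def_true_meet}. As the sentence preceding the proposition promises, essentially no computation is involved once the bookkeeping is arranged; the whole content is the identification of $\pr$'s action on the morphisms appearing in the two displayed diagrams.

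First I would record the two elementary reductions that make the left-hand sides transparent. For a graphical term with $k=1$ and $\omega=\id_\Gamma$ the unary laxator is the identity and $\pr(\id_\Gamma)=\id_{\pr(\Gamma)}$, so $\church{(\theta;\Gamma)}=\theta$ for every $\theta\in\pr(\Gamma)$. Applying this with $\theta=\true_\Gamma$ gives $\church{(\true_\Gamma;\Gamma)}=\true_\Gamma$, and with $\theta=\theta_1\wedge\theta_2$ gives $\church{(\theta_1\wedge\theta_2;\Gamma)}=\theta_1\wedge\theta_2$. Thus in both parts the left-hand side is literally the poset element whose defining composite we must reproduce on the right.

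Next I would identify $\pr$'s action on the morphisms named on the right-hand sides. The key observation is that when a map $f\colon r\to s$ of $\frc$ appears in a graphical term as a morphism $\omega$ pointing \emph{from} the $s$-data \emph{to} the $r$-data, it denotes the co-graph $\pair{f,\id_s}$, which by \cref{rem.rels_adjoint_composites} is a right adjoint in $\frb$; and by the adjoint-notation convention built on the fundamental lemma (\cref{lemma.fundamental}), $\pr$ sends this co-graph to $\ust{f}$. Concretely, $\epsilon_\Gamma\colon\Gamma\to 0$ enters as $\omega\colon 0\tickar\Gamma$ with $\pr(\omega)=\ust{\epsilon_\Gamma}$, and $\delta_\Gamma\colon\Gamma\to\Gamma\oplus\Gamma$ enters as $\omega\colon\Gamma\oplus\Gamma\tickar\Gamma$ with $\pr(\omega)=\ust{\delta_\Gamma}$. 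Feeding these into the definition of $\church{\cdot}$---which applies the appropriate laxator and then $\pr(\omega)$---yields $\church{(;\epsilon_\Gamma)}=\ust{\epsilon_\Gamma}(\rho(*))$, where $\rho\colon 1\to\pr(0)$ is the nullary laxator since $k=0$, and $\church{(\theta_1,\theta_2;\delta_\Gamma)}=\ust{\delta_\Gamma}\big(\rho_{\Gamma,\Gamma}(\theta_1,\theta_2)\big)$. These are exactly the composites of right adjoints that \cref{eq.def_true_meet} takes as the definitions of $\true_\Gamma$ and of the meet $\theta_1\wedge\theta_2$, so both equalities follow.

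The only step demanding care---and hence the main (minor) obstacle---is the direction bookkeeping above: one must verify that the domain of $\omega$ really is the empty coproduct $0$ in part (i) and $\Gamma\oplus\Gamma$ in part (ii), so that $\epsilon_\Gamma$ and $\delta_\Gamma$ enter as co-graphs (right adjoints) rather than graphs (left adjoints), and correspondingly that $\pr$ produces $\ust{(-)}$ rather than $\lsh{(-)}$. Once the right adjoint is correctly pinned down, the laxator arities---nullary for part (i), binary for part (ii)---line up automatically with those in \cref{eq.def_true_meet}, and nothing further remains.
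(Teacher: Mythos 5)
Your proposal is correct and is exactly the argument the paper intends: the paper's entire proof is the sentence preceding the proposition, asserting that the identities follow immediately from the defining composites in \cref{eq.def_true_meet}, and your unwinding of $\church{-}$ (nullary/binary laxator followed by $\ust{\epsilon_\Gamma}$ resp.\ $\ust{\delta_\Gamma}$, with the co-graph/right-adjoint bookkeeping handled via \cref{lemma.fundamental}) is precisely that verification made explicit.
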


\begin{example}[Discarding]\label{lem.dotting_off}
  Note that \cref{prop.diagrams_meet}(i) and the monotonicity of diagrams
  (\cref{prop.diagrams_basic}(i)) further imply that for all $\theta \in \pr(\Gamma)$
  we have $\church{ (\theta;\Gamma)} \vdash \church{
      (;\epsilon_\Gamma) }$:
      \[
      \begin{tikzpicture}[unoriented WD, font=\small, baseline=(phi)]
	\node (P1) {
	  \begin{tikzpicture}[inner WD, shorten >=-2pt]
	    \node[pack] (phi) {$\theta$};
	    \node[outer pack, fit=(phi)] (outer) {};
	    \draw (phi) -- (outer.west);
	    \draw (phi) -- (outer.east);
	    \draw (phi) -- (outer.south);
	  \end{tikzpicture}
	};
	\node[right=3 of P1] (P2) {
	  \begin{tikzpicture}[inner WD, shorten >=-2pt]
	    \node[pack, fill=white, white] (phi) {$\theta$};
	    \node[outer pack, fit=(phi)] (outer) {};
	    \node[link] at ($(outer.180) - (180:5pt)$) (dot180) {};
	    \node[link] at ($(outer.0) - (0:5pt)$) (dot0) {};
	    \node[link] at ($(outer.270) - (270:5pt)$) (dot270) {};
	    \draw (dot0) -- (outer.0);
	    \draw (dot180) -- (outer.180);
	    \draw (dot270) -- (outer.270);
	  \end{tikzpicture}
	};
	\node at ($(P1.east)!.5!(P2.west)$) {$\vdash$};
      \end{tikzpicture}
    \]
  \end{example}

\section{The syntactic category} \label{chap.relations}

Finally, we show that given a regular calculus, we can construct a regular category, and that this construction extends to a functor $\syn\colon \rgcalc \to \rgcat$, that acts as a one-sided weak inverse to $\prd$.
\subsection{Internal relations and internal functions}\label{sec.int_rels}

\begin{definition}\label{def.internal_relations}
Given objects $\Gamma_1,\Gamma_2$ and $\varphi_1\in \pr(\Gamma_1)$ and $\varphi_2\in \pr(\Gamma_2)$, we
define the poset $\rela{\pr}(\varphi_1,\varphi_2)$ of \define{$\pr$-internal relations from
$\varphi_1$ to $\varphi_2$} to be the subposet
\[
  \rela{\pr}(\varphi_1,\varphi_2)\coloneqq
          \big\{\theta\in \pr(\Gamma_1\tens \Gamma_2)\,\big|\,
	  \lsh{(\pi_1)}\theta \vdash_{\Gamma_1} \varphi_1
			\text{ and }
	  \lsh{(\pi_2)}\theta \vdash_{\Gamma_2} \varphi_2
		\big\} \subseteq \pr(\Gamma_1\tens \Gamma_2).
\]
\end{definition}

An internal relation $\theta$ may be represented by the graphical term
$\dectheta$
together with the two entailments
\[
\begin{tikzpicture}[unoriented WD]
	\node (P11) {
  \begin{tikzpicture}[inner WD]
    \node[pack] (theta) {$\theta$};
    \node[link, right=5pt of theta] (dot) {};
    \draw (dot) -- (theta);
    \draw (theta) -- ($(theta.west)-(1,0)$);			
  \end{tikzpicture}	
	};
	\node[right=3 of P11] (P12) {
  \begin{tikzpicture}[inner WD, pack size=6pt]
    \node[pack] (phi) {$\varphi_1$};
		\draw (phi.west) to +(-2pt, 0);
  \end{tikzpicture}	
	};
	\node at ($(P11.east)!.5!(P12.west)$) {$\vdash_{\Gamma_1}$};
	\node[right=6 of P12] (P21) {
  \begin{tikzpicture}[inner WD]
    \node[pack] (theta) {$\theta$};
    \node[link, left=5pt of theta] (dot) {};
    \draw (dot) -- (theta);
    \draw (theta) -- ($(theta.east)+(1,0)$);			
  \end{tikzpicture}	
	};
	\node[right=3 of P21] (P22) {
  \begin{tikzpicture}[inner WD, pack size=6pt]
    \node[pack] (phi) {$\varphi_2$};
		\draw (phi.east) to +(2pt, 0);
  \end{tikzpicture}	
	};	
	\node at ($(P21.east)!.5!(P22.west)$) {$\vdash_{\Gamma_2}$};
\end{tikzpicture}
\]

Note that when this definition is applied to the regular calculus $\prd(\cat{R})$ associated to a regular category $\cat{R}$, it  recovers the usual notion of relation between objects in $\cat{R}$.

\begin{proposition}\label{prop.rela_rels_rrel}
Let $\cat{R}$ be a regular category, let $\Gamma_1,\Gamma_2\in\frc[\ob\cat{R}]$ be contexts, and suppose given $r_1\in\sub_{\cat{R}}\Prod[\Gamma_1]$ and $r_2\in\sub_{\cat{R}}\Prod[\Gamma_2]$. There is a natural isomorphism
\[\rela{\prd(\cat{R})}\big((\Gamma_1, r_1),(\Gamma_2, r_2)\big)\cong\rrel{\cat{R}}(r_1,r_2).\]
\end{proposition}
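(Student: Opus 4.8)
The plan is to unwind both sides until the claimed isomorphism reduces to a statement about down-sets in a single subobject lattice of $\cat{R}$. Write $a_i \coloneqq \Prod[\Gamma_i]$, so that $\pr(\Gamma_i) = \sub_{\cat{R}}(a_i)$ and, since $\Prod$ is strong monoidal, $\pr(\Gamma_1\oplus\Gamma_2) = \sub_{\cat{R}}(a_1\times a_2)$. Under this identification the projections $\pi_i\colon \Gamma_1\oplus\Gamma_2 \to \Gamma_i$ of \cref{cor.descriptions} are sent by $\Prod$ to the product projections $p_i\colon a_1\times a_2 \to a_i$, and the operators $\lsh{(\pi_i)}, \ust{(\pi_i)}$ acting on predicates are precisely the adjoint pair $\lsh{(p_i)} \dashv \ust{(p_i)}$ of \cref{prop.sub_ladj}. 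The chosen representatives give monomorphisms $r_i \inj a_i$, and since a morphism $r_1 \tickar r_2$ in $\rrel{\cat{R}}$ is by definition a subobject of $r_1\times r_2$ ordered by inclusion, we have an equality of posets $\rrel{\cat{R}}(r_1,r_2) = \sub_{\cat{R}}(r_1\times r_2)$. It therefore suffices to produce a (natural) order-isomorphism between $\rela{\prd(\cat{R})}\big((\Gamma_1,r_1),(\Gamma_2,r_2)\big)$ and $\sub_{\cat{R}}(r_1\times r_2)$.

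The key step is a reformulation of the defining conditions of \cref{def.internal_relations}. For $\theta \in \sub_{\cat{R}}(a_1\times a_2)$, the adjunction $\lsh{(p_i)}\dashv\ust{(p_i)}$ turns each constraint $\lsh{(p_i)}\theta \vdash r_i$ into $\theta \leq \ust{(p_i)} r_i$. Because $p_i$ is a product projection, the pullback of the mono $r_i \inj a_i$ along $p_i$ is computed as $\ust{(p_1)} r_1 = r_1\times a_2$ and $\ust{(p_2)} r_2 = a_1\times r_2$, regarded as subobjects of $a_1\times a_2$; their meet in $\sub_{\cat{R}}(a_1\times a_2)$ — the pullback of $r_1\times a_2 \inj a_1\times a_2 \hookleftarrow a_1\times r_2$ — is exactly $r_1\times r_2$. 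Hence $\theta$ belongs to $\rela{\prd(\cat{R})}\big((\Gamma_1,r_1),(\Gamma_2,r_2)\big)$ if and only if $\theta \leq \ust{(p_1)} r_1 \wedge \ust{(p_2)} r_2 = r_1\times r_2$; that is, the poset of internal relations is precisely the down-set of $r_1\times r_2$ inside $\sub_{\cat{R}}(a_1\times a_2)$.

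It then remains to note that, for the monomorphism $m\colon r_1\times r_2 \inj a_1\times a_2$, post-composition $\lsh{(m)}$ and pullback $\ust{(m)}$ restrict to mutually inverse monotone maps between $\sub_{\cat{R}}(r_1\times r_2)$ and this down-set — a standard consequence of \cref{lemma.OFS}, since composing a mono with $m$ is again monic and every subobject below $r_1\times r_2$ is so recovered. Composing this order-isomorphism with the reduction of the previous paragraph yields the desired $\rela{\prd(\cat{R})}\big((\Gamma_1,r_1),(\Gamma_2,r_2)\big) \cong \sub_{\cat{R}}(r_1\times r_2) = \rrel{\cat{R}}(r_1,r_2)$. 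For naturality I would verify that this family of maps commutes with the base-change adjunctions as $r_1,r_2$ vary, and — more substantially — that it is compatible with composition of internal relations. The pointwise bijection above is essentially forced by the adjunctions and is routine; the genuine obstacle is this compatibility with composition, which requires matching the wiring-diagram composite that defines composition in $\rela{\prd(\cat{R})}$ with the pullback-and-image composite \eqref{eqn.rel_composition} of $\rrel{\cat{R}}$, and I expect that step — not the isomorphism of posets — to be where the real work lies.
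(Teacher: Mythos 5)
Your proposal is correct and follows essentially the same route as the paper: both unwind \cref{def.internal_relations} to show that the defining conditions force $\theta$ to lie below $r_1\times r_2$ in $\sub_{\cat{R}}(\Prod[\Gamma_1]\times\Prod[\Gamma_2])$, and then identify that down-set with $\sub_{\cat{R}}(r_1\times r_2)=\rrel{\cat{R}}(r_1,r_2)$ (the paper phrases the first step as a factorization of the mono $t\inj g_1\times g_2$ through $r_1\times r_2$ rather than via the adjoint transpose $\theta\leq\ust{(p_i)}r_i$, but this is the same argument). Your closing concern about compatibility with composition goes beyond what the proposition asserts — it claims only an isomorphism of hom-posets — and the paper likewise does not address it here.
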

\begin{proof}
Let $g_1\coloneqq\Prod[\Gamma_1]$ and $g_2\coloneqq\Prod[\Gamma_2]$ so we have $r_1\ss g_1$ and $r_2\ss g_2$; see \cref{eqn.Prod}. By \cref{def.internal_relations,prop.rels}, a $\prd(\cat{R})$-internal relation between them is an element $t\ss g_1\times g_2$ such that there exist dotted arrows making the following diagram commute:
\[
\begin{tikzcd}[row sep=small]
	r_1\ar[d, >->]&
	t\ar[d, >->]\ar[r, dotted]\ar[l, dotted]&
	r_2\ar[d, >->]\\
	g_1&
	g_1\times g_2\ar[l]\ar[r]&
	g_2
\end{tikzcd}
\]
The composite $t\to r_1\times r_2\to g_1\times g_2$ is monic, so we have that $t\ss r_1\times r_2$. The result follows.
\end{proof}

\begin{theorem}\label{thm.internal_relations}
  Let $\pr\colon \frb \to \pposet$ be a regular calculus. Then there exists a
  po-category $\rela{\pr}$ whose objects are pairs
  $(\Gamma,\varphi)$, where $\Gamma$ is an object of $\frb$ and $\varphi \in
  \pr(\Gamma)$, and with hom-posets $(\Gamma_1,\varphi_1) \to
  (\Gamma_2,\varphi_2)$ given by $\rela{\pr}(\varphi_1,\varphi_2)$.

The composition rule is given as follows. For objects
$\Gamma_1,\Gamma_2,\Gamma_3$ in $\frb$, let
\[
  \comp_{\Gamma_1,\Gamma_2,\Gamma_3} \coloneqq
  \begin{aligned}
  \begin{tikzpicture}[inner WD]
    \node[pack] (theta) {};
    \node[pack, right=2 of theta] (theta') {};
    \node[outer pack, fit=(theta) (theta')] (outer) {};
    \draw (outer.east) to (theta'.east) node[right=1] {$\scriptstyle \Gamma_3$};
    \draw (theta) to node[above=-.3] {$\scriptstyle \Gamma_2$} (theta');
    \draw (theta.west) node[left=1] {$\scriptstyle \Gamma_1$} to (outer.west);			
  \end{tikzpicture}	
\end{aligned}
\]
It is a morphism $(\Gamma_1 \oplus \Gamma_2 \oplus \Gamma_2 \oplus \Gamma_3) \tickar (\Gamma_1\oplus\Gamma_3)$ in $\frc$. 
We then define
\begin{equation}\label{eqn.composition}
(-)\cp(-) \colon \pr(\Gamma_1\tens \Gamma_2)\times \pr(\Gamma_2\tens
\Gamma_3)\xrightarrow{\rho} \pr(\Gamma_1 \tens \Gamma_2 \tens \Gamma_2 \tens
\Gamma_3) \xrightarrow{\pr(\comp)}
\pr(\Gamma_1\tens \Gamma_3).
\end{equation}
\end{theorem}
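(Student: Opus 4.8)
The plan is to verify the four conditions that make $\rela{\pr}$ into a po-category: that the composition rule \eqref{eqn.composition} restricts to a well-defined operation on internal relations, that it is monotone, that it is associative, and that it admits identities. The hom-objects require no separate argument, since $\rela{\pr}(\varphi_1,\varphi_2)$ is by definition a subposet of $\pr(\Gamma_1\oplus\Gamma_2)$ and hence a poset; and monotonicity is immediate, since the laxator $\rho$ is monotone and $\pr(\comp)$ is a po-functor applied to a morphism, so \eqref{eqn.composition} preserves the order in each variable. This already secures the poset-enrichment once the remaining points are settled.

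The key step is well-definedness. Given $\theta\in\rela{\pr}(\varphi_1,\varphi_2)$ and $\psi\in\rela{\pr}(\varphi_2,\varphi_3)$, I must show $\lsh{(\pi_1)}(\theta\cp\psi)\vdash\varphi_1$ and $\lsh{(\pi_3)}(\theta\cp\psi)\vdash\varphi_3$; by symmetry it suffices to treat the first. Writing the composite as the graphical term $\church{(\theta,\psi;\comp)}$ and pushing forward along $\pi_1$ amounts, by the nesting rule \cref{prop.diagrams_basic}(iii), to postcomposing the wiring diagram with the graph of $\pi_1$, which joins the two inner $\Gamma_2$ ports and discards the $\Gamma_3$ output. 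I bound this above by replacing $\psi$ with $\true_{\Gamma_2\oplus\Gamma_3}$, using monotonicity \cref{prop.diagrams_basic}(i). By \cref{lem.true_removes_circles} the resulting $\true$-labelled shell may be erased, and what remains is exactly the diagram that existentially quantifies the $\Gamma_2$ leg of $\theta$ and retains $\Gamma_1$, namely $\lsh{(\pi_1)}\theta$. Thus $\lsh{(\pi_1)}(\theta\cp\psi)\vdash\lsh{(\pi_1)}\theta\vdash\varphi_1$, the last inequality being the defining constraint on $\theta$. The symmetric argument, erasing $\theta$ instead, yields the $\varphi_3$ bound.

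For identities I take $\id_{(\Gamma,\varphi)}\coloneqq\lsh{(\delta_\Gamma)}(\varphi)\in\pr(\Gamma\oplus\Gamma)$, the pushforward of $\varphi$ along the diagonal. Since $\delta_\Gamma\cp\pi_i=\id_\Gamma$, functoriality of $f\mapsto\lsh{f}$ gives $\lsh{(\pi_i)}\lsh{(\delta_\Gamma)}(\varphi)=\varphi$, so this lies in $\rela{\pr}(\varphi,\varphi)$. The unit law $\id_{(\Gamma_1,\varphi_1)}\cp\theta=\theta$ is where the internal-relation constraint is used: chasing the diagonal through $\comp$ and merging the repeated $\Gamma_1$ leg exhibits the composite as $\theta\wedge\ust{(\pi_1)}\varphi_1$, and the adjunction $\lsh{(\pi_1)}\dashv\ust{(\pi_1)}$ converts the hypothesis $\lsh{(\pi_1)}\theta\vdash\varphi_1$ into $\theta\vdash\ust{(\pi_1)}\varphi_1$, whence $\theta\wedge\ust{(\pi_1)}\varphi_1=\theta$. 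The right unit law is symmetric. Associativity then follows formally: for composable $\theta,\psi,\chi$ both bracketings of $\theta\cp\psi\cp\chi$ unfold via \cref{prop.diagrams_basic}(iii) into a single term $\church{(\theta,\psi,\chi;W)}$ whose wiring diagram $W$ joins the two inner $\Gamma_2$ ports, joins the two inner $\Gamma_3$ ports, and retains $\Gamma_1$ and $\Gamma_4$; the two bracketings produce the same $W$ because composition of wiring diagrams in $\frb$ is associative, and functoriality of $\pr$ delivers equal poset elements. I expect the main obstacle to be the unit law, specifically the identification of $\id\cp\theta$ with $\theta\wedge\ust{(\pi_1)}\varphi_1$, which requires threading the diagonal through $\comp$ using meets-are-merges (\cref{lem.meets_merge}) together with the graph-absorption special case of nesting (\cref{lem.combining}); by contrast, well-definedness and associativity, though conceptually central, reduce cleanly to the graphical rules already established.
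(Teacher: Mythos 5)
Your proposal is correct and follows essentially the same route as the paper's proof (given in the appendix via \cref{lemma.comp_rela_rela,lemma.id_is_rela,lemma.cp_unital,lemma.cp_assoc}): well-definedness of composition by discarding (\cref{lem.dotting_off}) together with the defining entailments of internal relations, identities given by $\lsh{(\delta_\Gamma)}(\varphi)$, unitality from the internal-relation constraint, and associativity and monotonicity from the (2-)functoriality and monoidality of $\pr$. The only cosmetic difference is that you package the unit law as $\id\cp\theta=\theta\wedge\ust{(\pi_1)}\varphi_1$ plus the adjunction $\lsh{(\pi_1)}\dashv\ust{(\pi_1)}$, whereas the paper establishes the two inequalities directly with the graphical rules (discarding, meets-are-merges, breaking); the content is the same.
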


\begin{definition} \label{def.RT}
  Given a regular calculus $(\typeset, \pr)$, where $\pr\colon \frb \to \pposet$, we define the category $\func{\pr}$
  of $\pr$-internal functions to be the category of left adjoints in $\rela{\pr}$:
  \begin{equation}\label{eqn.RT}
		\func{\pr}\coloneqq\ladj(\rela{\pr}).
	\end{equation}
  In more detail, suppose given elements $\varphi_1\in \pr(\Gamma_1)$ and $\varphi_2\in \pr(\Gamma_2)$. We say that an internal relation $\theta\in\rela{\pr}(\varphi_1, \varphi_2)\ss \pr(\Gamma_1\oplus\Gamma_2)$ is an \define{internal function} if there exists an internal
  relation $\xi$ such that
\[
      \begin{aligned}
	\begin{tikzpicture}[unoriented WD, font=\small]
	  \node (P1) {
	    \begin{tikzpicture}[inner WD, pack size=6pt]
	      \node[pack] (phi) {$\varphi_1$};
	      \node[link, below=2pt of phi] (dot) {};
	      \draw (phi) -- (dot);
	      \draw (dot) -- +(-8pt, 0);
	      \draw (dot) -- +(8pt, 0);
	    \end{tikzpicture}	
	  };
	  \node[right=2 of P1] (P2) {
	    \begin{tikzpicture}[inner WD, pack size=6pt]
	      \node[pack] (theta) {$\theta$};
	      \node[pack, right=1 of theta] (theta') {$\xi$};
	      \draw (theta) -- +(-10pt, 0);
	      \draw (theta') -- +(10pt, 0);
	      \draw (theta) to (theta');
	    \end{tikzpicture}	
	  };
	  \node at ($(P1.east)!.5!(P2.west)$) {$\vdash$};
	\end{tikzpicture}
      \end{aligned}
  \qqand        \begin{aligned}
      \begin{tikzpicture}[unoriented WD, font=\small]
	\node (P3) {
	  \begin{tikzpicture}[inner WD, pack size=6pt]
	    \node[pack] (theta') {$\xi$};
	    \node[pack, right=1 of theta'] (theta) {$\theta$};
	    \draw (theta) -- +(10pt, 0);
	    \draw (theta') -- +(-10pt, 0);
	    \draw (theta)  to  (theta');
	  \end{tikzpicture}	
	};
	\node[right=2 of P3] (P4) {
	  \begin{tikzpicture}[inner WD, surround sep=4pt, pack size=6pt]
	    \node[pack] (phi) {$\varphi_2$};
	    \node[link, below=2pt of phi] (dot) {};
	    \draw (phi) -- (dot);
	    \draw (dot) -- +(-8pt, 0);
	    \draw (dot) -- +(8pt, 0);
	  \end{tikzpicture}
	};
	\node at ($(P3.east)!.5!(P4.west)$) {$\vdash$};
      \end{tikzpicture}
      \end{aligned}
      .\]
      The category $\func{\pr}$ has the same objects $(\Gamma,\varphi)$ as $\rela{\pr}$, and morphisms given by internal functions.
\end{definition}

The central result of this paper is that internal functions form a regular category.

\begin{theorem} \label{thm.internal_functions}
  For any regular calculus $\pr\colon\frb \to\pposet$, the category $\func{\pr}$
  of internal functions in $\rela{\pr}$ is regular.
\end{theorem}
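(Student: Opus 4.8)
The plan is to show that the po-category $\rela{\pr}$ of internal relations is itself a regular po-category; once this is established, \cref{lemma.fundamental} together with the equivalence \cref{eqn.equiv_regpocat} identifies $\func{\pr}=\ladj(\rela{\pr})$ as a regular category, which is exactly the claim. Rather than exhibit $\rela{\pr}$ as $\rrel{\cat{S}}$ for an explicitly built $\cat{S}$ --- which would be circular --- I would invoke the characterization of Carboni and Walters \cite{carboni1987cartesian} (in the po-enriched reinterpretation of \cite{fong2019regular}): a symmetric monoidal po-category is a regular po-category precisely when it is a functionally complete cartesian bicategory of relations, i.e.\ each object carries a coherent adjoint-monoid (Frobenius) structure and every morphism admits a tabulation by a jointly-monic pair of maps.

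The first task is to verify the cartesian-bicategory-of-relations structure on $\rela{\pr}$. The monoidal product sends $(\Gamma_1,\varphi_1),(\Gamma_2,\varphi_2)$ to $(\Gamma_1\oplus\Gamma_2,\rho(\varphi_1,\varphi_2))$ with unit $(0,\true)$, and each object $(\Gamma,\varphi)$ inherits its comonoid/adjoint-monoid structure by restricting the canonical adjoint monoid on $\Gamma\in\frb$ --- the diagonal $\delta_\Gamma$ and counit $\epsilon_\Gamma$ of \cref{eq.def_true_meet} --- along $\varphi$. The Frobenius law, the naturality and coherence of these comonoids, and the local posetality (which forces the discreteness conditions) are all formal consequences of the reasoning rules of \cref{prop.diagrams_basic,prop.diagrams_meet}; I would check them by the same graphical manipulations, so this step is routine though lengthy.

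The crux is functional completeness, i.e.\ tabulation. Given an internal relation $\theta\colon(\Gamma_1,\varphi_1)\to(\Gamma_2,\varphi_2)$ --- an element $\theta\in\pr(\Gamma_1\oplus\Gamma_2)$ with $\lsh{\pi_i}\theta\vdash\varphi_i$ --- I would form the object $(\Gamma_1\oplus\Gamma_2,\theta)$ and take the two internal functions $p,q$ given by the graphs in $\frc$ of the product projections $\pi_1,\pi_2$. That $p$ and $q$ are genuine internal functions (left adjoints in $\rela{\pr}$, per \cref{def.RT}) is where the projection conditions on $\theta$ are used, via \cref{rem.rels_adjoint_composites}; one must then verify, graphically, that $\theta=p\tp\cp q$, that $\pair{p,q}$ is jointly monic, and that the resulting tabulation is unique up to isomorphism and stable under composition. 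I expect this to be the main obstacle: it is precisely here that honest regular structure (finite limits and image factorization) is manufactured out of the bare functoriality and monoidality of $\pr$, and the joint-monicity and stability checks demand a Beck--Chevalley / $\exists$-elimination style argument in the graphical calculus rather than a purely formal rewrite.

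With the axioms in hand, Carboni--Walters yields $\rela{\pr}\cong\rrel{\cat{S}}$ for a regular category $\cat{S}$, whence $\func{\pr}=\ladj(\rela{\pr})\cong\ladj(\rrel{\cat{S}})\cong\cat{S}$ is regular by \cref{lemma.fundamental}. Should a self-contained argument be preferred, the same data can be repackaged directly: products in $\func{\pr}$ come from the monoidal product and terminal object $(0,\true)$, equalizers and pullbacks are read off from tabulations, the image factorization of an internal function is its tabulation viewed as a relation, and pullback-stability of regular epimorphisms follows from the Frobenius/Beck--Chevalley identities --- but routing through the established characterization keeps the bookkeeping to a minimum.
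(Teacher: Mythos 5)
Your route is viable but genuinely different from the paper's. The paper never invokes Carboni--Walters; it proves regularity of $\func{\pr}$ head-on in the appendix, by first characterizing internal functions as the total-and-deterministic relations (equivalently, those with their own transpose as right adjoint), then exhibiting a terminal object $(\unit,\true)$, pullbacks $(\Gamma_1\oplus\Gamma_2,\theta_1\cp\theta_2\tp)$ with the projections of \cref{prop.projections}, image objects $\ust{\epsilon_{\Gamma_1}}\cp\theta$, and pullback-stability, all by direct graphical manipulation; the statement you want to prove first --- that $\rela{\pr}$ is a regular po-category --- appears in the paper only as a \emph{corollary} of the regularity of $\func{\pr}$ (\cref{thm.syn_is_regular}), so your argument runs in the opposite logical direction. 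Your approach is sound in outline: verifying the cartesian-bicategory-of-relations axioms plus functional completeness for $\rela{\pr}$ and citing Carboni--Walters does yield $\rela{\pr}\simeq\rrel{\cat{S}}$ for a regular $\cat{S}$ (an equivalence rather than an isomorphism, but regularity of $\ladj(-)$ is equivalence-invariant, so the conclusion survives). What it buys is outsourcing the assembly of finite limits and image factorizations to a known theorem; what it costs is that the paper actually \emph{needs} the explicit graphical descriptions of limits, regular epis, and images in $\func{\pr}$ later (in the proof of \cref{thm.syn_def}, to see that $\syn(F,F^\sharp)$ is a regular functor), so the direct construction is not wasted effort there. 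Two concrete cautions if you pursue your route: the identity on $(\Gamma,\varphi)$ is $\lsh{(\delta_\Gamma)}(\varphi)$, not the image of $\id_\Gamma$, so ``restricting the canonical adjoint monoid along $\varphi$'' must thread $\varphi$ through every structure map (e.g.\ the comultiplication must be checked to land in $\rela{\pr}(\varphi,\rho(\varphi,\varphi))$, which uses $\lambda\dashv\rho$ nontrivially); and the tabulation legs you describe are exactly the $\pi_i$ of \cref{prop.projections}, whose function-hood, joint monicity, and the identity $\theta=\pi_1\tp\cp\pi_2$ require the same totality/determinism computations the paper performs --- so the deferred ``routine though lengthy'' and ``main obstacle'' steps together amount to essentially the full content of the paper's \cref{sec.internal_funs,sec.finlims,sec.images}, and your proof is not complete until they are carried out.
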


The proof, found in \cref{app.internal_functions} is divided into three parts: in \cref{sec.internal_funs} we explore properties of internal
functions, in \cref{sec.finlims} we show $\func{\pr}$ has finite limits, and in \cref{sec.images} we show it has pullback stable
image factorizations.

\subsection{The functor $\syn\colon\rgcalc\to\rgcat$}

We want to define a functor $\syn\colon\rgcalc\to\rgcat$. On objects, this is
now easy: given a regular calculus $(\typeset, \pr)\in\rgcalc$, define
$\syn(\typeset, \pr)\coloneqq\func{\pr}$ to be the syntactic category as in \cref{def.RT}. For morphisms, suppose given $(F,F^\sharp)\colon(\typeset, \pr)\to (\typeset',\pr')$:
\[
\begin{tikzcd}[row sep=5pt]
  \typeset\ar[dd, "F"']&
  \frb[\typeset]\ar[dd, "\ol{F}"']\ar[dr, bend left=15pt, "\pr", ""' name={T}]\\
  &&[20pt]\pposet\\
  \typeset'&
  \frb[\typeset']\ar[ur, bend right=15pt, "\pr'"', "" name={T'}]
  \ar[from=T, to={T'-|T}, twocell, "F^\sharp"']
\end{tikzcd}
\]
where again $\ol{F}\coloneqq\frb[F]$. We define $\funr{F}\coloneqq\syn(F,F^\sharp)\colon\func{\pr}\to\func{\pr'}$
on an object $(\Gamma,\varphi)\in\func{\pr}$ by
\begin{equation}\label{eqn.F_objects}
	\funr{F}(\Gamma,\varphi)\coloneqq \left(\ol{F}(\Gamma), F^\sharp_\Gamma(\varphi)\right)\in\func{\pr'}.
\end{equation}
and on a morphism $\theta\colon(\Gamma_1,\varphi_1)\to (\Gamma_2,\varphi_2)$ by
\begin{equation}\label{eqn.F_morphisms}
	\funr{F}(\theta)\coloneqq F^\sharp_{\Gamma_1\oplus\Gamma_2}(\theta).
\end{equation}

\begin{theorem} \label{thm.syn_def}
The assignment $\syn(\typeset, \pr)\coloneqq\func{\pr}$ on objects, and \cref{eqn.F_objects,eqn.F_morphisms} on morphisms, constitutes a functor $\syn\colon\rgcalc\to\rgcat$.
\end{theorem}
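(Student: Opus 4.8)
The plan is to factor the morphism part of $\syn$ through the internal-relations po-categories of \cref{thm.internal_relations}. Concretely, I first show that a morphism $(F,F^\sharp)$ of regular calculi induces a strong symmetric monoidal po-functor $\rela{F^\sharp}\colon\rela{\pr}\to\rela{\pr'}$ acting as $(\Gamma,\varphi)\mapsto(\ol{F}\Gamma,\,F^\sharp_\Gamma\varphi)$ on objects and as $\theta\mapsto F^\sharp_{\Gamma_1\oplus\Gamma_2}(\theta)$ on internal relations. Granting this, $\funr{F}\coloneqq\syn(F,F^\sharp)$ is obtained simply as $\ladj(\rela{F^\sharp})$: a po-functor automatically sends left adjoints to left adjoints, so it restricts to the categories of internal functions $\func{\pr}=\ladj(\rela{\pr})$ and $\func{\pr'}=\ladj(\rela{\pr'})$ of \cref{def.RT}, and on objects and morphisms it is then exactly \cref{eqn.F_objects,eqn.F_morphisms}. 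It remains to verify that $\rela{F^\sharp}$ really is a po-functor with the claimed structure, that $\funr{F}$ is \emph{regular}, and that $\syn$ respects identities and composition.

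The first task is nearly formal, using only that $F^\sharp$ is a monoidal natural transformation $\pr\Rightarrow\ol{F}\cp\pr'$ and that $\ol{F}=\frb[F]$ is strong monoidal. Well-definedness on objects is immediate since $F^\sharp_\Gamma\colon\pr(\Gamma)\to\pr'(\ol{F}\Gamma)$. For a morphism, the defining entailments of \cref{def.internal_relations}, say $\lsh{(\pi_1)}\theta\vdash\varphi_1$, are transported by applying the monotone map $F^\sharp$ and invoking naturality with respect to (the graph of) $\pi_1$: since $\ol{F}$ preserves products we have $\ol{F}\pi_1=\pi_1$, hence $F^\sharp_{\Gamma_1}\lsh{(\pi_1)}\theta=\lsh{(\pi_1)}F^\sharp_{\Gamma_1\oplus\Gamma_2}\theta$, so $\lsh{(\pi_1)}(F^\sharp\theta)\vdash F^\sharp\varphi_1$ as required (and symmetrically for $\pi_2$). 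Preservation of the composition \cref{eqn.composition} follows by the same principle: $F^\sharp$ commutes with the multi-ary laxator $\rho$ (monoidality) and with $\pr(\comp)$ (naturality, using $\ol{F}\,\comp=\comp$), and preservation of identities is analogous. Strong symmetric monoidality of $\rela{F^\sharp}$ likewise follows from the strong monoidality of $\ol{F}$ together with the strict monoidal compatibility of $F^\sharp$.

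The substantive point---and the step I expect to be the main obstacle---is that $\funr{F}$ is a regular functor, i.e.\ that it preserves finite limits and regular epimorphisms. Here I would lean on the explicit constructions in \cref{app.internal_functions} by which $\func{\pr}$ is shown to be regular (\cref{thm.internal_functions}): its terminal object, products, equalizers and pullbacks, and image factorizations are all expressed through the ajax data of $\pr$---the laxator $\rho$ and the action of $\pr$ on the structural maps $\pi_i$, $\delta_\Gamma$, $\epsilon_\Gamma$, and $\comp$ of $\frb$. Because $F^\sharp$ commutes with every one of these (monoidality for $\rho$; naturality for $\pr$ of a structural map $\omega$, with $\ol{F}\omega=\omega$ by strong monoidality), the functor $\funr{F}=\ladj(\rela{F^\sharp})$ carries each such construction in $\func{\pr}$ to the corresponding one in $\func{\pr'}$. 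This yields preservation of finite limits and of image factorizations, and hence of regular epimorphisms. Equivalently, since $\rela{F^\sharp}$ is strong symmetric monoidal and preserves the canonical adjoint-monoid structures on objects, it preserves exactly the data out of which the regular structure is assembled.

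Finally, functoriality of $\syn$ is a direct check from \cref{eqn.F_objects,eqn.F_morphisms}. The identity morphism of $(\typeset,\pr)$ has $\ol{\id}=\id$ and $\id^\sharp=\id$, so $\syn(\id)=\id_{\func{\pr}}$. For a composite $(F,F^\sharp)\colon(\typeset,\pr)\to(\typeset',\pr')$ followed by $(G,G^\sharp)\colon(\typeset',\pr')\to(\typeset'',\pr'')$, one has $\overline{G\circ F}=\ol{G}\,\ol{F}$ and the pasted natural transformation has $\Gamma$-component $G^\sharp_{\ol{F}\Gamma}\circ F^\sharp_\Gamma$; substituting into \cref{eqn.F_objects,eqn.F_morphisms} shows $\syn(G,G^\sharp)\circ\syn(F,F^\sharp)=\syn\big((F,F^\sharp)\cp(G,G^\sharp)\big)$ on objects and on morphisms, completing the proof.
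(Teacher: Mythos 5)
Your proposal is correct and follows essentially the same route as the paper: the heart of both arguments is that $F^\sharp$ commutes with the laxators and with $\pr$ applied to morphisms of $\frb$ (the paper packages this as \cref{lem.funr_preserves_diagrams}, ``$F^\sharp$ preserves semantics of graphical terms''), so that every piece of the regular structure of $\func{\pr}$---identities, composition, finite limits, image factorizations---being expressed by graphical terms and entailments, is carried to the corresponding structure in $\func{\pr'}$. Your repackaging via $\ladj(\rela{F^\sharp})$, using that a po-functor preserves left adjoints, is a slightly tidier way to see that internal functions go to internal functions, but it is the same argument.
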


To prove this, we first need the following lemma.
\begin{lemma} \label{lem.funr_preserves_diagrams}
  $F^\sharp$ preserves semantics of graphical terms. 
  
  More precisely, given any $\pr$-graphical term $(\theta_1,\dots,\theta_k; \omega)$,
  the morphism $(F,F^\sharp)$ induces a $\pr'$-graphical term
  $(F^\sharp\theta_1,\dots,F^\sharp\theta_k; \overline{F}(\omega))$; we call this its
  \define{image} under $F^\sharp$. The image obeys
  \[
  F^\sharp\church{(\theta_1,\dots,\theta_k; \omega)} = 
  \church{(F^\sharp\theta_1,\dots,F^\sharp\theta_k; \overline{F}(\omega))}.
  \]
  Furthermore, given the entailment $(\theta_1,\dots,\theta_k; \omega) \vdash  (\theta_1',\dots,\theta_{k'}'; \omega')$, it follows that
    \[
    (F^\sharp\theta_1,\dots,F^\sharp\theta_k; \overline{F}(\omega)) \vdash
  (F^\sharp\theta_1',\dots,F^\sharp\theta_{k'}'; \overline{F}(\omega')).
\]
\end{lemma}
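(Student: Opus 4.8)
The plan is to unwind the definition of $\church{-}$ and then commute $F^\sharp$ past each of its two ingredients using exactly the structure that a morphism of regular calculi carries: the (strict) naturality of $F^\sharp$ and its (strict) monoidal compatibility with the laxators. First I would dispatch well-definedness. Each component $F^\sharp_{\Gamma_i}\colon\pr(\Gamma_i)\to\pr'(\overline{F}\Gamma_i)$ is a monotone map, so $F^\sharp\theta_i\in\pr'(\overline{F}\Gamma_i)$; and since $\overline{F}=\frb[F]$ is a strong monoidal po-functor it carries $\omega\colon\Gamma_1\oplus\dots\oplus\Gamma_k\tickar\Gamma_\out$ to a morphism $\overline{F}(\omega)\colon\overline{F}\Gamma_1\oplus\dots\oplus\overline{F}\Gamma_k\tickar\overline{F}\Gamma_\out$. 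Hence $(F^\sharp\theta_1,\dots,F^\sharp\theta_k;\overline{F}(\omega))$ genuinely is a $\pr'$-graphical term, so the "image" operation is defined.

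For the central equation I would run a short diagram chase. By definition $\church{(\theta_1,\dots,\theta_k;\omega)}=\pr(\omega)\big(\rho(\theta_1,\dots,\theta_k)\big)$, where $\rho$ is the $k$-ary laxator of $\pr$. Applying $F^\sharp_{\Gamma_\out}$ and using strict naturality of $F^\sharp$ at the morphism $\omega$ rewrites $F^\sharp_{\Gamma_\out}\circ\pr(\omega)$ as $\pr'(\overline{F}\omega)\circ F^\sharp_{\Gamma_1\oplus\dots\oplus\Gamma_k}$; then the monoidality of $F^\sharp$ rewrites $F^\sharp_{\Gamma_1\oplus\dots\oplus\Gamma_k}\circ\rho$ as $\rho'\circ(F^\sharp_{\Gamma_1}\times\dots\times F^\sharp_{\Gamma_k})$, where $\rho'$ is the $k$-ary laxator of $\pr'$ and we have used $\overline{F}(\Gamma_1\oplus\dots\oplus\Gamma_k)=\overline{F}\Gamma_1\oplus\dots\oplus\overline{F}\Gamma_k$. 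Reading off the composite gives $F^\sharp\church{(\theta_1,\dots,\theta_k;\omega)}=\pr'(\overline{F}\omega)\big(\rho'(F^\sharp\theta_1,\dots,F^\sharp\theta_k)\big)=\church{(F^\sharp\theta_1,\dots,F^\sharp\theta_k;\overline{F}(\omega))}$, which is the claimed equality. The entailment statement is then immediate: an entailment $(\theta_1,\dots,\theta_k;\omega)\vdash(\theta_1',\dots,\theta_{k'}';\omega')$ means $\church{(\theta_1,\dots,\theta_k;\omega)}\leq\church{(\theta_1',\dots,\theta_{k'}';\omega')}$ in the single poset $\pr(\Gamma_\out)$ (both terms share the codomain $\Gamma_\out$); applying the monotone map $F^\sharp_{\Gamma_\out}$ preserves this inequality, and rewriting each side by the equation just established yields the desired entailment in $\pr'(\overline{F}\Gamma_\out)$.

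The only real obstacle is the $k$-ary bookkeeping in the monoidality step: the definition of a morphism of regular calculi only asserts compatibility of $F^\sharp$ with the binary (and nullary) laxators, whereas the semantics uses the $k$-ary laxator $\rho$. To bridge this I would either induct on $k$, decomposing $\rho$ into iterated binary laxators and invoking the coherence of lax monoidal functors at each stage, or simply appeal to the hypothesis that $F^\sharp$ is \emph{strict in every respect}, which forces the relevant coherence cells to be identities so that the $k$-ary compatibility square commutes on the nose. With strictness in hand there is no associativity data to track and the chase goes through verbatim; everything else---well-definedness, the naturality rewrite, and the final monotonicity argument---is purely formal.
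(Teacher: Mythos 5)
Your proposal is correct and follows essentially the same route as the paper: unwind $\church{-}$, commute $F^\sharp$ past $\pr(\omega)$ by strict naturality and past the ($k$-ary) laxator by strict monoidality, and deduce the entailment claim from monotonicity of the components of $F^\sharp$. Your extra remarks on well-definedness and on reducing the $k$-ary laxator to binary ones via strictness are fine but not needed beyond what the paper's notation for $\rho$ on longer lists already provides.
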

\begin{proof}
  The naturality and monoidality of $(F,F^\sharp)$ imply:
  \begin{align*}
    F^\sharp\church{(\theta_1,\dots,\theta_k; \omega)}
    &= F^\sharp(\pr(\omega))(\rho(\theta_1,\dots,\theta_k))) \\ 
    &= \pr'(\overline{F}(\omega))(F^\sharp(\rho(\theta_1,\dots,\theta_k))) \\ 
    &= \pr'(\overline{F}(\omega))(\rho(F^\sharp\theta_1,\dots,F^\sharp\theta_k)) \\ 
    &= \church{(F^\sharp\theta_1,\dots,F^\sharp\theta_k; \overline{F}(\omega))}.
  \end{align*}
  The second claim then follows from the monotonicity of components in $F^\sharp$.
\end{proof}

\begin{proof}[Proof of \cref{thm.syn_def}]\label{page.proof_thm.syn_def}
  First we must check that our data type-checks. We have already shown that $\func{\pr}$
  is a regular category, so it remains to show that $\funr{F}$ is a regular
  functor. This is a consequence of \cref{lem.funr_preserves_diagrams}.

  In particular, recall from \cref{def.RT} that morphisms
  in $\func{\pr}$ can be represented by $\pr$-graphical terms obeying certain
  entailments. It was shown in \cref{sec.finlims,sec.images} that composition, identities, finite limits, and regular epis
  can also be described in this way.  \cref{lem.funr_preserves_diagrams} implies
  that given a $\pr$-graphical term, its image under $F^\sharp$ preserves
  entailments and equalities. Thus $\funr{F}$ sends
  internal functions to internal functions of the required domain and codomain,
  preserves composition, identities, finite limits, and regular epis, and hence
  is a regular functor.

  It is then immediate from the definition
  (\cref{eqn.F_objects,eqn.F_morphisms}) that $\syn$ preserves identity
  morphisms and composition, and so $\syn$ is indeed a functor.
\end{proof}

Finally, we note that each regular category is equivalent to the syntactic category of its regular calculus.

\begin{proposition}\label{prop.essential_reflection}
	For any regular category $\cat{R}$, there is a natural equivalence of categories
	\[\epsilon\colon\syn(\prd(\cat{R}))\To{\simeq}\cat{R}.\]
	\end{proposition}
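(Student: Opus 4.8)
The plan is to unfold both functors and reduce the statement to the fundamental lemma (\cref{lemma.fundamental}) together with \cref{prop.rela_rels_rrel}. Write $\pr\coloneqq\sub_{\cat{R}}\Prod$ for the ajax functor of $\prd(\cat{R})$, so that by \cref{def.RT} we have $\syn(\prd(\cat{R}))=\func{\pr}=\ladj(\rela{\pr})$, the category of left adjoints in the po-category of $\pr$-internal relations. On the other hand the fundamental lemma identifies $\cat{R}$ with $\ladj(\rrel{\cat{R}})$ by an identity-on-objects isomorphism. It therefore suffices to produce an equivalence of po-categories
\[
  E\colon\rela{\pr}\To{\simeq}\rrel{\cat{R}},
\]
and then to apply $\ladj(-)$, which sends equivalences of po-categories to equivalences of ordinary categories: a morphism is a left adjoint precisely when it admits a right adjoint satisfying the triangle inequalities, and these inequalities are detected by the local order, so any po-functor that is a local isomorphism of posets preserves and reflects left adjoints; combined with essential surjectivity this makes $\ladj(E)$ an equivalence.

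First I would construct $E$ on objects. An object of $\rela{\pr}$ is a pair $(\Gamma,\varphi)$ with $\Gamma\in\frb[\ob\cat{R}]$ and $\varphi\in\sub_{\cat{R}}\Prod[\Gamma]$, i.e.\ a subobject $\varphi\ss\Prod[\Gamma]$; set $E(\Gamma,\varphi)$ to be the object of $\cat{R}$ representing this subobject, namely the domain of a chosen mono $\varphi\inj\Prod[\Gamma]$. This is essentially surjective: for $r\in\cat{R}$, taking $\Gamma$ to be the context $(1,\{r\},\tau)$ with $\tau(1)=r$ (so that $\Prod[\Gamma]\cong r$ by \cref{eqn.Prod}) and $\varphi=\true$ the top subobject gives $E(\Gamma,\true)\cong r$. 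On hom-posets, \cref{prop.rela_rels_rrel} supplies a natural isomorphism
\[
  \rela{\pr}\big((\Gamma_1,\varphi_1),(\Gamma_2,\varphi_2)\big)\cong\rrel{\cat{R}}\big(E(\Gamma_1,\varphi_1),E(\Gamma_2,\varphi_2)\big),
\]
so $E$ is a local isomorphism and in particular fully faithful.

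The main obstacle is to check that these hom-poset isomorphisms are compatible with composition, so that $E$ is genuinely a po-functor and not merely a family of local poset equivalences. This amounts to matching the composition rule of \cref{thm.internal_relations}---the multiplication wiring diagram $\comp$ followed by the action $\pr(\comp)$---with composition of relations in $\rrel{\cat{R}}$, which is pullback over the shared factor followed by image factorization, as in \cref{eqn.rel_composition}. Concretely, I would trace an internal relation $\theta\in\pr(\Gamma_1\oplus\Gamma_2)$ through \cref{prop.rela_rels_rrel}, which realizes it as a subobject of $r_1\times r_2$; the shared-factor pairing built into $\comp$ and the existential quantification encoded by $\pr(\comp)$ then correspond exactly to the pullback and image-factorization steps, using that the left and right adjoints of $\sub_{\cat{R}}\Prod$ are computed by pushforward and pullback of subobjects. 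Preservation of identities is the statement that the identity internal relation corresponds to the diagonal, which is again \cref{prop.rela_rels_rrel} applied to an identity wiring diagram.

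Finally, applying $\ladj(-)$ to $E$ and composing with the fundamental-lemma isomorphism $\ladj(\rrel{\cat{R}})\cong\cat{R}$ yields the desired
\[
  \epsilon\colon\syn(\prd(\cat{R}))=\ladj(\rela{\pr})\To{\simeq}\ladj(\rrel{\cat{R}})\cong\cat{R}.
\]
Naturality in $\cat{R}$ follows because every ingredient---the counit $\Prod$, the representing isomorphism of \cref{prop.rela_rels_rrel}, and the fundamental lemma---is natural with respect to regular functors, so the relevant squares commute and $\epsilon$ assembles into a natural transformation. I expect the composition-compatibility verification of the third paragraph to be the only step requiring genuine work; everything else is formal manipulation of equivalences and the $\ladj$ construction.
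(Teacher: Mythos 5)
Your proposal is correct and follows essentially the same route as the paper: both identify the hom-posets via \cref{prop.rela_rels_rrel}, pass to left adjoints via \cref{lemma.fundamental}, and establish essential surjectivity using the unary context with its top element. The only difference is organizational---you build the equivalence at the level of po-categories $\rela{\pr}\simeq\rrel{\cat{R}}$ and then apply $\ladj(-)$, and you explicitly flag the composition-compatibility check that the paper's proof leaves implicit---but the ingredients and the argument are the same.
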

	\begin{proof}
	We will define functors $\epsilon\colon\func{\prd(\cat{R})}\tofrom\cat{R}\cocolon\epsilon'$ and show that they constitute an equivalence. We have $\ob(\func{\prd(\cat{R})})=\{(\Gamma,r)\mid \Gamma\in\frc[\ob\cat{R}], \;r\in \sub_{\cat{R}}\Prod[\Gamma]\}$, so  put
	\[
	  \epsilon(\Gamma,r)\coloneqq r,
	  \qqand  \epsilon'(r)\coloneqq (\unary{r},r),
	\]
	where $\unary{r}$ is the unary context on $r$ and $r\ss r=\Prod[\unary{r}]$ is the top element. Given also $(\Gamma',r')$, we have an isomorphism of hom-sets
	\[
	\func{\prd(\cat{R})}\left((\Gamma,r),(\Gamma',r')\right) \cong \ladj(\rrel{\cat{R}})(r,r') \cong \cat{R}(r,r'),
	\]
	by \cref{def.RT,prop.rela_rels_rrel,lemma.fundamental}. Hence, we define $\epsilon$ and $\epsilon'$ on morphisms to be the corresponding mutually-inverse maps. Obviously, $\epsilon$ and $\epsilon'$ are fully faithful functors, and $\epsilon'\cp \epsilon=\id_{\cat{R}}$, so $\epsilon$ is essentially surjective.
	\end{proof}

\section{Future Work}

Having constructed the functors $\prd$ and $\syn$, and a potential counit map $\syn(\prd(\cat{R}))\to\cat{R}$, one might hope we have an adjunction
\[
	\adj{\rgcalc}{\syn}{\prd}{\rgcat.}
\]
This would allow us to understand $\rgcat$ as essentially a reflective subcategory of $\rgcalc$, in the sense that for any regular category $\cat{R}$, the counit map $\syn(\prd(\cat{R}))\to\cat{R}$ is an equivalence of categories. 

Unfortunately, this is not true. It is, however, very nearly so. A candidate unit map exists, and indeed one triangle axiom is satisfied, but the other only holds up equivalence. To state the structure precisely, we must instead move one dimension higher, examining a 2-adjunction between the 2-category of regular categories and the 2-category of regular calculi. We shall leave this to a future, expanded version of this paper.

\subsection*{Acknowledgments}

The authors thank Christina Vasilakopoulou and Paolo Perrone for comments that have improved this article.

\bibliographystyle{eptcs}
\bibliography{reglogACT2019}

\begin{thebibliography}{10}
\providecommand{\bibitemdeclare}[2]{}
\providecommand{\surnamestart}{}
\providecommand{\surnameend}{}
\providecommand{\urlprefix}{Available at }
\providecommand{\url}[1]{\texttt{#1}}
\providecommand{\href}[2]{\texttt{#2}}
\providecommand{\urlalt}[2]{\href{#1}{#2}}
\providecommand{\doi}[1]{doi:\urlalt{http://dx.doi.org/#1}{#1}}
\providecommand{\bibinfo}[2]{#2}

\bibitemdeclare{article}{baez2015categories}
\bibitem{baez2015categories}
\bibinfo{author}{John~C \surnamestart Baez\surnameend} \&
  \bibinfo{author}{Jason \surnamestart Erbele\surnameend}
  (\bibinfo{year}{2015}): \emph{\bibinfo{title}{Categories in control}}.
\newblock {\sl \bibinfo{journal}{Theory and Applications of Categories}}
  \bibinfo{volume}{30}(\bibinfo{number}{24}), pp. \bibinfo{pages}{836--881}.

\bibitemdeclare{article}{baez2018compositional}
\bibitem{baez2018compositional}
\bibinfo{author}{John~C. \surnamestart Baez\surnameend} \&
  \bibinfo{author}{Brendan \surnamestart Fong\surnameend}
  (\bibinfo{year}{2018}): \emph{\bibinfo{title}{A compositional framework for
  passive linear networks}}.
\newblock {\sl \bibinfo{journal}{Theory and Applications of Categories}}
  \bibinfo{volume}{33}(\bibinfo{number}{38}), pp. \bibinfo{pages}{1158--1222}.

\bibitemdeclare{article}{baez2017compositional}
\bibitem{baez2017compositional}
\bibinfo{author}{John~C \surnamestart Baez\surnameend} \&
  \bibinfo{author}{Blake~S \surnamestart Pollard\surnameend}
  (\bibinfo{year}{2017}): \emph{\bibinfo{title}{A compositional framework for
  reaction networks}}.
\newblock {\sl \bibinfo{journal}{Reviews in Mathematical Physics}}
  \bibinfo{volume}{29}(\bibinfo{number}{09}), p. \bibinfo{pages}{1750028},
  \doi{10.1142/S0129055X17500283}.

\bibitemdeclare{incollection}{barr:1971a}
\bibitem{barr:1971a}
\bibinfo{author}{Michael \surnamestart Barr\surnameend} (\bibinfo{year}{1971}):
  \emph{\bibinfo{title}{Exact categories}}.
\newblock In: {\sl \bibinfo{booktitle}{Exact categories and categories of
  sheaves}}, \bibinfo{publisher}{Springer}, pp. \bibinfo{pages}{1--120},
  \doi{10.1007/BFb0058579}.

\bibitemdeclare{article}{bonchi2018graphical}
\bibitem{bonchi2018graphical}
\bibinfo{author}{Filippo \surnamestart Bonchi\surnameend},
  \bibinfo{author}{Jens \surnamestart Seeber\surnameend} \&
  \bibinfo{author}{Pawel \surnamestart Sobocinski\surnameend}
  (\bibinfo{year}{2018}): \emph{\bibinfo{title}{Graphical conjunctive
  queries}}.
\newblock {\sl \bibinfo{journal}{arXiv preprint
  arXiv:\href{https://arxiv.org/abs/1804.07626}{1804.07626}}}.

\bibitemdeclare{inproceedings}{bonchi2014categorical}
\bibitem{bonchi2014categorical}
\bibinfo{author}{Filippo \surnamestart Bonchi\surnameend},
  \bibinfo{author}{Pawe{\l} \surnamestart Soboci{\'n}ski\surnameend} \&
  \bibinfo{author}{Fabio \surnamestart Zanasi\surnameend}
  (\bibinfo{year}{2014}): \emph{\bibinfo{title}{A categorical semantics of
  signal flow graphs}}.
\newblock In: {\sl \bibinfo{booktitle}{International Conference on Concurrency
  Theory}}, \bibinfo{organization}{Springer}, pp. \bibinfo{pages}{435--450},
  \doi{10.1007/978-3-662-44584-6\_30}.

\bibitemdeclare{article}{butz1998regular}
\bibitem{butz1998regular}
\bibinfo{author}{Carsten \surnamestart Butz\surnameend} (\bibinfo{year}{1998}):
  \emph{\bibinfo{title}{Regular categories and regular logic}}.
\newblock {\sl \bibinfo{journal}{BRICS Lecture Series LS-98-2}}.

\bibitemdeclare{article}{carboni1987cartesian}
\bibitem{carboni1987cartesian}
\bibinfo{author}{A.~\surnamestart Carboni\surnameend} \&
  \bibinfo{author}{R.F.C. \surnamestart Walters\surnameend}
  (\bibinfo{year}{1987}): \emph{\bibinfo{title}{Cartesian bicategories I}}.
\newblock {\sl \bibinfo{journal}{Journal of Pure and Applied Algebra}}
  \bibinfo{volume}{49}(\bibinfo{number}{1}), pp. \bibinfo{pages}{11 -- 32},
  \doi{10.1016/0022-4049(87)90121-6}.

\bibitemdeclare{inproceedings}{chandra1977optimal}
\bibitem{chandra1977optimal}
\bibinfo{author}{Ashok~K \surnamestart Chandra\surnameend} \&
  \bibinfo{author}{Philip~M \surnamestart Merlin\surnameend}
  (\bibinfo{year}{1977}): \emph{\bibinfo{title}{Optimal implementation of
  conjunctive queries in relational data bases}}.
\newblock In: {\sl \bibinfo{booktitle}{Proceedings of the ninth annual ACM
  symposium on Theory of computing}}, \bibinfo{organization}{ACM}, pp.
  \bibinfo{pages}{77--90}, \doi{10.1145/800105.803397}.

\bibitemdeclare{article}{fong2017decorated}
\bibitem{fong2017decorated}
\bibinfo{author}{Brendan \surnamestart Fong\surnameend} (\bibinfo{year}{2018}):
  \emph{\bibinfo{title}{Decorated corelations}}.
\newblock {\sl \bibinfo{journal}{Theory and Applications of Categories}}
  \bibinfo{volume}{33}(\bibinfo{number}{22}), pp. \bibinfo{pages}{608--643}.

\bibitemdeclare{inproceedings}{fong2016categorical}
\bibitem{fong2016categorical}
\bibinfo{author}{Brendan \surnamestart Fong\surnameend},
  \bibinfo{author}{Pawe{\l} \surnamestart Soboci{\'n}ski\surnameend} \&
  \bibinfo{author}{Paolo \surnamestart Rapisarda\surnameend}
  (\bibinfo{year}{2016}): \emph{\bibinfo{title}{A categorical approach to open
  and interconnected dynamical systems}}.
\newblock In: {\sl \bibinfo{booktitle}{Proceedings of the 31st Annual ACM/IEEE
  Symposium on Logic in Computer Science}}, \bibinfo{organization}{ACM}, pp.
  \bibinfo{pages}{495--504}, \doi{10.1145/2933575.2934556}.

\bibitemdeclare{article}{fong2019hypergraph}
\bibitem{fong2019hypergraph}
\bibinfo{author}{Brendan \surnamestart Fong\surnameend} \&
  \bibinfo{author}{David~I \surnamestart Spivak\surnameend}
  (\bibinfo{year}{2019}): \emph{\bibinfo{title}{Hypergraph categories}}.
\newblock {\sl \bibinfo{journal}{Journal of Pure and Applied Algebra}}
  \bibinfo{volume}{223}(\bibinfo{number}{11}), pp. \bibinfo{pages}{4746--4777},
  \doi{10.1016/j.jpaa.2019.02.014}.

\bibitemdeclare{article}{fong2019regular}
\bibitem{fong2019regular}
\bibinfo{author}{Brendan \surnamestart Fong\surnameend} \&
  \bibinfo{author}{David~I \surnamestart Spivak\surnameend}
  (\bibinfo{year}{2019}): \emph{\bibinfo{title}{Regular and relational
  categories: Revisiting 'Cartesian bicategories I'}}.
\newblock {\sl \bibinfo{journal}{arXiv preprint
  arXiv:\href{https://arxiv.org/abs/1909.00069}{1909.00069}}}.

\bibitemdeclare{book}{freyd1990categories}
\bibitem{freyd1990categories}
\bibinfo{author}{Peter~J \surnamestart Freyd\surnameend} \&
  \bibinfo{author}{Andre \surnamestart Scedrov\surnameend}
  (\bibinfo{year}{1990}): \emph{\bibinfo{title}{Categories, allegories}}.
\newblock \bibinfo{publisher}{Elsevier}.

\bibitemdeclare{article}{Jayewardene2000}
\bibitem{Jayewardene2000}
\bibinfo{author}{Romaine \surnamestart Jayewardene\surnameend} \&
  \bibinfo{author}{Oswald \surnamestart Wyler\surnameend}
  (\bibinfo{year}{2000}): \emph{\bibinfo{title}{Categories of Relations and
  Functional Relations}}.
\newblock {\sl \bibinfo{journal}{Applied Categorical Structures}}
  \bibinfo{volume}{8}(\bibinfo{number}{1}), pp. \bibinfo{pages}{279--305},
  \doi{10.1023/A:1008651524610}.

\bibitemdeclare{article}{joyal1991geometry}
\bibitem{joyal1991geometry}
\bibinfo{author}{Andr\'{e} \surnamestart Joyal\surnameend} \&
  \bibinfo{author}{Ross \surnamestart Street\surnameend}
  (\bibinfo{year}{1991}): \emph{\bibinfo{title}{The geometry of tensor
  calculus. {I}}}.
\newblock {\sl \bibinfo{journal}{Adv. Math.}}
  \bibinfo{volume}{88}(\bibinfo{number}{1}), pp. \bibinfo{pages}{55--112},
  \doi{10.1016/0001-8708(91)90003-P}.

\bibitemdeclare{book}{MacLane.Moerdijk:1992a}
\bibitem{MacLane.Moerdijk:1992a}
\bibinfo{author}{Saunders \surnamestart MacLane\surnameend} \&
  \bibinfo{author}{Ieke \surnamestart Moerdijk\surnameend}
  (\bibinfo{year}{2012}): \emph{\bibinfo{title}{Sheaves in geometry and logic:
  A first introduction to topos theory}}.
\newblock \bibinfo{publisher}{Springer Science \& Business Media}.

\bibitemdeclare{article}{moeller2018monoidal}
\bibitem{moeller2018monoidal}
\bibinfo{author}{Joe \surnamestart Moeller\surnameend} \&
  \bibinfo{author}{Christina \surnamestart Vasilakopoulou\surnameend}
  (\bibinfo{year}{2018}): \emph{\bibinfo{title}{Monoidal grothendieck
  construction}}.
\newblock {\sl \bibinfo{journal}{arXiv preprint
  arXiv:\href{https://arxiv.org/abs/1809.00727}{1809.00727}}}.

\bibitemdeclare{article}{patterson2017knowledge}
\bibitem{patterson2017knowledge}
\bibinfo{author}{Evan \surnamestart Patterson\surnameend}
  (\bibinfo{year}{2017}): \emph{\bibinfo{title}{Knowledge representation in
  bicategories of relations}}.
\newblock {\sl \bibinfo{journal}{arXiv preprint
  arXiv:\href{https://arxiv.org/abs/1706.00526}{1706.00526}}}.

\bibitemdeclare{article}{schalk1994algebras}
\bibitem{schalk1994algebras}
\bibinfo{author}{Andrea \surnamestart Schalk\surnameend}
  (\bibinfo{year}{1994}): \emph{\bibinfo{title}{Algebras for generalized power
  constructions}}.
\newblock {\sl \bibinfo{journal}{Bulletin of the European Association for
  Theoretical Computer Science}} \bibinfo{volume}{53}, pp.
  \bibinfo{pages}{491--491}, \doi{10.1016/0167-6423(91)90037-X}.

\bibitemdeclare{article}{Spivak.Schultz.Rupel:2016a}
\bibitem{Spivak.Schultz.Rupel:2016a}
\bibinfo{author}{David~I. \surnamestart Spivak\surnameend},
  \bibinfo{author}{Patrick \surnamestart Schultz\surnameend} \&
  \bibinfo{author}{Dylan \surnamestart Rupel\surnameend}
  (\bibinfo{year}{2016}): \emph{\bibinfo{title}{String diagrams for traced and
  compact categories are oriented 1-cobordisms}}.
\newblock {\sl \bibinfo{journal}{Journal of Pure and Applied Algebra}},
  \doi{10.1016/j.jpaa.2016.10.009}.

\end{thebibliography}

\newpage

\appendix

\section{Proof of the fundamental lemma of regular categories \label{app.fundamental_lemma}}

\begin{proof}[Proof of \cref{lemma.fundamental}]
This fact is well known, but since it is crucial to what follows, we provide a proof here. We shall show that there is an
identity-on-objects, full, and faithful functor from $\cat{R}$ to its relations
po-category $\rrel{\cat{R}}$, which maps a morphism $f\colon r \to s$ to its
graph $\pair{\id_r,f} \ss r \times s$. Indeed, it is straightforward to check
that any pair of the form $\pair{\id_r,f} \dashv \pair{f,\id_s}$ is an adjunction,
and subsequently that the proposed map is functorial.

To show that it is full and faithful, we characterize the adjunctions $x \dashv
x'$ in $\rrel{\cat{R}}$. Suppose we have $x \stackrel{\pair{g,f}}\inj r \times
s$ and $x'\stackrel{\pair{f',g'}}\inj s\times r$ with unit $i\colon r\inj (x\cp x')$
and counit $j\colon (x'\cp x)\to s$. This gives rise to the following diagram
(equations shown right):
\[
\begin{tikzcd}[row sep=small, column sep=large]
	x\times_s x'\ar[rr, "\pi_s'"]\ar[ddd, "\pi_s"']&&
	x'\ar[dl, "g'"']\ar[ddl, "f'"]\\&
	r\ar[ul, "i"]\\&
	s\\
	x\ar[uur, "g"]\ar[ur, "f"']&&
	x\times_r x'\ar[uuu, "\pi_r'"']\ar[ll, "\pi_r"]\ar[ul, "j"]
\end{tikzcd}
\hspace{.5in}
\parbox{1.5in}{$
i\cp\pi_s\cp g=\id_r=i\cp\pi_s'\cp g'\\
\pi_r\cp f=j=\pi_r'\cp f'
$}
\]
We shall show that $g$ and $g'$ are isomorphisms, and that $f'=g' \cp g\inv \cp
f$. 

We first show that $i\cp \pi_s$ is inverse to $g$. Since the unit already gives
that $i \cp \pi_s \cp g = \id_r$, it suffices to show that $g\cp i\cp\pi_s=\id_x$.
Moreover, since $\pair{g,f}\colon x\to r\times s$ is monic and $g=(g\cp
i\cp\pi_s)\cp g$, it suffices to show that $f=(g\cp i\cp\pi_s)\cp f$. This is a diagram chase: since $g=g\cp i\cp\pi_s'\cp g'$, we can define a morphism $q\coloneqq\pair{\id_x,g\cp i\cp\pi_s'}\colon x\to x\times_r x'$, and we conclude
\[
	f=q\cp \pi_r\cp f
	=q\cp \pi_r'\cp f'
	=g\cp i\cp\pi_s'\cp f'
	=g\cp i\cp\pi_s\cp f.
\]
Similarly, we see that $i\cp \pi_s'$ is inverse to $g'$, and hence obtain $f'=g' \cp g\inv \cp f$.

Note that this implies the adjunction $x \dashv x'$ is isomorphic to the
adjunction $\pair{1_r,(g\inv \cp f)} \dashv \pair{(g \inv \cp f),\id_s}$. Thus the
proposed functor is full. Faithfulness amounts to the fact that the existence of
a morphism $\pair{1_r,f} \to \pair{1_r,f'}$ implies $f =f'$. This proves the
lemma.
\end{proof}

\section{Proof that the subobjects functor is ajax} \label{app.sub_is_ajax}

\begin{proof}[Proof of \cref{thm.sub_is_ajax}]
The functor $\sub_{\cat{R}}(-)=\ccat{R}(I,-)$ has a canonical lax monoidal
structure since $I\otimes I\cong I$.  We need to show the laxators $\otimes$ and
$\id_I$ have left adjoints in $\pposet$. The first is easy: $\id_I$ is the top element in
$\ccat{R}(I,I)$ and thus a right adjoint since there is a unique map
$\ccat{R}(I,I)\to 1$.

Now suppose given $r_1,r_2\in\ccat{R}$, and consider the morphisms $\pi_i\colon r_1\otimes r_2\to r_i$ and $\delta\colon r\to r\otimes r$ corresponding to the $i$th projection and the diagonal in $\cat{R}$. Composition with the $\pi_i$ induces a monotone map $\lambda_{r_1,r_2}\colon\ccat{R}(I, r_1\otimes r_2)\to\ccat{R}(I, r_1)\times\ccat{R}(I, r_2)$, natural in $r_1,r_2$. It remains to show that each $\lambda_{r_1,r_2}$ is indeed a left adjoint,
\[
	\adj{\ccat{R}(I, r_1\otimes r_2)}{\lambda_{r_1, r_2}}{\otimes}{\ccat{R}(I, r_1)\times\ccat{R}(I, r_2)}.
\]
For the unit, given $g\colon I\to r_1\otimes r_2$, we have
\begin{align*}
  g
  &=g\cp\delta_{r_1\otimes r_2}\cp((r_1\otimes r_2)\otimes\epsilon_{r_1\otimes r_2})\\
  &=g\cp\delta_{r_1\otimes r_2}\cp((r_1\otimes\epsilon_{r_2})\otimes(\epsilon_{r_1}\otimes r_2))\\
  &\leq \delta_I\cp (g\otimes g)\cp ((r_1\otimes\epsilon_{r_2}))\otimes(\epsilon_{r_1}\otimes r_2)\\
  &=(g\cp(r_1\otimes\epsilon_{r_2}))\otimes(g\cp(\epsilon_{r_1}\otimes r_2)).
\end{align*}
For the counit, given $f_1\colon I\to r_1$ and $f_2\colon I\to r_2$, it suffices to show that $(f_1\otimes f_2)\cp(r_1\otimes \epsilon_{r_2})\leq f_1$, since the other projection is similar. And this holds because
\[
	(f_1\otimes f_2)\cp(r_1\otimes \epsilon_{r_2})
	=f_1\otimes (f_2\cp\epsilon_{r_2})
	\leq f_1\otimes \epsilon_I=f_1.
\qedhere
\]
\end{proof}

\section{Proof that $\frc$ is the free regular category on $\typeset$} \label{app.fr_is_free}

We shall prove the theorem on the following page. To set up the proof, we first develop the notion of a unary support context.

\begin{lemma}\label{lemma.comma_limits}
Suppose $\cat{C}$, $\cat{D}$, and $\cat{E}$ have $I$-shaped limits, for some small category $I$, and suppose that $f\colon\cat{C}\to\cat{E}$ and $g\colon\cat{D}\to\cat{E}$ preserve $I$-shaped limits. Then the comma category $\cat{B}\coloneqq(\cat{C}\downarrow\cat{D})$ has $I$-shaped limits, and they are preserved and reflected by the projection $(\pi_1,\pi_2)\colon\cat{B}\to\cat{C}\times\cat{D}$.
\end{lemma}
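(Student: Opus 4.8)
The plan is to construct $I$-shaped limits in $\cat{B}$ by lifting the componentwise limits along $(\pi_1,\pi_2)$, using the limit-preservation of $g$ to supply the comma-structure arrow. First I would fix a diagram $J\colon I\to\cat{B}$ and write its value at $i$ as a triple $(c_i, d_i, \alpha_i)$ with $\alpha_i\colon f(c_i)\to g(d_i)$, and its value at $\phi\colon i\to i'$ as a pair $(u_\phi, v_\phi)$ satisfying the comma square $f(u_\phi)\cp\alpha_{i'} = \alpha_i\cp g(v_\phi)$. Since $\cat{C}$ and $\cat{D}$ have $I$-shaped limits, I form $c\coloneqq\lim \pi_1 J$ with cone $(p_i)$ and $d\coloneqq\lim\pi_2 J$ with cone $(q_i)$; these are the candidate images under $\pi_1$ and $\pi_2$.

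The key step is to equip $(c,d)$ with a comma structure, i.e.\ to produce $\alpha\colon f(c)\to g(d)$, and this is where preservation by $g$ does the real work. Since $g$ preserves $I$-shaped limits, $g(d)$ is the limit of $g\pi_2 J$ with cone $(g(q_i))$. I would then check that the family $f(p_i)\cp\alpha_i\colon f(c)\to g(d_i)$ is a cone over $g\pi_2 J$: for $\phi\colon i\to i'$ this follows from the comma square for $J(\phi)$, the cone equation $p_i\cp u_\phi = p_{i'}$ in $\cat{C}$, and functoriality of $f$. By the universal property of $g(d)$ there is a unique $\alpha\colon f(c)\to g(d)$ with $\alpha\cp g(q_i) = f(p_i)\cp\alpha_i$ for all $i$. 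This last equation is exactly the assertion that each pair $(p_i,q_i)$ is a morphism $(c,d,\alpha)\to(c_i,d_i,\alpha_i)$ in $\cat{B}$, so $(c,d,\alpha)$ together with the legs $(p_i,q_i)$ is a cone over $J$.

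To see that this cone is limiting, I would take an arbitrary cone with apex $(c',d',\alpha')$ and legs $(p_i',q_i')$; the limits in $\cat{C}$ and $\cat{D}$ yield unique $u\colon c'\to c$ and $v\colon d'\to d$ with $u\cp p_i = p_i'$ and $v\cp q_i = q_i'$. That $(u,v)$ is a morphism in $\cat{B}$---the equation $f(u)\cp\alpha = \alpha'\cp g(v)$---is verified by post-composing both sides with the jointly monic family $(g(q_i))$ and invoking the defining equations for $\alpha$ together with the comma square for $(p_i',q_i')$; uniqueness of $(u,v)$ is inherited from uniqueness of $u$ and $v$. Preservation by $(\pi_1,\pi_2)$ is then immediate, since limits in $\cat{C}\times\cat{D}$ are computed componentwise and we built $(c,d,\alpha)$ on top of $c=\lim\pi_1 J$ and $d=\lim\pi_2 J$. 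For reflection, I would note that any cone in $\cat{B}$ whose $(\pi_1,\pi_2)$-image is limiting has apex $(c^*,d^*,\alpha^*)$ with $c^*,d^*$ limits of $\pi_1 J,\pi_2 J$, and its legs force $\alpha^*\cp g(q_i^*) = f(p_i^*)\cp\alpha_i$; the uniqueness clause of the universal property of $g(d^*)$ identifies $\alpha^*$ with the canonically constructed arrow, so the cone is limiting.

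The main---and essentially the only---nontrivial point is the construction and characterization of the comma arrow $\alpha$, which rests on $g$ preserving the limit so that $g(d)$ carries a usable universal property; the hypothesis on $f$ enters only through its functoriality. Everything else is routine bookkeeping with the universal properties of the componentwise limits.
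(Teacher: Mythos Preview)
Your proof is correct and follows essentially the same route as the paper's: form the componentwise limits $c=\lim\pi_1 J$ and $d=\lim\pi_2 J$, use limit-preservation to equip $(c,d)$ with a comma arrow $\alpha\colon f(c)\to g(d)$ induced by the natural family $(\alpha_i)$, and then verify the universal property and the preservation/reflection claims from the componentwise universal properties. The one noteworthy difference is that the paper invokes preservation by both $f$ and $g$ symmetrically (identifying $f(c)\cong\lim f\pi_1 J$ and $g(d)\cong\lim g\pi_2 J$, then taking the induced map between limits in $\cat{E}$), whereas you observe that only $g$'s preservation is genuinely used---$f$ enters solely through functoriality when checking that $(f(p_i)\cp\alpha_i)$ is a cone. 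This is a mild but correct sharpening; the hypothesis on $f$ is not needed for the conclusion as stated.
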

\exclude{\begin{proof}
Let $X\colon I\to\cat{B}$ be a diagram, and let $!_I\colon I\to\terminal$ denote the unique functor. Suppose we have limits, i.e.\ natural transformations $!_I\cp\lim(X\cp\pi_1)\to(X\cp\pi_1)$ and $!_I\cp\lim(X\cp\pi_2)\to(X\cp\pi_2)$ in $\cat{C}$ and $\cat{D}$ satisfying the correct universal property. Since $f$ and $g$ preserve $I$-shaped limits, we have $\lim(X\cp\pi_1\cp f)=\lim(X\cp\pi_1)\cp f$ and similarly for $\pi_2$ and $g$. The comma natural transformation $\pi_1\cp f\to \pi_2\cp g$ induces a map $\lim(X\cp\pi_1\cp f)\to\lim(X\cp\pi_2\cp g)$, which we regard as an object of $\cat{B}$, which we call $\lim X$. The commutative square
\[
\begin{tikzcd}
	!_I\cp\lim(X\cp\pi_1\cp f)\ar[r]\ar[d]&
	X\cp\pi_1\cp f\ar[d]\\
	!_I\cp\lim(X\cp\pi_2\cp g)\ar[r]&
	X\cp\pi_2\cp g
\end{tikzcd}
\]
provides a structure map $!_I\cp\lim X\to X$, and it is easy to check it has the correct universal property. Moreover, the maps $\pi_1$ and $\pi_2$ certainly preserve limits, and if $!_I\cp L\to X$ is a cone in $\cat{B}$ that projects to a limit in both $\cat{C}$ and $\cat{D}$ then we have just seen that it is a limit in $\cat{B}$.
\end{proof}}

\begin{proposition}\label{prop.comma_regular}
Let $\cat{R}\to\cat{T}\from\cat{S}$ be regular functors. Then the comma category $\cat{B}\coloneqq(\cat{R}\downarrow\cat{S})$ is regular, and the projection $\cat{B}\to\cat{R}\times\cat{S}$ preserves and reflects finite limits and regular epimorphisms. In particular, $\frc$ is regular for any $\typeset$.
\end{proposition}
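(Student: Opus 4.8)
The plan is to deduce regularity of $\cat{B}=(\cat{R}\downarrow\cat{S})$ by combining \cref{lemma.comma_limits} for the limit structure with a componentwise construction of image factorizations transported along the projection $P\coloneqq(\pi_1,\pi_2)\colon\cat{B}\to\cat{R}\times\cat{S}$. Since regular functors preserve all finite limits, the hypotheses of \cref{lemma.comma_limits} hold for every finite shape, so $\cat{B}$ has finite limits and $P$ preserves and reflects them; in particular $P$ preserves and reflects monomorphisms and kernel pairs, all of which are computed componentwise. It then remains to verify the image axiom (kernel pairs admit coequalizers) and pullback-stability of regular epis, and the core of the work is to show that a morphism of $\cat{B}$ is a regular epi exactly when both of its components are.

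First I would construct the factorization. Given $(u,v)\colon(r,s,\alpha)\to(r',s',\alpha')$, factor $u=(r\surj\im u\inj r')$ with regular-epi part $e_r$ and mono part $m_r$ in $\cat{R}$, and likewise $v=e_s\cp m_s$ in $\cat{S}$. Because $F$ and $G$ are regular they preserve these factorizations, so $F(e_r),G(e_s)$ are regular (hence strong) epis and $F(m_r),G(m_s)$ are monos. The morphism square $F(u)\cp\alpha'=\alpha\cp G(v)$ then presents a commuting square with left leg $F(e_r)$ and right leg $G(m_s)$, so by the diagonal fill-in of the orthogonal factorization system on $\cat{T}$ (\cref{lemma.OFS}) there is a unique $\beta\colon F(\im u)\to G(\im v)$ with $F(e_r)\cp\beta=\alpha\cp G(e_s)$ and $\beta\cp G(m_s)=F(m_r)\cp\alpha'$. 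This makes $(\im u,\im v,\beta)$ an object of $\cat{B}$ and exhibits $(u,v)=(e_r,e_s)\cp(m_r,m_s)$, where $(m_r,m_s)$ is a componentwise, hence genuine, mono (as $P$ reflects monos).

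The crux is that $(e_r,e_s)$ is a regular epi in $\cat{B}$; here I cannot appeal to $P$, since colimits are \emph{not} computed componentwise. Instead I would show directly that $(e_r,e_s)$ is the coequalizer of its kernel pair. That kernel pair is a finite limit, hence computed componentwise, and it agrees with the kernel pair of $(u,v)$ because $m_r,m_s$ are monic; so this single step also supplies the coequalizer demanded by the image axiom for the arbitrary morphism $(u,v)$. Given any $(p,q)\colon(r,s,\alpha)\to(t_1,t_2,\gamma)$ coequalizing this kernel pair, the components $p,q$ coequalize the kernel pairs of $e_r,e_s$ and so factor uniquely as $p=e_r\cp\bar p$, $q=e_s\cp\bar q$ in $\cat{R},\cat{S}$ (each $e_r,e_s$ being the coequalizer of its own kernel pair). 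The only point to check is that $(\bar p,\bar q)$ respects the comma data, i.e.\ $F(\bar p)\cp\gamma=\beta\cp G(\bar q)$; precomposing with $F(e_r)$ and using $F(e_r)\cp\beta=\alpha\cp G(e_s)$ reduces this to $F(p)\cp\gamma=\alpha\cp G(q)$, which is precisely the statement that $(p,q)$ is a morphism of $\cat{B}$. Since $F$ preserves regular epis, $F(e_r)$ is an epimorphism and may be cancelled, giving the required identity. This use of epi-ness of $F(e_r)$ is exactly where the regularity of $F$ (and $G$) is indispensable. Conversely, if $(u,v)$ is itself a regular (hence strong) epi, then its factorization through the mono $(m_r,m_s)$ forces that mono to split and hence to be an isomorphism, so $u,v$ are regular epis; thus $P$ preserves and reflects regular epis.

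With this characterization in hand, pullback-stability is immediate: pullbacks in $\cat{B}$ are componentwise by \cref{lemma.comma_limits} and regular epis are componentwise by the previous paragraph, so stability in $\cat{B}$ reduces to stability in $\cat{R}$ and $\cat{S}$. All three axioms then hold and $P$ has the asserted preservation/reflection properties. The statement about $\frc$ is the instance $\cat{R}=\ppf(\typeset)\op$, $\cat{S}=\cat{T}=\finset\op$ applied to the comma presentation $\frc=(\ppf(\typeset)\op\downarrow\finset\op)$, recovering the regular structure already recorded in \cref{cor.descriptions}. I expect the componentwise regular-epi characterization of the third paragraph to be the main obstacle, precisely because coequalizers are not transported by $P$ and the argument must instead route through the epimorphism $F(e_r)$ supplied by regularity of the two legs.
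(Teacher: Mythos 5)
Your proof is correct, and it is more explicit than the paper's, which disposes of the whole proposition in three sentences: it notes that $(\cat{R}\downarrow\cat{S})\op$ is again a comma category (of the opposite functors) and applies \cref{lemma.comma_limits} in both directions to conclude that finite limits \emph{and} coequalizers of kernel pairs exist and are computed componentwise, whence pullback-stability of regular epis is inherited componentwise as well. Your route replaces that duality trick with a direct construction: you build the image factorization by hand using the diagonal fill-in of the orthogonal factorization system (\cref{lemma.OFS}) to produce the structure map $\beta$ on the image object, and then verify the coequalizer universal property of $(e_r,e_s)$ by cancelling the epimorphism $F(e_r)$. What your version buys is a precise statement the paper leaves implicit but actually needs for pullback-stability, namely that a morphism of $\cat{B}$ is a regular epi iff both components are; what the paper's version buys is brevity, at the cost of applying a lemma about all $I$-shaped limits to the not-quite-a-shape ``coequalizers of kernel pairs.'' Two small remarks: your aside that colimits are \emph{not} computed componentwise in a comma category is overly pessimistic --- they are whenever the first leg preserves them, which is exactly the fact the paper exploits by dualizing \cref{lemma.comma_limits}, and your own argument is in effect a proof of that fact for coequalizers of kernel pairs; and for the final claim about $\frc$ you should record (as the paper does) that $\ppf(\typeset)\op$ and $\finset\op$ are regular and that the legs $\inc$ and $\id$ are regular functors, so that the hypotheses of the proposition are actually met before invoking \cref{cor.descriptions}.
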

\begin{proof}
It is well-known that $\finset\op$ is regular, and the finite powerset
$\ppf(\typeset)$ is regular because it has finite meets and, because it is a
poset, regular epis are equalities. Hence the second statement follows from the
first. Since the opposite of a comma category is the comma category of the
opposites of its defining data, \cref{lemma.comma_limits} shows that $\cat{B}$
has finite limits and coequalizers of kernel pairs, and that regular epis are
stable under pullback.
\end{proof}

\begin{remark}
As mentioned in \cref{cor.descriptions}, we denote the product of $\Gamma_1$ and $\Gamma_2$ by $\Gamma_1\oplus\Gamma_2$. This is reminiscent of the notation for products in an abelian category. However, it is not quite analogous: in an abelian category the product $V\oplus W$ is a biproduct---i.e.\ also a coproduct---and this is not the case in $\frc$. We use the $\oplus$ notation to remind us that
 \[(n,S,\tau)\oplus(n',S',\tau')\cong(n+n',S\cup S', \copair{\tau,\tau'}).\]
\end{remark}

\begin{remark}
Note that one should think of the support $S=\supp(\Gamma)$ of a context $\Gamma$ as a kind of constraint, because the larger $S$ is, the smaller $\Gamma$ is. Indeed, for any $n\in\nn$ and context $\tau\colon\ord{n}\to S$, if one composes with an inclusion $S\ss S'\ss\typeset$ on the level of support, the result is a monic map in $\frc$ \emph{going the other way},
\[(\ord{n}\To{\tau} S\ss S')\inj(\ord{n}\To{\tau} S).\]
\end{remark}

\begin{definition}
Given a map $f\colon \Gamma\to\Gamma'$, we denote the corresponding function as
$\ord{f}\colon\ord{n}'\to\ord{n}$. Say that a context $\Gamma=(n, S, \tau)$:
\begin{itemize}
	\item is a \emph{unary context} if it is of the form $(1,\{s\},!)$, i.e. if it has arity $n=1$ and full support $\abs{S}=1$; we denote it simply as $\unary{s}$.
	\item is a \emph{unary support context} if it is of the form
	  $(0,\{s\},!)$; i.e. if it has $n=0$ and $\abs{S}=1$; we abuse notation
	  to denote this $\supp(s)$.
\end{itemize}
\end{definition}
\begin{example}
Suppose $\typeset=\{s\}$ is unary. When $n=0$, the map $\tau$ is unique, and we either have $S=\varnothing$ or $S=\{s\}$. Thus we recover the description from \cref{eqn.free_reg}, though in the present terms it looks like this:
\[
\begin{tikzcd}
	(0,\varnothing)&
	(0,\{s\})\ar[l, >->]&
	(1,\{s\})\ar[l, ->>]\ar[r]&
	(2,\{s\})\ar[l, shift left=5pt]\ar[l, shift right=5pt]&
	\cdots
\end{tikzcd}
\]
\end{example}

\begin{example}
For any set $\typeset$, the poset of subobjects of $0$ in $\frc$ is the free meet-semilattice on $\typeset$, i.e.\ the finite powerset $\ppf(\typeset)$. This follows from \cref{cor.descriptions}.
\end{example}

Recall from \cref{def.support} that the support of an object in a regular
category is the image of its unique map to the terminal object. 


\begin{corollary}\label{cor.support_as_of_unary}
Every unary support context is the support of a unary context.
\end{corollary}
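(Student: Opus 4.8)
The plan is to verify the claim by a direct computation of the support of a unary context, using the explicit descriptions of the terminal object, monomorphisms, and regular epimorphisms in $\frc$ recorded in \cref{cor.descriptions}. Concretely, given a unary support context $\supp(s) = (0,\{s\},!)$, I will show it is the support of the unary context $\unary{s} = (1,\{s\},!)$.

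First I recall from \cref{def.support} that $\supp(\unary{s})$ is by definition the image of the unique map $\unary{s} \to 0$ to the terminal object $0 = (0,\varnothing,!)$. To identify this image it suffices, by the uniqueness of image factorization (\cref{lemma.OFS}), to exhibit any factorization of $\unary{s}\to 0$ into a regular epimorphism followed by a monomorphism. I claim the required factorization is
\[
  \unary{s} = (1,\{s\},!) \surj (0,\{s\},!) \inj (0,\varnothing,!) = 0.
\]

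It then remains to check the two maps against the criteria of \cref{cor.descriptions}. The first map has underlying function $\ord{0}\to\ord{1}$, which is vacuously injective, and its source and target share the same support $\{s\}$, so it is a regular epimorphism. The second map has underlying function $\ord{0}\to\ord{0}$, which is vacuously surjective, so it is a monomorphism. Hence the image of $\unary{s}\to 0$ is $(0,\{s\},!) = \supp(s)$, which is exactly the desired conclusion.

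The computation is routine; the only point demanding attention is that morphisms in $\frc$ are functions on the underlying finite sets pointing in the \emph{opposite} direction (see \cref{eqn.fr_Lambda_explicit}), so the epi/mono criteria of \cref{cor.descriptions} must be read with the correct variance. Once this is kept straight, both verifications reduce to trivial observations about the empty function, and there is no substantive obstacle.
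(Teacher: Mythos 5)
Your proof is correct and follows exactly the paper's argument: the paper likewise exhibits $\unary{s}\surj\supp(s)\inj 0$ as the image factorization of the unique map $\unary{s}\to 0$ and appeals to the explicit epi/mono criteria of \cref{cor.descriptions}; you simply spell out the (vacuous) checks on the underlying functions $\ord{0}\to\ord{1}$ and $\ord{0}\to\ord{0}$ that the paper leaves implicit. No gaps.
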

\begin{proof}
  Given any unary support context $\supp(s)$, the explicit descriptions in
  \cref{cor.descriptions} make it easy to check that $\unary{s} \surj
  \supp(s)\inj 0$ is the image factorization of the unique map $\unary{s} \to
  0$.
\end{proof}

\begin{corollary}\label{cor.factor_unary_support}
Every object $\Gamma=(n, S, \tau)\in\frc$ can be written as the product of $n$-many unary contexts and $\abs{S}$-many unary support contexts, and morphisms in $\frc$ correspond to projections and diagonals.
\end{corollary}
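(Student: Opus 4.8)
The plan is to reduce both assertions to the explicit descriptions of products and morphisms in \cref{cor.descriptions}. For the first assertion, I would simply iterate the product formula. Writing $\Gamma = (n, S, \tau)$, I claim that
\[
  \Gamma \cong \bigoplus_{i=1}^{n} \unary{\tau(i)} \oplus \bigoplus_{s \in S} \supp(s).
\]
Indeed, each unary context $\unary{\tau(i)} = (1, \{\tau(i)\}, !)$ contributes a single variable of type $\tau(i)$, while each unary support context $\supp(s) = (0, \{s\}, !)$ contributes no variable and the support element $s$. By the product description in \cref{cor.descriptions}, the right-hand side therefore has arity $n$, support $\im\tau \cup S = S$ (using $\im\tau \subseteq S$ and that supports combine by union), and typing function given by the copairing, which is exactly $\tau$. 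This is precisely the factorization already appearing in \eqref{eqn.Prod}.

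For the second assertion, I would invoke the universal property of this product decomposition. A morphism $f \colon \Gamma \to \Gamma'$, where $\Gamma' \cong \bigoplus_i \unary{\tau'(i)} \oplus \bigoplus_{s'} \supp(s')$, is uniquely determined by its composites with the projections onto the factors of $\Gamma'$, i.e.\ by a tuple of component maps $\Gamma \to \unary{s}$ and $\Gamma \to \supp(s)$. The key observation is that each component is itself a projection of $\Gamma$: unwinding \eqref{eqn.fr_Lambda_explicit}, a map $\Gamma \to \unary{s}$ is a choice of $k \in \ord{n}$ with $\tau(k) = s$, which is exactly the projection $\pi_k$ of $\Gamma$ onto its $k$th unary-context factor, while a map $\Gamma \to \supp(s)$ exists and is unique precisely when $s \in S$, and is then the projection of $\Gamma$ onto its $\supp(s)$ factor (equivalently the composite of that projection with $\unary{s} \surj \supp(s)$ from \cref{cor.support_as_of_unary}, the two agreeing by uniqueness).

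Finally I would assemble the tuple of projections using diagonals. The pairing $\pair{p_1, \dots, p_m} \colon \Gamma \to \Gamma'$ of the component projections factors as an iterated diagonal $\delta \colon \Gamma \to \Gamma \oplus \cdots \oplus \Gamma$ followed by the product $p_1 \oplus \cdots \oplus p_m$ of those projections, so every $f$ is a composite of diagonals and projections, as claimed. The only delicate part is bookkeeping rather than any conceptual difficulty: a non-injective underlying function $\ord{f}$ forces repeated use of a diagonal (duplicating a variable), whereas a non-surjective $\ord{f}$ together with the inclusion $S' \subseteq S$ amounts to projecting away the unused variable- and support-factors, and neither introduces anything beyond diagonals and projections. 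I expect getting this bookkeeping exactly right to be the main thing to watch.
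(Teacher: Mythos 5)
Your proposal is correct and follows essentially the same route as the paper: both derive the factorization $\Gamma\cong\bigoplus_{i}\unary{\tau(i)}\oplus\bigoplus_{s\in S}\supp(s)$ directly from the explicit product description in \cref{cor.descriptions} (with the idempotence of support contexts absorbed into the union of supports), and both identify a general morphism coordinatewise via $\ord{f}$ and $S'\ss S$ as an assembly of projections and diagonals. Your version merely spells out the universal-property bookkeeping that the paper leaves implicit.
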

\begin{proof}
It follows directly from \cref{cor.descriptions} that
$\Gamma=\prod_{i\in\ord{n}}\unary{\tau(i)}\times\prod_{s\in
S}\supp(s).$ In particular, it will be useful to note the idempotence of support contexts:
\begin{equation}\label{eqn.idem_support}
  \supp(s)\times\supp(s)=\supp(s).
\end{equation}
If $f\colon\Gamma\to\Gamma'$ is a morphism as in \cref{eqn.fr_Lambda_explicit}, then the corresponding map
\[
\begin{tikzcd}
  \prod_{i\in\ord{n}}\unary{\tau(i)}\times\prod_{s\in S}\supp(s)\ar[d,"f"]\\
	\prod_{i'\in\ord{n}'}\unary{\tau'(i')}\times\prod_{s'\in S'}\supp(s')
\end{tikzcd}
\]
acts coordinatewise according to $\ord{f}\colon\ord{n}'\to\ord{n}$ and $S'\ss S$.
\end{proof}

\begin{proof}[Proof of \cref{thm.fr_is_free}]
We denote the unit component for a set $\typeset$ by $\unary{-}\colon \typeset\to\ob\frc$; it is given by unary contexts, $\unary{t}=(1,\{t\},!)$. We denote the counit component $\Prod\colon\frc[\ob\cat{R}]\to\cat{R}$ for a regular category $\cat{R}$; it is roughly-speaking given by products and supports in $\cat{R}$ (see \cref{def.support}). More precisely, given a context $\Gamma=(n,S,\tau)\in\frc[\ob\cat{R}]$, we put
\[
	\Prod[\Gamma]\coloneqq
	\prod_{i\in\ord{n}}\tau(i)
	\times
	\prod_{s\in S}\supp(s),
\]
By the universal property of products, a morphism $f\colon\Gamma\to\Gamma'$, i.e.\ a function $\ord{f}\colon \ord{n}'\to \ord{n}$ as in \cref{eqn.fr_Lambda_explicit} naturally induces a map $\Prod[f]\colon\Prod[\Gamma]\to\Prod[\Gamma']$, so $\Prod$ is a functor. We need to check that it is regular and for this we use \cref{cor.descriptions}.

For preservation of finite limits, first
observe that $\Prod$ preserves the terminal object because the empty product in $\cat{R}$ is terminal. For pullbacks we need to check that for every pushout diagram
as to the left, the diagram to the right is a pullback:
\[
\begin{tikzcd}
  \ord{n}\ar[r]\ar[d]\ar[dr, phantom, very near end, "\ulcorner"]&
  \ord{n}_2\ar[d]
  	&[-15pt]&[20pt]
	\prod_{i\in\ord{n}}\unary{\tau'(i)} \ar[dr, phantom, very near end, "\ulcorner"]&
	\prod_{i_2\in\ord{n}_2}\unary{\tau'(i_2)} \ar[l]\\
  \ord{n}_1\ar[r]&
  \ord{n}'\ar[r, "\tau'"]
  &\typeset
	&
	\prod_{i_1\in\ord{n}_1}\unary{\tau'(i_1)} \ar[u]&
	\prod_{i'\in\ord{n}'}\unary{\tau'(i')}\ar[l]\ar[u]
\end{tikzcd}
\]
where $\ord{n'}\cong\ord{n}_1\sqcup_{\ord{n}}\ord{n}_2$ and
$\tau'\colon\ord{n}'\to\typeset$ is the induced map. This follows from the
well-known fact that $\finset$ is the free finite-colimit completion of a
singleton \cite{MacLane.Moerdijk:1992a}, and fact that the slice category
$\finset_{/\typeset}$ is the free
finite-colimit completion of the set $\typeset$.

Finally, suppose $f\colon (n_1, S_1, \tau_1)\to (n_2, S_2, \tau_2)$ is a regular epi in $\frc$; i.e.\ the corresponding function $\ord{f}\colon\ord{n}_2 \to \ord{n}_1$ is monic and $S_1=S_2$. Letting $n'\coloneqq n_1-n_2$, we use \cref{cor.factor_unary_support,eqn.idem_support} to write $\Prod[f]$ as follows:
\[
\begin{tikzcd}
  \displaystyle \prod_{i\in\ord{n}'}\unary{\tau_1(i)}\times
	 \prod_{i\in\ord{n}_2}\unary{\tau_2(i)}\times
	  \prod_{s\in S}\supp(s)\ar[d, "{\Prod[f]}"]
	\\
	\displaystyle \prod_{i\in\ord{n}'}\supp (\tau_1(i))\times
	\prod_{i\in\ord{n}_2}\unary{\tau_2(i)}\times
	\prod_{s\in S}\supp (s).{\color{white}Sup}
\end{tikzcd}
\]
Since for each $i\in\ord{n}'$ the map $\tau_1(i)\to\supp(\tau_1(i))$ is a
regular epi and regular epis are closed under finite products in a regular
category, this shows that $\Prod[f]$ is again a regular epi. Hence
$\Prod$ is a regular functor. The triangle identities are straightforward: the first is that for any $r\in\ob\cat{R}$, the product of a unary context $\unary{r}$ is just $r$. The second follows from \cref{cor.factor_unary_support}.
\end{proof}

\begin{remark}
Given a function $f\colon\typeset\to\typeset'$, we can use \cref{thm.fr_is_free} and the idempotence of support contexts to see that the induced regular functor $\frc[f]\colon\frc\to\frc[\typeset']$ sends $\Gamma=(\ord{n}\To{\tau}S\ss\typeset)$ to the composite
\[\frc[f](\Gamma)=(\ord{n}\To{\tau}S\stackrel{f\vert_S}\surj f(S)\ss\typeset'),\]
where $S\surj f(S)\inj \typeset'$ is the image factorization of $f$ restricted to
$S$.
\end{remark}

\section{Proofs regarding the po-category of internal relations}

The main point of this appendix is to prove the po-category of internal relations is well-defined. We tackle this in stages, culminating in the proof of \cref{thm.internal_relations} on \pageref{page.proof_of_thm.internal_relations}.

\begin{lemma}\label{lemma.comp_rela_rela}
The composite of internal relations is an internal relation. That is, let
$\varphi_1 \in \pr(\Gamma_1)$, $\varphi_2 \in \pr(\Gamma_2)$, and $\varphi_3 \in \pr(\Gamma_3)$. Then given
$\theta_{12}\in\rela{\pr}(\varphi_1,\varphi_2)$ and $\theta_{23}\in\rela{\pr}(\varphi_2, \varphi_3)$, the
element $(\theta_{12}\cp\theta_{23})\in \pr(\Gamma_1 \tens \Gamma_3)$ is in $\rela{\pr}(\varphi_1,\varphi_3)$. 
\end{lemma}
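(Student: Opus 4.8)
The plan is to verify the two defining entailments of \cref{def.internal_relations} for the composite $\theta_{12}\cp\theta_{23}\in\pr(\Gamma_1\tens\Gamma_3)$, namely $\lsh{(\pi_1)}(\theta_{12}\cp\theta_{23})\vdash_{\Gamma_1}\varphi_1$ and $\lsh{(\pi_3)}(\theta_{12}\cp\theta_{23})\vdash_{\Gamma_3}\varphi_3$, where $\pi_1,\pi_3$ are the two projections out of $\Gamma_1\tens\Gamma_3$. These two are exchanged by conjugating with the braiding $\Gamma_1\tens\Gamma_3\cong\Gamma_3\tens\Gamma_1$, under which $\theta_{12}\cp\theta_{23}$ becomes the composite of the opposite relations $\theta_{23}^{\mathrm{op}}\in\rela{\pr}(\varphi_3,\varphi_2)$ and $\theta_{12}^{\mathrm{op}}\in\rela{\pr}(\varphi_2,\varphi_1)$; hence it suffices to prove the first, which is what I would do.

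Unwinding \eqref{eqn.composition} and using $1$-functoriality of $\pr$ together with the identification $\lsh{(\pi_1)}=\pr(\hat\pi_1)$ for the graph $\hat\pi_1$ of $\pi_1$ (the adjoint notation for regular calculi), I would write $\lsh{(\pi_1)}(\theta_{12}\cp\theta_{23}) = \pr(\comp\cp\hat\pi_1)\bigl(\rho(\theta_{12},\theta_{23})\bigr)$. The crucial step is a wiring-diagram identity in $\frb$: identifying the two copies of $\Gamma_2$ and discarding $\Gamma_3$ act on disjoint wires, so they may be performed in either order, giving $\comp\cp\hat\pi_1 = (\Gamma_1\tens\Gamma_2\tens\hat\pi_{\Gamma_2})\cp k$, where $\hat\pi_{\Gamma_2}$ is the graph of the projection $\Gamma_2\tens\Gamma_3\to\Gamma_2$ and $k\colon\Gamma_1\tens\Gamma_2\tens\Gamma_2\tickar\Gamma_1$ is the diagram wiring the two $\Gamma_2$ ports together and outputting $\Gamma_1$. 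Applying $\pr$ and the (strict) naturality of the laxator $\rho$ then turns this into the graphical term $\lsh{(\pi_1)}(\theta_{12}\cp\theta_{23}) = \church{(\theta_{12},\,\psi;\,k)}$, with second inner predicate $\psi\coloneqq\lsh{(\pi_{\Gamma_2})}\theta_{23}\in\pr(\Gamma_2)$.

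From here the argument is pure graphical reasoning. Since $\psi\vdash\true_{\Gamma_2}$ (every predicate entails the top element of its meet-semilattice), monotonicity \cref{prop.diagrams_basic}(i) gives $\church{(\theta_{12},\psi;k)}\vdash\church{(\theta_{12},\true_{\Gamma_2};k)}$. Removability of truth \cref{lem.true_removes_circles} collapses the $\true_{\Gamma_2}$ box into a bare discard of the matching $\Gamma_2$ port of $\theta_{12}$, so that $\church{(\theta_{12},\true_{\Gamma_2};k)} = \lsh{(\pi_1')}\theta_{12}$, where $\pi_1'\colon\Gamma_1\tens\Gamma_2\to\Gamma_1$. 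Finally $\lsh{(\pi_1')}\theta_{12}\vdash_{\Gamma_1}\varphi_1$ is exactly the defining condition of $\theta_{12}\in\rela{\pr}(\varphi_1,\varphi_2)$, and transitivity of $\vdash$ closes the first entailment; the second follows symmetrically, now using the condition on $\theta_{23}$.

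I expect the main obstacle to be the bookkeeping behind the wiring identity $\comp\cp\hat\pi_1=(\Gamma_1\tens\Gamma_2\tens\hat\pi_{\Gamma_2})\cp k$: one must check, working in $\frc$ with the explicit pullback/pushout and mono/epi descriptions of \cref{cor.descriptions}, that these two relational composites name the same subobject of $(\Gamma_1\tens\Gamma_2\tens\Gamma_2\tens\Gamma_3)\tens\Gamma_1$. Everything afterwards is a routine application of the reasoning rules already established, and it is worth noting that this direction uses only $\psi\vdash\true$ rather than the sharper available bound $\psi\vdash\varphi_2$.
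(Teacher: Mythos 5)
Your proof is correct and follows the same route as the paper's: project onto $\Gamma_1$, discard $\theta_{23}$ (the paper invokes the discarding rule of \cref{lem.dotting_off}, which packages exactly your monotonicity-plus-true-removability step), and then apply the defining entailment of $\theta_{12}$, with the other projection handled by symmetry. The only difference is presentational: you unwind the laxator-naturality and wiring-identity bookkeeping that the paper's graphical calculus absorbs into a single diagrammatic step.
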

\begin{proof}
  We must prove $\lsh{(\pi_1)}(\theta_{12} \cp \theta_{23}) \vdash \varphi_1$ and
  $\lsh{(\pi_2)}(\theta_{12} \cp \theta_{23}) \vdash \varphi_3$. We prove the first; the second
  follows similarly. This is not hard, we simply use \cref{lem.dotting_off} and then
  that $\theta_{12}$ obeys \cref{def.internal_relations}:
\[
\begin{tikzpicture}[unoriented WD]
	\node (P1) {
  \begin{tikzpicture}[inner WD, pack size=6pt]
    \node[pack] (theta) {$\theta_{12}$};
    \node[pack, right=1 of theta] (theta') {$\theta_{23}$};
    \node[link, right=5pt of theta'] (dot) {};
    \draw (dot) -- (theta');
    \draw (theta) -- (theta');
		\draw (theta.west) -- +(-5pt, 0);
  \end{tikzpicture}	
	};
	\node[right=3 of P1] (P2) {
  \begin{tikzpicture}[inner WD, pack size=6pt]
    \node[pack] (theta) {$\theta_{12}$};
    \node[link, right=5pt of theta] (dot) {};
    \draw (dot) -- (theta);
		\draw (theta.west) -- +(-5pt, 0);
  \end{tikzpicture}		
	};
	\node[right=3 of P2] (P3) {
  \begin{tikzpicture}[inner WD, pack size=6pt]
    \node[pack] (phi) {$\varphi_1$};
		\draw (phi.west) to +(-5pt, 0);
  \end{tikzpicture}		
	};
	\node at ($(P1.east)!.5!(P2.west)$) {$\vdash_{\Gamma_1}$};
	\node at ($(P2.east)!.5!(P3.west)$) {$\vdash_{\Gamma_1}$};
\end{tikzpicture}
\qedhere
\]
\end{proof}

Given an object $\Gamma\in\frb[\typeset]$ and $\varphi\in \pr(\Gamma)$, define $\id_\varphi\coloneqq \lsh{(\delta_\Gamma)}(\varphi)$ in $\pr(\Gamma\oplus\Gamma)$. Here it is graphically.
\begin{equation}\label{eqn.id_phi}
\id_\varphi\coloneqq\begin{tikzpicture}[inner WD, pack size=6pt, baseline=(dot)]
	\node[pack] (phi) {$\varphi$};
	\node[link, below=3pt of phi] (dot) {};
  \node[outer pack, fit=(phi) (dot)] (outer) {};
	\draw (phi) -- (dot);
	\draw (dot) to[pos=1] node[left] {$\Gamma$} (dot-|outer.west);
	\draw (dot) to[pos=1] node[right] {$\Gamma$} (dot-|outer.east);
\end{tikzpicture}
\end{equation}

\begin{lemma}\label{lemma.id_is_rela}
For any $\Gamma\in\frb[\typeset]$ and $\varphi\in \pr(\Gamma)$, the element $\id_\varphi\in
\pr(\Gamma\tens \Gamma)$ is an element of $\rela{\pr}(\varphi,\varphi)$.
\end{lemma}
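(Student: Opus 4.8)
The plan is to verify directly that $\id_\varphi$ satisfies the two defining inequalities of $\rela{\pr}(\varphi,\varphi)$ from \cref{def.internal_relations}; because the diagram \eqref{eqn.id_phi} for $\id_\varphi$ is symmetric under swapping its two legs, it suffices to treat the first projection, i.e.\ to show $\lsh{(\pi_1)}(\id_\varphi)\vdash_\Gamma\varphi$. First I would unfold the definition $\id_\varphi=\lsh{(\delta_\Gamma)}(\varphi)$, so that $\lsh{(\pi_1)}(\id_\varphi)=\lsh{(\pi_1)}\big(\lsh{(\delta_\Gamma)}(\varphi)\big)$.

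The key observation is that the assignment $f\mapsto\lsh{f}$ is functorial: it is the composite of functors exhibited in the Adjoint notation following \cref{lemma.fundamental}, hence preserves composition and identities. Consequently $\lsh{(\pi_1)}\circ\lsh{(\delta_\Gamma)}=\lsh{(\delta_\Gamma\cp\pi_1)}$, where $\delta_\Gamma\colon\Gamma\to\Gamma\oplus\Gamma$ and $\pi_1\colon\Gamma\oplus\Gamma\to\Gamma$ are the diagonal and projection in $\frc$. Next I would invoke that $\frc$ has finite products (\cref{cor.descriptions}), for which diagonal-followed-by-projection is the identity, so that $\delta_\Gamma\cp\pi_1=\id_\Gamma$ in $\frc$. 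Combining these gives
\[
  \lsh{(\pi_1)}(\id_\varphi)=\lsh{(\delta_\Gamma\cp\pi_1)}(\varphi)=\lsh{(\id_\Gamma)}(\varphi)=\varphi,
\]
and in particular $\lsh{(\pi_1)}(\id_\varphi)\vdash_\Gamma\varphi$ holds, indeed with equality. The identical computation with $\pi_2$ yields $\lsh{(\pi_2)}(\id_\varphi)=\varphi$, so both conditions of \cref{def.internal_relations} are met and $\id_\varphi\in\rela{\pr}(\varphi,\varphi)$.

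I do not expect any genuine obstacle here: the whole content reduces to the functoriality of $\lsh{(-)}$ together with the product identity $\delta_\Gamma\cp\pi_i=\id_\Gamma$. The only point requiring minor care is the direction of composition under the diagrammatic convention—so that $\lsh{(\delta_\Gamma)}$ is applied before $\lsh{(\pi_i)}$—and checking that the codomain lands in $\pr(\Gamma)$ on the nose. Graphically this is simply the fact that discarding one of the two legs of \eqref{eqn.id_phi} returns the single $\varphi$-labelled shell, recovering $\varphi$ exactly rather than merely a weakening of it.
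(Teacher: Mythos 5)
Your proof is correct and is in substance the same as the paper's: the paper's one-line argument applies the nesting rule (\cref{prop.diagrams_basic}(iii)) to collapse the graphical term for $\lsh{(\pi_1)}(\id_\varphi)$ into $\varphi$, and that rule is itself just the $1$-functoriality of $\pr$ that you invoke symbolically via $\lsh{(\pi_i)}\circ\lsh{(\delta_\Gamma)}=\lsh{(\delta_\Gamma\cp\pi_i)}=\lsh{(\id_\Gamma)}$. Your version even records the slightly stronger conclusion $\lsh{(\pi_i)}(\id_\varphi)=\varphi$, where the paper contents itself with the entailment required by \cref{def.internal_relations}.
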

\begin{proof}
By \cref{prop.diagrams_basic}(iii), composing the nested graphical term on the left is precisely the graphical term on the right (and similarly for the codomain):
\[
\begin{tikzpicture}[unoriented WD, baseline=(P1.base)]
  \node (P1) {
  \begin{tikzpicture}[inner WD, pack size=6pt]
  	\node[pack] (phi) {$\varphi$};
  	\node[link, below=3pt of phi] (dot) {};
    \node[outer pack, fit=(phi) (dot)] (outer) {};
    \node[link, right=3pt] at (dot-|outer.east) (dot2) {};
  	\draw (phi) -- (dot);
  	\draw (dot) -- (dot2);
		\node[outer pack, surround sep=-1pt, fit=(outer) (dot2)] (outer2) {};
  	\draw (dot) -- (dot-|outer2.west);
  \end{tikzpicture}
  };
  \node[right=3 of P1] (P2) {
  \begin{tikzpicture}[inner WD, pack size=6pt]
    \node[pack] (phi) {$\varphi$};
		\draw (phi.west) to +(-2pt, 0);
  \end{tikzpicture}		  
  };
	\node at ($(P1.east)!.5!(P2.west)$) {$\vdash_{\Gamma}$};
\end{tikzpicture}
\qedhere
\] 
\end{proof}

In what follows, we often elide details about---and graphical notation that
indicates---nesting and contexts.

\begin{lemma}\label{lemma.cp_unital}
The map $\cp$ from \cref{eqn.composition} is unital with respect to $\id$, i.e.\
$\theta\cp\id=\theta=\id\cp{\theta}$.
\end{lemma}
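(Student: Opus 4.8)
The plan is to reduce the unit law to the defining inequality of an internal relation together with the projection adjunction $\lsh{\pi}\dashv\ust{\pi}$. I will prove the right unit $\theta\cp\id_{\varphi_2}=\theta$ in detail; the left unit $\id_{\varphi_1}\cp\theta=\theta$ is obtained by the same computation read through the symmetry $\Gamma_1\oplus\Gamma_2\cong\Gamma_2\oplus\Gamma_1$. So suppose $\theta\in\rela{\pr}(\varphi_1,\varphi_2)\ss\pr(\Gamma_1\oplus\Gamma_2)$ and $\varphi_2\in\pr(\Gamma_2)$.

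First I would unfold the composite graphically. Using $\id_{\varphi_2}=\lsh{(\delta_{\Gamma_2})}(\varphi_2)$ from \cref{eqn.id_phi} and the nesting rule (\cref{prop.diagrams_basic}(iii)), I substitute the graphical term $(\varphi_2;\delta_{\Gamma_2})$ into the composition wiring diagram $\comp_{\Gamma_1,\Gamma_2,\Gamma_2}$ of \cref{thm.internal_relations}. Following the amalgamation step of \cref{ex.comp_as_subst}, the diagonal dot coming from $\id_{\varphi_2}$ merges with the connecting dot of $\comp$, so the resulting term is $\theta$ with a single $\Gamma_2$-wire shared between its second port, the label $\varphi_2$, and the $\Gamma_2$-output, while the $\Gamma_1$-port of $\theta$ passes straight to the $\Gamma_1$-output. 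I claim this term represents $\theta\wedge\ust{(\pi_2)}\varphi_2\in\pr(\Gamma_1\oplus\Gamma_2)$, where $\pi_2\colon\Gamma_1\oplus\Gamma_2\to\Gamma_2$ is the projection. Indeed, after padding $\varphi_2$ to the context $\Gamma_1\oplus\Gamma_2$ via $\ust{\pi_2}$, attaching $\varphi_2$ along the $\Gamma_2$-wire is exactly the diagonal-merge of \cref{prop.diagrams_meet}(ii), i.e.\ the meet $\theta\wedge\ust{(\pi_2)}\varphi_2$.

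Granting the identity $\theta\cp\id_{\varphi_2}=\theta\wedge\ust{(\pi_2)}\varphi_2$, the conclusion is immediate. By \cref{def.internal_relations}, $\theta\in\rela{\pr}(\varphi_1,\varphi_2)$ satisfies $\lsh{(\pi_2)}\theta\vdash\varphi_2$; transposing across the projection adjunction $\lsh{(\pi_2)}\dashv\ust{(\pi_2)}$ yields $\theta\vdash\ust{(\pi_2)}\varphi_2$. Since $\pr(\Gamma_1\oplus\Gamma_2)$ is a meet-semilattice by \cref{cor.meetsl}, the inequality $\theta\vdash\ust{(\pi_2)}\varphi_2$ forces $\theta\wedge\ust{(\pi_2)}\varphi_2=\theta$, which is precisely $\theta\cp\id_{\varphi_2}=\theta$. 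The left unit follows identically, using $\lsh{(\pi_1)}\theta\vdash\varphi_1$ and the adjunction $\lsh{(\pi_1)}\dashv\ust{(\pi_1)}$.

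I expect the main obstacle to be the middle step: rigorously identifying the nested wiring diagram with the meet $\theta\wedge\ust{(\pi_2)}\varphi_2$, rather than merely reading it off the picture. Concretely, one must verify that amalgamating the diagonal dot of $\id_{\varphi_2}$ with the connecting dot of $\comp$ produces exactly the diagonal pattern $\delta_{\Gamma_1\oplus\Gamma_2}$ of \cref{prop.diagrams_meet}(ii) on the $\Gamma_2$ coordinate while leaving the $\Gamma_1$ coordinate untouched, so that the bare label $\varphi_2$ genuinely contributes the padded predicate $\ust{(\pi_2)}\varphi_2$ and not something else. Once this graphical bookkeeping is settled, the remainder is the one-line adjunction-plus-meet argument above.
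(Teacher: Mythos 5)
Your proof is correct, and it reaches the same two facts the paper leans on---the defining entailment $\lsh{(\pi_2)}\theta \vdash \varphi_2$ of an internal relation and the meet-semilattice structure of $\pr(\Gamma_1\oplus\Gamma_2)$---but it packages them differently. The paper proves the two entailments $\theta\cp\id \vdash \theta$ and $\theta \vdash \theta\cp\id$ separately by graphical manipulation: the first by discarding the $\varphi_2$-circle (\cref{lem.dotting_off}), the second by duplicating $\theta$ via meets-are-merges, breaking a wire, and then applying monotonicity with $\lsh{(\pi_2)}\theta \vdash \varphi_2$. You instead first establish the single identity $\theta\cp\id_{\varphi_2} = \theta\wedge\ust{(\pi_2)}\varphi_2$ and then finish with a one-line order-theoretic argument, transposing $\lsh{(\pi_2)}\theta\vdash\varphi_2$ across the adjunction $\lsh{(\pi_2)}\dashv\ust{(\pi_2)}$ to get $\theta\vdash\ust{(\pi_2)}\varphi_2$, whence the meet collapses. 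These are genuinely the same computation in disguise---the paper's ``meets-merge then break'' step is exactly the unit $\theta\vdash\ust{(\pi_2)}\lsh{(\pi_2)}\theta$ of that adjunction drawn as a picture, and its discarding step is the inequality $\theta\wedge\ust{(\pi_2)}\varphi_2\vdash\theta\wedge\true=\theta$---but your version makes the role of the projection adjunction explicit and isolates all the graphical content in one verifiable identity. The step you flag as the main obstacle does hold and is routine: both $\theta\cp\id_{\varphi_2}$ and $\theta\wedge\ust{(\pi_2)}\varphi_2$ are, after pushing $\lsh{(\delta_{\Gamma_2})}$ respectively $\ust{(\pi_2)}$ through the laxator by naturality, equal to $\pr(\upsilon)(\rho(\theta,\varphi_2))$ for the same relation $\upsilon\colon\Gamma_1\oplus\Gamma_2\oplus\Gamma_2\tickar\Gamma_1\oplus\Gamma_2$, namely the one identifying all three $\Gamma_2$ wires and passing $\Gamma_1$ through; checking that $(\Gamma_1\oplus\Gamma_2\oplus\widehat{\delta}_{\Gamma_2})\cp\comp$ and $(\Gamma_1\oplus\Gamma_2\oplus\ust{\pi_2})\cp\,\widecheck{\delta}_{\Gamma_1\oplus\Gamma_2}$ coincide is a finite computation in $\frb$ via \cref{cor.descriptions}. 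So there is no gap, only a deferred but straightforward verification.
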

\begin{proof}
  We prove that $(\theta\cp\id)=\theta$; the other unitality axiom is similar.
  The inequality $(\theta\cp\id) \vdash \theta$ follows from
  \cref{lem.dotting_off,prop.diagrams_basic}:
  \[
  \begin{tikzpicture}[unoriented WD]
		\node (P1) {
	  \begin{tikzpicture}[inner WD, pack size=6pt]
			\node[pack] (theta) {$\theta$};
			\node[link, right=1.5 of theta] (dot) {};
			\node[pack, above=5pt of dot] (phi) {$\varphi$};
			\draw (theta) -- (dot);
			\draw (dot) -- (phi);
			\draw (dot) -- +(10pt,0);
			\draw (theta.west) -- +(-5pt,0);
    \end{tikzpicture}		
		};
		\node[right=4 of P1] (P2) {
	  \begin{tikzpicture}[inner WD, pack size=6pt]
			\node[pack] (theta) {$\theta$};
			\node[link, right=1.5 of theta] (dot) {};
			\node[link, above=4pt of dot] (phi) {};
			\draw (theta) -- (dot);
			\draw (dot) -- (phi);
			\draw (dot) -- +(10pt,0);
			\draw (theta.west) -- +(-5pt,0);
    \end{tikzpicture}		
		};
		\node[right=4 of P2] (P3) {\simpletheta};
  	\node at ($(P1.east)!.5!(P2.west)$) {$\vdash$};
  	\node at ($(P2.east)!.5!(P3.west)$) {$=$};
  \pgfresetboundingbox
	\useasboundingbox (P1.160) rectangle (P3.-20);
  \end{tikzpicture}
  \]
  The reverse inequality $\theta \vdash(\theta\cp\id)$ uses
  \cref{lem.meets_merge}, \cref{lem.breaking}, and \cref{def.internal_relations}:
  \[
  \begin{tikzpicture}[unoriented WD]
		\node (P1) {\simpletheta};
		\node[right=3 of P1] (P2) {
	  \begin{tikzpicture}[inner WD, pack size=6pt]
			\node[pack] (theta1) {$\theta$};
			\node[pack, below=.3 of theta1] (theta2) {$\theta$};
			\coordinate (helper) at ($(theta1)!.5!(theta2)$);
			\node[link, left=2 of helper] (dot L) {};
			\node[link, right=2 of helper] (dot R) {};
			\draw (theta1.west) to[out=180, in=60] (dot L);
			\draw (theta2.west) to[out=180, in=-60] (dot L);
			\draw (theta1.east) to[out=0, in=120] (dot R);
			\draw (theta2.east) to[out=0, in=-120] (dot R);
			\draw (dot L) -- +(-5pt,0);
			\draw (dot R) -- +(5pt,0);
    \end{tikzpicture}
		};
		\node[right=3 of P2] (P3) {
	  \begin{tikzpicture}[inner WD, pack size=6pt]
			\node[pack] (theta) {$\theta$};
			\node[link, right=1.5 of theta] (dot) {};
			\node[pack, above=3pt of dot] (theta2) {$\theta$};
			\node[link, above=2pt of theta2] (dot2) {};
			\draw (theta) -- (dot);
			\draw (theta2) -- (dot2);
			\draw (dot) -- (phi);
			\draw (dot) -- +(10pt,0);
			\draw (theta.west) -- +(-5pt, 0);
    \end{tikzpicture}				
		};
		\node[right=3 of P3] (P4) {
	  \begin{tikzpicture}[inner WD, pack size=6pt]
			\node[pack] (theta) {$\theta$};
			\node[link, right=1.5 of theta] (dot) {};
			\node[pack, above=5pt of dot] (phi) {$\varphi$};
			\draw (theta) -- (dot);
			\draw (dot) -- (phi);
			\draw (dot) -- +(10pt,0);
			\draw (theta.west) -- +(-5pt, 0);
    \end{tikzpicture}		
		};
  	\node at ($(P1.east)!.5!(P2.west)$) {$=$};
  	\node at ($(P2.east)!.5!(P3.west)$) {$\vdash$};
  	\node at ($(P3.east)!.5!(P4.west)$) {$\vdash$};
  \pgfresetboundingbox
	\useasboundingbox ($(P1.160)+(0,5pt)$) rectangle (P4.-20);
  \end{tikzpicture}  
  \qedhere
  \]
\end{proof}

\begin{lemma}\label{lemma.cp_assoc}
The map $\cp$ from \cref{eqn.composition} is associative, i.e.\ $(\theta_1\cp\theta_2)\cp\theta_3=\theta_1\cp(\theta_2\cp\theta_3)$.
\end{lemma}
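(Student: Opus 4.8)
The plan is to reduce the associativity equation to a statement purely about wiring diagrams in $\frb$, via the nesting rule. Recall from \cref{eqn.composition} that the composite of two internal relations is by definition a graphical term: for $\theta\in\pr(\Gamma_1\oplus\Gamma_2)$ and $\theta'\in\pr(\Gamma_2\oplus\Gamma_3)$ we have $\theta\cp\theta'=\church{(\theta,\theta';\comp_{\Gamma_1,\Gamma_2,\Gamma_3})}$, since $\church{(\theta,\theta';\comp)}=\pr(\comp)(\rho(\theta,\theta'))$ is exactly the map of \cref{eqn.composition}. So, writing $\theta_1\in\pr(\Gamma_1\oplus\Gamma_2)$, $\theta_2\in\pr(\Gamma_2\oplus\Gamma_3)$ and $\theta_3\in\pr(\Gamma_3\oplus\Gamma_4)$, I would first express each bracketing as a nested graphical term and then flatten it.

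Second, I would apply the nesting rule \cref{prop.diagrams_basic}(iii). For the left-hand side, substituting $\theta_1\cp\theta_2=\church{(\theta_1,\theta_2;\comp_{\Gamma_1,\Gamma_2,\Gamma_3})}$ into the first slot of $\church{(-,\theta_3;\comp_{\Gamma_1,\Gamma_3,\Gamma_4})}$ yields
\[
  (\theta_1\cp\theta_2)\cp\theta_3 = \church{\big(\theta_1,\theta_2,\theta_3;\ (\comp_{\Gamma_1,\Gamma_2,\Gamma_3}\oplus\Gamma_3\oplus\Gamma_4)\cp\comp_{\Gamma_1,\Gamma_3,\Gamma_4}\big)}.
\]
Symmetrically, substituting $\theta_2\cp\theta_3=\church{(\theta_2,\theta_3;\comp_{\Gamma_2,\Gamma_3,\Gamma_4})}$ into the second slot of $\church{(\theta_1,-;\comp_{\Gamma_1,\Gamma_2,\Gamma_4})}$ yields
\[
  \theta_1\cp(\theta_2\cp\theta_3) = \church{\big(\theta_1,\theta_2,\theta_3;\ (\Gamma_1\oplus\Gamma_2\oplus\comp_{\Gamma_2,\Gamma_3,\Gamma_4})\cp\comp_{\Gamma_1,\Gamma_2,\Gamma_4}\big)}.
\]
Since both graphical terms carry the same annotation list $\theta_1,\theta_2,\theta_3$, it remains only to prove that their wiring diagrams agree.

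Third, and this is the real content, I would verify the identity of morphisms in $\frb$
\[
  (\comp_{\Gamma_1,\Gamma_2,\Gamma_3}\oplus\Gamma_3\oplus\Gamma_4)\cp\comp_{\Gamma_1,\Gamma_3,\Gamma_4} = (\Gamma_1\oplus\Gamma_2\oplus\comp_{\Gamma_2,\Gamma_3,\Gamma_4})\cp\comp_{\Gamma_1,\Gamma_2,\Gamma_4},
\]
both sides being morphisms $\Gamma_1\oplus\Gamma_2\oplus\Gamma_2\oplus\Gamma_3\oplus\Gamma_3\oplus\Gamma_4\tickar\Gamma_1\oplus\Gamma_4$. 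This is an equation entirely within the po-category $\frb=\rrel{\frc}$, and I would check it from the description of composition in $\frb$ (pullback in $\frc$, equivalently pushout in $\finset$, followed by image factorization, as in \cref{eqn.rel_composition}): each composite is the wiring diagram that identifies the two $\Gamma_2$-ports with one another, identifies the two $\Gamma_3$-ports with one another, and passes $\Gamma_1$ and $\Gamma_4$ unchanged to the boundary. Because composition in $\frb$ is associative, these two ways of wiring up the same six shells produce literally the same morphism; graphically, the fully flattened diagram with $\theta_1,\theta_2,\theta_3$ joined in a row $\Gamma_1$-$\theta_1$-$\Gamma_2$-$\theta_2$-$\Gamma_3$-$\theta_3$-$\Gamma_4$ is independent of the order in which the two internal junctions were formed. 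Applying $\church{-}$ then gives $(\theta_1\cp\theta_2)\cp\theta_3=\theta_1\cp(\theta_2\cp\theta_3)$. The main obstacle is exactly this wiring-diagram equality: although conceptually it is just associativity of composition in $\frb$, care is needed to match the monoidal reassociations and the port orderings so that the two $\comp$-composites are identified correctly.
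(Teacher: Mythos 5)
Your proposal is correct and follows essentially the same route as the paper: the paper's proof likewise invokes the nesting rule \cref{prop.diagrams_basic}(iii) to flatten both bracketings into the single graphical term with $\theta_1,\theta_2,\theta_3$ wired in a row, the only content being that the two composites of $\comp$-morphisms in $\frb$ agree. You simply spell out more explicitly the wiring-diagram identity that the paper leaves as "immediate."
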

\begin{proof}
  This is immediate from \cref{prop.diagrams_basic}(iii). Both sides can be represented by
  (nested versions of) the graphical term
$
  \begin{tikzpicture}[unoriented WD,baseline=(theta1.-40)]
		\node (P1) {
	  \begin{tikzpicture}[inner WD, pack size=2, surround sep=0pt]
			\node[pack] (theta1) {$\theta_1$};
			\node[pack, right=.5 of theta1] (theta2) {$\theta_2$};
			\node[pack, right=.5 of theta2] (theta3) {$\theta_3$};
			\draw (theta1.west) -- +(-5pt, 0);
			\draw (theta1) -- (theta2);
			\draw (theta2) -- (theta3);
			\draw (theta3.east) -- +(5pt, 0);
    \end{tikzpicture}	
		};
	 \end{tikzpicture}
$.
\end{proof}

\begin{proof}[Proof of \cref{thm.internal_relations}]\label{page.proof_of_thm.internal_relations}
  \cref{lemma.cp_unital,lemma.cp_assoc} show that we have a 1-category. 
  Each homset $\rela{\pr}(\varphi_1,\varphi_2) \subseteq \pr(\Gamma_1,\Gamma_2)$ inherits a
  partial order from the poset $\pr(\Gamma_1,\Gamma_2)$. Moreover, composition is given by
  the monotone map
  \[
    \rela{\pr}(\varphi_1,\varphi_2) \times \rela{\pr}(\varphi_2,\varphi_3)
    \stackrel{\rho}\longrightarrow
    \rela{\pr}(\varphi_1,\varphi_2,\varphi_2,\varphi_3)
    \xrightarrow{\pr(\comp)} \rela{\pr}(\varphi_1,\varphi_3).
  \]
  We thus have a po-category.
\end{proof}

\begin{proposition} \label{prop.characterize_relation}
  Let $\theta \in \pr(\Gamma_1 \oplus \Gamma_2)$, $\varphi_i \in \pr(\Gamma_i)$. Then $\theta$ is a relation $\varphi_1 \to
  \varphi_2$ if and only if
  \begin{equation}\label{eq.relation_with_domains}
    \begin{aligned}
    \begin{tikzpicture}[unoriented WD, baseline=(P2)]
      \node (P1) {
	\begin{tikzpicture}[inner WD, pack size=6pt]
	  \node[pack] (theta) {$\theta$};
	  \node[link, left=1 of theta] (dot) {};
	  \node[pack, above=3pt of dot] (phi) {$\varphi_{1}$};
	  \node[link, right=1 of theta] (dot2) {};
	  \node[pack, above=3pt of dot2] (phi2) {$\varphi_2$};
	  \draw (theta) -- (dot);
	  \draw (dot) -- (phi);
	  \draw (theta) -- (dot2);
	  \draw (dot2) -- (phi2);
		\draw (dot) -- +(-10pt, 0);
		\draw (dot2) -- +(10pt, 0);
	\end{tikzpicture}		
      };
      \node[right=4 of P1] (P2) {\simpletheta};
      \node at ($(P1.east)!.5!(P2.west)$) {$=$};
    \end{tikzpicture}
  \end{aligned}
  \end{equation}
\end{proposition}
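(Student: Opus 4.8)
The plan is to translate the graphical equation \eqref{eq.relation_with_domains} into a statement about poset elements of the meet-semilattice $\pr(\Gamma_1\oplus\Gamma_2)$ and then read off the defining conditions of \cref{def.internal_relations}.

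First I would compute the semantics of the graphical term on the left of \eqref{eq.relation_with_domains}. Each trivalent black dot is a merge, and the predicate $\ust{(\pi_i)}(\varphi_i)\in\pr(\Gamma_1\oplus\Gamma_2)$ is exactly the one obtained by wiring $\varphi_i\in\pr(\Gamma_i)$ into the $i$th leg and leaving the other leg free, since $\ust{(\pi_i)}$ is computed by the cograph of the projection $\pi_i\colon\Gamma_1\oplus\Gamma_2\to\Gamma_i$. Using the nesting rule (\cref{prop.diagrams_basic}(iii)) together with ``meets are merges'' (\cref{prop.diagrams_meet}), the left-hand term therefore represents
\[
\theta \wedge \ust{(\pi_1)}(\varphi_1)\wedge\ust{(\pi_2)}(\varphi_2),
\]
an element of the meet-semilattice $\pr(\Gamma_1\oplus\Gamma_2)$ guaranteed by \cref{cor.meetsl}.

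With this identification, \eqref{eq.relation_with_domains} says precisely that this meet equals $\theta$. In any meet-semilattice $a\wedge b=a$ holds iff $a\vdash b$, so the equation is equivalent to the single entailment $\theta\vdash\ust{(\pi_1)}(\varphi_1)\wedge\ust{(\pi_2)}(\varphi_2)$, hence to the conjunction of $\theta\vdash\ust{(\pi_1)}(\varphi_1)$ and $\theta\vdash\ust{(\pi_2)}(\varphi_2)$. Applying the projection adjunctions $\lsh{(\pi_i)}\dashv\ust{(\pi_i)}$ of the regular calculus, each of these is equivalent to $\lsh{(\pi_i)}(\theta)\vdash_{\Gamma_i}\varphi_i$, which is exactly the pair of conditions defining $\theta\in\rela{\pr}(\varphi_1,\varphi_2)$ in \cref{def.internal_relations}. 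This proves both implications simultaneously.

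The routine parts are the meet-semilattice law and the adjunction step; the one place demanding care is the very first identification, where I must use the nesting rule to justify splitting the single wiring diagram of \eqref{eq.relation_with_domains} as the meet of $\theta$ with the two pullbacks $\ust{(\pi_i)}(\varphi_i)$. I expect this bookkeeping---tracking which legs are merged, which are free, and which are exposed on the boundary---to be the main obstacle, after which the conclusion is immediate.
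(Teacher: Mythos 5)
Your proof is correct, but it takes a genuinely different route from the paper's. The paper argues graphically in both directions: for the ``only if'' direction it observes that the left-hand term of \cref{eq.relation_with_domains} is exactly $\id_{\varphi_1}\cp\theta\cp\id_{\varphi_2}$ and invokes unitality (\cref{lemma.cp_unital}); for the ``if'' direction it caps off one leg of both sides and uses discarding (\cref{lem.dotting_off}) to extract $\lsh{(\pi_1)}\theta\vdash\varphi_1$. You instead translate the whole equation into a single lattice-theoretic identity $\theta=\theta\wedge\ust{(\pi_1)}\varphi_1\wedge\ust{(\pi_2)}\varphi_2$ in the meet-semilattice $\pr(\Gamma_1\oplus\Gamma_2)$ and then unwind it via the adjunctions $\lsh{(\pi_i)}\dashv\ust{(\pi_i)}$ into precisely the two entailments of \cref{def.internal_relations}. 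Your identification of the left-hand term is legitimate: $\ust{(\pi_i)}\varphi_i$ is represented by the cograph wiring diagram (the paper's adjoint-notation discussion together with \cref{rem.rels_adjoint_composites}), and combining ``meets are merges'' (\cref{prop.diagrams_meet}) with nesting (\cref{prop.diagrams_basic}(iii)) does reassemble the pictured diagram; that is indeed the only step requiring real bookkeeping. What your route buys is that both implications fall out of one equivalence, and it avoids leaning on \cref{lemma.cp_unital}, whose own proof already uses the internal-relation conditions. What the paper's route buys is that it stays entirely inside the graphical calculus being advertised in that section and reuses lemmas already on the table.
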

\begin{proof}
  Any internal relation $\varphi_1 \to \varphi_2$ obeys the identity
  \cref{eq.relation_with_domains} by unitality, \cref{lemma.cp_unital}.
  Conversely, if $\theta$ obeys \cref{eq.relation_with_domains}, then by
  \cref{lem.dotting_off}
  \[
    \begin{tikzpicture}[unoriented WD, baseline=(P1)]
      \node (P1) {
	\begin{tikzpicture}[inner WD, pack size=6pt]
	  \node[pack] (theta) {$\theta$};
	  \node[link, right=5pt of theta] (dot) {};
	  \draw (dot) -- (theta);
		\draw (theta.west) -- +(-5pt, 0);			
	\end{tikzpicture}	
      };
      \node[right=3 of P1] (P2) {
	\begin{tikzpicture}[inner WD, pack size=6pt]
	  \node[pack] (theta) {$\theta$};
	  \node[link, left=1 of theta] (dot) {};
	  \node[pack, above=3pt of dot] (phi) {$\varphi_1$};
	  \node[link, right=1 of theta] (dot2) {};
	  \node[pack, above=3pt of dot2] (phi2) {$\varphi_2$};
	  \node[link, right=.7 of dot2] (dot3) {};
	  \draw (theta) -- (dot);
	  \draw (dot) -- (phi);
	  \draw (theta) -- (dot2);
	  \draw (dot2) -- (phi2);
		\draw (dot) -- +(-10pt, 0);
	  \draw (dot2) -- (dot3);
	\end{tikzpicture}		
      };
      \node[right=3 of P2] (P3) {
	\begin{tikzpicture}[inner WD, pack size=6pt]
	  \node[pack] (phi) {$\varphi_1$};
	  \draw (phi.west) to +(-2pt, 0);
	\end{tikzpicture}	
      };
      \node at ($(P1.east)!.5!(P2.west)$) {$=$};
      \node at ($(P2.east)!.5!(P3.west)$) {$\vdash$};
    \end{tikzpicture}
  \]
  and similarly for $\varphi_2$, proving that $\theta \in
  \rela{\pr}(\varphi_1,\varphi_2)$.
\end{proof}

\begin{definition}\label{def.transpose}
Write $\sigma_{\Gamma_1,\Gamma_2}\colon \Gamma_1 \tens \Gamma_2 \longrightarrow
\Gamma_2 \tens \Gamma_1$ for the braiding in $\frc$, and define the map
$(-)\tp\coloneqq \lsh{{\sigma_{\Gamma_1,\Gamma_2}}}\colon
\pr(\Gamma_1\tens\Gamma_2)\to \pr(\Gamma_2\tens \Gamma_1)$. We say that the
\define{transpose} of a graphical term $(\theta;\Gamma_1\oplus \Gamma_2)$ is the
graphical term $(\theta\tp;\Gamma_2\oplus \Gamma_1)$.
\end{definition}

\begin{remark} \label{lem.transpose_rotate}
  Note that transposes are given by ``rotating the shell'':
\[
\begin{tikzpicture}[unoriented WD, font=\small, pack size=6pt]
	\node (P1) {
	\begin{tikzpicture}[inner WD, baseline=(theta.base)]
  	\node[pack] (theta) {$\theta\tp$};
  	\draw (theta.west) to[pos=1] node[left] {$\Gamma_2$}  +(-2pt, 0);
  	\draw (theta.east) to[pos=1] node[right] {$\Gamma_1$}  +(2pt, 0);
  \end{tikzpicture}
	};
	\node[right=3 of P1] (P2) {
  \begin{tikzpicture}[inner WD, surround sep=5pt]
    \node[pack] (theta) {$\theta$};
    \coordinate (helper1) at ($(theta)+(0,.4cm)$);
    \coordinate (helper2) at ($(theta)-(0,.4cm)$);
    \node[outer pack, fit=(theta)] (outer) {};
    \draw	(theta.west) to[out=180, in=180, looseness=2] (helper1)
    to[out=0, in=180, pos=1] node[right] {$\Gamma_1$} (outer.east|-helper1);
    \draw	(theta.east) to[out=0, in=0, looseness=2] (helper2) to[out=180,
    in=0, pos=1] node[left] {$\Gamma_2$} (outer.west|-helper2);
  \end{tikzpicture}			
	};
	\node at ($(P1.east)!.5!(P2.west)$) {$=$};
  \pgfresetboundingbox
	\useasboundingbox (P1.west|-P2.160) rectangle (P2.east|-P2.-20);
\end{tikzpicture}
\]
In particular, for $\varphi \in \pr(\Gamma)$, we have $\church{
(\varphi\tp;\Gamma)} = \church{ (\varphi;\Gamma) }$. That
is, both $\varphi$ and $\varphi\tp$ can be represented by the diagram 
$
\begin{tikzpicture}[inner WD, pack size=6pt, baseline=(phi.-60)]
  \node[pack] (phi) {$\varphi$};
  \draw (phi.west) to +(-2pt, 0);
\end{tikzpicture}\; .
$
\end{remark}

\section{Proof that the category of internal functions is regular}\label{app.internal_functions}

\subsection{Properties and examples of internal functions}\label{sec.internal_funs}
Before we embark on the theorem, let's get to know the category of internal
functions a bit. We'll first characterize functions in two ways: they're the
relations that have their own transposes as right adjoints, and they're the
relations that are total and deterministic. We'll then note that the order
inherited by functions as a subposet of the poset of relations is just the
discrete order, and give two important examples of functions: bijections and
projections.

Note that we'll sometimes use the shape 
$
\begin{tikzpicture}[unoriented WD, font=\tiny]
	\node[funcr] (th) {$\theta$};
	\draw (th.west) to[pos=1] node[left=0] {$\Gamma_1$} +(-2pt, 0);
	\draw (th.east) to[pos=1] node[right=0] {$\Gamma_2$} +(2pt, 0);
\end{tikzpicture}
$
to denote an internal function $\theta\in \pr(\Gamma_1,\Gamma_2)$.

To obtain our characterizations of functions, we'll need definitions of deterministic and total.
\begin{definition}\label{def.tot_det}
  Let $\theta \in \rela{\pr}(\varphi_1,\varphi_2)$. We say that $\theta$ is
  \begin{itemize}[nolistsep,noitemsep]
    \item  \define{total} if 
      $
      \begin{aligned}
	\begin{tikzpicture}[unoriented WD, pack size=6pt]
	  \node (P1) {
	    \begin{tikzpicture}[inner WD]
	      \node[pack] (xi) {$\varphi_1$};
	      \draw (xi.west) to[pos=1] +(-3pt, 0);
	    \end{tikzpicture}
	  };
	  \node[right=2 of P1] (P2) {
	    \begin{tikzpicture}[inner WD]
	      \node[pack] (theta) {$\theta$};
	      \node[link, right=.5 of theta] (dot) {};
	      \draw (theta.west) -- +(-3pt,0);
	      \draw (theta) -- (dot);
	    \end{tikzpicture}	
	  };
	  \node at ($(P1.east)!.5!(P2.west)$) {$\vdash$};
	\end{tikzpicture}
      \end{aligned}
      $
      , and 
    \item \define{deterministic} if
      $
      \begin{aligned}
	\begin{tikzpicture}[unoriented WD, pack size=6pt]
	  \node (P1) {
	    \begin{tikzpicture}[inner WD]
	      \node[pack] (theta) {$\theta$};
	      \node[pack,below=.3 of theta] (theta2) {$\theta$};
	      \node[link, left=1.5 of $(theta)!.5!(theta2)$] (dotw) {};
	      \draw (dotw) -- +(-10pt,0);
	      \draw (theta.west) -- (dotw);
	      \draw (theta2.west) -- (dotw);
	      \draw (theta.east) -- +(5pt,0);
	      \draw (theta2.east) -- +(5pt,0);
	    \end{tikzpicture}	
	  };
	  \node[right=3 of P1] (P2) {
	    \begin{tikzpicture}[inner WD]
	      \node[pack] (theta) {$\theta$};
	      \node[link, right=.5 of theta] (dot) {};
	      \draw (theta.west) -- +(-5pt,0);
	      \draw (theta.east) -- (dot);
	      \draw (dot) -- +(5pt,5pt);
	      \draw (dot) -- +(5pt,-5pt);
	    \end{tikzpicture}	
	  };
	  \node at ($(P1.east)!.5!(P2.west)$) {$\vdash$};
	\end{tikzpicture}
      \end{aligned}
      $
      .
  \end{itemize}
\end{definition}

\begin{remark}
  Note that by the domain of $\theta$ and discarding (\cref{lem.dotting_off}) we always have
  \[
    \begin{tikzpicture}[unoriented WD, pack size=6pt]
      \node (P1) {
	\begin{tikzpicture}[inner WD]
	  \node[pack] (theta) {$\theta$};
	  \node[link, right=.5 of theta] (dot) {};
		\draw (theta.west) -- +(-5pt, 0);
	  \draw (theta) -- (dot);
	\end{tikzpicture}	
      };
      \node[right=3 of P1] (P2) {
	\begin{tikzpicture}[inner WD]
	  \node[pack] (theta) {$\theta$};
	  \node[link, right=.5 of theta] (dot) {};
	  \node[link, left=.8 of theta] (dotw) {};
	  \node[pack, above=.8 of dotw] (phi) {$\varphi_1$};
	  \draw (dotw) -- +(-1,0);
	  \draw (phi) -- (dotw);
	  \draw (theta) -- (dotw);
	  \draw (theta) -- (dot);
	\end{tikzpicture}	
      };
      \node[right=3 of P2] (P3) {
	\begin{tikzpicture}[inner WD]
	  \node[pack] (xi) {$\varphi_1$};
	  \draw (xi.west) to[pos=1] +(-3pt, 0);
	\end{tikzpicture}
      };
      \node at ($(P1.east)!.5!(P2.west)$) {$=$};
      \node at ($(P2.east)!.5!(P3.west)$) {$\vdash$};
  \pgfresetboundingbox
	\useasboundingbox (P1.west|-P2.170) rectangle (P3.east|-P2.-10);
   \end{tikzpicture}
  \]
  and that by meets (\cref{prop.diagrams_meet}(ii)) and breaking
  (\cref{prop.diagrams_basic}(ii)) we always have
  \[
    \begin{tikzpicture}[unoriented WD, pack size=6pt]
      \node (P1) {
	\begin{tikzpicture}[inner WD]
	  \node[pack] (theta) {$\theta$};
	  \node[link, right=.5 of theta] (dot) {};
	  \draw (theta.west) -- +(-5pt, 0);
	  \draw (theta) -- (dot);
	  \draw (dot) -- +(45:8pt);
	  \draw (dot) -- +(-45:8pt);
	\end{tikzpicture}	
      };
      \node[right=3 of P1] (P2) {
	\begin{tikzpicture}[inner WD]
	  \node[pack] (theta) {$\theta$};
	  \node[pack,below=.5 of theta] (theta2) {$\theta$};
	  \node[link, right=1.5 of $(theta)!.5!(theta2)$] (dot) {};
	  \node[link, left=1.5 of $(theta)!.5!(theta2)$] (dotw) {};
	  \draw (dotw) -- +(-5pt, 0);
	  \draw (theta.west) -- (dotw);
	  \draw (theta2.west) -- (dotw);
	  \draw (theta.east) -- (dot);
	  \draw (theta2.east) -- (dot);
	  \draw (dot) -- +(45:8pt);
	  \draw (dot) -- +(-45:8pt);
	\end{tikzpicture}	
      };
      \node[right=3 of P2] (P3) {
	\begin{tikzpicture}[inner WD]
	  \node[pack] (theta) {$\theta$};
	  \node[pack,below=.5 of theta] (theta2) {$\theta$};
	  \node[link, left=1.5 of $(theta)!.5!(theta2)$] (dotw) {};
	  \draw (dotw) -- +(-5pt, 0);
	  \draw (theta.west) -- (dotw);
	  \draw (theta2.west) -- (dotw);
	  \draw (theta.east) -- +(5pt, 0);
	  \draw (theta2.east) -- +(5pt, 0);
	\end{tikzpicture}	
      };
      \node at ($(P1.east)!.5!(P2.west)$) {$=$};
      \node at ($(P2.east)!.5!(P3.west)$) {$\vdash$};
  \pgfresetboundingbox
	\useasboundingbox (P1.west|-P2.150) rectangle (P3.east|-P2.-10);
    \end{tikzpicture}
  \]
  This means that in \cref{def.tot_det} the two entailments are in fact equalities.
\end{remark}

In what follows, we'll often omit the transpose symbol $\tp$ (see \cref{def.transpose})
from our diagrams when it can be deduced from the ambient contextual information.

\begin{theorem}\label{thm.characterize_functions}
  Let $\theta \in \rela{\pr}(\varphi_1,\varphi_2)$. Then the following are
  equivalent.
  \begin{enumerate}[label=(\roman*),nolistsep, noitemsep]
    \item $\theta\in\func{\pr}$ is an internal function in the sense of \cref{def.RT}. 

    \item $\theta$ has right adjoint $\theta\tp$. That is, 
      $
      \begin{aligned}
	\begin{tikzpicture}[unoriented WD, font=\small, pack size=6pt]
	  \node (P1) {
	    \begin{tikzpicture}[inner WD]
	      \node[pack] (phi) {$\varphi_1$};
	      \node[link, below=2pt of phi] (dot) {};
	      \draw (phi) -- (dot);
	      \draw (dot) -- +(-8pt, 0);
	      \draw (dot) -- +(8pt, 0);
	    \end{tikzpicture}	
	  };
	  \node[right=2 of P1] (P2) {
	    \begin{tikzpicture}[inner WD]
	      \node[pack] (theta) {$\theta$};
	      \node[pack, right=1 of theta] (theta') {$\theta$};
	      \draw (theta) -- +(-10pt, 0);
	      \draw (theta') -- +(10pt, 0);
	      \draw (theta) to (theta');
	    \end{tikzpicture}	
	  };
	  \node at ($(P1.east)!.5!(P2.west)$) {$\vdash$};
	\end{tikzpicture}
      \end{aligned}
      $
      and
      $
      \begin{aligned}
      \begin{tikzpicture}[unoriented WD, pack size=6pt, font=\small]
	\node (P3) {
	  \begin{tikzpicture}[inner WD]
	    \node[pack] (theta') {$\theta$};
	    \node[pack, right=1 of theta'] (theta) {$\theta$};
	    \draw (theta) -- +(10pt, 0);
	    \draw (theta') -- +(-10pt, 0);
	    \draw (theta)  to  (theta');
	  \end{tikzpicture}	
	};
	\node[right=2 of P3] (P4) {
	  \begin{tikzpicture}[inner WD, surround sep=4pt]
	    \node[pack] (phi) {$\varphi_2$};
	    \node[link, below=2pt of phi] (dot) {};
	    \draw (phi) -- (dot);
	    \draw (dot) -- +(-8pt, 0);
	    \draw (dot) -- +(8pt, 0);
	  \end{tikzpicture}
	};
	\node at ($(P3.east)!.5!(P4.west)$) {$\vdash$};
      \end{tikzpicture}
      \end{aligned}
      $.
    \item $\theta$ is total and deterministic in the sense of \cref{def.tot_det}.
\end{enumerate}
\end{theorem}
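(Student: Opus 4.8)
The plan is to prove the three conditions equivalent via the cycle $(ii)\Rightarrow(i)\Rightarrow(iii)\Rightarrow(ii)$. Throughout I would lean on three tools: the transpose $(-)\tp$ of \cref{def.transpose}, recalling from \cref{lem.transpose_rotate} that it is an order-preserving involution which reverses composition and fixes the identities $\id_\varphi$; the discarding inequality of \cref{lem.dotting_off}; and the basic diagram rules of \cref{prop.diagrams_basic,prop.diagrams_meet}. The point is that the equivalence is not a formal consequence of adjunction calculus alone — such manipulations turn out circular here — but genuinely uses the copy/discard (adjoint-monoid) structure carried by every object of $\frb$.

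The implication $(ii)\Rightarrow(i)$ is immediate. If $\theta\dashv\theta\tp$, then $\theta\tp$ is a right adjoint to $\theta$ in $\rela{\pr}$; one need only check that $\theta\tp\in\rela{\pr}(\varphi_2,\varphi_1)$, which holds because $\lsh{(\pi_1)}(\theta\tp)=\lsh{(\pi_2)}\theta\vdash\varphi_2$ and symmetrically, so the unit and counit of $\theta\dashv\theta\tp$ are exactly the data required by \cref{def.RT} with $\xi\coloneqq\theta\tp$.

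For $(i)\Rightarrow(iii)$, suppose $\theta$ has a right adjoint $\xi$, witnessed by a unit $\id_{\varphi_1}\vdash\theta\cp\xi$ and counit $\xi\cp\theta\vdash\id_{\varphi_2}$. Totality I would obtain by a sandwich argument: discarding the right-hand leg throughout the unit and using that $\xi$ is a relation (so its far leg discards below $\varphi_2$, by \cref{lem.dotting_off}) gives $\varphi_1\vdash\lsh{(\pi_1)}\theta$; since the reverse inequality always holds, this is the required equality. Determinism is the formal dual, copying rather than discarding: feeding the counit between two copies of $\theta$ joined along a copy dot on the domain, and using the merge structure on the codomain object (the adjoint monoid of \cref{prop.adjmon_reg}, whose image under $\pr$ supplies the meet and diagonal as in \cref{eq.def_true_meet}), yields $\theta\tp\cp\theta\vdash\id_{\varphi_2}$.

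Finally, $(iii)\Rightarrow(ii)$ reinterprets the object-level conditions of \cref{def.tot_det} as the relation-level (co)unit of $\theta\dashv\theta\tp$. Totality, which asserts that the domain of $\theta$ exhausts $\varphi_1$, becomes the unit $\id_{\varphi_1}\vdash\theta\cp\theta\tp$ after bending the free wire into a cup via the object's adjoint-monoid structure and applying ``meets are merges'' (\cref{prop.diagrams_meet}); determinism becomes the counit $\theta\tp\cp\theta\vdash\id_{\varphi_2}$ in the same fashion, with \cref{lemma.cp_unital} identifying the resulting composites. I expect the main obstacle to be exactly this bookkeeping — the wire-bending in $(iii)\Rightarrow(ii)$ together with the dual copy computation for determinism in $(i)\Rightarrow(iii)$ — since both hinge on the interaction between relation composition (\cref{thm.internal_relations}) and the copy/discard maps, rather than on the purely formal adjunction moves that are insufficient on their own.
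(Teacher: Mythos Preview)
Your cycle breaks at $(i)\Rightarrow(iii)$, specifically the determinism half. You write that the argument ``yields $\theta\tp\cp\theta\vdash\id_{\varphi_2}$'', but that is the counit of condition~(ii), not determinism as defined in \cref{def.tot_det}. These two statements are equivalent, but only once the theorem is proved; conflating them here is circular. More concretely, the diagram you start from---two copies of $\theta$ joined at the domain---contains no $\xi$, so there is nowhere to ``feed the counit'' $\xi\cp\theta\vdash\id_{\varphi_2}$. The paper's derivation of determinism from a counit (in its $(ii)\Leftrightarrow(iii)$ block) works precisely because the right adjoint there is already known to be $\theta\tp$, so that after a meets-then-break move the middle pair of $\theta$'s is literally $\theta\tp\cp\theta$. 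With an arbitrary $\xi$ you do not get this for free.

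The missing idea is the paper's direct argument for $(i)\Rightarrow(ii)$: from the unit one first extracts totality (as you do), and then totality together with the $\xi$-counit gives $\xi\vdash\theta\tp$ via
\[
\xi \;=\; \xi\text{ (with a discarded }\theta\text{ attached by totality)} \;=\; \xi\text{ (with two }\theta\text{'s, by meets)} \;\vdash\; \xi\cp\theta\cp\theta\tp \;\vdash\; \theta\tp,
\]
and a symmetric argument gives $\theta\tp\vdash\xi$. Once $\xi=\theta\tp$, your $(iii)\Rightarrow(ii)$ and the paper's $(ii)\Rightarrow(iii)$ close the loop. In short, the step you tried to skip---identifying the right adjoint as the transpose---is exactly what makes the determinism argument go through; the paper's decomposition $(i)\Leftrightarrow(ii)$ then $(ii)\Leftrightarrow(iii)$ isolates this cleanly, and even proves the sharper fact that unit $\Leftrightarrow$ total and counit $\Leftrightarrow$ deterministic independently.
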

\begin{proof}
  (i) $\iff$ (ii): Clearly (ii) $\imp$ (i). Conversely,
  assume $\theta$ has a right adjoint $\xi$. Note that the unit axiom implies 
  $
      \begin{aligned}
	\begin{tikzpicture}[unoriented WD, pack size=6pt, font=\small]
	  \node (P1) {
	\begin{tikzpicture}[inner WD]
	  \node[pack] (xi) {$\varphi_1$};
	  \draw (xi.west) to[pos=1] node[left, font=\tiny] {$\tau_1$} +(-3pt, 0);
	\end{tikzpicture}
	  };
	  \node[right=2 of P1] (P2) {
	    \begin{tikzpicture}[inner WD]
	      \node[pack] (theta) {$\theta$};
	      \node[pack, right=.75 of theta] (xi) {$\xi$};
	      \node[link, right=.5 of xi] (dot) {};
	      \draw (theta) -- +(-10pt, 0);
	      \draw (theta) to (xi);
	      \draw (xi) -- (dot);
	    \end{tikzpicture}	
	  };
	  \node[right=2 of P2] (P3) {
	    \begin{tikzpicture}[inner WD]
	      \node[pack] (theta) {$\theta$};
	      \node[link, right=.5 of theta] (dot) {};
	      \draw (theta) -- +(-10pt, 0);
	      \draw (theta) to (dot);
	    \end{tikzpicture}	
	  };
	  \node at ($(P1.east)!.5!(P2.west)$) {$\vdash$};
	  \node at ($(P2.east)!.5!(P3.west)$) {$\vdash$};
	\end{tikzpicture}
      \end{aligned}
      $.
  Then using meets and breaking we have 
  \[
    \begin{tikzpicture}[unoriented WD, pack size=6pt]
      \node (P1) {
	\begin{tikzpicture}[inner WD]
	  \node[pack, ellipse] (xi) {$\xi$};
	  \draw (xi.west) to[pos=1] node[left, font=\tiny] {$\Gamma_2$} +(-3pt, 0);
	  \draw (xi.east) to[pos=1] node[right, font=\tiny] {$\Gamma_1$} +(3pt, 0);
	\end{tikzpicture}
      };
      \node[right=2.5 of P1] (P2) {
	\begin{tikzpicture}[inner WD]
	  \node[pack] (xi) {$\xi$};
	  \node[link, right=6pt of xi] (dot) {};
	  \node[pack, above=3pt of dot] (theta) {$\theta$};
	  \node[link, above=3pt of theta] (dot2) {};
	  \draw (xi.west) -- +(-5pt, 0);
	  \draw (xi.east) -- (dot);
	  \draw (theta.south) -- (dot);
	  \draw (theta.north) -- (dot2);
	  \draw (dot) to +(10pt, 0);
	\end{tikzpicture}			
      };
      \node[right=2.5 of P2] (P3) {
	\begin{tikzpicture}[inner WD]
	  \node[pack] (xi) {$\xi$};
	  \node[link, right=12pt of xi] (dot) {};
	  \node[pack, above left=6pt and 2pt of dot] (theta) {$\theta$};
	  \node[pack, above right=6pt and 2pt of dot] (theta2) {$\theta$};
	  \draw (xi.west) -- +(-5pt, 0);
	  \draw (xi.east) -- (dot);
	  \draw (theta.south) -- (dot);
	  \draw (theta2.south) -- (dot);
	  \draw (theta.north) to[bend left=50pt] (theta2.north);
	  \draw (dot) to +(20pt, 0);
	\end{tikzpicture}			
      };
      \node[right=2.5 of P3] (P4) {
	\begin{tikzpicture}[inner WD]
	  \node[pack] (xi) {$\xi$};
	  \node[pack, right=6pt of xi] (theta) {$\theta$};
	  \node[pack, right=6pt of theta] (theta2) {$\theta$};
	  \draw (xi.west) -- +(-5pt, 0);
	  \draw (xi.east) -- (theta);
	  \draw (theta) -- (theta2);
	  \draw (theta2.east) -- +(5pt, 0);
	\end{tikzpicture}			
      };
      \node[right=2.5 of P4] (P5) {
	\begin{tikzpicture}[inner WD]
	  \node[pack, ellipse] (xi) {$\theta\tp$};
	  \draw (xi.west) to[pos=1] node[left, font=\tiny] {$\Gamma_2$} +(-3pt, 0);
	  \draw (xi.east) to[pos=1] node[right, font=\tiny] {$\Gamma_1$} +(3pt, 0);
	\end{tikzpicture}
      };
      \node at ($(P1.east)!.5!(P2.west)$) {$=$};
      \node at ($(P2.east)!.5!(P3.west)$) {$=$};
      \node at ($(P3.east)!.5!(P4.west)$) {$\vdash$};
      \node at ($(P4.east)!.5!(P5.west)$) {$\vdash$};
  \pgfresetboundingbox
	\useasboundingbox (P1.west|-P2.170) rectangle (P5.east|-P2.-10);
    \end{tikzpicture}
  \]
  Similarly we can show $\theta \vdash \xi\tp$, and hence $\xi = \theta\tp$.

(ii) $\iff$ (iii): We shall prove a stronger statement,
  that $\theta$ has a unit if and only if it is total, and that it has a counit if and only
  if it is deterministic.

  First, (ii)-units iff (iii)-totalness. Using the unit of the adjunction we have
  \[
    \begin{tikzpicture}[unoriented WD, pack size=6pt]
      \node (P1) {
	\begin{tikzpicture}[inner WD]
	  \node[pack] (xi) {$\varphi_1$};
	  \draw (xi.west) to[pos=1] node[left, font=\tiny] {$\Gamma_1$} +(-3pt, 0);
	\end{tikzpicture}
      };
      \node[right=3 of P1] (P2) {
	\begin{tikzpicture}[inner WD]
	  \node[pack] (theta) {$\theta$};
	  \node[pack,below=.5 of theta] (theta2) {$\theta$};
	  \node[link, left=1.5 of $(theta)!.5!(theta2)$] (dotw) {};
	  \draw (dotw) -- +(-5pt, 0);
	  \draw (theta.west) -- (dotw);
	  \draw (theta2.west) -- (dotw);
	  \draw (theta.east) to[bend left=50pt] (theta2.east);
	\end{tikzpicture}	
      };
      \node[right=3 of P2] (P3) {
	\begin{tikzpicture}[inner WD]
	  \node[pack] (theta) {$\theta$};
	  \node[link, right=.5 of theta] (dot) {};
	  \draw (theta) -- +(-10pt,0);
	  \draw (theta) -- (dot);
	\end{tikzpicture}	
      };
      \node at ($(P1.east)!.5!(P2.west)$) {$\vdash$};
      \node at ($(P2.east)!.5!(P3.west)$) {$=$};
  \pgfresetboundingbox
	\useasboundingbox (P1.west|-P2.170) rectangle (P3.east|-P2.-10);
    \end{tikzpicture}
  \]
  Conversely, using totalness, meets, and breaking we have
  \[
    \begin{tikzpicture}[unoriented WD, font=\small, pack size=6pt]
      \node (P1) {
	\begin{tikzpicture}[inner WD, surround sep=4pt]
	  \node[pack] (phi) {$\varphi_1$};
	  \node[link, below=3pt of phi] (dot) {};
	  \draw (phi) -- (dot);
	  \draw (dot) -- +(-10pt, 0);
	  \draw (dot) -- +(10pt, 0);
	\end{tikzpicture}	
      };
      \node[right=3 of P1] (P2) {
	\begin{tikzpicture}[inner WD, surround sep=4pt]
	  \node[pack] (theta) {$\theta$};
	  \node[link, below=3pt of theta] (dot) {};
	  \node[link, above=3pt of theta] (dot2) {};
	  \draw (theta) -- (dot2);
	  \draw (theta) -- (dot);
	  \draw (dot) -- +(-10pt, 0);
	  \draw (dot) -- +(10pt, 0);
	\end{tikzpicture}	
      };
      \node[right=3 of P2] (P3) {
	\begin{tikzpicture}[inner WD]
	  \node[link] (dot) {};
	  \node[pack, above left=6pt and 2pt of dot] (theta) {$\theta$};
	  \node[pack, above right=6pt and 2pt of dot] (theta2) {$\theta$};
	  \draw (dot) -- +(-10pt, 0);
	  \draw (theta.south) -- (dot);
	  \draw (theta2.south) -- (dot);
	  \draw (theta.north) to[bend left=50pt] (theta2.north);
	  \draw (dot) to +(10pt, 0);
	\end{tikzpicture}			
      };
      \node[right=3 of P3] (P4) {
	\begin{tikzpicture}[inner WD, surround sep=4pt]
	  \node[pack] (theta) {$\theta$};
	  \node[pack, right=2 of theta] (theta') {$\theta$};
	  \draw (theta.west) -- +(-10pt, 0);
	  \draw (theta'.east) -- +(10pt, 0);
	  \draw (theta) to node[above] {$\Gamma_2$} (theta');
	\end{tikzpicture}	
      };
      \node at ($(P1.east)!.5!(P2.west)$) {$=$};
      \node at ($(P2.east)!.5!(P3.west)$) {$=$};
      \node at ($(P3.east)!.5!(P4.west)$) {$\vdash$};
  \pgfresetboundingbox
	\useasboundingbox (P1.west|-P2.170) rectangle (P4.east|-P2.-10);
    \end{tikzpicture}
  \]

  Next, (ii)-counits iff (iii)-determinism. We can use the counit of the adjunction to give
  \[
    \begin{tikzpicture}[unoriented WD, pack size=6pt]
      \node (P1) {
	\begin{tikzpicture}[inner WD]
	  \node[pack] (theta) {$\theta$};
	  \node[pack,below=.5 of theta] (theta2) {$\theta$};
	  \node[link, left=1.5 of $(theta)!.5!(theta2)$] (dotw) {};
	  \draw (dotw) -- +(-5pt, 0);
	  \draw (theta.west) -- (dotw);
	  \draw (theta2.west) -- (dotw);
	  \draw (theta.east) -- +(5pt, 0);
	  \draw (theta2.east) -- +(5pt, 0);
	\end{tikzpicture}	
      };
      \node[right=3 of P1] (P2) {
	\begin{tikzpicture}[inner WD]
	  \node[pack] (theta) {$\theta$};
	  \node[pack,below=.5 of theta] (theta2) {$\theta$};
	  \node[pack,below=.5 of theta2] (theta3) {$\theta$};
	  \node[pack,below=.5 of theta3] (theta4) {$\theta$};
	  \node[link, left=1.5 of $(theta2)!.5!(theta3)$] (dotw) {};
	  \node[link, right=1.5 of $(theta)!.5!(theta2)$] (doteu) {};
	  \node[link, right=1.5 of $(theta3)!.5!(theta4)$] (doted) {};
	  \draw (dotw) -- +(-5pt, 0);
	  \draw (theta.west) -- (dotw);
	  \draw (theta2.west) -- (dotw);
	  \draw (theta3.west) -- (dotw);
	  \draw (theta4.west) -- (dotw);
	  \draw (theta.east) -- (doteu);
	  \draw (theta2.east) -- (doteu);
	  \draw (theta3.east) -- (doted);
	  \draw (theta4.east) -- (doted);
	  \draw (doteu) -- +(5pt, 0);
	  \draw (doted) -- +(5pt, 0);
	\end{tikzpicture}	
      };
      \node[right=3 of P2] (P3) {
	\begin{tikzpicture}[inner WD]
	  \node[pack] (theta) {$\theta$};
	  \node[pack,below=.5 of theta] (theta2) {$\theta$};
	  \node[pack,below=.5 of theta2] (theta3) {$\theta$};
	  \node[pack,below=.5 of theta3] (theta4) {$\theta$};
	  \node[link, left=2 of $(theta2)!.5!(theta3)$] (dotww) {};
	  \node[link, right=1.5 of $(theta)!.5!(theta2)$] (doteu) {};
	  \node[link, right=1.5 of $(theta3)!.5!(theta4)$] (doted) {};
	  \draw (dotww) -- +(-5pt, 0);
	  \draw (theta.west) -- (dotww);
	  \draw (theta3.west) to[bend left=50pt] (theta2.west);
	  \draw (theta4.west) -- (dotww);
	  \draw (theta.east) -- (doteu);
	  \draw (theta2.east) -- (doteu);
	  \draw (theta3.east) -- (doted);
	  \draw (theta4.east) -- (doted);
	  \draw (doteu) -- +(5pt, 0);
	  \draw (doted) -- +(5pt, 0);
	\end{tikzpicture}	
      };
      \node[right=3 of P3] (P4) {
	\begin{tikzpicture}[inner WD]
	  \node[pack] (theta) {$\theta$};
	  \node[pack,below=.5 of theta] (theta2) {$\theta$};
	  \node[link, right=1.5 of $(theta)!.5!(theta2)$] (dot) {};
	  \node[link, left=1.5 of $(theta)!.5!(theta2)$] (dotw) {};
	  \draw (dotw) -- +(-5pt, 0);
	  \draw (theta.west) -- (dotw);
	  \draw (theta2.west) -- (dotw);
	  \draw (theta.east) -- (dot);
	  \draw (theta2.east) -- (dot);
	  \draw (dot) -- +(45:8pt);
	  \draw (dot) -- +(-45:8pt);
	\end{tikzpicture}	
      };
      \node[right=3 of P4] (P5) {
	\begin{tikzpicture}[inner WD]
	  \node[pack] (theta) {$\theta$};
	  \node[link, right=.5 of theta] (dot) {};
	  \draw (theta.west) -- +(-5pt, 0);
	  \draw (theta) -- (dot);
	  \draw (dot) -- +(45:8pt);
	  \draw (dot) -- +(-45:8pt);
	\end{tikzpicture}	
      };
      \node at ($(P1.east)!.5!(P2.west)$) {$=$};
      \node at ($(P2.east)!.5!(P3.west)$) {$\vdash$};
      \node at ($(P3.east)!.5!(P4.west)$) {$\vdash$};
      \node at ($(P4.east)!.5!(P5.west)$) {$=$};
  \pgfresetboundingbox
	\useasboundingbox (P1.west|-P2.130) rectangle (P5.east|-P2.-40);
    \end{tikzpicture}
  \]
  Conversely, assuming determinism we get the counit, which concludes the proof:
  \[
    \begin{tikzpicture}[unoriented WD, font=\small, pack size=6pt]
      \node (P1) {
	\begin{tikzpicture}[inner WD, surround sep=4pt]
	  \node[pack] (theta') {$\theta$};
	  \node[pack, right=2 of theta'] (theta) {$\theta$};
	  \draw (theta.east) -- +(5pt, 0);
	  \draw (theta'.west) -- +(-5pt, 0);
	  \draw (theta)  to node[above] {$\Gamma_1$} (theta');
	\end{tikzpicture}	
      };
      \node[right=3 of P1] (P2) {
	\begin{tikzpicture}[inner WD, surround sep=4pt]
	  \node[pack] (theta) {$\theta$};
	  \node[link, below=3pt of theta] (dot) {};
	  \node[link, above=3pt of theta] (dot2) {};
	  \draw (theta) -- (dot2);
	  \draw (theta) -- (dot);
	  \draw (dot) -- +(-10pt, 0);
	  \draw (dot) -- +(10pt, 0);
	\end{tikzpicture}	
      };
      \node[right=3 of P2] (P3) {
	\begin{tikzpicture}[inner WD, surround sep=4pt]
	  \node[pack] (phi) {$\varphi_2$};
	  \node[link, below=3pt of phi] (dot) {};
	  \draw (phi) -- (dot);
	  \draw (dot) -- +(-10pt, 0);
	  \draw (dot) -- +(10pt, 0);
	\end{tikzpicture}
      };
      \node at ($(P1.east)!.5!(P2.west)$) {$=$};
      \node at ($(P2.east)!.5!(P3.west)$) {$\vdash$};
  \pgfresetboundingbox
	\useasboundingbox (P1.west|-P2.160) rectangle (P3.east|-P2.-20);
    \end{tikzpicture}
    \qedhere
  \]
\end{proof}

Next, we describe how the order on relations restricts to the functions. 

\begin{proposition} \label{cor.functions_discrete}
  The order on functions is discrete.
\end{proposition}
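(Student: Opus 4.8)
The plan is to show that the partial order on $\func{\pr}$, inherited from the relation posets, admits no strict comparabilities: if $\theta,\theta'\in\func{\pr}$ are parallel internal functions, say both lying in $\rela{\pr}(\varphi_1,\varphi_2)\ss\pr(\Gamma_1\tens\Gamma_2)$, and $\theta\vdash\theta'$, then in fact $\theta'\vdash\theta$, whence $\theta=\theta'$ by antisymmetry of the poset $\pr(\Gamma_1\tens\Gamma_2)$. That is exactly the assertion of discreteness.

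The key input is \cref{thm.characterize_functions}: each function is a left adjoint whose right adjoint is its own transpose, so by totalness and determinism we have the unit $\id_{\varphi_1}\vdash\theta\cp\theta\tp$ and the counit $\theta\tp\cp\theta\vdash\id_{\varphi_2}$, and likewise for $\theta'$. I would also record one auxiliary monotonicity fact: the transpose $(-)\tp=\lsh{\sigma}$ of \cref{def.transpose} is order-preserving --- indeed an isomorphism of posets, since the braiding $\sigma$ is an isomorphism in $\frc$ --- so the hypothesis $\theta\vdash\theta'$ yields $\theta\tp\vdash\theta'\tp$.

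With these in hand the proof is a single entailment chain, using unitality of $\cp$ (\cref{lemma.cp_unital}), monotonicity of composition, the unit of $\theta$, monotonicity of the transpose, and the counit of $\theta'$:
\[
\theta' = \id_{\varphi_1}\cp\theta'
\vdash (\theta\cp\theta\tp)\cp\theta'
= \theta\cp(\theta\tp\cp\theta')
\vdash \theta\cp(\theta'\tp\cp\theta')
\vdash \theta\cp\id_{\varphi_2}
= \theta.
\]
Together with the assumed $\theta\vdash\theta'$, antisymmetry gives $\theta=\theta'$.

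Every step here is routine; the only thing to watch is getting the variances right, in particular that the transpose is order-\emph{preserving} so that the inequality $\theta\vdash\theta'$ survives transposition in precisely the direction needed for it to be absorbed by the counit of $\theta'$. An equivalent, more conceptual route avoids the explicit chain: the passage from a left adjoint to its right adjoint is order-\emph{reversing} (the standard mate calculation produces $\theta'\tp\vdash\theta\tp$ from $\theta\vdash\theta'$), while transposition is order-preserving ($\theta\tp\vdash\theta'\tp$); these two inequalities force $\theta\tp=\theta'\tp$, and applying the transpose once more --- or invoking uniqueness of right adjoints in a locally posetal setting --- yields $\theta=\theta'$.
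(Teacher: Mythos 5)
Your proof is correct and is essentially identical to the paper's: the paper establishes $\theta'\vdash\theta\cp\theta\tp\cp\theta'\vdash\theta\cp\theta'^\dagger\cp\theta'\vdash\theta$ using exactly the unit of $\theta$, the order-preservation of the transpose applied to $\theta\vdash\theta'$, and the counit of $\theta'$, then concludes by antisymmetry. Your explicit attention to the variance of $(-)\tp$ is a welcome clarification of a step the paper leaves implicit in its diagrams.
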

\begin{proof}
  Suppose 
  $
    \begin{aligned}
    \begin{tikzpicture}[unoriented WD]
      \node (P1) {
	\begin{tikzpicture}[inner WD]
	  \node[funcr] (phi) {$\theta$};
	  \draw (phi.west) to +(-2pt, 0);
	  \draw (phi.east) to +(2pt, 0);
	\end{tikzpicture}	
      };
      \node[right=2 of P1] (P2) {
	\begin{tikzpicture}[inner WD]
	  \node[funcr] (phi) {$\theta'$};
	  \draw (phi.west) to +(-2pt, 0);
	  \draw (phi.east) to +(2pt, 0);
	\end{tikzpicture}	
    };
      \node at ($(P1.east)!.5!(P2.west)$) {$\vdash$};
    \end{tikzpicture}
  \end{aligned}
  $
  . Then using the unit of $\theta$ and counit of $\theta'$ we have
  \[
    \begin{aligned}
    \begin{tikzpicture}[unoriented WD]
      \node (P1) {
	\begin{tikzpicture}[inner WD]
	  \node[funcr] (phi) {$\theta'$};
	  \draw (phi.west) to +(-2pt, 0);
	  \draw (phi.east) to +(2pt, 0);
	\end{tikzpicture}	
      };
      \node[right=2 of P1] (P2) {
	\begin{tikzpicture}[inner WD]
	  \node[funcr] (t3) {$\theta'$};
	  \node[funcl, left=1 of t3] (t2) {$\theta$};
	  \node[funcr,left=1 of t2] (t1) {$\theta$};
	  \draw (t1.west) to +(-2pt, 0);
	  \draw (t1.east) to (t2.west);
	  \draw (t2.east) to (t3.west);
	  \draw (t3.east) to +(2pt, 0);
	\end{tikzpicture}	
    };
      \node[right=2 of P2] (P3) {
	\begin{tikzpicture}[inner WD]
	  \node[funcr] (t3) {$\theta'$};
	  \node[funcl, left=1 of t3] (t2) {$\theta'$};
	  \node[funcr, left=1 of t2] (t1) {$\theta$};
	  \draw (t1.west) to +(-2pt, 0);
	  \draw (t1.east) to (t2.west);
	  \draw (t2.east) to (t3.west);
	  \draw (t3.east) to +(2pt, 0);
	\end{tikzpicture}	
    };
      \node[right=2 of P3] (P4) {
	\begin{tikzpicture}[inner WD]
	  \node[funcr] (phi) {$\theta$};
	  \draw (phi.west) to +(-2pt, 0);
	  \draw (phi.east) to +(2pt, 0);
	\end{tikzpicture}	
    };
      \node at ($(P1.east)!.5!(P2.west)$) {$\vdash$};
      \node at ($(P2.east)!.5!(P3.west)$) {$\vdash$};
      \node at ($(P3.east)!.5!(P4.west)$) {$\vdash$};
    \end{tikzpicture}
  \end{aligned}
  .\]
\end{proof}

Finally, we note that bijections and projections are examples of functions.
\begin{example}
  A \define{$\pr$-internal bijection} is an invertible $\pr$-internal
  relation. Note that every bijection is a function. We can also characterise
  bijections as the adjunctions whose unit and counit are the identity.
\end{example}

\begin{proposition}\label{prop.projections}
Suppose given $\varphi_1\in \pr(\Gamma_1)$ and $\varphi_2\in \pr(\Gamma_2)$ and a relation
$\theta\in\rela{\pr}(\varphi_1,\varphi_2)\ss \pr(\Gamma_1\tens \Gamma_2)$. Define
\[
	\pi_1\coloneqq\lsh{(\delta_{\Gamma_1}\tens \Gamma_2)}(\theta)
	\qand
	\pi_2\coloneqq\lsh{(\Gamma_2 \tens \delta_{\Gamma_2})}(\theta).\]
	Then $\pi_i\in \pr(\Gamma_1\tens \Gamma_i\tens \Gamma_2)$ are
internal functions for $i=1,2$, i.e.\ $\pi_i\in\func{\pr}(\theta,\varphi_i)$
\end{proposition}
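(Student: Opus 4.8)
The plan is to show, for each $i$, both that $\pi_i$ is a $\pr$-internal relation from the object $(\Gamma_1\tens\Gamma_2,\theta)$ to $(\Gamma_i,\varphi_i)$ and that it is an internal function, invoking the characterization of functions as total, deterministic relations in \cref{thm.characterize_functions}(iii). By the braiding symmetry exchanging the two factors (as used in \cref{prop.diagrams_basic}) it suffices to treat $\pi_1=\lsh{(\delta_{\Gamma_1}\tens\Gamma_2)}(\theta)$; the argument for $\pi_2$ is identical after swapping the roles of $\Gamma_1$ and $\Gamma_2$. Throughout I will use that $\lsh{(-)}$ is functorial, together with the comonoid counit law $\delta_{\Gamma_1}\cp(\Gamma_1\tens\epsilon_{\Gamma_1})=\id_{\Gamma_1}$ and the coassociativity and cocommutativity of the canonical comonoid structure carried by $\Gamma_1$ in the cartesian category $\frc$.

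First I would verify the two defining entailments of \cref{def.internal_relations}. Writing $g\coloneqq\delta_{\Gamma_1}\tens\Gamma_2$, the projection $p_{\mathrm{dom}}$ onto the domain factor $\Gamma_1\tens\Gamma_2$ satisfies $g\cp p_{\mathrm{dom}}=\id$ by the counit law, so by functoriality $\lsh{(p_{\mathrm{dom}})}(\pi_1)=\lsh{(g\cp p_{\mathrm{dom}})}(\theta)=\theta\vdash\theta$. The projection $p_{\mathrm{cod}}$ onto the codomain factor $\Gamma_1$ satisfies $g\cp p_{\mathrm{cod}}=q$, where $q\colon\Gamma_1\tens\Gamma_2\to\Gamma_1$ is the product projection in $\frc$, so $\lsh{(p_{\mathrm{cod}})}(\pi_1)=\lsh{q}(\theta)\vdash\varphi_1$, the last entailment being exactly the hypothesis $\theta\in\rela{\pr}(\varphi_1,\varphi_2)$. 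This proves $\pi_1\in\rela{\pr}(\theta,\varphi_1)$. Totality is then the same computation read in the opposite direction: the domain object of $\pi_1$ is $\theta$ itself, and dotting off the codomain leg recovers $\theta$ via the counit law, so the totality entailment of \cref{def.tot_det} reduces to $\theta\vdash\theta$ (in fact an equality, consistent with \cref{lem.dotting_off}).

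The main obstacle is determinism. Here I would argue graphically: a copy of $\pi_1$ is the box $\theta$ whose $\Gamma_1$-leg has been forked by $\delta_{\Gamma_1}$ into a domain wire and a codomain wire. In the left-hand side of the determinism entailment of \cref{def.tot_det}, two such copies share their domain ports; because the shared $\Gamma_1$ domain wire meets both forks, all of the $\Gamma_1$-wires—the two inputs to the $\theta$-boxes and the two codomain outputs—lie in a single connected component, so the two $\theta$-boxes receive identical inputs. Merging them by idempotence of meet (\cref{prop.diagrams_meet}(ii), using $\theta\wedge\theta=\theta$) collapses the picture to a single $\theta$-box whose $\Gamma_1$-leg is forked into the domain wire and two equal codomain copies, which is precisely $\pi_1$ with its codomain leg forked—the right-hand side. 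Equivalently, this step is the coassociativity and cocommutativity of the $\Gamma_1$-comonoid. With totality and determinism established, \cref{thm.characterize_functions} yields that $\pi_1$ is an internal function $\theta\to\varphi_1$, i.e.\ $\pi_1\in\func{\pr}(\theta,\varphi_1)$, completing the proof.
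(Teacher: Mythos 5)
Your proof is correct, and it reaches the conclusion by a slightly different route than the paper. Where you invoke \cref{thm.characterize_functions}(iii) and check totality and determinism, the paper checks membership in $\rela{\pr}(\theta,\varphi_1)$ via \cref{prop.characterize_relation} and then exhibits the unit and counit of the adjunction $\pi_1\dashv\pi_1\tp$ directly; since the paper's own proof of \cref{thm.characterize_functions} identifies units with totality and counits with determinism, the two packagings are interchangeable, and your determinism-via-meets-merge step is exactly the paper's counit computation in disguise. Your treatment of the relation condition is the genuinely distinct (and arguably cleaner) part: instead of a graphical manipulation you compute algebraically with the functoriality of $\lsh{(-)}$, using $(\delta_{\Gamma_1}\tens\Gamma_2)\cp p_{\mathrm{dom}}=\id$ and $(\delta_{\Gamma_1}\tens\Gamma_2)\cp p_{\mathrm{cod}}=q$ to reduce the two entailments of \cref{def.internal_relations} to $\theta\vdash\theta$ and to the hypothesis $\lsh{q}(\theta)\vdash\varphi_1$; this also hands you totality for free as the reverse inequality of an equality, which the paper obtains less directly. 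The only points worth making explicit in a final write-up are (a) the justification that ``dotting off the codomain leg'' of $\pi_1$ is computed by $\lsh{(p_{\mathrm{dom}})}$, so that the totality entailment of \cref{def.tot_det} really is the same computation read backwards, and (b) that merging the two $\theta$-boxes in the determinism step is licensed by nesting (\cref{prop.diagrams_basic}(iii)) together with \cref{prop.diagrams_meet}(ii) and $\theta\wedge\theta=\theta$, after observing that all corresponding ports of the two boxes lie in common connected components; both are routine, and your connectivity argument for (b) is exactly right.
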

\begin{proof}
  We prove $\pi_1$ is a function; the argument for $\pi_2$ is similar. Note that
  $\pi_1$ is depicted by the graphical term
\[
\begin{tikzpicture}[inner WD, surround sep=4pt, pack size=6pt, baseline=(dot.north)]
	\node[pack] (phi) {$\theta$};
	\node[link, below=8pt of phi.290] (dot) {};
	\coordinate (g1) at ($(dot)+(-10pt,0)$);
	\draw (phi.290) -- (dot);
	\draw (dot) to[pos=1] node[left] {$\Gamma_1$} (g1);
	\draw (dot) to[pos=1] node[right] {$\Gamma_1$} +(10pt, 0);
	\draw (phi.250) to[out=270, in=0, looseness=1, pos=1] node[left] {$\Gamma_2$} ($(g1)+(0,7pt)$);
\end{tikzpicture}
\]
By \cref{lem.meets_merge} and the fact that $\theta \in \rela{\pr}(\varphi_1,\varphi_2)$ we have 
\[
  \begin{tikzpicture}[unoriented WD, font=\small, pack size=6pt]
    \node (P1) {
      \begin{tikzpicture}[inner WD, surround sep=4pt]
	\node[pack] (phi) {$\theta$};
	\node[link, below=9pt of phi.290] (dot) {};
	\coordinate (g1) at ($(dot)+(-10pt,0)$);
	\draw (phi.290) -- (dot);
	\draw (dot) to[pos=1] node[left] {$\Gamma_1$} (g1);
	\draw (dot) to[pos=1] node[right] {$\Gamma_1$} +(10pt, 0);
	\draw (phi.250) to[out=270, in=0, looseness=1, pos=1] node[left] {$\Gamma_2$} ($(g1)+(0,7pt)$);
  \end{tikzpicture}
    };
    \node[right=3 of P1] (P2) {
      \begin{tikzpicture}[inner WD, surround sep=4pt]
	\node[pack] (theta) {$\theta$};
	\node[link, below=9pt of theta.290] (dot) {};
	\node[pack, left=10pt of theta] (thetaA) {$\theta$};
	\node[link, below=9pt of thetaA.290] (dotA) {};
	\coordinate (g1) at ($(dotA)+(-20pt,0)$);
	\node[link, inner sep=0, below=4.5pt of thetaA.250] (dot2A) {};
	\node[pack, right=10pt of theta] (phi) {$\varphi_1$};
	\node[link] at (dot-|phi) (dotphi) {};
	\draw (theta.250) to[out=270, in=0, looseness=.8] (dot2A);
	\draw (theta.290) -- (dot);
	\draw (thetaA.250) -- (dot2A);
	\draw (dot2A) to[pos=1] (g1|-dot2A);
	\draw (thetaA.290) -- (dotA);
	\draw (dotA) to[pos=1] (g1);
	\draw (dotA) -- (dot);
	\draw (phi) -- (dotphi);
	\draw (dot) -- (dotphi);
	\draw (dotphi) -- +(20pt,0);
      \end{tikzpicture}
    };
	\node at ($(P1.east)!.5!(P2.west)$) {$=$};
  \pgfresetboundingbox
	\useasboundingbox (P1.west|-P1.160) rectangle (P2.east|-P2.-10);
  \end{tikzpicture}
\]
and hence by \cref{prop.characterize_relation}, $\pi_1 \in
\rela{\pr}(\theta,\varphi_1)$.

Proving that $\pi_1$ is an adjunction in $\rela{\pr}(\theta,\varphi_1)$ again uses
\cref{lem.meets_merge} and that $\theta \in \rela{\pr}(\varphi_1,\varphi_2)$, as
well as \cref{lem.breaking}:
\[
  \begin{tikzpicture}[unoriented WD, font=\small, pack size=6pt, baseline=(P1)]
    \node (P1) {
      \begin{tikzpicture}[inner WD, surround sep=4pt]
	\node[pack] (phi) {$\theta$};
	\node[link, below=11pt of phi.290] (dot) {};
	\node[link, below=3pt of phi.250] (dot2) {};
	\draw (phi.250) -- (dot2);
	\draw (phi.290) -- (dot);
	\draw (dot2) to[pos=1] node[left] {$\Gamma_2$} +(-8pt, 0);
	\draw (dot2) to[pos=1] node[right] {$\Gamma_2$} +(12pt, 0);
	\draw (dot) to[pos=1] node[left] {$\Gamma_1$} +(-10pt, 0);
	\draw (dot) to[pos=1] node[right] {$\Gamma_1$} +(10pt, 0);
      \end{tikzpicture}
    };
    \node[right=2 of P1] (P2) {
      \begin{tikzpicture}[inner WD, surround sep=4pt]
	\node[pack] (theta) {$\theta$};
	\node[link, below=9pt of theta.250] (dot) {};
	\node[link, below=4.5pt of theta.290] (dot2) {};
	\node[pack, left=8pt of theta] (thetaA) {$\theta$};
	\node[link, below=9pt of thetaA.290] (dotA) {};
	\node[link, below=4.5pt of thetaA.250] (dot2A) {};
	\draw (theta.290) -- (dot2);
	\draw (theta.250) -- (dot);
	\draw (thetaA.250) -- (dot2A);
	\draw (thetaA.290) -- (dotA);
	\draw (dot2A) -- +(-8pt, 0);
	\draw (dotA) -- +(-10pt, 0);
	\draw (dotA) -- (dot);
	\draw (dot2A) -- (dot2);
	\draw (dot2) -- +(10pt, 0);
	\draw (dot) -- +(12pt, 0);
      \end{tikzpicture}
    };
    \node[right=2 of P2] (P3) {
      \begin{tikzpicture}[inner WD, surround sep=4pt]
	\node[pack] (theta) {$\theta$};
	\node[link, below=9pt of theta.290] (dot) {};
	\coordinate (g1) at ($(dot)+(-10pt, 0)$);
	\node[pack, right=8pt of theta] (thetaA) {$\theta$};
	\node[link, below=9pt of thetaA.250] (dotA) {};
	\coordinate (g2) at ($(dotA)+(10pt, 0)$);
	\coordinate (h) at ($(theta.south)+(0,-5pt)$);
	\draw (theta.250) to[out=270, in=0, looseness=.8] (g1|-h);
	\draw (theta.290) -- (dot);
	\draw (dot) --  (g1);
	\draw (dot) to (dotA);
	\draw (thetaA.290) to[out=270, in=180, looseness=.8]   (g2|-h);
	\draw (thetaA.250) -- (dotA);
	\draw (dotA) --  (g2);
      \end{tikzpicture}
    };
	\node at ($(P1.east)!.5!(P2.west)$) {$=$};
	\node at ($(P2.east)!.5!(P3.west)$) {$\vdash$};
  \end{tikzpicture}
\qquad\text{and}\quad
  \begin{tikzpicture}[unoriented WD, font=\small, pack size=6pt, baseline=(P1)]
    \node (P1) {
      \begin{tikzpicture}[inner WD, surround sep=4pt]
	\node[pack] (theta) {$\theta$};
	\node[link, below=8pt of theta.290] (dot) {};
	\node[pack, left=8pt of theta] (thetaA) {$\theta$};
	\node[link, below=8pt of thetaA.250] (dotA) {};
	\coordinate[below=4.5pt of $(theta.250)!.5!(thetaA.290)$] (dot2);
	\draw (theta.290) -- (dot);
	\draw (thetaA.250) -- (dotA);
	\draw (dotA) -- (dot);
	\draw (thetaA.290) to[out=270,in=180,looseness=1.3] (dot2);
	\draw (theta.250) to[out=270,in=0,looseness=1.3] (dot2);
	\draw (dotA) to[pos=1] node[left] {$\Gamma_1$} +(-10pt, 0);
	\draw (dot) to[pos=1] node[right] {$\Gamma_1$} +(10pt, 0);
      \end{tikzpicture}
    };
    \node[right=2 of P1] (P2) {
      \begin{tikzpicture}[inner WD, surround sep=4pt]
	\node[pack] (phi) {$\theta$};
	\node[link, below=9pt of phi.290] (dot) {};
	\node[link, below=4.5pt of phi.250] (dot2) {};
	\draw (phi.250) -- (dot2);
	\draw (phi.290) -- (dot);
	\draw (dot) -- +(10pt, 0);
	\draw (dot) -- +(-10pt, 0);
      \end{tikzpicture}
    };
    \node[right=2 of P2] (P3) {
      \begin{tikzpicture}[inner WD, surround sep=4pt]
	\node[pack] (phi) {$\varphi_1$};
	\node[link, below=5pt of phi.270] (dot) {};
	\draw (phi.270) -- (dot);
	\draw (dot) -- +(10pt, 0);
	\draw (dot) -- +(-10pt, 0);
      \end{tikzpicture}
    };
	\node at ($(P1.east)!.5!(P2.west)$) {$=$};
	\node at ($(P2.east)!.5!(P3.west)$) {$\vdash$};
  \end{tikzpicture}
  \qedhere
\]
\end{proof}

\begin{definition}
	With $\varphi_1,\varphi_2,\theta$ as in \cref{prop.projections}, we refer to the map $\lsh{(\delta_{\Gamma_1}\tens \Gamma_2)}\in\func{\pr}(\theta,\varphi_1)$ as the \emph{left projection} and similarly to $\lsh{(\Gamma_1\tens\delta_{\Gamma_2})}\in\func{\pr}(\theta,\varphi_2)$ as the \emph{right projection}.
\end{definition}

\subsection{Finite limits in $\func{\pr}$}\label{sec.finlims}

We now show how to construct finite limits in the category $\func{\pr}$ of
internal functions in $P$.

\begin{lemma}[Terminal object]\label{lemma.terminal}
	The object $(\unit,\true)\in\func{\pr}$ is terminal.
\end{lemma}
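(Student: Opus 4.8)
The plan is to show directly that $(\unit,\true)$ admits exactly one morphism from every object of $\func{\pr}$. Since $\unit=0$ is the monoidal unit of $\frb$, we have $\Gamma\oplus\unit\cong\Gamma$ and hence an identification $\pr(\Gamma\oplus\unit)\cong\pr(\Gamma)$; under it the projection $\pi_1\colon\Gamma\oplus\unit\to\Gamma$ becomes the identity (so $\lsh{(\pi_1)}$ is the identity on $\pr(\Gamma)$), the discard $\epsilon_\unit\colon\unit\to\unit$ is the identity, and every codomain manipulation on the trivial factor $\unit$ (diagonal $\delta_\unit$, counit, etc.) is trivial. The candidate terminal morphism out of $(\Gamma,\varphi)$ is $\varphi$ itself, regarded as an element of $\pr(\Gamma\oplus\unit)$; graphically it is the shell labelled $\varphi$ with empty codomain.

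For existence I would verify that $\varphi$ is an internal function via the total-and-deterministic characterization of \cref{thm.characterize_functions}. First, $\varphi\in\rela{\pr}(\varphi,\true)$ by \cref{def.internal_relations}: the condition $\lsh{(\pi_1)}\varphi\vdash\varphi$ is the tautology $\varphi=\varphi$ under the identification above, while $\lsh{(\pi_2)}\varphi\vdash\true$ holds automatically because $\true$ is the top element of the meet-semilattice $\pr(\unit)$ (\cref{cor.meetsl}). Totalness (\cref{def.tot_det}) then reduces to $\varphi\vdash\varphi$, since discarding the codomain is the identity when the codomain is $\unit$; and determinism holds because both of its sides collapse to $\varphi$ once the trivial $\unit$-operations are carried out (using that meet is idempotent, \cref{prop.diagrams_meet}). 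Hence $\varphi$ is total and deterministic, so it is an internal function $(\Gamma,\varphi)\to(\unit,\true)$.

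For uniqueness, suppose $\theta\in\pr(\Gamma\oplus\unit)\cong\pr(\Gamma)$ is \emph{any} internal function $(\Gamma,\varphi)\to(\unit,\true)$. Being an internal relation forces $\lsh{(\pi_1)}\theta\vdash\varphi$, that is $\theta\vdash\varphi$; being total forces $\varphi\vdash\theta$, the codomain condition again being vacuous. By antisymmetry of the order on $\pr(\Gamma)$ we conclude $\theta=\varphi$, so the morphism to $(\unit,\true)$ is unique.

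I expect no genuine obstacle here, only bookkeeping: the one step to handle with care is the identification $\pr(\Gamma\oplus\unit)\cong\pr(\Gamma)$, checking that $\pi_1$ is the unitor isomorphism and that the graphical operations defining totalness and determinism really do trivialize on the empty codomain $\unit$. Once that is set up, both halves of the argument are immediate, and the lemma follows.
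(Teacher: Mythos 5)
Your proof is correct and follows essentially the same route as the paper: the same candidate morphism $\varphi$ itself, and the same uniqueness argument ($\theta\vdash\varphi$ from the internal-relation condition, $\varphi\vdash\theta$ from totalness/the adjunction unit, then antisymmetry). The only cosmetic difference is that you verify existence via the total-and-deterministic characterization of \cref{thm.characterize_functions}, whereas the paper exhibits the adjunction unit and counit directly (counit from $\true$ being top, unit from meets and breaking); these are interchangeable by that theorem, and your observation that all codomain operations trivialize on $\unit$ is the same bookkeeping the paper's graphical derivation performs.
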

\begin{proof}
For any context $\Gamma$ and element $\varphi\in \pr(\Gamma)$ we shall show
$\varphi\in\func{\pr}((\Gamma,\varphi),(\unit,\true)) \subseteq
\pr(\Gamma\tens\unit)$ is the unique element. Note first that $\varphi$ is indeed
an internal function: it's an internal relation because $\varphi\vdash \varphi$
and $\lsh{\pi_2}(\varphi) \vdash \true$, and is an adjunction with counit given by
the fact that $\true$ is the top element, and unit given by meets and breaking
as follows
\[
\begin{tikzpicture}[unoriented WD, font=\small, pack size=6pt]
	\node (P1) {
  \begin{tikzpicture}[inner WD]
  	\node[pack] (phi) {$\varphi$};
  	\node[link, below=3pt of phi] (dot) {};
  	\draw (phi) -- (dot);
  	\draw (dot) -- +(-10pt, 0);
  	\draw (dot) -- +(10pt, 0);
  \end{tikzpicture}	
	};
	\node[right=4 of P1] (P2) {
  \begin{tikzpicture}[inner WD]
  	\node[pack] (phi) {$\varphi$};
		\node[pack, below=1 of phi] (phi2) {$\varphi$};
  	\node[link] at ($(phi)!.5!(phi2)$) (dot) {};
  	\draw (phi) -- (dot);
  	\draw (phi2) -- (dot);
  	\draw (dot) -- +(-10pt, 0);
  	\draw (dot) -- +(10pt, 0);
  \end{tikzpicture}		
	};
	\node[right=4 of P2] (P3) {
  \begin{tikzpicture}[inner WD]
  	\node[pack] (phi) {$\varphi$};
		\node[pack, below=1 of phi] (phi2) {$\varphi$};
		\coordinate (ow) at ($(phi)!.5!(phi2)+(-10pt, 0)$);
		\coordinate (oe) at ($(phi)!.5!(phi2)+(10pt, 0)$);
  	\draw (phi.south) to[out=270, in=0] (ow);
  	\draw (phi2.north) to[out=90, in=180] (oe);
  \end{tikzpicture}		
	};
	\node[right=4 of P3] (P4) {
  \begin{tikzpicture}[inner WD]
		\node[pack] (theta) {$\varphi$};
		\node[pack, right=1.5 of theta] (theta') {$\varphi$};
		\draw (theta.west) -- +(-5pt, 0);
		\draw (theta'.east) -- +(5pt, 0);
  \end{tikzpicture}	
	};
	\node at ($(P1.east)!.5!(P2.west)$) {$=$};
	\node at ($(P2.east)!.5!(P3.west)$) {$\vdash$};
	\node at ($(P3.east)!.5!(P4.west)$) {$=$};
  \pgfresetboundingbox
	\useasboundingbox (P1.west|-P2.140) rectangle (P4.east|-P2.-30);
\end{tikzpicture}
\] 

It remains to show uniqueness. If $\theta$ is an internal function then $\theta\vdash\varphi$, so it remains to show that $\varphi\vdash\theta$.  But it is easy to verify:
$
\begin{tikzpicture}[unoriented WD, font=\small, pack size=6pt, baseline=(P1.-20)]
	\node (P1) {
  \begin{tikzpicture}[inner WD]
  	\node[pack] (phi) {$\varphi$};
  	\draw (phi.west) -- +(-5pt,0);
	\end{tikzpicture}
	};
	\node[right=2 of P1] (P2) {
  \begin{tikzpicture}[inner WD]
    \node[pack] (theta) {$\theta$};
    \node[pack, right=.5 of theta] (theta') {$\theta$};
    \node[link, right=3pt of theta'] (dot) {};
    \draw (dot) -- (theta');
    \draw (theta.west) -- +(-5pt, 0);			
  \end{tikzpicture}			
	};
	\node[right=2 of P2] (P3) {
  \begin{tikzpicture}[inner WD]
    \node[pack] (phi) {$\theta$};
		\draw (phi.west) to +(-2pt, 0);
  \end{tikzpicture}	
	};
	\node at ($(P1.east)!.5!(P2.west)$) {$\vdash$};
	\node at ($(P2.east)!.5!(P3.west)$) {$\vdash$};
\end{tikzpicture}
$\;.
\end{proof}

\begin{lemma}[Pullbacks]\label{lemma.pullbacks}
	Let $\theta_1\colon(\Gamma_1,\varphi_1)\to(\Gamma,\varphi)$ and $\theta_2\colon(\Gamma_2,\varphi_2)\to(\Gamma,\varphi)$ be morphisms in $\func{\pr}$. Let $\theta_{12}\coloneqq(\theta_1\cp\theta_2\tp)$. Then the following is a pullback square in $\func{\pr}$:
	\[
	\begin{tikzcd}[row sep=22pt, column sep=70pt]
		\left((\Gamma_1\tens \Gamma_2),\theta_{12}\right)
		\ar[d, "{\lsh{(\delta_{\Gamma_1}\oplus\Gamma_2)}(\theta_{12})}"']
		\ar[r, "{\lsh{(\Gamma_1\oplus\delta_{\Gamma_2})}(\theta_{12})}"]&
		(\Gamma_2, \varphi_2)
			\ar[d, "\theta_2"]\\
		(\Gamma_1, \varphi_1)
			\ar[r, "\theta_1"']&
		(\Gamma,\varphi)
	\end{tikzcd}
	\]
\end{lemma}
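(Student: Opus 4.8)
The plan is to verify the universal property of the claimed pullback square directly in the graphical calculus, using the characterizations of internal functions already in hand. Throughout I will abbreviate the two projections as $\pi_1\coloneqq\lsh{(\delta_{\Gamma_1}\oplus\Gamma_2)}(\theta_{12})$ and $\pi_2\coloneqq\lsh{(\Gamma_1\oplus\delta_{\Gamma_2})}(\theta_{12})$, which are internal functions by \cref{prop.projections}, and I will use freely that each of $\theta_1,\theta_2$ (and later $f,g$) is total and deterministic, by \cref{thm.characterize_functions}. The three things to check are that the square commutes, that every cone factors through it, and that the factorization is unique.

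First I would check commutativity. Unfolding composition, $\pi_1\cp\theta_1$ is the graphical term obtained by attaching $\theta_1$ to the $\Gamma_1$-output of $\pi_1$; but $\theta_{12}=\theta_1\cp\theta_2\tp$ already contains a copy of $\theta_1$ joined to $\theta_2$ along $\Gamma$, so determinism of $\theta_1$ (which permits the two $\theta_1$-wires to be merged, via \cref{thm.characterize_functions}) collapses $\pi_1\cp\theta_1$ to the symmetric term carrying one $\theta_1$ and one $\theta_2$ sharing a $\Gamma$-leg. Running the identical computation from $\pi_2\cp\theta_2$, now using determinism of $\theta_2$, produces the same term, so $\pi_1\cp\theta_1=\pi_2\cp\theta_2$.

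For existence of the factorization, suppose given a cone: an object $(\Delta,\psi)$ with internal functions $f\in\func{\pr}(\psi,\varphi_1)$ and $g\in\func{\pr}(\psi,\varphi_2)$ satisfying $f\cp\theta_1=g\cp\theta_2$. The candidate mediating map is the pairing $h\coloneqq\lsh{(\delta_\Delta\oplus\Gamma_1\oplus\Gamma_2)}\bigl(\rho(f,g)\bigr)$, i.e.\ graphically ``copy the $\Delta$-wire, feed one copy into $f$ and the other into $g$.'' That $h$ is total and deterministic, hence a function by \cref{thm.characterize_functions}, follows directly from totality and determinism of $f$ and $g$ together with the copy/discard equations of \cref{prop.diagrams_meet,lem.dotting_off}. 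The crux is the codomain condition $\lsh{p}(h)\vdash\theta_{12}$, where $p$ is the projection $\Delta\oplus\Gamma_1\oplus\Gamma_2\to\Gamma_1\oplus\Gamma_2$ onto the output legs; equivalently, that $h$ really is a relation $\psi\to\theta_{12}$ in the sense of \cref{prop.characterize_relation}. This is where the cone hypothesis enters: $\lsh{p}(h)$ is the term ``there is a $\Delta$-value $d$ with $f(d,-)$ on the $\Gamma_1$-leg and $g(d,-)$ on the $\Gamma_2$-leg,'' and applying $\theta_1$ to the $\Gamma_1$-leg while invoking $f\cp\theta_1=g\cp\theta_2$ and the determinism of $g$ forces the $\Gamma_2$-value to equal the common $\theta_2$-image, thereby exhibiting the witness $\Gamma$-value that certifies $\theta_{12}$. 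I expect this entailment to be the main obstacle, since it is the only point at which the diagram records that the cone commutes, and it must be pushed through purely by breaking, meets, and discarding (\cref{prop.diagrams_basic,prop.diagrams_meet,lem.dotting_off}).

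It then remains to check the two triangles and uniqueness. For $h\cp\pi_1=f$, composing $h$ with $\pi_1$ discards the $\Gamma_2$-output; since $g$ is total, discarding its output returns the domain $\psi$ (\cref{lem.dotting_off,lemma.cp_unital}), which is absorbed by the domain condition on $f$, leaving $f$ unchanged, and symmetrically $h\cp\pi_2=g$. For uniqueness, let $h'$ be any internal function with $h'\cp\pi_1=f$ and $h'\cp\pi_2=g$. Discarding one output leg at a time gives $h'\vdash\pair{h'\cp\pi_1,\;h'\cp\pi_2}=\pair{f,g}=h$, and since the order on internal functions is discrete (\cref{cor.functions_discrete}), this forces $h'=h$. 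This completes the verification that the square is a pullback.
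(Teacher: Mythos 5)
Your proposal is correct and follows essentially the same route as the paper's proof: commutativity via determinism of $\theta_1,\theta_2$, the mediating map given by the copy-and-pair term $\pair{f,g}$, the key codomain entailment $\lsh{p}\pair{f,g}\vdash\theta_{12}$ obtained from totality of $\theta_1$, the cone equation, and determinism of $g$, and the triangles/uniqueness closed off with \cref{cor.functions_discrete}. The only difference is cosmetic: you spell out uniqueness via $h'\vdash\pair{h'\cp\pi_1,h'\cp\pi_2}$ where the paper waves at ``basic diagram manipulations.''
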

\begin{proof}
The graphical term for the proposed pullback $\left((\Gamma_1\tens \Gamma_2),\theta_{12}\right)$ is shown left, and its proposed projection maps are shown middle and right:
\[
\begin{tikzpicture}[unoriented WD, font=\small]
  \node (P1) {
  \begin{tikzpicture}[inner WD, pack size=6pt]
  	\node[pack] (theta) {$\theta_{12}$};
  	\draw (theta.west) -- +(-2pt, 0);
  	\draw (theta.east) -- +(2pt, 0);
	\end{tikzpicture}
	};
	\node[right=3 of P1] (P2) {
  \begin{tikzpicture}[inner WD]
		\node[funcr] (theta) {$\theta_1$};
		\node[funcl, right=1 of theta] (theta') {$\theta_2$};
		\draw (theta.west) -- +(-5pt,0);
		\draw (theta'.east) -- +(5pt,0);
		\draw (theta) to node[above] {$\Gamma$} (theta');
  \end{tikzpicture}	
	};
	\node[right=7 of P2] (P3) {
  \begin{tikzpicture}[inner WD]
		\node[funcr] (theta) {$\theta_1$};
		\node[funcl, right=1 of theta] (theta') {$\theta_2$};
		\node[link, left=.5 of theta] (dot) {};
		\draw (theta) -- (dot);
		\draw (dot) to[pos=1] node[left] {$\Gamma_1$} +(135:10pt);
		\draw (dot) to[pos=1] node[left] {$\Gamma_1$} +(225:10pt);
		\draw (theta'.east) to[pos=1] node[right] {$\Gamma_2$} +(5pt,0);
		\draw (theta) -- (theta');
  \end{tikzpicture}	
	};
	\node[right=4 of P3] (P4) {
  \begin{tikzpicture}[inner WD]
		\node[funcr] (theta) {$\theta_1$};
		\node[funcl, right=1 of theta] (theta') {$\theta_2$};
		\node[link, right=.5 of theta'] (dot) {};
		\draw (theta.west) to[pos=1] node[left] {$\Gamma_1$} +(-5pt, 0);
		\draw (theta') -- (dot);
		\draw (dot) to[pos=1] node[right] {$\Gamma_2$} +(45:10pt);
		\draw (dot) to[pos=1] node[right] {$\Gamma_2$} +(-45:10pt);
		\draw (theta) -- (theta');
  \end{tikzpicture}	
	};
	\node at ($(P1.east)!.5!(P2.west)$) {$\coloneqq$};
  \pgfresetboundingbox
	\useasboundingbox (P1.west|-P1.160) rectangle (P4.east|-P4.-10);
\end{tikzpicture}
\]
Both projections are internal functions by \cref{prop.projections}. The necessary diagram commutes, i.e.\ we have equalities
\begin{equation}\label{eqn.comm_diag_projs}
\begin{tikzpicture}[unoriented WD, font=\small,baseline=(P1)]
  \node (P1) {
	\begin{tikzpicture}[inner WD]
		\node[funcr] (theta1) {$\theta_1$};
		\node[funcl, right=1 of theta1] (theta2) {$\theta_2$};
		\node[funcd, below left=.5 and 1 of theta1.west] (theta1') {$\theta_1$};
		\node[link] at (theta1'|-theta1) (dot) {};
		\draw (theta1'.south) -- +(0, -5pt);
		\draw (theta1'.north) -- (dot);
		\draw (dot) -- +(-10pt,0);
		\draw (dot) -- (theta1.west);
		\draw (theta1.east) -- (theta2.west);
		\draw (theta2.east) -- +(5pt, 0);
  \end{tikzpicture}
	};
	\node[right=3 of P1] (P2) {
  \begin{tikzpicture}[inner WD, surround sep=4pt]
		\node[funcr] (theta) {$\theta_1$};
		\node[funcl, right=1 of theta] (theta') {$\theta_2$};
		\node[link, "$\Gamma$"] at ($(theta)!.5!(theta')$) (dot) {};
		\draw (theta.west) -- +(-5pt,0);
		\draw (theta'.east) -- +(5pt, 0);
		\draw (theta) -- (dot);
		\draw (theta') -- (dot);
		\draw (dot) -- +(0, -10pt);
  \end{tikzpicture}	
	};
	\node[right=3 of P2] (P3) {
	\begin{tikzpicture}[inner WD, minimum size=20pt]
		\node[funcr] (theta1) {$\theta_1$};
		\node[funcl, right=1 of theta1] (theta2) {$\theta_2$};
		\node[funcd, below right=.5 and 1 of theta2.east] (theta2') {$\theta_2$};
		\node[link] at (theta2'|-theta2) (dot) {};
		\draw (theta2'.south) -- +(0, -5pt);
		\draw (theta2'.north) -- (dot);
		\draw (dot) -- +(10pt, 0);
		\draw (dot) -- (theta2.east);
		\draw (theta1.east) -- (theta2.west);
		\draw (theta1.west) -- +(-5pt, 0);
  \end{tikzpicture}	
	};
	\node at ($(P1.east)!.5!(P2.west)$) {$=$};
	\node at ($(P2.east)!.5!(P3.west)$) {$=$};
\end{tikzpicture}
\end{equation}
because functions are deterministic (\cref{thm.characterize_functions}).

Now we come to the universal property. Suppose given an object $(\Gamma',\varphi')$ and morphisms $\theta_1'\colon(\Gamma',\varphi')\to(\Gamma_1,\varphi_1)$ and $\theta_2'\colon(\Gamma',\varphi')\to(\Gamma_2,\varphi_2)$ in $\func{\pr}$, such that the $\theta_1'\cp\theta_1=\theta_2'\cp\theta_2$. Let $\pair{\theta_1',\theta_2'}$ denote the following graphical term:
\begin{equation}\label{eqn.pairing_pb}
  \begin{tikzpicture}[inner WD,baseline=(dot)]
		\node[funcl] (theta) {$\theta_1'$};
		\node[funcr, right=2 of theta] (theta') {$\theta_2'$};
		\node[link, "$\Gamma$" below] at ($(theta)!.5!(theta')$) (dot) {};
		\draw (theta.west) -- +(-5pt, 0);
		\draw (theta'.east) -- +(5pt, 0);
		\draw (theta) -- (dot);
		\draw (theta') -- (dot);
		\draw (dot) -- +(0,10pt);
  \end{tikzpicture}	
\end{equation}
We give one half of the proof that $\pair{\theta_1',\theta_2'}\in\rela{\pr}(\varphi',\theta_{12})$, the other half being easier.
\[
\begin{tikzpicture}[unoriented WD, font=\small]
  \node (P1) {
  \begin{tikzpicture}[inner WD]
		\node[funcl] (theta) {$\theta_1'$};
		\node[funcr, right=1 of theta] (theta') {$\theta_2'$};
		\draw (theta.west) -- +(-5pt, 0);
		\draw (theta'.east) -- +(5pt, 0);
		\draw (theta) -- (theta');
  \end{tikzpicture}	
	};
	\node[right=3 of P1] (P2) {
  \begin{tikzpicture}[inner WD]
		\node[funcl] (theta1') {$\theta_1'$};
		\node[funcr, right=1 of theta1'] (theta2') {$\theta_2'$};
		\node[funcl, left=1 of theta1'] (theta1) {$\theta_1$};
		\node[funcr, left=1 of theta1] (theta1t) {$\theta_1$};
		\draw (theta1t.west) -- +(-5pt, 0);
		\draw (theta1t) -- (theta1);
		\draw (theta1) -- (theta1');
		\draw (theta1') -- (theta2');
		\draw (theta2'.east) -- +(5pt, 0);
  \end{tikzpicture}		
	};
	\node[right=3 of P2] (P3) {
  \begin{tikzpicture}[inner WD]
		\node[funcr] (theta1') {$\theta_2'$};
		\node[funcl, right=1 of theta1'] (theta2') {$\theta_2'$};
		\node[funcl, left=1 of theta1'] (theta1) {$\theta_2$};
		\node[funcr, left=1 of theta1] (theta1t) {$\theta_1$};
		\draw (theta1t.west) -- +(-5pt, 0);
		\draw (theta1t) -- (theta1);
		\draw (theta1) -- (theta1');
		\draw (theta1') -- (theta2');
		\draw (theta2'.east) -- +(5pt, 0);
  \end{tikzpicture}
  };
  \node[right=3 of P3] (P4) {
  \begin{tikzpicture}[inner WD]
		\node[funcr] (theta) {$\theta_1$};
		\node[funcl, right=1 of theta] (theta') {$\theta_2$};
		\draw (theta.west) -- +(-5pt, 0);
		\draw (theta) -- (theta');
		\draw (theta'.east) -- +(5pt, 0);
  \end{tikzpicture}	  
  };
	\node at ($(P1.east)!.5!(P2.west)$) {$\vdash$};
	\node at ($(P2.east)!.5!(P3.west)$) {$=$};
	\node at ($(P3.east)!.5!(P4.west)$) {$\vdash$};  
  \pgfresetboundingbox
	\useasboundingbox (P1.west|-P1.160) rectangle (P4.east|-P4.-10);
\end{tikzpicture}
\]
Moreover, applying \cref{thm.characterize_functions}, a similarly straightforward diagrammatic argument shows $\pair{\theta_1',\theta_2'}\in\func{\pr}(\varphi',\theta_{12})$. We next need to show that $\pair{\theta_1',\theta_2'}\cp\lsh{(\delta_{\Gamma_1}\oplus\Gamma_2)}(\theta_{12})=\theta_1'$ and similarly for $\theta_2'$. This follows easily from \cref{cor.functions_discrete} and the diagram
\[
\begin{tikzpicture}[unoriented WD, font=\small]
	\node (P1) {
	\begin{tikzpicture}[inner WD]
		\node[funcl] (theta1) {$\theta_1'$};
		\node[funcr, below=1 of theta1] (theta2) {$\theta_1$};
		\coordinate (helper) at ($(theta1)!.5!(theta2)$);
		\node[link, left=2 of helper] (dot L) {};
		\node[funcr, right=2 of theta1] (theta'1) {$\theta_2'$};
		\node[link] at ($(theta1)!.5!(theta'1)$) (dot R) {};
		\node[funcl, right=2 of theta2] (theta'2) {$\theta_2$};
		\draw (theta1.west) to[out=180, in=60] (dot L);
		\draw (theta2.west) to[out=180, in=-60] (dot L);
		\draw (theta1.east) -- (dot R);
		\draw (dot R) -- (theta'1.west);
		\draw (dot R) -- ++(0, 5pt) to[out=90, in=180] ++(35pt,5pt);
		\draw (theta2.east) -- (theta'2.west);
		\draw (theta'1.east) to[out=0, in=0] (theta'2.east);
		\draw (dot L) -- +(-8pt, 0);
	\end{tikzpicture}
	};
	\node[right=3 of P1] (P2) {
	\begin{tikzpicture}[inner WD]
		\node[funcl] (theta) {$\theta_1'$};
		\draw (theta.west) -- +(-5pt, 0);
		\draw (theta.east) -- +(5pt, 0);
  \end{tikzpicture}
	};
	\node at ($(P1.east)!.5!(P2.west)$) {$\vdash$};
  \pgfresetboundingbox
	\useasboundingbox (P1.west|-P1.160) rectangle (P2.east|-P1.-20);
\end{tikzpicture}
\]
It only remains to show that this is unique. So suppose given $\theta'\in\func{\pr}(\varphi',\theta_{12})$ with 
$
\begin{aligned}
\begin{tikzpicture}[unoriented WD, font=\small]
  \node (P1) {
  \begin{tikzpicture}[inner WD]
  	\node[funcr] (theta) {$\theta_1'$};
  	\draw (theta.west) to[pos=1] node[left] {$\Gamma'$} +(-2pt, 0);
  	\draw (theta.east) to[pos=1] node[right] {$\Gamma_1$} +(2pt, 0);
	\end{tikzpicture}
	};
	\node[right=1 of P1] (P2) {
  \begin{tikzpicture}[inner WD, pack size=6pt]
  	\node[pack, minimum size=15pt] (theta) {$\theta'$};
		\node[link] at ($(theta.-30)+(-30:4pt)$) (dot){};
  	\draw (theta.west) to[pos=1] node[left] {$\Gamma'$} +(-2pt, 0);
  	\draw (theta.30) to[pos=1] node[right] {$\Gamma_1$} +(30:2pt);
  	\draw (theta.-30) -- (dot);
	\end{tikzpicture}
	};
\node at ($(P1.east)!.5!(P2.west)$) {$=$};
\end{tikzpicture}
\end{aligned}
$
and 
$
\begin{aligned}
\begin{tikzpicture}[unoriented WD, font=\small]
	\node (P3) {
  \begin{tikzpicture}[inner WD]
  	\node[funcr] (theta) {$\theta_2'$};
  	\draw (theta.west) to[pos=1] node[left] {$\Gamma'$} +(-2pt, 0);
  	\draw (theta.east) to[pos=1] node[right] {$\Gamma_2$} +(2pt, 0);
	\end{tikzpicture}	
	};
	\node[right=1 of P3] (P4) {
  \begin{tikzpicture}[inner WD, pack size=6pt]
  	\node[pack, minimum size=15pt] (theta) {$\theta'$};
		\node[link] at ($(theta.30)+(30:4pt)$) (dot){};
  	\draw (theta.west) to[pos=1] node[left] {$\Gamma'$} +(-2pt, 0);
  	\draw (theta.-30) to[pos=1] node[right] {$\Gamma_2$} +(-30:2pt);
  	\draw (theta.30) -- (dot);
	\end{tikzpicture}
	};
	\node at ($(P3.east)!.5!(P4.west)$) {$=$};
\end{tikzpicture}
\end{aligned}
$.
Then by basic diagram manipulations, one shows that $\theta'$ must equal the graphical term in \cref{eqn.pairing_pb}, as desired.
\end{proof}

\begin{proposition}\label{prop.monos}
Suppose that
$
\begin{tikzpicture}[unoriented WD, surround sep=2pt, font=\tiny, baseline=(theta.base)]
	\node[funcr] (theta) {$\theta$};
	\draw (theta.west) -- +(-2pt, 0);
	\draw (theta.east) -- +(2pt, 0);
\end{tikzpicture}
\in\func{\pr}(\varphi_1,\varphi_2)$ is an internal function. It is a monomorphism iff it satisfies
	$
	\begin{tikzpicture}[unoriented WD, baseline=(P1)]
  	\node (P1) {
  	\begin{tikzpicture}[inner WD, pack size=6pt]
    	\node[pack] (phi) {$\varphi_1$};
    	\node[link, below=3pt of phi] (dot) {};
    	\draw (phi) -- (dot);
    	\draw (dot) -- +(6pt,0);
    	\draw (dot) -- +(-6pt,0);
    \end{tikzpicture}	
    };
    \node[right=2 of P1] (P2) {
    \begin{tikzpicture}[inner WD]
  		\node[funcr] (theta) {$\theta$};
  		\node[funcl, right=1 of theta] (theta') {$\theta$};
			\draw (theta.west) -- +(-4pt,0);
			\draw (theta.east) -- (theta'.west);
			\draw (theta'.east) -- +(4pt,0);
    \end{tikzpicture}
    };
		\node at ($(P1.east)!.5!(P2.west)$) (vdash) {$=$};
  \end{tikzpicture}
	$.
\end{proposition}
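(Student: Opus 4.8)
The plan is to read the displayed equation as the identity $\theta \cp \theta\tp = \id_{\varphi_1}$ in the hom-poset $\rela{\pr}(\varphi_1,\varphi_1)$, where $\id_{\varphi_1} = \lsh{(\delta_{\Gamma_1})}(\varphi_1)$ as in \cref{eqn.id_phi}: the left-hand picture is $\id_{\varphi_1}$ and the right-hand picture is the composite $\theta \cp \theta\tp$. With this reading the proposition becomes the classical relation-algebra fact that a function is a monomorphism exactly when it is injective, i.e.\ when $\theta\cp\theta\tp = \id_{\varphi_1}$.

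For the ``if'' direction I would argue purely algebraically. Suppose $\theta\cp\theta\tp = \id_{\varphi_1}$ and let $\alpha,\beta\colon(\Gamma',\varphi')\to(\Gamma_1,\varphi_1)$ be internal functions with $\alpha\cp\theta=\beta\cp\theta$. Using unitality and associativity of $\cp$ (\cref{lemma.cp_unital,lemma.cp_assoc}) one computes $\alpha = \alpha\cp\id_{\varphi_1} = \alpha\cp\theta\cp\theta\tp = \beta\cp\theta\cp\theta\tp = \beta\cp\id_{\varphi_1}=\beta$, so $\theta$ is monic. This needs no diagrams.

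For the ``only if'' direction, first note that one inequality is automatic: since $\theta$ is an internal function it is total, equivalently $\theta\dashv\theta\tp$ in $\rela{\pr}$ (\cref{thm.characterize_functions}), and the unit of this adjunction is exactly $\id_{\varphi_1}\vdash\theta\cp\theta\tp$. So it remains only to prove $\theta\cp\theta\tp\vdash\id_{\varphi_1}$ using that $\theta$ is monic. Here I would invoke the kernel pair. Applying \cref{lemma.pullbacks} with $\theta_1=\theta_2=\theta$ presents the pullback of $\theta$ against itself as the object $\bigl((\Gamma_1\oplus\Gamma_1),\theta\cp\theta\tp\bigr)$ together with its left and right projections $q_1,q_2$, which satisfy $q_1\cp\theta=q_2\cp\theta$ by \cref{eqn.comm_diag_projs}. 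Monicity of $\theta$ then forces $q_1=q_2$. Finally a short diagrammatic step---using \cref{prop.projections} for the explicit shape of $q_1,q_2$ and then identifying the codomain $\Gamma_1$-leg with the second domain $\Gamma_1$-leg (i.e.\ capping it against the appropriate diagonal)---turns the equality $q_1=q_2$ into the entailment $\theta\cp\theta\tp\vdash\id_{\varphi_1}$, the right projection contributing $\theta\cp\theta\tp$ while the left projection contributes $\theta\cp\theta\tp\wedge(\text{diagonal})$. Combined with the automatic reverse inequality, this yields the desired equality.

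The main obstacle I expect is precisely this last step of the ``only if'' direction: converting the equality of the two kernel-pair projections---an equation living in the \emph{discretely ordered} category of internal functions (\cref{cor.functions_discrete})---into the relational inequality $\theta\cp\theta\tp\vdash\id_{\varphi_1}$. The wire bookkeeping when pulling the codomain leg back onto a domain leg is where care is needed; everything else is formal.
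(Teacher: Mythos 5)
Your proof is correct and follows essentially the same route as the paper's: the paper's (one-line) proof likewise reduces the claim to the kernel-pair characterization of monomorphisms together with the explicit description of the pullback of $\theta$ along itself from \cref{lemma.pullbacks}, which is exactly the argument you spell out for the ``only if'' direction (including the final cap-against-the-diagonal step turning $q_1=q_2$ into $\theta\cp\theta\tp\vdash\id_{\varphi_1}$). Your purely algebraic handling of the ``if'' direction via $\alpha=\alpha\cp\theta\cp\theta\tp=\beta\cp\theta\cp\theta\tp=\beta$ is a minor, clean variation that bypasses the pullback lemma for that half.
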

\begin{proof}
  Recall that a morphism is a monomorphism iff the projection maps of its
  pullbacks along itself are the identity maps. Using the characterization of
  the projection maps of the pullback of $\theta$ along itself
  (\cref{lemma.pullbacks}) and the graphical logic, the proposition is immediate. 
\end{proof}

\begin{corollary}[Monomorphisms] \label{cor.monos}
If $\varphi\vdash_\Gamma\varphi'$, then $\id_{\varphi}\in \pr(\Gamma\tens \Gamma)$ as in
\cref{eqn.id_phi} is an element of $\func{\pr}((\Gamma,\varphi),(\Gamma,\varphi'))$ and it
is a monomorphism.
\end{corollary}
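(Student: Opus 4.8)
The plan is to verify the two assertions in turn, leaning on the characterizations of internal functions and monomorphisms already established. Throughout I will use that $\id_\varphi=\lsh{(\delta_\Gamma)}(\varphi)$ and that the pushforward maps $\lsh{(-)}$ are monotone (being left adjoints), together with the symmetry of the diagonal, $\delta_\Gamma\cp\sigma_{\Gamma,\Gamma}=\delta_\Gamma$, which gives $\id_\varphi\tp=\lsh{(\sigma_{\Gamma,\Gamma})}\lsh{(\delta_\Gamma)}(\varphi)=\lsh{(\delta_\Gamma)}(\varphi)=\id_\varphi$; see \cref{def.transpose}.

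First I would check that $\id_\varphi$ is an internal relation $\varphi\to\varphi'$, i.e.\ that it lies in $\rela{\pr}(\varphi,\varphi')$. By \cref{lemma.id_is_rela} we already know $\id_\varphi\in\rela{\pr}(\varphi,\varphi)$, so $\lsh{(\pi_1)}(\id_\varphi)\vdash\varphi$ and $\lsh{(\pi_2)}(\id_\varphi)\vdash\varphi$. Since $\varphi\vdash\varphi'$, the second entailment upgrades to $\lsh{(\pi_2)}(\id_\varphi)\vdash\varphi'$, which is exactly the remaining half of the condition in \cref{def.internal_relations}. Thus $\id_\varphi\in\rela{\pr}(\varphi,\varphi')$.

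Next I would promote this relation to a function using \cref{thm.characterize_functions}(ii), showing $\id_\varphi$ has right adjoint $\id_\varphi\tp=\id_\varphi$. By unitality of composition (\cref{lemma.cp_unital}) both composites $\id_\varphi\cp\id_\varphi\tp$ and $\id_\varphi\tp\cp\id_\varphi$ reduce to $\id_\varphi$. The required unit $\id_{\varphi}\vdash\id_\varphi\cp\id_\varphi\tp$ is then the reflexive entailment $\id_\varphi\vdash\id_\varphi$, while the required counit $\id_\varphi\tp\cp\id_\varphi\vdash\id_{\varphi'}$ becomes $\id_\varphi\vdash\id_{\varphi'}$; the latter holds because $\lsh{(\delta_\Gamma)}$ is monotone and $\varphi\vdash\varphi'$. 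Hence $\id_\varphi\in\func{\pr}\big((\Gamma,\varphi),(\Gamma,\varphi')\big)$, proving the first assertion. Conceptually, this merely records that the identity endorelation on $(\Gamma,\varphi)$ remains a function after enlarging its codomain predicate, the only new obligation being the counit, which is precisely the monotonicity input.

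Finally, for the monomorphism claim I would apply \cref{prop.monos}: an internal function $\theta\in\func{\pr}(\varphi_1,\varphi_2)$ is monic iff $\id_{\varphi_1}=\theta\cp\theta\tp$. For $\theta=\id_\varphi$ this equation reads $\id_\varphi=\id_\varphi\cp\id_\varphi\tp$, which again holds since $\id_\varphi\tp=\id_\varphi$ and composition is unital. The main point requiring care is the bookkeeping in the previous paragraph, where the codomain predicate is enlarged from $\varphi$ to $\varphi'$; everything else is a direct appeal to unitality, the symmetry of the diagonal, and the monotonicity of pushforward.
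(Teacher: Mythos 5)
Your proof is correct and follows essentially the same route as the paper: both arguments come down to the single equation $\id_\varphi\cp\id_\varphi\tp=\id_\varphi$ (you get it from $\id_\varphi\tp=\id_\varphi$ plus unitality, the paper gets it from ``meets merge circles,'' which is the same underlying computation), then use $\varphi\vdash\varphi'$ and monotonicity of $\lsh{(\delta_\Gamma)}$ for the counit, and conclude monicity via \cref{prop.monos}. You have simply made explicit the steps the paper compresses into ``it follows easily.''
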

\begin{proof}
Since meets merge circles, we have the equality
\begin{equation}\label{eqn.phi_phi_phi}
\begin{tikzpicture}[unoriented WD, pack size=6pt, baseline=(P1)]
  \node (P1) {
	\begin{tikzpicture}[inner WD]
	  \coordinate (theta);
	  \node[link, left=1.5 of theta] (dot) {};
	  \node[pack, above=3pt of dot] (phi) {$\varphi$};
	  \node[link, right=1.5 of theta] (dot2) {};
	  \node[pack, above=3pt of dot2] (phi2) {$\varphi$};
	  \draw (theta) -- (dot);
	  \draw (dot) -- (phi);
	  \draw (theta) -- (dot2);
	  \draw (dot2) -- (phi2);
	  \draw (dot) -- +(-.5cm, 0);
	  \draw (dot2) -- +(.5cm, 0);
  \end{tikzpicture}
	};
	\node[right=3 of P1] (P2) {
  \begin{tikzpicture}[inner WD]
  	\node[pack] (phi) {$\varphi$};
  	\node[link, below=1pt of phi] (dot) {};
  	\draw (phi) -- (dot);
  	\draw (dot) -- +(-7pt, 0);
  	\draw (dot) -- +(7pt, 0);
  \end{tikzpicture}		
	};
	\node at ($(P1.east)!.5!(P2.west)$) {$=$};
\end{tikzpicture}	
\end{equation}
and it follows easily that $\id_{\varphi}\in\func{\pr}(\varphi,\varphi')$. But this also proves that $\id_{\varphi}$ is a monomorphism, by \cref{prop.monos}.
\end{proof}

\begin{remark}[Equalizers]
Given parallel arrows $\theta, \theta' \colon (\Gamma_1,\varphi_1) \to (\Gamma_2, \varphi_2)$, their equalizing object $(\Gamma_1,e)$ is the following graphical term:
\[
\begin{tikzpicture}[unoriented WD]
	\node (P1) {
	\begin{tikzpicture}[inner WD]
		\node[pack] (xi) {$e$};
	  \draw (xi.west) to[pos=1] node[left, font=\tiny] {$\Gamma_1$} +(-3pt, 0);
	\end{tikzpicture}
  };
  \node[right=3 of P1] (P2) {
	\begin{tikzpicture}[inner WD, pack size=12pt]
	  \node[funcr] (theta) {$\theta$};
	  \node[funcr, below=.5 of theta] (theta2) {$\theta'$};
	  \node[link, left=1.5 of $(theta)!.5!(theta2)$] (dotw) {};
	  \draw (dotw) -- +(-5pt,0);
	  \draw (theta.west) -- (dotw);
	  \draw (theta2.west) -- (dotw);
	  \draw (theta.east) to[out=0, in=0] (theta2.east);
	\end{tikzpicture}	
	};
	\node at ($(P1.east)!.5!(P2.west)$) {$=$};
\end{tikzpicture}
\]
\end{remark}

\subsection{Image factorizations}\label{sec.images}
We next discuss image factorizations, and show that they are stable under
pullback.

\begin{definition}
Suppose that
$
\theta\in\func{\pr}(\varphi_1,\varphi_2)$ is an internal function. Define its
\define{image}, denoted $\im(\theta)\in \pr(\Gamma_2)$ to be the graphical term 
$
\begin{tikzpicture}[inner WD]
	\node[funcr] (theta) {$\theta$};
	\node[link, left=5pt of theta] (dot) {};
  \draw (dot) -- (theta);
  \draw (theta.east) -- +(5pt,0);			
\end{tikzpicture}
$
or, in symbols, $\ust{\epsilon_{\Gamma_1}}\cp \theta$.
\end{definition}

We will now show that this has the usual properties of images, for example that
$\theta$ is a regular epimorphism in $\func{\pr}$ iff it satisfies 
$
\begin{aligned}
\begin{tikzpicture}[inner WD, pack size=6pt]
	\node[pack] (theta) {$\varphi_2$};
  	\draw (theta.east) -- +(5pt,0);			
\end{tikzpicture}
\end{aligned}
\vdash
\begin{aligned}
\begin{tikzpicture}[inner WD]
	\node[funcr] (theta) {$\theta$};
	\node[link, left=5pt of theta] (dot) {};
  \draw (dot) -- (theta);
  \draw (theta.east) -- +(5pt,0);			
\end{tikzpicture}
\end{aligned}
$.

\begin{proposition} \label{prop.epis}
Consider an element $\theta\in\func{\pr}(\varphi_1,\varphi_2)$. The following are equivalent:
\begin{enumerate}
	\item $\theta$, considered as a morphism in $\func{\pr}$, is a regular epimorphism,
	\item $\varphi_2\vdash_{\Gamma_2} \im(\theta)$,
	\item $\varphi_2= \im(\theta)$, and
	\item
	$
	\begin{tikzpicture}[unoriented WD, baseline=(P1)]
  	\node (P1) {
  	\begin{tikzpicture}[inner WD, pack size=6pt]
    	\node[pack] (phi) {$\varphi_2$};
    	\node[link, below=3pt of phi] (dot) {};
    	\draw (phi) -- (dot);
    	\draw (dot) -- +(10pt,0);
    	\draw (dot) -- +(-10pt,0);
    \end{tikzpicture}	
    };
    \node[right=2 of P1] (P2) {
    \begin{tikzpicture}[inner WD]
  		\node[funcl] (theta) {$\theta$};
  		\node[funcr, right=1 of theta] (theta') {$\theta$};
			\draw (theta.west) -- +(-4pt,0);
			\draw (theta.east) -- (theta'.west);
			\draw (theta'.east) -- +(4pt,0);
    \end{tikzpicture}
    };
		\node at ($(P1.east)!.5!(P2.west)$) (vdash) {$=$};
  \end{tikzpicture}
	$\;.
\end{enumerate}
\end{proposition}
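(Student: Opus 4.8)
The plan is to prove the two easy ``concrete'' equivalences $(2)\Leftrightarrow(3)$ and $(3)\Leftrightarrow(4)$ first, and then connect these to the categorical condition $(1)$. The equivalence $(2)\Leftrightarrow(3)$ is essentially immediate. Unwinding the definition, $\im(\theta)$ is the element obtained by capping the $\Gamma_1$-port of $\theta$, i.e.\ $\im(\theta)=\lsh{(\pi_2)}(\theta)$ for the projection $\pi_2\colon\Gamma_1\oplus\Gamma_2\to\Gamma_2$. Because $\theta\in\rela{\pr}(\varphi_1,\varphi_2)$, the entailment $\lsh{(\pi_2)}(\theta)\vdash\varphi_2$ holds automatically by \cref{def.internal_relations}; hence $\im(\theta)\vdash\varphi_2$ always, and so $\varphi_2\vdash\im(\theta)$ is equivalent to the equality $\varphi_2=\im(\theta)$.

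For $(3)\Leftrightarrow(4)$, recall from \cref{thm.characterize_functions} that $\theta\dashv\theta\tp$, whose counit supplies the entailment $\theta\tp\cp\theta\vdash\id_{\varphi_2}$ for free; thus condition $(4)$ amounts to the reverse entailment $\id_{\varphi_2}\vdash\theta\tp\cp\theta$. The bridge between the two conditions is a graphical identity: capping one of the two output legs of $\theta\tp\cp\theta$ produces exactly $\im(\theta)$, while capping one leg of $\id_{\varphi_2}=\lsh{(\delta_{\Gamma_2})}(\varphi_2)$ returns $\varphi_2$. I would verify both facts using \cref{lem.meets_merge,lem.dotting_off}. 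Applying this capping operation to the equation in $(4)$ immediately yields $\varphi_2=\im(\theta)$, giving $(4)\Rightarrow(3)$; conversely, assuming $(3)$ I would substitute $\varphi_2=\im(\theta)$ into the diagram for $\id_{\varphi_2}$ and reorganize the wires, using totality and determinism of $\theta$, to obtain $\id_{\varphi_2}\vdash\theta\tp\cp\theta$.

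The substantive content is $(1)\Leftrightarrow(3)$. For $(1)\Rightarrow(3)$, I would use the image factorization $\theta=e\cp m$, where $m\coloneqq\id_{\im(\theta)}\colon(\Gamma_2,\im(\theta))\to(\Gamma_2,\varphi_2)$ is a monomorphism by \cref{cor.monos} (using $\im(\theta)\vdash\varphi_2$) and $e$ is $\theta$ corestricted to its image, which is a legitimate internal function since $\lsh{(\pi_2)}(\theta)\vdash\im(\theta)$. Then the standard argument that a regular epimorphism factoring through a monomorphism forces that monomorphism to be an isomorphism applies verbatim: a regular epi is both epic and a coequalizer, so $m$ becomes split on both sides and hence invertible, and by discreteness of the order on functions (\cref{cor.functions_discrete}) invertibility of $m$ says precisely $\im(\theta)=\varphi_2$.

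For $(3)\Rightarrow(1)$, I would show that $\theta$ is the coequalizer of its own kernel pair; this suffices because in any category with kernel pairs—and $\func{\pr}$ has finite limits by \cref{sec.finlims}—a morphism is a regular epimorphism exactly when it coequalizes its kernel pair. Computing the kernel pair via \cref{lemma.pullbacks}, the natural mediating morphism for a cocone $h\colon(\Gamma_1,\varphi_1)\to(\Gamma_3,\varphi_3)$ is $\bar h\coloneqq\theta\tp\cp h$. I would then check, entirely within the graphical calculus, that $\bar h$ is a genuine internal function—totality from condition $(4)$ (i.e.\ $\theta\tp\cp\theta=\id_{\varphi_2}$, equivalent to $(3)$) and determinism from the cocone equation for $h$—and finally that $\theta\cp\bar h=h$ and that $\bar h$ is the unique such map. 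This construction and validation of the mediating morphism is where the real work lies, and I expect it to be the main obstacle; by contrast, every other step is either a one-line entailment or an invocation of a general categorical fact.
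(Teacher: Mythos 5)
Your proposal is correct and follows essentially the same route as the paper: the automatic entailment $\im(\theta)\vdash\varphi_2$ disposes of $(2)\Leftrightarrow(3)$, determinism (upgraded to an equality) gives $(3)\Leftrightarrow(4)$, and for the hard direction $(3)/(4)\Rightarrow(1)$ you build the mediating map out of the coequalizer as $\theta\tp\cp h$, exactly as the paper does with its $\xi\coloneqq\theta\tp\cp\theta'$. The only cosmetic divergence is in $(1)\Rightarrow(3)$, where you invoke the general ``a regular epi factoring through a mono forces that mono to be split, hence iso'' argument applied to $\theta=e\cp\id_{\im(\theta)}$, while the paper directly observes that $\im(\theta)$ coequalizes the kernel pair -- both are the same appeal to the coequalizer's universal property against the corestriction to the image.
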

\begin{proof}
$(1\imp 2)$:\quad It is straightforward to show that $\theta\in \pr(\Gamma_1,\Gamma_2)$ is an element of $\func{\pr}(\varphi_1,\im(\theta))$. Now supposing that $\theta$ is a regular epi, i.e.\ that the kernel pair diagram
\[
\begin{tikzcd}
	\varphi_1\times_{\varphi_2}\varphi_1\ar[r, shift left=3pt]\ar[r, shift right=3pt]&
	\varphi_1\ar[r]&
	\varphi_2
\end{tikzcd}
\]
is a coequalizer, it suffices to show that $\im(\theta)$ also coequalizes the
parallel pair:
\begin{equation}\label{eqn.coeqs}
\begin{tikzpicture}[unoriented WD, font=\small,baseline=(P1)]
  \node (P1) {
	\begin{tikzpicture}[inner WD, minimum size=20pt]
		\node[funcr] (theta1) {$\theta$};
		\node[funcl, right=1 of theta1] (theta2) {$\theta$};
		\node[funcd, below left=0 and .5 of theta1] (theta1') {$\theta$};
		\node[link] at (theta1'|-theta1) (dot) {};
		\draw (theta1'.south) -- +(0, -5pt);
		\draw (theta1'.north) -- (dot);
		\draw (dot) -- +(-10pt, 0);
		\draw (dot) -- (theta1.west);
		\draw (theta1.east) -- (theta2.west);
		\draw (theta2.east) -- +(10pt, 0);
  \end{tikzpicture}
	};
	\node[right=3 of P1] (P2) {
	\begin{tikzpicture}[inner WD, minimum size=20pt]
		\node[funcr] (theta1) {$\theta$};
		\node[funcl, right=1 of theta1] (theta2) {$\theta$};
		\node[funcd, below right=0 and .5 of theta2] (theta2') {$\theta$};
		\node[link] at (theta2'|-theta2) (dot) {};
		\draw (theta2'.south) -- +(0, -5pt);
		\draw (theta2'.north) -- (dot);
		\draw (dot) -- +(10pt, 0);
		\draw (dot) -- (theta2.east);
		\draw (theta1.east) -- (theta2.west);
		\draw (theta1.west) -- +(-10pt, 0);
  \end{tikzpicture}	
	};
	\node at ($(P1.east)!.5!(P2.west)$) {$=$};
\end{tikzpicture}
\end{equation}
This follows directly from determinism.

$(2\imp 3)$: For any relation $\theta\in\rela{\pr}(\varphi_1,\varphi_2)$ we always have the converse $\im(\theta)\vdash\varphi_2$.

$(3\imp 4)$: By determinism of $\theta$, we have
$
	\begin{tikzpicture}[unoriented WD, baseline=(P1)]
	  \node (P1) {
	    \begin{tikzpicture}[inner WD, pack size=6pt]
	      \node[pack] (theta) {$\varphi_2$};
	      \node[link, right=.5 of theta] (dot) {};
	      \draw (theta.east) -- (dot);
	      \draw (dot) -- +(5pt,5pt);
	      \draw (dot) -- +(5pt,-5pt);
	    \end{tikzpicture}	
	  };
	  \node[right=3 of P1] (P2) {
	    \begin{tikzpicture}[inner WD]
	      \node[funcr] (theta) {$\theta$};
	      \node[link, right=.5 of theta] (dot) {};
	      \node[link, left=5pt of theta] (dot L) {};
	      \draw (theta.west) -- +(dot L);
	      \draw (theta.east) -- (dot);
	      \draw (dot) -- +(5pt,5pt);
	      \draw (dot) -- +(5pt,-5pt);
	    \end{tikzpicture}
	  };
	  \node[right=3 of P2] (P3) {
	    \begin{tikzpicture}[inner WD]
	      \node[funcr] (theta) {$\theta$};
	      \node[funcr,below=.3 of theta] (theta2) {$\theta$};
	      \draw (theta.west) to[out=180, in=180] (theta2.west);
	      \draw (theta.east) -- +(5pt,0);
	      \draw (theta2.east) -- +(5pt,0);
	    \end{tikzpicture}	
	  };
	  \node at ($(P1.east)!.5!(P2.west)$) {=};
	  \node at ($(P2.east)!.5!(P3.west)$) {=};
	\end{tikzpicture}
$

$(4\imp 1)$: Assuming 4, we need to show that
$
\begin{tikzcd}[column sep=small]
	\varphi_1\times_{\varphi_2}\varphi_1\ar[r, shift left=2pt]\ar[r, shift right=2pt]&
	\varphi_1\ar[r]&
	\varphi_2
\end{tikzcd}
$
is a coequalizer. It is easy to show that $\varphi_2$ coequalizes the parallel pair; this is basically \cref{eqn.coeqs} again. So let $\theta'\colon\varphi_1\to\varphi_2'$ coequalize the parallel pair, and define $\xi\in\rela{\pr}(\varphi_2,\varphi_2')$ by $\xi\coloneqq\theta\tp\cp\theta'$. We need to show that $\xi$ is a function and that $\theta\cp\xi=\theta'$.

We obtain $\id_{\varphi_2}\vdash \xi\cp\xi\tp$ using (4) and the fact that $\theta'$ is a function: 
\[
	\begin{tikzpicture}[unoriented WD, baseline=(P1)]
  	\node (P1) {
  	\begin{tikzpicture}[inner WD, pack size=6pt]
    	\node[pack] (phi) {$\varphi_2$};
    	\node[link, below=3pt of phi] (dot) {};
    	\draw (phi) -- (dot);
    	\draw (dot) -- +(6pt,0);
    	\draw (dot) -- +(-6pt,0);
    \end{tikzpicture}	
    };
    \node[right=2 of P1] (P2) {
    \begin{tikzpicture}[inner WD]
  		\node[funcl] (theta) {$\theta$};
  		\node[funcr, right=1 of theta] (theta') {$\theta$};
			\draw (theta.west) -- +(-4pt,0);
			\draw (theta.east) -- (theta'.west);
			\draw (theta'.east) -- +(4pt,0);
    \end{tikzpicture}
    };
    \node[right=2 of P2] (P3) {
    \begin{tikzpicture}[inner WD]
  		\node[funcl] (theta) {$\theta$};
  		\node[funcr, right=1 of theta] (theta') {$\theta'$};
  		\node[funcl, right=1 of theta'] (theta'') {$\theta'$};
  		\node[funcr, right=1 of theta''] (theta''') {$\theta$};
			\draw (theta.west) -- +(-4pt,0);
			\draw (theta.east) -- (theta'.west);
			\draw (theta'.east) -- (theta''.west);
			\draw (theta''.east) -- (theta'''.west);
			\draw (theta'''.east) -- +(4pt,0);
    \end{tikzpicture}
    };
		\node at ($(P1.east)!.5!(P2.west)$) (vdash) {$=$};
		\node at ($(P2.east)!.5!(P3.west)$) (vdash) {$\vdash$};
  \end{tikzpicture}
.
\]
We obtain $\xi\tp\cp\xi\vdash\id_{\varphi_2'}$ as follows:
\[
\begin{tikzpicture}[unoriented WD]
	\node (P1) {
    \begin{tikzpicture}[inner WD]
  		\node[funcl] (theta) {$\theta'$};
  		\node[funcr, right=1 of theta] (theta') {$\theta$};
  		\node[funcl, right=1 of theta'] (theta'') {$\theta$};
  		\node[funcr, right=1 of theta''] (theta''') {$\theta'$};
			\draw (theta.west) -- +(-4pt,0);
			\draw (theta.east) -- (theta'.west);
			\draw (theta'.east) -- (theta''.west);
			\draw (theta''.east) -- (theta'''.west);
			\draw (theta'''.east) -- +(4pt,0);
    \end{tikzpicture}	
	};
	\node[right=3 of P1] (P2) {
    \begin{tikzpicture}[inner WD]
  		\node[funcr] (theta') {$\theta$};
			\node[link, left=3pt of theta'] (dot) {};
  		\node[funcl, right=1 of theta'] (theta'') {$\theta$};
			\node[link, right=1.25 of theta''] (dot R) {};
  		\node[funcr, right=1.25 of dot R] (theta''') {$\theta'$};
			\node[funcd, below=.25 of dot R] (theta) {$\theta'$};
			\draw (dot) -- (theta'.west);
			\draw (theta'.east) -- (theta''.west);
			\draw (theta''.east) -- (dot R);
			\draw (dot R) -- (theta'''.west);
			\draw (dot R) -- (theta.north);
			\draw (theta'''.east) -- +(4pt,0);
			\draw (theta.south) to[out=270, in=90] +(0, -5pt);
    \end{tikzpicture}			
	};
	\node[right=3 of P2] (P3) {
	\begin{tikzpicture}[inner WD]
	  \node[funcr] (phi) {$\theta'$};
	  \node[link, left=3pt of phi] (dot) {};
	  \node[link, right=3pt of phi] (dot R) {};
	  \draw (phi.west) to (dot);
	  \draw (phi.east) to (dot R);
	  \draw (dot R) -- +(45:7pt);
	  \draw (dot R) -- +(-45:7pt);
	\end{tikzpicture}	
	};
	\node at ($(P1.east)!.5!(P2.west)$) (vdash) {$=$};
	\node at ($(P2.east)!.5!(P3.west)$) (vdash) {$\vdash$};
  \pgfresetboundingbox
	\useasboundingbox (P1.170) rectangle (P3.-10);
\end{tikzpicture}
\]
where the first equality comes from the fact that $\theta'$ coequalizes the parallel pair, and the second is discarding and determinism of $\theta'$. Finally, $\theta'\vdash\theta\cp\theta\tp\cp\theta=\theta\cp\xi$ follows easily from $\theta$ being a function. The converse $\theta\cp\xi\vdash\theta'$ follows from the fact that $\theta'$ coequalizes the parallel pair:
\[
\begin{tikzpicture}[unoriented WD]
	\node (P1) {
  \begin{tikzpicture}[inner WD]
 		\node[funcr] (theta') {$\theta$};
  	\node[funcl, right=1 of theta'] (theta'') {$\theta$};
 		\node[funcr, right=1 of theta''] (theta''') {$\theta'$};
		\draw (theta'.west) -- +(-4pt,0);
		\draw (theta'.east) -- (theta''.west);
		\draw (theta''.east) -- (theta'''.west);
		\draw (theta'''.east) -- +(4pt,0);
  \end{tikzpicture}	
	};
	\node[right=3 of P1] (P2) {
  \begin{tikzpicture}[inner WD]
 		\node[funcr] (theta') {$\theta$};
  	\node[funcl, right=1 of theta'] (theta'') {$\theta$};
		\node[link, left=1.5 of theta'] (dot L) {};
		\node[link, right=.5 of theta''] (dot R) {};
		\node[funcd, below=.25 of dot L] (theta) {$\theta'$};
		\draw (theta'.west) -- (dot L);
		\draw (theta'.east) -- (theta''.west);
		\draw (theta''.east) -- (dot R);
		\draw (dot L) -- +(-6pt,0);
		\draw (dot L) -- (theta);
		\draw (theta.south) -- +(0,-4pt);
  \end{tikzpicture}	
	};
	\node[right=3 of P2] (P3) {
	\begin{tikzpicture}[inner WD]
	  \node[funcr] (phi) {$\theta'$};
	  \draw (phi.west) to +(-4pt,0);
	  \draw (phi.east) to +(4pt,0);
	\end{tikzpicture}	
	};
	\node at ($(P1.east)!.5!(P2.west)$) (vdash) {$=$};	
	\node at ($(P2.east)!.5!(P3.west)$) (vdash) {$\vdash$};	
\end{tikzpicture}
\qedhere
\]
\end{proof}

\begin{lemma}[Image factorizations]\label{lemma.image_fact}
  Any morphism $\theta\colon(\Gamma',\varphi')\to(\Gamma,\varphi)$ can be factored into a regular epimorphism followed by a monomorphism; the image object is $(\Gamma,\ust{\epsilon_{\Gamma'}}\cp\theta)$.
\end{lemma}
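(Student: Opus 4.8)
The plan is to exhibit the factorization explicitly and read off its two
properties from the immediately preceding results, rather than re-deriving the
factorization-system axioms from scratch. Write $\theta\in\rela{\pr}(\varphi',\varphi)\ss
\pr(\Gamma'\oplus\Gamma)$ for the given internal function and set
$\psi\coloneqq\im(\theta)=\ust{\epsilon_{\Gamma'}}\cp\theta\in\pr(\Gamma)$. First
I would record that $(\Gamma,\psi)$ is a legitimate object of $\func{\pr}$ and
that $\psi\vdash_\Gamma\varphi$; the latter is exactly the inequality
$\im(\theta)\vdash\varphi$ observed in step $(2\imp 3)$ of the proof of
\cref{prop.epis}, valid for every internal relation by
\cref{def.internal_relations}. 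The proposed factors are then
\[
  (\Gamma',\varphi')\To{e}(\Gamma,\psi)\To{m}(\Gamma,\varphi),
\]
where $e$ is the \emph{same} underlying element $\theta$, now regarded as a
morphism with codomain $(\Gamma,\psi)$, and $m\coloneqq\id_\psi$ is the
monomorphism supplied by \cref{cor.monos}.

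Next I would verify the three required properties. That $e$ is a well-defined
internal function $(\Gamma',\varphi')\to(\Gamma,\psi)$ is precisely the remark,
made at the outset of the proof of \cref{prop.epis}, that
$\theta\in\func{\pr}(\varphi',\im(\theta))$. That $e$ is a regular epimorphism
then follows from \cref{prop.epis} itself: the image of $e$, being
$\ust{\epsilon_{\Gamma'}}\cp\theta=\psi$ again, makes condition (2) of that
proposition read $\psi\vdash_\Gamma\psi$, which holds trivially. Finally,
$m=\id_\psi$ is a monomorphism by \cref{cor.monos}, which applies because
$\psi\vdash_\Gamma\varphi$.

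It remains to confirm that $e\cp m=\theta$ as morphisms
$(\Gamma',\varphi')\to(\Gamma,\varphi)$. Since the underlying element of
$m=\id_\psi$ is the identity relation on the object $(\Gamma,\psi)$, and
composition in $\rela{\pr}$ is computed on underlying elements via
\cref{eqn.composition}, unitality (\cref{lemma.cp_unital}) gives
$\theta\cp\id_\psi=\theta$, so the composite recovers the original morphism. The
only genuinely load-bearing point—and the one place a reader might pause—is the
bookkeeping in the epimorphism step: one must note that the image
$\ust{\epsilon_{\Gamma'}}\cp\theta$ depends only on $\theta$ and the map
$\epsilon_{\Gamma'}$, and not on which codomain predicate ($\varphi$ or $\psi$)
is attached, so that shrinking the codomain from $\varphi$ to $\psi$ leaves it
unchanged. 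Everything else is a direct appeal to
\cref{prop.epis,cor.monos,lemma.cp_unital}, so this lemma is in effect a
repackaging of those two preceding results.
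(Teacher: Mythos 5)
Your proof is correct and follows essentially the same route as the paper's: both exhibit the factorization as $\theta$ itself, re-typed as a regular epimorphism onto $(\Gamma,\ust{\epsilon_{\Gamma'}}\cp\theta)$ via \cref{prop.epis}, followed by the monomorphism $\id_{\im\theta}=\lsh{(\delta_\Gamma)}(\ust{\epsilon_{\Gamma'}}\cp\theta)$ via \cref{cor.monos}. You merely spell out a few steps the paper leaves implicit (the unitality check that the composite recovers $\theta$, and the observation that the image is independent of the codomain predicate), which is fine.
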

\begin{proof}
  The image factorization of $\theta$ is given by
    \[
    \begin{aligned}
      \begin{tikzpicture}[unoriented WD, font=\small]
	\node (P1) {
	  \begin{tikzpicture}[inner WD]
	    \node[funcr] (xi) {$\theta$};
	    \draw (xi.west) to[pos=1] node[left, font=\tiny] {$\Gamma'$} +(-3pt, 0);
	    \draw (xi.east) to[pos=1] node[right, font=\tiny] {$\Gamma$} +(3pt, 0);
	  \end{tikzpicture}
	};
	\node[right=3 of P1] (P2) {
	  \begin{tikzpicture}[inner WD]
	    \node[funcr] (xi) {$\theta$};
	    \node[outer pack, surround sep = 1pt, fit=(xi)] (cp1) {};
	    \node[link, right=20pt of xi] (dot) {};
	    \node[funcd, above=2pt of dot] (theta) {$\theta$};
	    \node[link, above=2pt of theta] (dot2) {};
	    \node[outer pack, surround sep = 0pt, fit=(theta) (dot) (dot2)] (cp2) {};
	    \node[outer pack, surround sep = 4pt, fit=(xi) (theta) (dot) (dot2)] (outer) {};
	    \draw (xi.west) to (xi-|outer.west);
	    \draw (xi.east) -- (dot);
	    \draw (theta.south) -- (dot);
	    \draw (theta.north) -- (dot2);
	    \draw (dot) to (xi-|outer.east);
	  \end{tikzpicture}			
	};
	\node at ($(P1.east)!.5!(P2.west)$) {$=$};
  \pgfresetboundingbox
	\useasboundingbox (P1.west|-P2.160) rectangle (P2.east|-P2.-20);
      \end{tikzpicture}
    \end{aligned}
  \]
  The graphical representation of the image object $(\Gamma,\ust{\epsilon_{\Gamma'}}\cp\theta)$ is $
\begin{tikzpicture}[inner WD]
	\node[funcr] (theta) {$\theta$};
	\node[link, left=5pt of theta] (dot) {};
  \draw (dot) -- (theta);
  \draw (theta.east) -- +(5pt,0);			
\end{tikzpicture}
$
\;. It is immediate from
  \cref{prop.epis} that $\theta$ is a regular epimorphism $(\Gamma',\varphi') \to
  (\Gamma,\ust{\epsilon_{\Gamma'}}\cp\theta)$, and from \cref{cor.monos} that
  $\lsh{(\delta_\Gamma)}(\ust{\epsilon_{\Gamma'}}\cp\theta)$ is a monomorphism
  $(\Gamma,\ust{\epsilon_{\Gamma'}}\cp\theta) \to (\Gamma,\varphi)$.
\end{proof}

\begin{lemma}[Pullback stability of image factorizations]\label{lemma.pb_stability}
The pullback of a regular epimorphism along any morphism is again a regular epimorphism in $\func{\pr}$.
\end{lemma}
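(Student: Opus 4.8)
The plan is to reduce everything to the image characterization of regular epimorphisms in \cref{prop.epis}: a morphism $\theta\in\func{\pr}(\varphi_1,\varphi_2)$ is a regular epimorphism if and only if $\varphi_2\vdash_{\Gamma_2}\im(\theta)$. So fix a regular epimorphism $\theta_1\colon(\Gamma_1,\varphi_1)\to(\Gamma,\varphi)$ and an arbitrary morphism $\theta_2\colon(\Gamma_2,\varphi_2)\to(\Gamma,\varphi)$; by \cref{lemma.pullbacks} their pullback is the object $((\Gamma_1\oplus\Gamma_2),\theta_{12})$ with $\theta_{12}=\theta_1\cp\theta_2\tp$, and the morphism I must show is a regular epimorphism is the projection $p_2\coloneqq\lsh{(\Gamma_1\oplus\delta_{\Gamma_2})}(\theta_{12})$ onto $(\Gamma_2,\varphi_2)$, which is precisely the pullback of $\theta_1$ along $\theta_2$. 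It therefore suffices to prove $\varphi_2\vdash\im(p_2)$ (the reverse entailment $\im(p_2)\vdash\varphi_2$ holds automatically, since $p_2$ is a function into $\varphi_2$).

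The heart of the argument is a graphical computation of $\im(p_2)$. Recall from \cref{prop.projections} the wiring diagram for $p_2$: it is $\theta_{12}$ drawn as $\theta_1$ joined to $\theta_2\tp$ along their common $\Gamma$-leg, with the $\Gamma_2$-output run through a copy (black) dot, one branch being the dangling codomain and the other belonging to the domain of $p_2$. Forming the image means capping the entire domain $\Gamma_1\oplus\Gamma_2$ of $p_2$ with discard dots (\cref{lem.dotting_off}). Capping the copy dot's domain branch deletes it (a copy followed by a discard is the bare wire), so $\im(p_2)$ reduces to $\theta_1$ with its $\Gamma_1$-leg capped, joined along $\Gamma$ to $\theta_2\tp$, with the remaining $\Gamma_2$-leg as output. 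Capping the $\Gamma_1$-leg of $\theta_1$ is by definition $\im(\theta_1)\in\pr(\Gamma)$; justifying that this inner image may be extracted from within the larger diagram is exactly the nesting/functoriality rule \cref{prop.diagrams_basic}(iii). Since $\theta_1$ is a regular epimorphism, \cref{prop.epis} gives $\im(\theta_1)=\varphi$, so $\im(p_2)$ equals the predicate on $\Gamma_2$ obtained by attaching $\varphi$ to the $\Gamma$-leg of $\theta_2\tp$ and then discarding that leg.

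Finally I would discharge this last predicate using that $\theta_2$ is a function. Because $\theta_2\in\rela{\pr}(\varphi_2,\varphi)$, its $\Gamma$-projection already lies below $\varphi$, so meeting with $\varphi$ changes nothing (\cref{lem.meets_merge}); discarding the $\Gamma$-leg of $\theta_2$ then returns $\varphi_2$ by totality, available from \cref{thm.characterize_functions}. This yields $\varphi_2=\im(p_2)$, and \cref{prop.epis} concludes that $p_2$ is a regular epimorphism. I expect the main obstacle to be the middle step: carefully justifying that the image of the composite $p_2$ can be computed by first replacing the inner subdiagram $\theta_1$ by its own image, i.e.\ that the existential quantification hidden in $\im$ commutes past the rest of the wiring. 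This is a Beck--Chevalley-type interchange, and in the present formalism it must be read off from the nesting rule together with the way discards slide through the diagram; once that interchange is in hand, the surjectivity of $\theta_1$ and the totality of $\theta_2$ finish the proof almost mechanically.
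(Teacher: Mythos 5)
Your proposal is correct and follows essentially the same route as the paper: both reduce to the image criterion of \cref{prop.epis} and verify $\varphi_2\vdash\im(p_2)$ using totality of $\theta_2$, the fact that the $\Gamma$-marginal of $\theta_2$ lies below $\varphi$, and $\varphi\vdash\im(\theta_1)$. The paper simply organizes this as a two-step entailment chain $\varphi_2\vdash(\theta_2\text{ with its }\Gamma\text{-leg discarded})\vdash\im(p_2)$, growing the diagram monotonically, which sidesteps the ``interchange'' step you flag as the main obstacle (that step is in any case just the nesting rule \cref{prop.diagrams_basic}(iii) and needs no Beck--Chevalley argument).
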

\begin{proof}
Suppose that $\xi\colon\varphi_1\to\varphi$ is a regular epimorphism and that
$\theta\colon\varphi_2\to\varphi$ is any morphism. Then the pullback
$\theta\times_{\varphi}\xi\to\varphi_2$ is a regular epimorphism by
\cref{prop.epis} and the following reasoning:
$\begin{aligned}
\begin{tikzpicture}[unoriented WD]
	\node (P1) {
  \begin{tikzpicture}[inner WD, pack size=6pt]
	  \node[pack] (phi) {$\varphi_2$};
		\draw (phi.west) to +(-2pt, 0);
	\end{tikzpicture}
	};
	\node[right=3 of P1] (P2) {
  \begin{tikzpicture}[inner WD]
		\node[funcr] (theta) {$\theta$};
	  \node[link, right=.5 of theta] (dot) {};
	  \draw (theta.west) -- +(-2pt,0);
	  \draw (theta.east) -- (dot);
  \end{tikzpicture}	
	};
	\node[right=3 of P2] (P3) {
  \begin{tikzpicture}[inner WD]
		\node[funcr] (theta) {$\theta$};
		\node[funcl, right=1 of theta] (theta') {$\xi$};
	  \node[link, right=.5 of theta'] (dot) {};
	  \draw (theta.west) -- +(-2pt,0);
	  \draw (theta.east) -- (theta'.west);
	  \draw (theta'.east) -- (dot);
  \end{tikzpicture}	
	};
	\node at ($(P1.east)!.5!(P2.west)$) (vdash) {$\vdash$};	
	\node at ($(P2.east)!.5!(P3.west)$) (vdash) {$\vdash$};	
\end{tikzpicture}
\end{aligned}$.
\end{proof}

It is now straightforward to observe that $\func{\pr}$ is a regular category.

\begin{proof}[Proof of \cref{thm.internal_functions}]\label{page.proof_thm.internal_functions}
By \cref{lemma.terminal,lemma.pullbacks}, $\func{\pr}$ has all finite limits, and by \cref{lemma.image_fact,lemma.pb_stability}, it has pullback-stable image factorizations.
\end{proof}

\subsection{Subobject lattices in $\func{\pr}$}

To round out the picture, we also state the following two results.

\begin{proposition}\label{prop.T_sub_RT}
Let $(\typeset, \pr)$ be a regular calculus, let $\Gamma\in\frc$ be a context, and let $s\in \pr(\Gamma)$. There is an isomorphism of posets
\[
  \{t \in \pr(\Gamma) \mid t \leq s\} \cong \sub_{\func{\pr}} (\Gamma, s),
\]
with each element $t \leq s$ mapped to the subobject $\pr(\lsh\delta)(t) =
  \begin{tikzpicture}[inner WD]
    \node[pack, inner sep=0pt] (phi) {$t$};
    \node[link, below=2pt of phi] (dot) {};
    \draw (phi) -- (dot);
    \draw (dot) -- +(-8pt, 0);
    \draw (dot) -- +(8pt, 0);
  \end{tikzpicture}
  \colon (\Gamma,t) \to (\Gamma,s)
$.
\end{proposition}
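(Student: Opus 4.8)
The plan is to realize the indicated map $\Phi\colon t\mapsto[\id_t]$ as a surjective order-embedding, with image factorization in $\func{\pr}$ serving as the engine throughout. First I would record that $\Phi$ is well-defined: for $t\le s$ we have $t\vdash_\Gamma s$, so by \cref{cor.monos} the graphical term $\id_t=\lsh{(\delta_\Gamma)}(t)$ of \cref{eqn.id_phi} is a monomorphism $(\Gamma,t)\to(\Gamma,s)$ in $\func{\pr}$ and hence names a subobject of $(\Gamma,s)$. For monotonicity, given $t\le t'\le s$ I would invoke \cref{cor.monos} a second time to view $\id_t$ as a monomorphism $(\Gamma,t)\to(\Gamma,t')$, and then check that composing it with $\id_{t'}\colon(\Gamma,t')\to(\Gamma,s)$ returns $\id_t\colon(\Gamma,t)\to(\Gamma,s)$. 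Diagrammatically this is the identity $\id_t\cp\id_{t'}=\lsh{(\delta_\Gamma)}(t\wedge t')=\id_t$, which follows from \cref{prop.diagrams_meet}(ii) together with $t\wedge t'=t$; it shows $[\id_t]\le[\id_{t'}]$.

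The two substantive steps both pass through the categorical image, which by \cref{lemma.image_fact} agrees with the graphical image $\im(\theta)=\ust{\epsilon}\cp\theta$. The key computation here is $\im(\id_t)=t$: discarding the left leg of $\id_t$ and amalgamating the two connected black dots (as in \cref{ex.comp_as_subst} and \cref{lem.dotting_off}) returns the one-legged term $t$. Granting this, $\Phi$ reflects the order. If $[\id_t]\le[\id_{t'}]$, so that $\id_t=h\cp\id_{t'}$ for some $h$ in $\func{\pr}$, then taking images in the regular category $\func{\pr}$ (\cref{thm.internal_functions}) yields $t=\im(\id_t)=\im(h\cp\id_{t'})\le\im(\id_{t'})=t'$, using that the image of a composite is contained in the image of its second factor. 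Combined with the previous paragraph, $\Phi$ is an order-embedding.

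The remaining point is surjectivity, and this is the step I expect to be the main obstacle, since a priori a subobject of $(\Gamma,s)$ is presented by a monomorphism $m\colon(\Gamma',\varphi')\to(\Gamma,s)$ whose \emph{domain} context $\Gamma'$ need not equal $\Gamma$. Image factorization is exactly what repairs this: by \cref{lemma.image_fact}, $m$ factors as a regular epimorphism $e\colon(\Gamma',\varphi')\to(\Gamma,\im(m))$ followed by the monomorphism $\id_{\im(m)}\colon(\Gamma,\im(m))\to(\Gamma,s)$, and the crucial feature is that the image object carries the \emph{codomain} context $\Gamma$. Since $m$ is monic, $e$ is monic by right-cancellation, and a morphism that is simultaneously a regular epimorphism and a monomorphism in a regular category is an isomorphism (\cref{lemma.OFS}); hence $[m]=[\id_{\im(m)}]=\Phi(\im(m))$. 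Finally $\im(m)\vdash_\Gamma s$, i.e.\ $\im(m)\le s$, holds because $m$ is a relation into $s$ and $\im(m)$ is its existential projection onto $\Gamma$ (the codomain constraint of \cref{def.internal_relations}), so $\im(m)$ indeed lies in the domain of $\Phi$. Thus $\Phi$ is a surjective order-embedding, hence an order isomorphism, which is the claim.
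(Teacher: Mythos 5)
Your proof is correct and follows essentially the same route as the paper's: well-definedness via \cref{cor.monos}, surjectivity by factoring an arbitrary monomorphism through its image using \cref{lemma.image_fact} (so that the image object acquires the outer context $\Gamma$), and injectivity via an image computation showing that a map of subobjects forces an inequality of the underlying predicates. If anything you are more thorough than the paper, which only checks the underlying bijection explicitly: you also verify order-preservation (via $\id_t\cp\id_{t'}=\id_{t\wedge t'}$) and order-reflection, and you spell out why the regular-epi part of the factorization of a monomorphism is an isomorphism.
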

\begin{proof}
  The proposed map indeed sends each $t$ to a subobject by the characterization of monomorphisms in \cref{cor.monos}. To see that it is surjective, note that given a monomorphism $\theta\colon (\Gamma',s') \to (\Gamma,s)$ in $\func{\pr}$, \cref{lemma.image_fact} (characterizing image factorizations) shows that it is isomorphic to the monomorphism
	\[
		\begin{tikzpicture}[inner WD,baseline=(current  bounding  box.center)]
			\node[link] (dot) {};
			\node[funcd, above=2pt of dot] (theta) {$\theta$};
			\node[link, above=2pt of theta] (dot2) {};
			\draw (dot) -- +(-1,0);
			\draw (theta.south) -- (dot);
			\draw (theta.north) -- (dot2);
			\draw (dot) -- +(1,0);
		\end{tikzpicture}
		\colon
		\big(\Gamma,
		    \resizebox{2em}{!}{
		\begin{tikzpicture}[inner WD]
			\node[funcr] (theta) {$\theta$};
			\node[link, left=5pt of theta] (dot) {};
			\draw (dot) -- (theta);
			\draw (theta.east) -- +(5pt,0);
		\end{tikzpicture}}\big)
		\to (\Gamma,s)
	\]
	where $
		    \resizebox{2em}{!}{
	\begin{tikzpicture}[inner WD]
			\node[funcr] (theta) {$\theta$};
			\node[link, left=5pt of theta] (dot) {};
			\draw (dot) -- (theta);
			\draw (theta.east) -- +(5pt,0);
		\end{tikzpicture}
			}
		=
		\pr(\lsh\epsilon \oplus \Gamma)(\theta)$.

		To see that it is injective, suppose we have a map $\theta$ of monomorphisms
		\[
			\begin{tikzcd}[row sep=.5ex, column sep =6ex]
        (\Gamma,t') \ar[dd, "\theta"'] \ar[dr, "\pr(\lsh\delta)(t')" near start] \\
        & (\Gamma,s) \\
        (\Gamma,t) \ar[ur, "\pr(\lsh\delta)(t)"' near start]
			\end{tikzcd}
    \] 
    Note that this implies that
\[
	\begin{tikzpicture}[unoriented WD, pack size=10pt]
		\node (P0) {$\im\theta \wedge t$};
		\node[right=2 of P0] (P1) {
			\begin{tikzpicture}[inner WD]
				\coordinate (cent);
				\node[link, left=2 of theta] (dot) {};
				\node[funcr, inner sep=2pt, left=.8 of cent] (theta) {$\theta$};
				\node[link, right=.8 of cent] (dot2) {};
				\node[pack, above=2pt of dot2, inner sep=1pt] (phi2) {$t$};
				\draw (theta.east) -- (dot2);
				\draw (theta) -- (phi);
				\draw (dot2) -- (phi2);
				\draw (theta) -- (dot);
				\draw (dot2) -- +(.5cm, 0);
			\end{tikzpicture}
		};
		\node[right=3 of P1] (P2) {
			\begin{tikzpicture}[inner WD]
				\node[pack, inner sep=1pt] (phi) {$t'$};
				\node[link, below=1pt of phi] (dot) {};
				\node[link, left=3pt of dot] (dot0) {};
				\draw (phi) -- (dot);
				\draw (dot) -- (dot0);
				\draw (dot) -- +(7pt, 0);
			\end{tikzpicture}
		};
		\node[right=2 of P2] (P3) {$t'$};
		\node at ($(P0.east)!.5!(P1.west)$) {$=$};
		\node at ($(P1.east)!.5!(P2.west)$) {$=$};
		\node at ($(P2.east)!.5!(P3.west)$) {$=$};
	\end{tikzpicture}
\]
and hence that $t' \le t \in \pr(\Gamma)$. Thus the subobjects $(\Gamma,t)$ and $(\Gamma,t')$ of $(\Gamma,s)$ are isomorphic if and only if $t = t'$. This proves the proposition.
\end{proof}

\begin{corollary} \label{thm.syn_is_regular}
 Let $(\typeset, \pr)$ be a regular calculus. Then $\rela{\pr}$ is isomorphic to the po-category of relations in $\func{\pr}$. In particular, $\rela{\pr}$ is a regular po-category.
\end{corollary}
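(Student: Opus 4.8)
The plan is to exhibit an explicit identity-on-objects isomorphism of po-categories $\rela{\pr}\to\rrel{\func{\pr}}$; the final sentence then follows immediately, since $\func{\pr}$ is a regular category by \cref{thm.internal_functions}, so $\rrel{\func{\pr}}$ is a regular po-category by \cref{def.regular_pocat}, and an isomorphic po-category is again one.

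First I would pin down the hom-posets. Both $\rela{\pr}$ and $\rrel{\func{\pr}}$ have as objects the pairs $(\Gamma,\varphi)$, so the comparison is the identity on objects. For morphisms, $\rrel{\func{\pr}}((\Gamma_1,\varphi_1),(\Gamma_2,\varphi_2))$ is by definition $\sub_{\func{\pr}}$ of the product $(\Gamma_1,\varphi_1)\times(\Gamma_2,\varphi_2)$. I would compute this product as the pullback over the terminal object $(\unit,\true)$ of \cref{lemma.terminal} using \cref{lemma.pullbacks}; since composition over $\unit$ in \cref{eqn.composition} does nothing, the product works out to $(\Gamma_1\oplus\Gamma_2,\rho(\varphi_1,\varphi_2))$. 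Then \cref{prop.T_sub_RT} identifies its subobject poset with $\{t\in\pr(\Gamma_1\oplus\Gamma_2)\mid t\vdash\rho(\varphi_1,\varphi_2)\}$. Finally, \cref{prop.characterize_relation} shows $\theta\in\rela{\pr}(\varphi_1,\varphi_2)$ iff \cref{eq.relation_with_domains} holds, and—since meets are merges (\cref{lem.meets_merge})—the left-hand side of \cref{eq.relation_with_domains} is $\theta\wedge\rho(\varphi_1,\varphi_2)$, so the condition is equivalent to $\theta\vdash\rho(\varphi_1,\varphi_2)$. Hence this set is precisely $\rela{\pr}(\varphi_1,\varphi_2)$, and since every identification uses the order inherited from $\pr(\Gamma_1\oplus\Gamma_2)$, the comparison is an order-isomorphism on homs.

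Next I would verify functoriality. Identities are easy: the identity relation on $(\Gamma,\varphi)$ in $\rrel{\func{\pr}}$ is the diagonal subobject, which \cref{prop.T_sub_RT} matches with $\id_\varphi=\lsh{(\delta_\Gamma)}(\varphi)$ from \cref{eqn.id_phi}, the identity of $\rela{\pr}$. Composition is the crux and the main obstacle: composition in $\rrel{\func{\pr}}$ is computed by pulling back over the middle object and image-factorizing into the product, and I would evaluate this through the explicit descriptions of pullbacks (\cref{lemma.pullbacks}) and image factorizations (\cref{lemma.image_fact}) in $\func{\pr}$. The pullback over $(\Gamma_2,\varphi_2)$ amounts graphically to wiring the two $\Gamma_2$-legs of $\theta_{12}$ and $\theta_{23}$ together—that is, to $\pr(\comp)$ applied after $\rho$—while the ensuing image factorization discards the internal $\Gamma_2$-wire; comparing with \cref{eqn.composition} then yields exactly $\theta_{12}\cp\theta_{23}$.

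To sidestep grinding through pullback-and-image, a cleaner alternative route uses that every relation factors as a right adjoint followed by a left adjoint (\cref{rem.rels_adjoint_composites}): writing a relation via the projections of \cref{prop.projections} as $\pi_1\tp\cp\pi_2$, and observing that $\func{\pr}=\ladj(\rela{\pr})$ coincides with the left adjoints of $\rrel{\func{\pr}}$ under the fundamental lemma (\cref{lemma.fundamental}), it then suffices to check that the comparison preserves graphs of internal functions and transposes—both immediate from the hom-poset identification above—together with the general fact that order-isomorphisms preserve adjunctions. In either route the real work is the bookkeeping that identifies the two composites; I expect the direct graphical computation via \cref{lemma.pullbacks,lemma.image_fact} to be the most transparent.
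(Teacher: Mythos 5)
Your proposal is correct and follows essentially the same route as the paper's own (quite terse) proof: identify objects, use \cref{prop.T_sub_RT} applied to the product $(\Gamma_1\oplus\Gamma_2,\rho(\varphi_1,\varphi_2))$ to match the hom-posets with $\rela{\pr}(\varphi_1,\varphi_2)$, and reduce the agreement of composition to the description of pullbacks in \cref{lemma.pullbacks} via graphical reasoning. You simply supply more of the bookkeeping (and a reasonable alternative via adjoint decompositions) than the paper chooses to write down.
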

\begin{proof}
  Observe that $\rela{\pr}$ and $\func{\pr}$ have the same set of objects by definition, and that by \cref{prop.T_sub_RT} for any two objects $(\Gamma,s)$, $(\Gamma',s')$ the poset of relations $(\Gamma,s) \tickar (\Gamma',s')$ in $\func{\pr}$ is given by $\{\theta \in \pr(\Gamma\oplus\Gamma') \mid \theta \le s \boxplus s'\}$. It remains to prove that the composition rule in $\rela{\pr}$ agrees with composition of relations in $\func{\pr}$. Reasoning using graphical terms, this is a straightforward consequence of \cref{lemma.pullbacks}, which describes pullbacks in the category $\func{\pr}$. 
\end{proof}

\end{document}